\newtheorem{lem}{Lemma}
\newtheorem{thm}{Theorem}
\newtheorem{cor}{Corollary}
\newtheorem{rem}{Remark}
\title{Second order energy expansion of  Bose gases with three-body interactions}
\author{Morris Brooks}
\date{July 2024}
\begin{document}
\maketitle

\begin{abstract}
We provide a second order energy expansion for a gas of $N$ bosonic particles with three-body interactions in the Gross-Pitaevskii regime. We especially confirm a conjecture by Nam, Ricaud and Triay in \cite{NRT3}, where they predict the subleading term in the asymptotic expansion of the ground state energy to be of the order $\sqrt{N}$. In addition, we show that the ground state satisfies Bose-Einstein condensation with a rate of the order $\frac{1}{\sqrt{N}}$. 
\end{abstract}

\section{Introduction}
In this manuscript we study a dilute Bose gas consisting of $N$ quantum particles subject to Bose-Einstein statistics, in which the individual particles interact with each other via a three-body potential 
\begin{align}
\label{Eq:Potential_Rescaled}
    V_N(x,y,z):=N V\!\left(\sqrt{N}(x-y),\sqrt{N}(x-z)\right),
\end{align}
defined in terms of a given bounded and non-negative function $V:\mathbb{R}^{3}\times \mathbb{R}^3\longrightarrow \mathbb{R}$ with compact support. The quantum gas is then described by the self-adjoint operator
\begin{align}
\label{Eq:Hamilton_Operator}
    H_N:=-\sum_{1\leq k\leq N} \Delta_{x_k}+\sum_{1\leq i<j<k\leq N}V_N(x_i,x_j,x_k),
\end{align}
acting on the space of permutation symmetric functions $L^2_{\mathrm{sym}}\! \left(\Lambda^N\right)$, where $\Lambda:=\left[-\frac{1}{2},\frac{1}{2}\right]^3$ is the three dimensional torus and we assume that $V_N$ defined in Eq.~(\ref{Eq:Potential_Rescaled}) is permutation symmetric in order to assure that $H_N$ preserves permutation symmetry. The particular scaling in Eq.~(\ref{Eq:Potential_Rescaled}) with the number of particles $N$ is referred to as the Gross-Pitaevskii regime and yields a short range, but strong, interaction on the scale $r=\frac{1}{\sqrt{N}}$. This especially means that we are dealing with a dilute gas taking up a volume of the order $Nr^3=\frac{1}{\sqrt{N}}$. Dilute Bose gases with three-particle interactions have previously been studied in \cite{NRT1,NRT2,NRT3,V,JV}, where the leading order asymptotics of the ground state energy in the limit $N\rightarrow \infty$ has been established as well as Bose-Einstein condensation (BEC) in the Gross-Pitaevskii regime. Here (BEC) refers to the observation that most of the particles occupy the state with zero momentum. In the Gross-Pitaevskii regime specifically, the leading order term in the asymptotics of the ground state energy
\begin{align}
\label{Eq:First_Order_Energy_Asymptotics}
    E_N:=\inf \sigma(H_N)=\frac{1}{6}b_{\mathcal{M}}(V)N+o_{N\rightarrow \infty}\! \left(N\right)
\end{align}
is proportional to the number of particles $N$ with a rather explicit constant $b_{\mathcal{M}}(V)$, see \cite{NRT1}. While naive first order perturbation theory would suggest that the constant $b_{\mathcal{M}}(V)$ should be given by $\widehat{V}(0)$, it is due to the singular nature of the scaling in Eq.~(\ref{Eq:Potential_Rescaled}) that we cannot ignore the presence of three particle correlations leading to a renormalized constant $b_{\mathcal{M}}(V)<\widehat{V}(0)$. In the following we will address a conjecture in \cite{NRT3}, which claims that the subleading term in the asymptotic expansion of $E_N$ is proportional to $\sqrt{N}$, see our main Theorem \ref{Eq:Main_Theorem_Introduction}. The contributions to the ground state energy $E_N$ of the order $\sqrt{N}$ arise based on two-particle, three-particle and four-particle correlations in the ground state. As a byproduct from the proof of Theorem \ref{Eq:Main_Theorem_Introduction}, we obtain in addition that the ground state $\Psi^{\mathrm{GS}}_N$ of the operator $H_N$ satisfies (BEC) with a rate $\frac{1}{\sqrt{N}}$, i.e. we show that the ratio of particles outside the state with zero momentum compared to the total number of particles $N$ is of the order $O_{N\rightarrow \infty}\! \left(\frac{1}{\sqrt{N}}\right)$. This is an improvement of the (BEC) result in \cite{NRT1}, where the authors showed that the ratio is of the order $o_{N\rightarrow \infty}\! \left(1\right)$.

It is worth pointing out that much more is known for Bose gases with two-particle interactions, where the expansion of the ground state energy to second order is well known in the Gross-Pitaevskii regime, the thermodynamic limit and interpolating regimes, see e.g. \cite{BBCS,BCS,B,FS1,FS2,HST,NNRT}. Furthermore, (BEC) is well known for the Gross-Pitaevskii regime and regular enough interpolating regimes, even with an (optimal) rate, see e.g. \cite{ABS,BBCS0,BBCS1,BS,BBCO,F,HHNST,LSe}, and the subleading term in the expansion of the ground state energy is known to be of the order $O_{N\rightarrow \infty}(1)$. This resolution of the energy is sharp enough to see the spectral gap, which is of the order $O_{N\rightarrow \infty}(1)$ as well. For a Bose gas with three-particle interaction in the Gross-Pitaevskii regime we expect the spectral gap to be of the magnitude $O_{N\rightarrow \infty}(1)$, see the conjecture in \cite{NRT3}, however the second order expansion of the energy only allows for a resolution of the order $O_{N\rightarrow \infty}\! \left(\sqrt{N}\right)$, which is not sharp enough to see the spectral gap.

As it is not the goal of this manuscript to optimize in the regularity of $V$, we will assume $V\in C^\infty(\mathbb{R}^6)$ for the sake of convenience (although assuming e.g. $V\in H^9(\mathbb{R}^6)$ would certainly be sufficient).\\

The correct constant $b_\mathcal{M}(V)$ in the energy asymptotics Eq.~(\ref{Eq:First_Order_Energy_Asymptotics}) can be derived formally by making a translation-invariant ansatz for the correlation structure $\varphi(x-u,y-u)$ between three particles at position $x$, $y$ and $u$, where $\varphi:\mathbb{R}^6\longrightarrow \mathbb{R}$. Utilizing the matrix 
\begin{align*}
   \mathcal{M}:=\sqrt{\frac{1}{2}\begin{bmatrix}
2 & 1 \\
1 & 2 
\end{bmatrix}}  
\end{align*}
and the modified Laplace operator $\Delta_{\mathcal{M}}:=\left(\mathcal{M}\nabla_{\mathbb{R}^{3}\times \mathbb{R}^{3}}\right)^2$, let us first express the action of the Laplace operator in relative coordinates as
\begin{align*}
    \left(\Delta_x+\Delta_y+\Delta_u\right)\varphi(x-u,y-u)=(2\Delta_{\mathcal{M}}\varphi)(x-u,y-u).
\end{align*}
The energy correction associated to the correlation structure $\varphi$ is then given by
\begin{align*}
    \braket{\varphi,(-2\Delta_{\mathcal{M}}+V)\varphi}-2\braket{\varphi,V}=\int_{\mathbb{R}^6}\left\{2\big|\mathcal{M}\nabla \varphi(x)\big|^2+V(x)\varphi(x)^2-2V(x)\varphi(x) \! \right\}\mathrm{d}x,
\end{align*}
 where $\braket{\varphi,V}$ describes the interaction between the condensate wavefunction $\Psi\equiv 1$ and $\varphi$. Including the potential energy $\braket{\Psi,V\Psi}=\int_{\mathbb{R}^6}V(x)\mathrm{d}x$ and optimizing in $\varphi$ leads to
    \begin{align}
\label{Eq:Definition_of_b}
    b_{\mathcal{M}}(V):=\inf_{\varphi\in \dot{H}^1(\mathbb{R}^6)} \int_{\mathbb{R}^6}\left\{2\big|\mathcal{M}\nabla \varphi(x)\big|^2+V(x)\! \left|1-\varphi(x)\right|^2 \! \right\}\mathrm{d}x,
\end{align}
where $\dot{H}^1(\mathbb{R}^d)$ refers to the space of functions $g:\mathbb{R}^d\longrightarrow \mathbb{C}$ vanishing at infinity with $|\nabla g|\in L^2\! \left(\mathbb{R}^d\right)$. It has been verified in \cite{NRT1} that minimizer $\omega$ to the variational problem in Eq.~(\ref{Eq:Definition_of_b}) exist and the (modified) scattering length $b_{\mathcal{M}}(V)$ describes the leading order asymptotics of the ground state energy correctly, see Eq.~(\ref{Eq:First_Order_Energy_Asymptotics}). Our main Theorem \ref{Eq:Main_Theorem_Introduction} confirms that the next term in the energy asymptotics in Eq.~(\ref{Eq:First_Order_Energy_Asymptotics}) is of the order $O_{N\rightarrow \infty}\! \left(\sqrt{N}\right)$ due to contributions from the three-particle correlation $\omega$, as well as from two-particle and four-particle correlations. Regarding the impact of the correlation structure $\omega$ on the $\sqrt{N}$ order, we observe in the presence of an additional particle at position $z$ further interaction terms between $\omega$ and itself of the form
 \begin{align*}
     \braket{\omega(y,z),V(x,y)\omega(y,z)} & =\int_{\mathbb{R}^9}V(x,y)\omega(y,z)^2\mathrm{d}x\mathrm{d}y\mathrm{d}z,\\
        \braket{\omega(x,z),V(x,y)\omega(y,z)} & =\int_{\mathbb{R}^9}V(x,y)\omega(x,z)\omega(y,z)\mathrm{d}x\mathrm{d}y\mathrm{d}z.
 \end{align*}
The corresponding energy correction is then given by $\gamma(V)\sqrt{N}$ with
 \begin{align}
    \label{Eq:Definition_of_gamma}
     \gamma(V): & = \int_{\mathbb{R}^9}V(x,y)\omega(x,z)\omega(y,z)  \mathrm{d}x \mathrm{d}y \mathrm{d}z+\frac{1}{2}\int_{\mathbb{R}^9}V(x,y)  \omega(y,z)^2  \mathrm{d}x \mathrm{d}y \mathrm{d}z.
 \end{align}
 In order to quantify the impact of two-particle correlations, respectively four-particle correlations, we make a translation-invariant ansatz $\xi(x \! - \! u)$ with $\xi:\mathbb{R}^3\longrightarrow \mathbb{R}$, respectively $\eta(x \! - \! u,y \! - \! u,z \! - \! u)$ with $\eta:\mathbb{R}^9\longrightarrow \mathbb{R}$. We observe that the interaction energy of $\xi$ and $\eta$ with the condensate and the three-particle correlation structure $\omega$ is given by
 \begin{align}
 \label{Eq:Correlation_Four_particle_with_omega_II}
  \int_{\mathbb{R}^3} \Big\langle \xi(x) ,V(x ,y)(1-\omega(x ,y))\Big\rangle \mathrm{d}y & =\int_{\mathbb{R}^6}V(x,y)(1-\omega(x,y))\xi(x)\mathrm{d}y\mathrm{d}x,\\  
  \label{Eq:Correlation_Four_particle_with_omega}
  \Big\langle \eta(x,y,z), V(x,y)\omega(y,z)\Big\rangle & =\int_{\mathbb{R}^9}V(x,y)\omega(y,z)\eta(x,y,z)\mathrm{d}x\mathrm{d}y\mathrm{d}z.
 \end{align}
 Including the kinetic energy of $\xi$ and optimizing in $\xi$, Eq.~(\ref{Eq:Correlation_Four_particle_with_omega_II}) immediately gives rise to the energy correction $-\mu(V)\sqrt{N}$ with the proportionality constant $\mu(V)$ defined as
 \begin{align}
 \nonumber
     \mu(V): & =\inf_{\xi\in \dot{H}^1(\mathbb{R}^3)}\left\{\int_{\mathbb{R}^3}|\nabla \xi(x)|^2\mathrm{d}x-\int_{\mathbb{R}^6}V(x,y)(1-\omega(x,y))\xi(x)\mathrm{d}y\mathrm{d}x\right\}\\
           \label{Eq:Definition_of_mu}
      & =\int_{\mathbb{R}^6}\frac{V_\mathrm{eff}(x)V_\mathrm{eff}(y)}{8\pi |x-y|}\mathrm{d}x\mathrm{d}y,
 \end{align}
 where we have introduced the effective two-particle interaction 
 \begin{align*}
    V_\mathrm{eff}:
\begin{cases}
    & \mathbb{R}^3\longrightarrow \mathbb{R},\\
     & x\mapsto \int_{\mathbb{R}^3}V(x,y)(1-\omega(x,y))\, \mathrm{d}y.
\end{cases}
\end{align*}
Finally, in order to identify the energy contribution due to the presence of four-particle correlations $\eta$, we have to express the kinetic and potential energy of $\eta(x-u,y-u,z-u)$ in terms of relative coordinates. For this purpose let us define $\mathbb{V}:\mathbb{R}^9\longrightarrow \mathbb{R}$ and $\mathcal{M}_*$ as
\begin{align*}
  \mathbb{V}(x_1,x_2,x_3): & =V(x_1,x_2)+V(x_1,x_3)+V(x_2,x_3)+V(x_2-x_1,x_3-x_1),\\
       \mathcal{M}_*: & =\sqrt{\frac{1}{2}\begin{bmatrix}
2 & 1 & 1\\
1 & 2 & 1\\
1 & 1 & 2
\end{bmatrix}},
\end{align*}
and identify the action of the Laplace operator and the potential on $\eta$ as
\begin{align}
\label{Eq:Correlation_Four_particle_with_Delta}
 \left(\Delta_{x_1} \!  \! + \! \Delta_{x_2} \!  \! + \! \Delta_{x_3} \!  \! + \! \Delta_{x_4}\right)\eta(x_1 \! - \! x_4,x_2 \! - \! x_4,x_3 \! - \! x_4) &   \! = \! (2\Delta_{\mathcal{M}_*}\eta)(x_1 \! - \! x_4,x_2 \! - \! x_4,x_3 \! - \! x_4),\\
 \label{Eq:Correlation_Four_particle_with_V}
  \sum_{1\leq i<j<k\leq 4} \! \! \! V(x_i \! - \! x_k,x_j \! - \! x_k)\eta(x_1 \! - \! x_4,x_2 \! - \! x_4,x_3 \! - \! x_4) & =(\mathbb{V}\eta)(x_1 \! - \! x_4,x_2 \! - \! x_4,x_3 \! - \! x_4).  
\end{align}
Combining kinetic, potential and interaction energy, see  Eq.~(\ref{Eq:Correlation_Four_particle_with_omega}), Eq.~(\ref{Eq:Correlation_Four_particle_with_Delta}) and Eq.~(\ref{Eq:Correlation_Four_particle_with_V}), yields
\begin{align*}
    \int_{\mathbb{R}^9}\left\{2\big|\mathcal{M}_*\nabla \eta(x)\big|^2+\mathbb{V}(x)\eta(x)^2-2f(x)\eta(x) \! \right\}\mathrm{d}x=\mathcal{Q}(\eta)-\mathcal{Q}(0),
\end{align*}
where we have introduced the function $f(x,y,z):=V(x,y)\omega(y,z)$ and the functional
\begin{align}
\label{Eq:Q-Functional}
\mathcal{Q}(\eta):=  \int_{\mathbb{R}^9}\left\{2\big|\mathcal{M}_*\nabla \eta(x)\big|^2+\mathbb{V}(x)\! \left|\frac{f(x)}{\mathbb{V}(x)}-\eta(x)\right|^2 \! \right\}\mathrm{d}x.
\end{align}
Note that $\frac{f}{\mathbb{V}}$ is well defined and bounded on the support of $\mathbb{V}$, due to the sign of $V\geq 0$. Consequently the corresponding energy correction is given by $-\sigma(V)\sqrt{N}$ with
\begin{align}
      \label{Eq:Definition_of_sigma}
    \sigma(V) \! := \mathcal{Q}(0)-  \inf_{\eta\in \dot{H}^1(\mathbb{R}^9)} \!  \mathcal{Q}(\eta).
\end{align}
It is content of our main Theorem \ref{Eq:Main_Theorem_Introduction}, that $\gamma(V),\mu(V)$ and $\sigma(V)$ describe the second order correction to the leading order asymptotics of the ground state energy $E_N$ in Eq.~(\ref{Eq:First_Order_Energy_Asymptotics}), which is of the order $O_{N\rightarrow \infty}\! \left(\sqrt{N}\right)$. Furthermore, we show that the order $O_{N\rightarrow \infty}\! \left(\sqrt{N}\right)$ term comes with a non-zero pre-factor for a large class of potentials $V$.
\begin{thm}
\label{Eq:Main_Theorem_Introduction}
   Let $V\in C^\infty \! \left(\mathbb{R}^6\right)$ be a bounded and non-negative function with compact support, such that the function $V_N$ defined in Eq.~(\ref{Eq:Potential_Rescaled}) is permutation symmetric. Furthermore, let $\gamma(V),\mu(V)$ and $\sigma(V)\in \mathbb{R}$ be as in Eq.~(\ref{Eq:Definition_of_gamma}), Eq.~(\ref{Eq:Definition_of_mu}) and Eq.~(\ref{Eq:Definition_of_sigma}) respectively, and let $b_{\mathcal{M}}(V)$ be as in Eq.~(\ref{Eq:Definition_of_b}). Then the ground state energy $E_N:=\inf\sigma(H_N)$ satisfies 
    \begin{align}
    \label{Eq:Main_Theorem_Introduction_Equation}
        E_N=\frac{1}{6}b_{\mathcal{M}}(V)N+\Big( \! \gamma(V) - \mu(V) -  \sigma(V)\! \Big) \sqrt{N} +O_{N\rightarrow \infty}\! \left(N^{\frac{1}{4}}\right).
    \end{align}
    Furthermore, there exists a $\lambda(V)>0$, such that for all $0<\lambda\leq \lambda(V)$
    \begin{align*}
        \gamma(\lambda V) - \mu(\lambda V)-  \sigma(\lambda V)< 0.
    \end{align*}
\end{thm}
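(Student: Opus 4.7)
The theorem has two parts: the sharp two--term energy expansion in Eq.~(\ref{Eq:Main_Theorem_Introduction_Equation}), which requires matching upper and lower bounds on $E_N$, and the sign of the $\sqrt{N}$--coefficient in the weak coupling regime. The plan for each is sketched below.

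For the upper bound I would use a Jastrow--type trial state incorporating all three correlation structures,
$$\Psi^{\mathrm{trial}}_N \;\propto\; \prod_{i<j<k<l}\!\bigl(1-\eta_N(x_i,x_j,x_k,x_l)\bigr)\!\!\prod_{i<j<k}\!\bigl(1-\omega_N(x_i,x_j,x_k)\bigr)\!\prod_{i<j}\!\bigl(1-\xi_N(x_i,x_j)\bigr),$$
where $\omega_N,\xi_N,\eta_N$ are $\sqrt{N}$--rescalings of the $\mathbb{R}^n$--minimisers of the variational problems defining $b_{\mathcal{M}}(V),\mu(V),\sigma(V)$, suitably localised on $\Lambda^n$. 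Expanding $\braket{\Psi^{\mathrm{trial}}_N,H_N\Psi^{\mathrm{trial}}_N}/\|\Psi^{\mathrm{trial}}_N\|^2$ cluster by cluster, the interactions among $\omega$--correlated triples reproduce $\tfrac{1}{6} b_{\mathcal{M}}(V)N$ to leading order, the self--interactions of $\omega$ generate the two integrals in Eq.~(\ref{Eq:Definition_of_gamma}) and give $\gamma(V)\sqrt{N}$, the $\omega$--$\xi$ cross term reproduces $-\mu(V)\sqrt{N}$ through Eq.~(\ref{Eq:Correlation_Four_particle_with_omega_II}), and the $\eta$ block yields $-\sigma(V)\sqrt{N}$ via the functional in Eq.~(\ref{Eq:Q-Functional}); all remaining cluster terms are $O(N^{1/4})$ or better.

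For the lower bound I would perform a sequence of unitary renormalisations on excitations above the condensate: a cubic conjugation $e^{B_\omega}$ that extracts the three--body correlation and isolates $\tfrac{1}{6}b_{\mathcal{M}}(V)N+\gamma(V)\sqrt{N}$; a quadratic conjugation $e^{B_\xi}$ that produces $-\mu(V)\sqrt{N}$ from the residual $V_{\mathrm{eff}}$--source; and a quartic conjugation $e^{B_\eta}$ that produces $-\sigma(V)\sqrt{N}$ from the residual $f$--source. Crucially, the bookkeeping requires Bose--Einstein condensation for the ground state with rate $O(N^{-1/2})$, improving the $o(1)$ rate of \cite{NRT1}; I would obtain this by a bootstrap in which a preliminary lower bound produced with the $o(1)$ input gives enough control to rerun the analysis with $N^{-1/2}$ condensation. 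After the three conjugations, the renormalised Hamiltonian is, up to $O(N^{1/4})$, a non--negative quadratic form plus a positive multiple of the number--of--excitations operator $\mathcal{N}_+$. The main obstacle is the compatibility of these three renormalisations with the bootstrap: each unitary generates a host of cross terms, each of which has to be absorbed into precisely the $\mathcal{N}_+$--bound one is trying to improve, so the argument must be arranged so that every iteration strictly reduces the error.

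Finally, for the negativity, I would rescale $V\mapsto\lambda V$ and use the Euler--Lagrange equation $(-2\Delta_{\mathcal{M}}+\lambda V)\omega_\lambda=\lambda V$ to obtain $\omega_\lambda=\lambda(-2\Delta_{\mathcal{M}})^{-1}V+O(\lambda^2)$ in $\dot{H}^1(\mathbb{R}^6)$. Substituting into Eq.~(\ref{Eq:Definition_of_gamma}), Eq.~(\ref{Eq:Definition_of_mu}) and Eq.~(\ref{Eq:Definition_of_sigma}) gives $\gamma(\lambda V)=O(\lambda^3)$, $\sigma(\lambda V)=O(\lambda^4)$, whereas
$$\mu(\lambda V)=\lambda^2\int_{\mathbb{R}^6}\frac{V_0(x)\,V_0(y)}{8\pi|x-y|}\,\mathrm{d}x\,\mathrm{d}y+O(\lambda^3),\qquad V_0(x):=\int_{\mathbb{R}^3}V(x,y)\,\mathrm{d}y.$$
Since $V\geq 0$ is non--trivial, $V_0$ is non--negative and not identically zero, and strict positive definiteness of the Coulomb kernel forces the $\lambda^2$--coefficient to be strictly positive. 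Thus $\gamma(\lambda V)-\mu(\lambda V)-\sigma(\lambda V)=-c\lambda^2+O(\lambda^3)$ with $c>0$, which is negative for every sufficiently small $\lambda>0$.
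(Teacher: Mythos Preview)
Your sign argument is essentially identical to the paper's (Lemma~\ref{Lemma:Correction_Coefficients}): the paper also shows $\mu(\lambda V)\sim c\lambda^2$ with $c>0$ given by the Coulomb integral of $V_0$, bounds $\gamma(\lambda V)=O(\lambda^3)$, and simply uses $\sigma(\lambda V)\ge 0$ (your sharper $O(\lambda^4)$ is correct but unnecessary). You also correctly identify the bootstrap: the paper first uses the $o(1)$ condensation of \cite{NRT1} to get a lower bound with $O(\sqrt{N})$ error, which together with a matching first-order upper bound yields BEC with rate $N^{-1/2}$ (Corollary~\ref{Cor:Condensation}); only then is the second-order lower bound run.

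The methodology for the energy bounds, however, is quite different in both directions. For the lower bound the paper does \emph{not} perform unitary conjugations. Instead it follows the algebraic scheme of \cite{B}: starting from a Feshbach--Schur-type map $T$ on the three-particle space it defines modified annihilation operators $c_k$ and $\psi_{ijk}$ so that $H_N$ is, up to controllable remainders, a sum of squares $\sum_k|k|^2 c_k^\dagger c_k+\tfrac{1}{6}\sum(\widetilde{V}_N)_{\cdots}\psi^\dagger\psi$ (Lemma~\ref{Lem:Many_Body_Block_Diagonal}). A second layer, with additional maps $T_2,T_4$ on the two- and four-particle spaces, upgrades these to operators $d_k,\xi_{ijk}$ and produces the constants $\gamma_N,\mu_N,\sigma_N$ directly as matrix elements (Corollary~\ref{Cor:Many_Body_Block_Diagonal_II}). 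The advantage over your unitary scheme is that positivity of the blocks is manifest without ever conjugating, so the cross terms you worry about are organised automatically; the price is a more bespoke error analysis (Lemmas~\ref{Lemma:Main_Error_Precise}--\ref{Lemma:X_3}). Your statement that the renormalised Hamiltonian is ``a non-negative quadratic form plus a multiple of $\mathcal{N}_+$'' is too optimistic in the three-body setting: cubic and quartic residues survive and have to be absorbed separately, which is precisely what the algebraic framework handles.

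For the upper bound the paper also avoids Jastrow factors entirely. The trial state is $\Phi=WU\Gamma_0$, where $\Gamma_0\equiv 1$, $U=e^{\mathcal{G}^\dagger-\mathcal{G}}$ with a cubic generator built from $T-1$, and $W=e^{(\mathcal{G}_2+\mathcal{G}_4)^\dagger-(\mathcal{G}_2+\mathcal{G}_4)}$ with quadratic and quartic generators built from $T_2-1$ and $T_4-1$. The point is that $U$ and $W$ are designed so that $c_k\Gamma\approx 0$ and $d_k\Phi\approx 0$, which plugs directly into the sum-of-squares representation from the lower bound. A Jastrow product with three factors could in principle reproduce the same energetics, but controlling the normalisation and the four-body cluster terms of $\prod(1-\eta_N)$ down to $O(N^{1/4})$ would be substantially harder than the Duhamel/Gr\"onwall estimates the paper uses (Lemma~\ref{Lem:Particle_Number_Creation}, Corollaries~\ref{Corollary:Xi_Upper_Bound}--\ref{Corollary:d_Upper_Bound}); your assertion that ``all remaining cluster terms are $O(N^{1/4})$'' is exactly the place where the Jastrow route would need the most work.
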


\begin{rem}
    While Theorem \ref{Eq:Main_Theorem_Introduction} concerns Bose gases in the ultra-dilute Gross-Pitaevskii regime occupying a volume of the order $\frac{1}{\sqrt{N}}$, the leading order behaviour of the ground state energy per unit volume $e(\rho)$ is known in the thermodynamic regime as well as a function of the density $\rho$, see \cite{NRT2}, and given in analogy to the leading order asymptotics in Eq.~(\ref{Eq:First_Order_Energy_Asymptotics}) by
    \begin{align*}
        e(\rho)=\frac{1}{6}b_{\mathcal{M}}(V)\rho^3+o_{\rho\rightarrow 0}\! \left(\rho^3\right).
    \end{align*}
  It is remarkable that the coefficients $ \gamma(V)$, $\mu(V) $ and $\sigma(V)$ from Theorem \ref{Eq:Main_Theorem_Introduction} are defined in terms of variational problems on the unconfined space $\mathbb{R}^{3d}$ and do not depend on the boundary conditions of the box $\Lambda^d$. Substituting $\rho$ with $\frac{1}{\sqrt{N}}$ in Theorem \ref{Eq:Main_Theorem_Introduction} we therefore expect the second order expansion of $e(\rho)$, as $\rho\rightarrow 0$, to be given by
    \begin{align*}
        e(\rho)=\rho^3\left(\frac{1}{6}b_{\mathcal{M}}(V)+\Big( \! \gamma(V) - \mu(V) -  \sigma(V)\! \Big)\rho \right)+o_{\rho\rightarrow 0}\! \left(\rho^4\right).
    \end{align*}
    This would be in contrast with the second order expansion of a Bose gas with two-body interactions, where in the celebrated Lee-Huang-Yang formula the dual group $\left(2\pi \mathbb{Z}\right)^3$ of the box $\Lambda$ appears. It is however expected that there is a corresponding Lee-Huang-Yang term for gases with three-body interactions, which should appear in a third order expansion of the energy as a term of the order $O_{N\rightarrow \infty}(1)$. 
\end{rem}
\textbf{Proof strategy of Theorem \ref{Eq:Main_Theorem_Introduction}.} Following the ideas in \cite{NRT1}, respectively \cite{BBCS0,BBCS,FS1,FS2}, which have been developed in the context of Bose gases with two-body interactions, we are going to unveil the correlation structure of the ground state with the help of a suitable coordinate transformation. Based on the strategy presented in \cite{B}, our initial coordinate transformation will be of algebraic nature, i.e. we introduce a new set of operators and observe that the many-body operator $H_N$ is almost diagonal in these new variables. The algebraic approach immediately allows us to find satisfactory lower bounds on the ground state energy $E_N$. Furthermore, we show that this coordinate transformation can be realized in terms of a unitary map, at least in an approximate sense, which yields the corresponding upper bound on $E_N$.

In order to find a suitable transformation bringing $H_N$ in a diagonal form, we observe that collisions between at most three particles will occur much more frequent compared to collisions between four or more particles, as we are in the dilute regime where the gas occupies only a volume of the magnitude $\frac{1}{\sqrt{N}}$. Consequently, we first look for a diagonalization of a gas with only three particles $N=3$, which will involve the three-particle correlation structure $\omega$, and subsequently lift it to a diagonalization of the full many-body problem. As it turns out, including the three-particle correlation structure is enough to identify the leading order behaviour of the ground state energy. To be more precise, we are able to show at this point 
\begin{align}
\label{Eq:Leading_Order_With_Optimal_Rate}
   E_N=\frac{1}{6}b_{\mathcal{M}}(V)N+O_{N\rightarrow \infty}\! \left(\sqrt{N}\right).
\end{align}
We want to emphasize that the proof of Eq.~(\ref{Eq:Leading_Order_With_Optimal_Rate}) depends on our ability to neglect collisions between four or more particles, and we note that the correlation structure involves mostly particles outside the state with zero-momentum. It is therefore crucial to have strong a priori information regarding the number of particles outside the state with zero momentum, which we will refer to as excited particles. In the language of second quantization, the number of excited particles can naturally be expressed as
\begin{align}
\label{Eq:Def_intro_Particle_Number}
    \mathcal{N}:=N-\mathcal{N}_0:=N-a_0^\dagger a_0,
\end{align}
where $N$ is the total number of particles, $\mathcal{N}_0$ counts the number of particles with zero momentum and $a_0$ is the annihilation operator corresponding to the zero momentum state, see also Section \ref{Sec:Condensation with a Rate} for a more comprehensive introduction. The following result, which has been verified in \cite{NRT1}, tells us that the number of excited particles is indeed small compared to the total number of particles $N$, i.e. 
\begin{align}
\label{Eq:Condensation_No_Rate}
    \frac{1}{N}\braket{\Psi_N^{\mathrm{GS}},\mathcal{N} \Psi_N^{\mathrm{GS}}}=o_{N\rightarrow \infty}(1),
\end{align}
where $\Psi_N^{\mathrm{GS}}$ is the ground state of the operator $H_N$. Using the a priori information in Eq.~(\ref{Eq:Condensation_No_Rate}) then allows us to identify the leading order asymptotics of the ground state energy $E_N$ in Eq.~(\ref{Eq:Leading_Order_With_Optimal_Rate}). In addition, we obtain at this point an improved (BEC) result 
\begin{align}
\label{Eq:Condensation_Yes_Rate}
    \frac{1}{N}\braket{\Psi_N^{\mathrm{GS}},\mathcal{N} \Psi_N^{\mathrm{GS}}}=O_{N\rightarrow \infty} \! \left(\frac{1}{\sqrt{N}}\right)
\end{align}
with a rate of the order $O_{N\rightarrow \infty} \! \left(\frac{1}{\sqrt{N}}\right)$, see also the subsequent Theorem \ref{Eq:Main_Theorem_Introduction_Condensation}, which we believe to be of independent interest. 

Finally, we use an additional coordinate transformation, which implements the two-particle and four-particle correlation structure $\xi$ and $\eta$, together with the improved control on the number of excited particles in Eq.~(\ref{Eq:Condensation_Yes_Rate}), in order to identify the coefficient $C$ in front of the $\sqrt{N}$ term in the energy asymptotics
\begin{align*}
    C=\gamma(V) - \mu(V) -  \sigma(V).
\end{align*}
Notably, collisions between four particles do contribute to the subleading term in the energy expansion in Eq.~(\ref{Eq:Main_Theorem_Introduction_Equation}), however in analogy to Eq.~(\ref{Eq:Leading_Order_With_Optimal_Rate}) we can dismiss collisions between five or more particles.

\begin{thm}
\label{Eq:Main_Theorem_Introduction_Condensation}
   Let $V$ satisfy the assumptions of Theorem \ref{Eq:Main_Theorem_Introduction} and let $\Psi_N^{\mathrm{GS}}$ denote the ground state of the operator $H_N$. Furthermore let $\mathcal{N}$ be the operator counting the number of excitations introduced in Eq.~(\ref{Eq:Def_intro_Particle_Number}). Then there exists a constant $C>0$, such that
   \begin{align*}
       \frac{1}{N}\braket{\Psi_N^{\mathrm{GS}},\mathcal{N} \Psi_N^{\mathrm{GS}}}\leq \frac{C}{\sqrt{N}}.
   \end{align*}
\end{thm}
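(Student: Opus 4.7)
The plan is to reduce Theorem \ref{Eq:Main_Theorem_Introduction_Condensation} to a coercive operator inequality on $L^2_{\mathrm{sym}}(\Lambda^N)$ of the form
\[
    H_N \;\geq\; \tfrac{1}{6} b_{\mathcal{M}}(V)\,N + c\, \mathcal{N} - C\sqrt{N}
\]
for constants $c,C>0$ independent of $N$. Once this inequality is in hand, testing against the ground state $\Psi_N^{\mathrm{GS}}$ and invoking the leading-order asymptotics $E_N = \tfrac{1}{6}b_{\mathcal{M}}(V)N + O(\sqrt{N})$ of Eq.~(\ref{Eq:Leading_Order_With_Optimal_Rate}) immediately yields $c\, \langle \Psi_N^{\mathrm{GS}}, \mathcal{N}\Psi_N^{\mathrm{GS}}\rangle \leq C'\sqrt{N}$, which is exactly the claim after dividing by $N$.

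To produce the operator inequality I would carry out the algebraic three-particle coordinate transformation implementing the correlation structure $\omega$ (the same transformation that underlies the proof of Eq.~(\ref{Eq:Leading_Order_With_Optimal_Rate})). Schematically, after the transformation $H_N$ takes the form
\[
    \tfrac{1}{6}b_{\mathcal{M}}(V)\,N + \mathbb{H}_{\mathrm{kin}} + \mathbb{V}_{\mathrm{res}} + \mathcal{E},
\]
where $\mathbb{H}_{\mathrm{kin}} := \sum_{p \neq 0}|p|^2 a_p^\dagger a_p$ satisfies $\mathbb{H}_{\mathrm{kin}} \geq (2\pi)^2 \mathcal{N}$ (the minimal non-zero torus momentum has length $2\pi$), $\mathbb{V}_{\mathrm{res}}\geq 0$ is the residual non-negative potential, and $\mathcal{E}$ gathers the off-diagonal terms produced by four- and higher-particle collisions that are not absorbed by $\omega$. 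It then suffices to establish $\pm\mathcal{E} \leq \varepsilon\, \mathbb{H}_{\mathrm{kin}} + C_\varepsilon \sqrt{N}$ for some small $\varepsilon >0$, so that the remaining $(1-\varepsilon)\mathbb{H}_{\mathrm{kin}}$ supplies the coercive $c\mathcal{N}$ term.

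Bounding $\mathcal{E}$ crucially uses the a priori condensation $\tfrac{1}{N}\langle \Psi_N^{\mathrm{GS}}, \mathcal{N}\Psi_N^{\mathrm{GS}}\rangle = o(1)$ from \cite{NRT1}: each off-diagonal monomial in $\mathcal{E}$ comes with factors $a_0/\sqrt{N}$ or $a_0^\dagger/\sqrt{N}$, which by Cauchy--Schwarz can be split into a kinetic piece bounded by $\varepsilon\, \mathbb{H}_{\mathrm{kin}}$ and a remainder of the form $C_\varepsilon \sqrt{N}\cdot (\mathcal{N}/N)^\alpha$, the latter being $o(\sqrt{N})$. The main obstacle I anticipate is the four-particle contribution to $\mathcal{E}$, whose natural size is comparable to $\mathbb{H}_{\mathrm{kin}}$ itself: absorbing it without losing the positive constant $c$ in front of $\mathcal{N}$ forces one to exploit the short range of $V_N$ together with the commutation rule $[\mathcal{N},a_0]=-a_0$, re-summing so that only one net power of $\mathcal{N}$ appears while the other factor is converted into the small prefactor $\sqrt{\langle \mathcal{N}\rangle /N}\ll 1$. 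The fact that the initial coordinate transformation is algebraic rather than strictly unitary, as emphasized in the paper's proof strategy, is what makes this lower-bound estimate tractable, since one does not need to track the transformation on the full Hilbert space.
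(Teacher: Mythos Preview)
Your plan has the right overall shape but contains a genuine gap: the coercive inequality you aim for,
\[
    H_N \;\geq\; \tfrac{1}{6} b_{\mathcal{M}}(V)\,N + c\, \mathcal{N} - C\sqrt{N},
\]
is \emph{not} available as an operator inequality on all of $L^2_{\mathrm{sym}}(\Lambda^N)$. After the algebraic three-particle transformation the kinetic term is $\sum_k |k|^2 c_k^\dagger c_k$ in the new variables $c_k$ from Eq.~(\ref{Eq:Definition_c_variable}), not $\sum_{p\neq 0}|p|^2 a_p^\dagger a_p$, and the error terms that arise (Lemmas~\ref{Lem:Quadratic_Potential_Estimate}, \ref{Lem:First_Error_Term_Estimate}, \ref{Lem:Rest_Estimate}) contain contributions of the type $\mathcal{N}^2/N$ and $\sum_k |k|^2 c_k^\dagger \tfrac{\mathcal{N}}{N} c_k$. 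Neither can be absorbed into the kinetic piece using only the a priori information $\langle \Psi_N^{\mathrm{GS}},\mathcal{N}\Psi_N^{\mathrm{GS}}\rangle = o(N)$ from \cite{NRT1}: condensation in expectation gives no control on $\langle \mathcal{N}^2\rangle$, and for the second term one needs $\mathcal{N}/N$ to be small \emph{spectrally} on the state on which $c_k$ acts. This is precisely why Theorem~\ref{Th:(1)} is stated only for states satisfying $\mathds{1}(\mathcal{N}\leq \epsilon N)\Psi=\Psi$.

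The missing ingredient is the IMS localization of Lemma~\ref{Lem:IMS}, which the paper uses twice in the proof of Corollary~\ref{Cor:Condensation}. First, starting from the $o(N)$ condensation of \cite{NRT1}, it produces a modified approximate ground state $\xi_N$ with the spectral cutoff $\mathds{1}(\mathcal{N}\leq \epsilon N)\xi_N=\xi_N$ and energy within $O(1)$ of $E_N$; on $\xi_N$ the coercive lower bound of Theorem~\ref{Th:(1)} applies and, combined with the upper bound of Theorem~\ref{Th:First_Order_Upper_Bound}, yields $\langle \xi_N,\mathcal{N}\xi_N\rangle = O(\sqrt{N})$. Second, Eq.~(\ref{Eq:Energy_After_IMS_II}) transfers this back to the genuine ground state $\Psi_N^{\mathrm{GS}}$, at the price of an additive term $\sqrt{N}/(\langle\widetilde\xi_N,H_N\widetilde\xi_N\rangle - E_N)$ where $\widetilde\xi_N$ lives on $\{\mathcal{N}>\tfrac{\epsilon}{2}N\}$; a short contradiction argument using \cite{NRT1} again shows this denominator diverges. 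Your proposal skips both localization steps, and without them the $\mathcal{N}^2/N$ errors cannot be closed.
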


\textbf{Outline.} In Subsection \ref{Subsec:The three-body Problem} we are first deriving the three-particle correlation structure for a model where the total number of particles is $N=3$. Following the strategy proposed in \cite{B}, we are implementing in a systematic way the correlation structures from Subsection \ref{Subsec:The three-body Problem} for gases with many particles $N\gg 1$ in Section \ref{Sec:Condensation with a Rate}. Using Bose-Einstein condensation of the ground state as an input, this allows us to immediately recover the leading order behaviour of $E_N$ as a lower bound and, in the subsequent Section \ref{Sec:First_Order_Upper_Bound}, also as an upper bound. Furthermore, we obtain at this point an improved version of (BEC) with a rate, which especially concludes the proof of Theorem \ref{Eq:Main_Theorem_Introduction_Condensation}. In Section \ref{Section:Second_Order_Lower_Bound}, we are going to describe the two-particle and four-particle correlation structure, which gives rise to the correction $\mu(V)$ and the correction $\sigma(V)$ defined in Eq.~(\ref{Eq:Definition_of_mu}) and Eq.~(\ref{Eq:Definition_of_sigma}) respectively. It is then the purpose of Subsection \ref{Subsection:Proof_of_Theorem_Main_in_Text} to verify the lower bound in our main Theorem \ref{Eq:Main_Theorem_Introduction}, wherein we use the improved (BEC) result, and the purpose of Section \ref{Sec:Second_Order_Upper_Bound} to verify the corresponding upper bound. The sign of $\gamma(\lambda V) - \mu(\lambda V)-  \sigma(\lambda V)$ is established in Section \ref{Sec:Analysis of the Scattering Coefficients} for small $\lambda>0$, along side other useful properties of the scattering solutions that describe the correlation structure. Finally Appendix \ref{Sec:Auxiliary Results} contains a collection of operator inequalities, and Appendix \ref{Sec:IMS Localisation} introduces a localization formula that allows us to improve (BEC) results, at the cost of modifying the state under consideration.

\subsection{The three-body Problem}
\label{Subsec:The three-body Problem}
While one would naively expect that the leading order behaviour of the ground state energy $E_N$ is dictated by the energy of the uncorrelated wave function $\Gamma_0(x_1,\dots ,x_N):=1$
\begin{align*}
    \braket{\Gamma_0,H_N\Gamma_0} \! = \! \frac{N(N \! - \! 1)(N \! - \! 2)}{6N^2}\int_{\mathbb{R}^3}\int_{\mathbb{R}^3}V(x,y)\, \mathrm{d}x\mathrm{d}y \! = \! \frac{N}{6}\int_{\mathbb{R}^3}\int_{\mathbb{R}^3}V(x,y)\, \mathrm{d}x\mathrm{d}y \! + \! o_{N\rightarrow \infty}(N),
\end{align*}
it is due to the presence of correlations in the ground state of the operator $H_N$, that the leading order coefficient $b_{\mathcal{M}}(V)$ in the energy asymptotics of $E_N$ in Eq.~(\ref{Eq:First_Order_Energy_Asymptotics}) satisfies 
\begin{align*}
    b_{\mathcal{M}}(V)<\int_{\mathbb{R}^3}\int_{\mathbb{R}^3}V(x,y)\, \mathrm{d}x\mathrm{d}y.
\end{align*}
In order to quantify the correlation energy $\int_{\mathbb{R}^3}\int_{\mathbb{R}^3}V(x,y)\, \mathrm{d}x\mathrm{d}y-b_{\mathcal{M}}(V)$, we are going to follow the frame work developed in \cite{B}, and investigate first the corresponding three-particle operator $H_{(3)}:=-\Delta_{3}+V_N$ acting on $L^2(\Lambda^3)$, where $\Delta_3:=\Delta_{x_1}+\Delta_{x_2}+\Delta_{x_3}$, before we study the many particle operator $H_N$ defined in Eq.~(\ref{Eq:Hamilton_Operator}). It will be our goal to find a transformation
\begin{align*}
T:L^2(\Lambda^3)\longrightarrow L^2(\Lambda^3)   
\end{align*}
that removes correlations between states with low momenta and states with high momenta, i.e. we want to bring $H_{(3)}$ in a block-diagonal form, which allows us to extract the correlation energy. In is content of Section \ref{Sec:Condensation with a Rate} to lift the block-diagonalization from the three-particle problem, described by the transformation $T$, to a block-diagonalization of the many-particle operator $H_N$, which will allow us to identify the correlation energy for the many-particle problem. \\

Let us first specify the set of low momenta as either the set where all three particles occupy the zero momentum state
\begin{align}
\label{Eq:Def_Low_Momentum_Set_0}
   \mathcal{L}_0:=\{(0,0,0)\}\subseteq (2\pi \mathbb{Z})^9 
\end{align}
or the set where at most one of the three particles is allowed to have non-zero momentum
\begin{align*}
 \mathcal{L}_K:=\bigcup_{|k|\leq K}\{(k,0,0),(0,k,0),(0,0,k)\}\subseteq (2\pi \mathbb{Z})^9 ,   
\end{align*}
where $0\leq K<\infty$ is a parameter that we are going to specify later. For the purpose of extracting the correlation energy, it is enough to consider $K:=0$, however for technical reasons it is going to be convenient later to consider positive values $K>0$ as well. Having the set $\mathcal{L}_K$ at hand, we can define the projection $\pi_K$ onto states with low momenta as
\begin{align}
\label{Eq:Definition_pi_K}
    \pi_K(\Psi):=\sum_{(k_1 k_2 k_3)\in \mathcal{L}_K}\braket{u_{k_1}u_{k_2}u_{k_3},\Psi}u_{k_1}u_{k_2}u_{k_3},
\end{align}
where $u_k(x):=e^{ikx}$ for $k\in (2\pi \mathbb{Z})^3$ and $u_{k_1}u_{k_2}u_{k_3}$ has to be understood as the function $u_{k_1}(x_1)u_{k_2}(x_2)u_{k_3}(x_3)$. Let us furthermore introduce the projection $Q$ acting on $L^2(\Lambda)$ as
\begin{align*}
    Q(\phi):=\sum_{k\neq 0}\braket{u_k,\phi}u_k.
\end{align*}
Following the strategy in \cite{B}, let $R$ be the pseudo-inverse of the operator $Q^{\otimes 3}(-\Delta_{3}+V_N)Q^{\otimes 3}$ and let us define the Feshbach-Schur like transformation
\begin{align}
\label{Eq:Definition_Feshbach-Schur}
    T:=1+RV_N \pi_K.
\end{align}
Note that $T$ would be a proper Feshbach-Schur map, in case we would exchange $Q^{\otimes 3}$ with the projection $1-\pi_K$, however we prefer to work with $Q^{\otimes 3}$ for technical reasons. Conjugating $H_{(3)}$ with $T^{-1}=1-RV_N \pi_K$ then yields the approximate block-diagonalization
\begin{align*}
  &  (T^{-1})^\dagger H_{(3)}T^{-1}=-\Delta_3+ \pi_K \left(V_N -V_N R V_N\right)\pi_K + \!  (1-\pi_K) V_N  (1-\pi_K) \\
    & \ \ \ \  \ \ \ \ \ \  \ \ \ \ \ \  \ \ \ \  \  \ \ \ \ +  \left\{\pi_K \left(V_N -V_N R V_N \right) \left(1-\pi_K- Q^{\otimes 3} \right)+\mathrm{H.c.}\right\},
\end{align*}
where we introduced the notation $\{A+\mathrm{H.c.}\}:=A+A^*$ and made use of the identity
\begin{align*}
    H_{(3)}T^{-1} & =H_{(3)}-H_{(3)}R V_N \pi_K=H_{(3)}-Q^{\otimes 3}H_{(3)}RV_N \pi_K-(1-Q^{\otimes 3})H_{(3)}R V_N \pi_K\\
    & =H_{(3)}-Q^{\otimes 3}V_N \pi_K-(1-Q^{\otimes 3})V_NR V_N \pi_K.
\end{align*}
Defining the almost block-diagonal renormalized potential $\widetilde{V}_N$ as
\begin{align}
\nonumber
   &   \widetilde{V}_N \! := \pi_K \left(V_N -V_N R V_N\right)\pi_K + \!  (1-\pi_K) V_N  (1-\pi_K)  \\
    \label{Eq:Definition_tilde_V}
    & \ \ \ \ +  \left\{\left(1-\pi_K- Q^{\otimes 3} \right)\left(V_N -V_N R V_N \right) \pi_K +\mathrm{H.c.}\right\},
\end{align}
we therefore obtain the algebraic identity
\begin{align}
\label{Eq:Block_diagonal_3_particle}
    H_{(3)}=T^\dagger \left(-\Delta_3+\widetilde{V}_N\right)T .
\end{align}
The presence of $ \left\{\left(1-\pi_K- Q^{\otimes 3} \right)\left(V_N -V_N R V_N \right) \pi_K +\mathrm{H.c.}\right\}$ in $\widetilde{V}_N$, which are the terms that violate the block-diagonal structure, is due to the usage of $Q^{\otimes 3}$ instead of $1-\pi_K$, however it turns out that these terms do not contribute to the correlation energy to leading order. One therefore expects to read of the leading order coefficient $b_{\mathcal{M}}(V)$ in the asymptotic expansion of the ground state energy $E_N$ in Eq.~(\ref{Eq:First_Order_Energy_Asymptotics}) from the matrix entries of the renormalized potential
\begin{align*}
    \left(\widetilde{V}_N\right)_{000,000}=\braket{u_0u_0u_0,\widetilde{V}_N u_0u_0u_0}=\braket{u_0u_0u_0,\left(V_N -V_N R V_N \right) u_0u_0u_0}.
\end{align*}
As we are going to verify in Lemma \ref{Lem:Coefficient_Control_II}, we have indeed the asymptotic result
\begin{align*}
   b_{\mathcal{M}}(V)=N^2 \braket{u_0u_0u_0,\left(V_N -V_N R V_N \right) u_0u_0u_0}+ O_{N\rightarrow \infty}\! \left(\frac{1}{N}\right).
\end{align*}

\section{First Order Lower Bound}
\label{Sec:Condensation with a Rate}
It is the goal of this Section to bring, in analogy to Eq.~(\ref{Eq:Block_diagonal_3_particle}), the many-particle operator $H_N$ in an approximate block-diagonal form, which allows us to obtain an asymptotically correct lower bound on the ground state energy $E_N$ in Corollary \ref{Cor:Condensation}. For this purpose let us first rewrite the operator $H_N$ defined in Eq.~(\ref{Eq:Hamilton_Operator}) in the language of second quantization as 
\begin{align}
\label{Eq:Hamilton_Operator_2nd_quantized}
    H_N=\sum_{k\in (2\pi \mathbb{Z})^3}|k|^2 a_k^\dagger a_k+\frac{1}{6}\sum_{i j k,\ell m n\in (2\pi \mathbb{Z})^3}\left(V_N\right)_{ijk,\ell m n}a_i^\dagger a_j^\dagger a_k^\dagger a_{\ell}a_{m}a_{n},
\end{align}
where $a_k:=a(u_k)$ annihilates a particle in the mode $u_k(x):=e^{ik\cdot x}$ and its adjoint $a_k^\dagger$ creates a particle in the mode $u_k$, see e.g. \cite{BL} for an introduction to the standard creation and annihilation operators, and $\left(V_N\right)_{ijk,\ell m n}$ are the matrix elements of $V_N$ w.r.t. to the basis $u_i u_j u_k$ defined below Eq.~(\ref{Eq:Definition_pi_K}). If not indicated otherwise, we will always assume that indices run in the set $(2\pi \mathbb{Z})^3$, which we will usually neglect in our notation, and we write $k\neq 0$ in case the index runs in the set $(2\pi \mathbb{Z})^3\setminus \{0\}$. Note that the operator on the right hand side of Eq.~(\ref{Eq:Hamilton_Operator_2nd_quantized}) is defined on the full Fock space 
\begin{align*}
 \mathcal{F}\! \left(L^2 \! \left(\Lambda\right)\right):=\bigoplus_{n=0}^\infty L^2_{\mathrm{sym}} \! \left(\Lambda^n\right)   ,
\end{align*}
while the left hand side is only defined on $L^2_{\mathrm{sym}} \! \left(\Lambda^N\right)\subseteq \mathcal{F}\! \left(L^2 \! \left(\Lambda\right)\right)$, and therefore Eq.~(\ref{Eq:Hamilton_Operator_2nd_quantized}) has to be understood as being restricted to the subspace $L^2_{\mathrm{sym}} \! \left(\Lambda^N\right)$. Furthermore, we observe that $V_N$ is a translation-invariant multiplication operator, and therefore the matrix elements of $V_N$ satisfy $\left(V_N\right)_{ijk,\ell m n}=0$ in case $i+j+k\neq \ell+m+n$ and otherwise
\begin{align}
\label{Eq:Coefficients_in_Fourier}
    \left(V_N\right)_{ijk,\ell m n}=\left(V_N\right)_{(i-\ell)(j-m)(k-n),0 0 0}=N^{-2}\widehat{V}\! \left(\frac{j-m}{\sqrt{N}},\frac{k-n}{\sqrt{N}}\right).
\end{align}
Following the strategy proposed in \cite{B}, we are going to introduce a many-particle counterpart to the three particle map $T$ defined in Eq.~(\ref{Eq:Definition_Feshbach-Schur}), which is realized by the set of operators
\begin{align}
\label{Eq:Definition_c_variable}
c_k:&=a_k+\frac{1}{2}\sum_{i j,\ell m n}(T-1)_{ijk,\ell m n} \, a_i^\dagger a_j^\dagger a_{\ell}a_{m}a_{n},\\
\label{Eq:Definition_psi}
\psi_{i j k}:&=\sum_{\ell m n}T_{ijk,\ell m n} \,  a_{\ell}a_{m}a_{n}.
\end{align}
Here $T_{ijk,\ell m n}:=\braket{u_i u_j u_k, T u_{\ell} u_{m} u_{n}}$ denotes the matrix elements of $T$. The following Lemma \ref{Lem:Many_Body_Block_Diagonal} is the many-particle counterpart to Eq.~(\ref{Eq:Block_diagonal_3_particle}), in the sense that it provides an (approximate) block-diagonal representation of the operator $H_N$ in terms of the new variables $c_k$ and $\psi_{ijk}$.
\begin{lem}
\label{Lem:Many_Body_Block_Diagonal}
    Let $\widetilde{V}_N$ be the operator defined in Eq.~(\ref{Eq:Definition_tilde_V}). Then we have 
    \begin{align}
       \label{Eq:First_Algebraic_Representation}
    H_N=\sum_{k}|k|^2 & c_k^\dagger c_k \! + \! \frac{1}{6}\sum_{i j k,\ell m n}\left(\widetilde{V}_N\right)_{i j k,\ell m n}\psi_{i j k}^\dagger \psi_{\ell m n}-\mathcal{E},
    \end{align}
    where the residual term $\mathcal{E}$ is defined as
    \begin{align*}
    \mathcal{E}:=\frac{1}{4}&\sum_{i j k, \ell m n; i' j', \ell' m' m'}|k|^2 \overline{(T-1)_{i'j'k,\ell' m' n'}}(T-1)_{ijk,\ell m n}\, a_{\ell'}^\dagger a_{m'}^\dagger a_{n'}^\dagger \\ & \times \left(a_{i'}a_{j'}a_i^\dagger a_j^\dagger-\delta_{ii'}\delta_{jj'}-\delta_{ij'}\delta_{ji'}\right)  a_{\ell}a_{m}a_{n}.
\end{align*}
\end{lem}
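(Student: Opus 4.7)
The identity Eq.~(\ref{Eq:First_Algebraic_Representation}) is an algebraic consequence of the three-particle diagonalisation Eq.~(\ref{Eq:Block_diagonal_3_particle}), lifted to many-body operators. The key tool is the second-quantised lift $A\mapsto \sum_{ijk,\ell mn}A_{ijk,\ell mn}a_i^\dagger a_j^\dagger a_k^\dagger a_\ell a_m a_n$ of a three-particle operator $A$, which sends matrix products to the obvious bilinear sums in creation/annihilation operators. Throughout I would use that the coefficients $T_{ijk,\ell mn}$, $(V_N)_{ijk,\ell mn}$ and $(\widetilde V_N)_{ijk,\ell mn}$ are invariant under independent permutations of $(i,j,k)$ and of $(\ell,m,n)$, which follows from the permutation symmetry of $V_N$, of $R$, and of $\pi_K$. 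Substituting the definition of $\psi_{ijk}$ into the interaction piece then immediately gives
\begin{align*}
\tfrac{1}{6}\sum_{ijk,\ell mn}(\widetilde V_N)_{ijk,\ell mn}\psi_{ijk}^\dagger\psi_{\ell mn}=\tfrac{1}{6}\sum_{ijk,\ell mn}(T^\dagger\widetilde V_N T)_{ijk,\ell mn}\,a_i^\dagger a_j^\dagger a_k^\dagger a_\ell a_m a_n,
\end{align*}
i.e.\ this piece is the lift of the matrix product $T^\dagger\widetilde V_N T$, which by Eq.~(\ref{Eq:Block_diagonal_3_particle}) expanded via $T=1+(T-1)$ equals $V_N-(-\Delta_3)(T-1)-(T-1)^\dagger(-\Delta_3)-(T-1)^\dagger(-\Delta_3)(T-1)$.

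Next I would expand $\sum_k|k|^2 c_k^\dagger c_k$ into the four pieces coming from $c_k=a_k+\tfrac12(c_k-a_k)$. The $a_k^\dagger a_k$ piece is the kinetic part of $H_N$. In each of the two terms linear in $T-1$, the factor $|k|^2$ combined with $a_k^{(\dagger)}$ can be permuted to the matching creation (respectively annihilation) position since creators commute; symmetrising $|k|^2$ over $(i,j,k)$ via the upper symmetry of $T-1$ then identifies $|i|^2+|j|^2+|k|^2$ with the diagonal entries of $-\Delta_3$ and reproduces $\tfrac{1}{6}$ times the lift of $(-\Delta_3)(T-1)$, respectively of $(T-1)^\dagger(-\Delta_3)$. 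The combinatorial heart of the proof is the bilinear cross term, where one inserts the normal-ordering identity
\begin{align*}
a_{i'}a_{j'}a_i^\dagger a_j^\dagger=\bigl(a_{i'}a_{j'}a_i^\dagger a_j^\dagger-\delta_{ii'}\delta_{jj'}-\delta_{ij'}\delta_{ji'}\bigr)+\bigl(\delta_{ii'}\delta_{jj'}+\delta_{ij'}\delta_{ji'}\bigr);
\end{align*}
the two scalar contractions contribute equally by the upper-index symmetry of $T-1$, and together with the prefactor $\tfrac14$ they produce exactly $\tfrac16$ times the lift of $(T-1)^\dagger(-\Delta_3)(T-1)$, while the first (bracketed) piece is by definition $\mathcal{E}$.

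Combining the three computations, the three correction matrices appearing in the expansion of $T^\dagger\widetilde V_N T$ cancel precisely the three correction matrices produced by $\sum_k|k|^2 c_k^\dagger c_k-\sum_k|k|^2 a_k^\dagger a_k-\mathcal{E}$, so the right-hand side of Eq.~(\ref{Eq:First_Algebraic_Representation}) reduces to $\sum_k|k|^2 a_k^\dagger a_k+\tfrac{1}{6}\sum_{ijk,\ell mn}(V_N)_{ijk,\ell mn}a_i^\dagger a_j^\dagger a_k^\dagger a_\ell a_m a_n=H_N$, as desired. The main conceptual point is the reduction to the three-particle identity Eq.~(\ref{Eq:Block_diagonal_3_particle}); the main technical obstacle is the careful bookkeeping of the delta contractions and of the upper-index permutation symmetries in the cross term, which are what dictate both the prefactor $\tfrac{1}{4}$ and the precise shape of the residual $\mathcal{E}$ appearing in the statement.
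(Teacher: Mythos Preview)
Your proposal is correct and follows essentially the same route as the paper's proof: both identify the $\widetilde V_N$ term as the second-quantised lift of $T^\dagger\widetilde V_N T$, decompose $\sum_k|k|^2 c_k^\dagger c_k$ into the diagonal, linear and bilinear pieces, use the upper-index permutation symmetry of $T-1$ to convert $|k|^2$ into $-\Delta_3/3$, and isolate $\mathcal E$ as the non-fully-contracted remainder of the bilinear term before invoking Eq.~(\ref{Eq:Block_diagonal_3_particle}). (The decomposition you wrote as ``$c_k=a_k+\tfrac12(c_k-a_k)$'' is a slip---you mean $c_k=a_k+(c_k-a_k)$ with $c_k-a_k$ carrying the explicit $\tfrac12$ from Eq.~(\ref{Eq:Definition_c_variable})---but the subsequent computations are correct.)
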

\begin{proof}
   Using the permutation symmetry of $T$, we first identify $\sum_{k}|k|^2 (c_k-a_k)^\dagger (c_k-a_k)$ as
   \begin{align*}
    \frac{1}{4}& \sum_{i j k, \ell m n; i' j', \ell' m' m'}|k|^2 \overline{(T-1)_{i'j'k,\ell' m' n'}}(T-1)_{ijk,\ell m n}\, a_{\ell'}^\dagger a_{m'}^\dagger a_{n'}^\dagger a_{i'}a_{j'}a_i^\dagger a_j^\dagger  a_{\ell}a_{m}a_{n}\\
    & =\frac{1}{2}  \!  \! \sum_{i j k,\ell m n}\left\{(T-1)^\dagger (-\Delta_{x_3})(T-1)\right\}_{ijk,\ell m n} a_i^\dagger a_j^\dagger a_k^\dagger a_{\ell}a_{m}a_{n}+\mathcal{E}\\
    &   =\frac{1}{6}  \!  \! \sum_{i j k,\ell m n}\left\{(T-1)^\dagger (-\Delta_3)(T-1)\right\}_{ijk,\ell m n} a_i^\dagger a_j^\dagger a_k^\dagger a_{\ell}a_{m}a_{n}+\mathcal{E},
   \end{align*}
where $\Delta_3$ is the Laplace operator on $L^2(\Lambda)^{\otimes 3}$. Similarly
\begin{align*}
   & \sum_{k}|k|^2 a_k^\dagger (c_k-a_k)+\mathrm{H.c.}=\frac{1}{6}  \!  \! \sum_{i j k,\ell m n}\Big\{(-\Delta_3)(T-1)+\mathrm{H.c.}\Big\}_{ijk,\ell m n} a_i^\dagger a_j^\dagger a_k^\dagger a_{\ell}a_{m}a_{n},\\
   & \ \ \ \  \sum_{i j k,\ell m n}\left(\widetilde{V}_N\right)_{i j k,\ell m n}\psi_{i j k}^\dagger \psi_{\ell m n}=\sum_{i j k,\ell m n}\left(T^\dagger \widetilde{V}_N T\right)_{i j k,\ell m n}a^\dagger_i a^\dagger_j a^\dagger_k a_\ell a_m a_n.
\end{align*}
Since
\begin{align*}
    (T \! - \! 1)^\dagger \Delta_3(T \! - \! 1) \! + \! \left\{\Delta_3(T \! - \! 1) \! + \! \mathrm{H.c.}\right\} \! = \! T^\dagger \Delta_3 T \! - \! \Delta_3
\end{align*}
we obtain
\begin{align*}
  &   \ \ \ \ \ \sum_{k}|k|^2 c_k^\dagger c_k \! + \! \frac{1}{6}\sum_{i j k,\ell m n}\left(\widetilde{V}_N\right)_{i j k,\ell m n}\psi_{i j k}^\dagger \psi_{\ell m n}=\\
    & \sum_{k}|k|^2 a_k^\dagger a_k \! +  \!    \frac{1}{6}\! \! \sum_{i j k,\ell m n}\left\{T^\dagger \left(-\Delta_3+\widetilde{V}_N\right)T   \! +  \!  \Delta_3\right\}_{ijk,\ell m n}\! \!  a_i^\dagger a_j^\dagger a_k^\dagger a_{\ell}a_{m}a_{n}  \!  +  \!  \mathcal{E}.
\end{align*}
We observe that $T^\dagger \left(-\Delta_3+\widetilde{V}_N\right)T   \! +  \!  \Delta_3=V_N$ by Eq.~(\ref{Eq:Block_diagonal_3_particle}), which concludes the proof by the representation of $H_N$ in second quantization, see Eq.~(\ref{Eq:Hamilton_Operator_2nd_quantized}).
\end{proof}

Making use of the sign $ (1-\pi_K) V_N (1-\pi_K) \geq 0$, we immediately obtain that
\begin{align*}
    \sum_{i j k,\ell m n}\left((1-\pi_K) V_N(1-\pi_K)\right)_{i j k,\ell m n}\psi_{ijk}^\dagger \psi_{\ell m n}\geq 0.
\end{align*}
Therefore Lemma \ref{Lem:Many_Body_Block_Diagonal} allows us to bound $H_N$ from below by
\begin{align}
\nonumber
     H_N & \geq \sum_{k}|k|^2  c_k^\dagger c_k+\frac{1}{6}\sum_{i j k,\ell m n}\left(\widetilde{V}_N-(1-\pi_K) V_N(1-\pi_K)\right)_{i j k,\ell m n}\psi_{ijk}^\dagger \psi_{\ell m n}-\mathcal{E}\\
     \label{Eq:Estimate_Hamiltonian_a}
     &= \sum_{k}|k|^2  c_k^\dagger c_k+\frac{1}{6}\sum_{i j k,\ell m n}\left(\widetilde{V}_N-(1-\pi_K) V_N(1-\pi_K)\right)_{i j k,\ell m n}a_i^\dagger  a_j^\dagger a_k^\dagger a_\ell a_m a_n-\mathcal{E},
\end{align}
where we have used the fact that $\psi_{ijk}=a_i a_j a_k$ in case one of the indices is zero, which is a direct consequence of the observation that $(T-1)_{ijk,\ell mn}=0$ in case one of the indices in $\{i,j,k\}$ is zero. Note that we can write
\begin{align*}
    \widetilde{V}_N-(1-\pi_K) V_N(1-\pi_K)=A +  B+B^*
\end{align*}
with $A$ and $B$ defined as 
\begin{align*}
    A: & =\pi_K \left(V_N -V_N R V_N\right)\pi_K,\\
    B: & = \left(1-\pi_K- Q^{\otimes 3} \right)\left(V_N -V_N R V_N \right)\pi_K .
\end{align*}
 Let us first analyse the term involving $A$
\begin{align}
    \label{Eq:Identity_Hamiltonian_A}
   \frac{1}{6} \sum_{i j k,\ell m n}\left(A\right)_{i j k,\ell m n}a_i^\dagger  a_j^\dagger a_k^\dagger a_\ell a_m a_n=\lambda_{0,0}\, (a_0^\dagger)^{3}a_0^3+9 a_0^{2\dagger}a_0^2\sum_{0<|k|\leq K}  \lambda_{k,0} a_k^\dagger a_k,
\end{align}
where we define the coefficients 
\begin{align*}
    \lambda_{k,\ell}:=\frac{1}{18}\braket{u_0 u_\ell u_{k-\ell},(V_N -V_N R V_N)(u_0 u_0 u_k+u_0 u_k u_0+u_k u_0 u_0)}.
\end{align*}
 To keep the notation light, we do not explicitly indicate the $N$ dependence of $\lambda_{k,\ell}$. Similarly
\begin{align}
    \label{Eq:Identity_Hamiltonian_B}
   \frac{1}{6} \sum_{i j k,\ell m n}\left(B\right)_{i j k,\ell m n}a_i^\dagger  a_j^\dagger a_k^\dagger a_\ell a_m a_n=3 a_0^\dagger a_0^3\sum_{\ell\neq 0}\lambda_{0,\ell}\,  a_{\ell}^\dagger a_{-\ell}^\dagger+9 a_0^\dagger a_0^2\sum_{\ell,0<|k|\leq K}\lambda_{k,\ell}  a_\ell^\dagger a_{k-\ell}^\dagger a_k.
\end{align}
Putting together Eq.~(\ref{Eq:Estimate_Hamiltonian_a}), Eq.~(\ref{Eq:Identity_Hamiltonian_A}) and Eq.~(\ref{Eq:Identity_Hamiltonian_B}) yields
\begin{align}
\label{Eq:First_Lower_Bound}
   H_N  \geq   \lambda_{0,0}(a_0^\dagger )^3 a_0^3  + \sum_{k}  |k|^2   c_k^\dagger c_k   +  \mathbb{Q}_K  + \mathcal{E}' -  \mathcal{E},
\end{align}
where we define the operator $\mathbb{Q}_K$ and the error term $\mathcal{E}'$ as
\begin{align}
\label{Eq:Definition_V_Quadratic}
    \mathbb{Q}_K: & =  9 a_0^{2\dagger}a_0^2\sum_{0<|k|\leq K}  \lambda_{k,0} a_k^\dagger a_k +3\left(a_0^\dagger a_0^3\sum_{0<|\ell|\leq K}\lambda_{0,\ell}\,  a_{\ell}^\dagger a_{-\ell}^\dagger+\mathrm{H.c.}\right),\\
\label{Eq:Error_Term_With_Primie}
   \mathcal{E}' : & = \bigg(3  \! \sum_{|\ell|>K} \!  \! \lambda_{0,\ell}\,  a_{\ell}^\dagger a_{-\ell}^\dagger\, a_0^\dagger a_0^3  \! + \! 9    \!  \!  \!  \!  \!  \! \sum_{\ell,0<|k|\leq K} \!  \! \lambda_{k,\ell}  a_\ell^\dagger a_{k-\ell}^\dagger a_k\, a_0^\dagger a_0^2\! + \! \mathrm{H.c.}\bigg).
\end{align}

The following Lemma \ref{Lem:Quadratic_Potential_Estimate}, Lemma \ref{Lem:First_Error_Term_Estimate} and Lemma \ref{Lem:Rest_Estimate} will give us sufficient bounds on the various terms appearing in Eq.~(\ref{Eq:First_Lower_Bound}), in order to establish that the ground state energy $E_N$ of $H_N$ is, to leading order, bounded from below by $\frac{1}{6}b_{\mathcal{M}}(V) N$, see Corollary \ref{Cor:Condensation}. In our first Lemma \ref{Lem:Quadratic_Potential_Estimate} we provide a lower bound on 
\begin{align*}
   \lambda_{0,0}(a_0^\dagger)^3 a_0^3+\mathbb{Q}_K 
\end{align*}
for $K$ large enough, which is an operator that is at most quadratic in the variables $a_k$ and $a_k^\dagger$ for $k\neq 0$. In the following let us denote with
\begin{align*}
    \mathcal{N}:=\sum_{k\neq 0}a_k^\dagger a_k
\end{align*}
the operator that counts the number of excited particles, i.e. the number of particles with momentum $k\neq 0$. Since we have the operator identity $\sum_{k}a_k^\dagger a_k=N$ on the Hilbert space $L^2_{\mathrm{sym}}(\Lambda^N)\subseteq \mathcal{F}\! \left(L^2 \! \left(\Lambda\right)\right)$, we observe that $a_0^\dagger a_0=N-\mathcal{N}$, see also Eq.~(\ref{Eq:Def_intro_Particle_Number}), i.e. the number of particles with momentum $k=0$ is given by the difference between the total number of particles $N$ and the number of excited particles.
\begin{lem}
\label{Lem:Quadratic_Potential_Estimate}
    Let $b_{\mathcal{M}}(V)$ be the modified scattering length defined in Eq.~(\ref{Eq:Definition_of_b}). Then there exists for all $\tau,\alpha>0$ constants $C,K>0$ such that
    \begin{align*}
        \lambda_{0,0}(a_0^\dagger)^3 a_0^3+\mathbb{Q}_K\geq \frac{1}{6}b_{\mathcal{M}}(V) N - \alpha \sum_{k}|k|^{2\tau}a_k^\dagger a_k-C\left(1+\frac{\mathcal{N}^2}{N}+\frac{\mathcal{N}}{\sqrt{N}}\right).
    \end{align*}
\end{lem}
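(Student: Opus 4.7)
The plan is to perform the Bogoliubov substitution $a_0^\dagger a_0 = N-\mathcal{N}$, extract the leading $\tfrac{1}{6}b_\mathcal{M}(V)N$ from the cubic term, and then reduce the remainder to a finite-mode Bogoliubov quadratic form that can be bounded from below.

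Writing $N_0:=N-\mathcal{N}$, I have $(a_0^\dagger)^3a_0^3 = N_0(N_0-1)(N_0-2)$. Expanding in powers of $\mathcal{N}$ and using $\lambda_{0,0}N^2 = \tfrac{1}{6}b_\mathcal{M}(V)+O(N^{-1})$ from Lemma~\ref{Lem:Coefficient_Control_II}, the expansion will give
\begin{align*}
\lambda_{0,0}(a_0^\dagger)^3a_0^3 = \tfrac{1}{6}b_\mathcal{M}(V)N - \tfrac{1}{2}b_\mathcal{M}(V)\mathcal{N} + O\!\left(1+\tfrac{\mathcal{N}^2}{N}+\tfrac{\mathcal{N}}{\sqrt{N}}\right),
\end{align*}
where the error uses $\mathcal{N}\leq N$ to control the lower-order mixed powers. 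I would then perform the analogous replacements $a_0^{2\dagger}a_0^2 = N^2+O(N\mathcal{N}+\mathcal{N}^2)$ and $a_0^\dagger a_0^3 = N_0\, a_0^2$ inside $\mathbb{Q}_K$; combined with the uniform estimate $|N^2\lambda_{k,\ell}|\leq C$ and Cauchy--Schwarz on the pair-creation terms, this reduces $\mathbb{Q}_K$ to the Bogoliubov-approximation $9N^2\sum_{0<|k|\leq K}\lambda_{k,0}\, a_k^\dagger a_k + 3N^2\sum_{0<|\ell|\leq K}\lambda_{0,\ell}(a_\ell^\dagger a_{-\ell}^\dagger + \mathrm{H.c.})$ up to the same class of errors.

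Rewriting the problematic linear term as $-\tfrac{1}{2}b_\mathcal{M}(V)\mathcal{N} = -3\lambda_{0,0}N^2\sum_{k\neq 0}a_k^\dagger a_k+O(\mathcal{N}/N)$ and combining, the operator $\lambda_{0,0}(a_0^\dagger)^3 a_0^3+\mathbb{Q}_K$ will equal $\tfrac{1}{6}b_\mathcal{M}(V)N + \mathcal{B}$ plus acceptable errors, with
\begin{align*}
\mathcal{B} = \!\sum_{|k|>K}\!(-3\lambda_{0,0}N^2)a_k^\dagger a_k + \!\!\sum_{0<|k|\leq K}\!\!\bigl[(9\lambda_{k,0}-3\lambda_{0,0})N^2 a_k^\dagger a_k+3\lambda_{0,k}N^2(a_k^\dagger a_{-k}^\dagger+\mathrm{H.c.})\bigr].
\end{align*}
For $|k|>K$ the contribution $-3\lambda_{0,0}N^2 a_k^\dagger a_k\geq -(\tfrac{1}{2}b_\mathcal{M}(V)+o(1))a_k^\dagger a_k$ is absorbed into $-\alpha|k|^{2\tau}a_k^\dagger a_k$ by choosing $K$ so large that $\alpha K^{2\tau}\geq b_\mathcal{M}(V)$, which pays for the high-momentum tail. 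For $0<|k|\leq K$ the residual form is a finite-dimensional Bogoliubov operator whose $2\times 2$ symbol on each pair $\{k,-k\}$ has diagonal $(9\lambda_{k,0}-3\lambda_{0,0})N^2$ and off-diagonal $3\lambda_{0,k}N^2$.

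The main obstacle will be showing this low-momentum Bogoliubov symbol is bounded below by a constant uniformly in $N$, which amounts to the inequality $9\lambda_{k,0}-3\lambda_{0,0}\geq 6|\lambda_{0,k}|$ at leading order in $N$ for each $|k|\leq K$. I expect this to follow from the Euler--Lagrange equation characterizing the minimizer $\omega$ of the variational problem defining $b_\mathcal{M}(V)$ in Eq.~(\ref{Eq:Definition_of_b}): the coefficients $\lambda_{k,\ell}$ are expressible via the resolvent $R$ applied to $\omega$, and positivity of the Hessian of the variational problem at its minimizer should translate precisely into the required symbol bound. Summing the resulting $O(1)$ ground-state contributions over the finitely many $|k|\leq K$ then yields an $O(1)$ constant absorbed into $-C$, completing the proof.
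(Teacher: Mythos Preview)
Your treatment of the low-momentum Bogoliubov form has a genuine gap. You claim the inequality $9\lambda_{k,0}-3\lambda_{0,0}\geq 6|\lambda_{0,k}|$ will follow from Hessian positivity of the variational problem defining $b_\mathcal{M}(V)$, but this is not the right mechanism. In fact Lemma~\ref{Lem:Coefficient_Control_II} (which you invoked only for $\lambda_{0,0}$) gives $N^2\lambda_{k,0}=N^2\lambda_{0,k}=\tfrac{1}{6}b_\mathcal{M}(V)+O(N^{-1/2})$ for \emph{every} fixed $k$, so at leading order both sides of your inequality equal $b_\mathcal{M}(V)/N^2$: the $2\times 2$ symbol has $A=|B|+O(N^{-1/2})$ and sits exactly on the boundary of the region where the quadratic Bogoliubov form is bounded below. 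Whether the $O(N^{-1/2})$ correction has the favourable sign is a separate question that no Hessian argument for the six-dimensional scattering problem resolves; if the sign is wrong, your $c$-number form is unbounded below and the argument collapses.

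The paper's proof sidesteps this by not performing the full $c$-number substitution. It uses Lemma~\ref{Lem:Coefficient_Control_II} to replace \emph{all} the coefficients $\lambda_{k,0},\lambda_{0,k}$ by the common value $\tfrac{1}{6}b_\mathcal{M}(V)/N^2$, but keeps the operators $a_0$ intact. The pair term is then controlled by the operator Cauchy--Schwarz bound
\[
3\,\frac{a_0^\dagger a_0^3}{N^2}\sum_{0<|k|\le K} a_k^\dagger a_{-k}^\dagger+\mathrm{H.c.}\ \geq\ -6\,\frac{(a_0^\dagger)^2 a_0^2}{N^2}\sum_{0<|k|\le K} a_k^\dagger a_k,
\]
after which the residual diagonal coefficient is $3\big(\tfrac{(a_0^\dagger)^2 a_0^2}{N^2}-1\big)-CN^{-1/2}$. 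This is not an uncontrolled $O(N^{-1/2})$ scalar but the operator $-O(\mathcal{N}/N)-CN^{-1/2}$; multiplied by $\sum_{0<|k|\leq K}a_k^\dagger a_k\leq\mathcal{N}$ it produces exactly the $\mathcal{N}^2/N+\mathcal{N}/\sqrt{N}$ errors in the statement. Your approach could be salvaged by spending part of the $\alpha|k|^{2\tau}a_k^\dagger a_k$ budget on the low momenta $0<|k|\leq K$ as well (any fixed $\alpha(2\pi)^{2\tau}$ added to the diagonal makes $A>|B|$ for large $N$), but as written it rests on an unjustified positivity claim.
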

\begin{proof}
    First of all we observe that we can write
    \begin{align*}
        (a_0^\dagger)^3 a_0^3 & =N^3-3N^2(\mathcal{N}+3)+N(3\mathcal{N}^2+6\mathcal{N}+2)-\mathcal{N}^3-3\mathcal{N}^2 -2\mathcal{N}\\
        &\geq N^3-3N^2 \mathcal{N}-N^2 C,
    \end{align*}
    for a suitable $C>0$. Defining
    \begin{align*}
        \mathcal{N}_{>}:=\mathcal{N}-\sum_{0<|k|\leq K}a_k^\dagger a_k,
    \end{align*}
we therefore we obtain in combination with Lemma \ref{Lem:Coefficient_Control_II}
\begin{align*}
   & \ \ \ \ \ \ \ \ \ \ \lambda_{0,0}(a_0^\dagger)^3 a_0^3+\mathbb{Q}_K-\frac{1}{6}b_{\mathcal{M}}(V) N\\
   &\geq \frac{1}{6}b_{\mathcal{M}}(V)\left\{9\frac{(a_0^\dagger)^2 a_0^2}{N^2}\sum_{0<|k|\leq K}a_k^\dagger a_k+3\left(\frac{a_0^\dagger a_0^3}{N^2}\sum_{0<|k|\leq K}a_k^\dagger a_{-k}^\dagger +\mathrm{H.c.}\right)-3\mathcal{N}-CN^{-\frac{1}{2}}\mathcal{N}\right\}-C\\
   &=\frac{1}{6}b_{\mathcal{M}}(V)\bigg\{\left(9\frac{(a_0^\dagger)^2 a_0^2}{N^2}-3-CN^{-\frac{1}{2}}\right)\sum_{0<|k|\leq K}a_k^\dagger a_k\\
   &  \ \ \  +3\left(\frac{a_0^\dagger a_0^3}{N^2}\sum_{0<|k|\leq K}a_k^\dagger a_{-k}^\dagger +\mathrm{H.c.}\right)-3\mathcal{N}_{>}\bigg\}-C,
\end{align*}
for a suitable constant $C>0$. Since
\begin{align*}
  3\frac{a_0^\dagger a_0^3}{N^2}\sum_{0<|k|\leq K}a_k^\dagger a_{-k}^\dagger +\mathrm{H.c.}\leq 6\frac{(a_0^\dagger)^2 a_0^2}{N^2}\sum_{0<|k|\leq K}a_k^\dagger a_k,  
\end{align*}
and since we have
\begin{align*}
\left(3\frac{(a_0^\dagger)^2 a_0^2}{N^2}-3-CN^{-\frac{1}{2}}\right)\sum_{0<|k|\leq K}a_k^\dagger a_k\geq -3\frac{\mathcal{N}^2}{N}-C\frac{\mathcal{N}}{\sqrt{N}},
\end{align*}
we obtain 
\begin{align*}
    \lambda_{0,0}(a_0^\dagger)^3 a_0^3+\mathbb{Q}_K\geq \frac{1}{6}b_{\mathcal{M}}(V) N-\frac{1}{2}b_{\mathcal{M}}(V)\mathcal{N}_>-C\left(1+\frac{\mathcal{N}^2}{N}+\frac{\mathcal{N}}{\sqrt{N}}\right)
\end{align*}
for a suitable constant $C$. Using $\mathcal{N}_>\leq K^{-2\tau}\sum_{k}|k|^{2\tau}a_k^\dagger a_k$, this concludes the proof for $K$ large enough.
\end{proof}

In the subsequent Lemma \ref{Lem:First_Error_Term_Estimate}, we provide estimates on the residual term $\mathcal{E}$ defined in Lemma \ref{Lem:Many_Body_Block_Diagonal}, which will allow us to compare the size of $\mathcal{E}$ with the kinetic energy $ \sum_{k}|k|^2 c_k^\dagger c_k$ in the variables $c_k$.

\begin{lem}
\label{Lem:First_Error_Term_Estimate}
For $K\geq 0$, there exists a constant $C_K>0$, such that
\begin{align*}
\pm \mathcal{E}\leq C_K \sum_{k}|k|^2 c_k^\dagger \left(\frac{\mathcal{N}}{N}+N^{-\frac{1}{2}}\right) c_k+\left(\frac{\mathcal{N}}{N}+N^{-\frac{1}{3}}\right)\! (\mathcal{N}+1).
\end{align*}
\end{lem}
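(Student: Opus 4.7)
The plan is to decompose $\mathcal{E}$ via canonical commutation relations and bound each piece separately. Using the CCR identity
\begin{align*}
a_{i'}a_{j'}a_i^\dagger a_j^\dagger-\delta_{ii'}\delta_{jj'}-\delta_{ij'}\delta_{ji'}=a_i^\dagger a_j^\dagger a_{i'}a_{j'}+\delta_{ii'}a_j^\dagger a_{j'}+\delta_{jj'}a_i^\dagger a_{i'}+\delta_{ij'}a_j^\dagger a_{i'}+\delta_{ji'}a_i^\dagger a_{j'}
\end{align*}
inside the definition of $\mathcal{E}$ splits it into a fully normal-ordered ``connected'' $(5,5)$-operator $\mathcal{E}_{\mathrm{c}}$ and four ``contracted'' $(4,4)$-operators $\mathcal{E}_1,\ldots,\mathcal{E}_4$ arising from the single contractions.

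For $\mathcal{E}_{\mathrm{c}}$, I would exploit that $(T-1)_{ijk,\ell m n}$ is supported on $(\ell,m,n)\in\mathcal{L}_K$ with $i,j,k\neq 0$. Every low-momentum triple $a_\ell a_m a_n$ appearing in $\mathcal{E}_{\mathrm{c}}$ therefore contains at least two factors of $a_0$, and symmetrically for the creators on the left; moreover the high-momentum bilinears $a_i^\dagger a_j^\dagger a_{i'}a_{j'}$ commute with $a_0,a_0^\dagger$. Restricting first to the dominant pure-condensate contribution $(\ell,m,n)=(\ell',m',n')=(0,0,0)$, $\mathcal{E}_{\mathrm{c}}$ reduces to $\tfrac{9}{4}\sum_k |k|^2\, a_0^{\dagger 3} A_k^\dagger A_k\, a_0^3$ with $A_k:=\sum_{ij}(T-1)_{ijk,000}\,a_i^\dagger a_j^\dagger$. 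Since $A_k a_0^3$ is the leading part of $c_k-a_k$, this piece is comparable to $\sum_k|k|^2 (c_k-a_k)^\dagger(c_k-a_k)$; extracting $a_0^{\dagger 3}a_0^3/N^3=1-3\mathcal{N}/N+O(\mathcal{N}^2/N^2)$ and applying the operator Cauchy--Schwarz $(c_k-a_k)^\dagger(c_k-a_k)\leq 2 c_k^\dagger c_k+2 a_k^\dagger a_k$ produces the claimed prefactor $\mathcal{N}/N$. Contributions with at least one index in $\mathcal{L}_K\setminus\{(0,0,0)\}$ carry a free bilinear $a_k^\dagger a_{k'}$ with $0<|k|,|k'|\leq K$, which is controlled by $K^2\mathcal{N}/N$ and absorbed into the constant $C_K$.

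Each contracted piece $\mathcal{E}_r$ is a $(4,4)$-operator in which one pair of high-momentum indices has been identified by a Kronecker delta. After replacing the bulk of the condensate operators by $\sqrt{N}$ at the price of errors in $\mathcal{N}+1$, I would bound $\mathcal{E}_r$ via Cauchy--Schwarz using the kernel estimate
\begin{align*}
\sum_{ijk}|k|^2\,|(T-1)_{ijk,\ell m n}|^2\leq C_K\,N^{-2},
\end{align*}
which follows from $T-1=RV_N\pi_K$ together with the three-body scattering identity $(-\Delta_3+V_N)R=Q^{\otimes 3}$ and the Gross--Pitaevskii scaling of $V_N$ (the right-hand side matching the finite $\dot H^1$-norm of the rescaled scattering solution $\omega$). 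Splitting the momentum sum at an adjustable cutoff and balancing the tail of $\widehat V$ against the main term yields the additive error $(\mathcal{N}/N+N^{-1/3})(\mathcal{N}+1)$, while the excess high-momentum part contributes the $N^{-1/2}$ correction to the first term of the claimed bound. The main obstacle is the careful bookkeeping: tracking how the Gross--Pitaevskii scaling propagates through $T-1$ to yield the precise powers $N^{-1/2}$ and $N^{-1/3}$, and distributing the many factors $a_0,a_0^\dagger\sim\sqrt{N}$ so as to produce the intended relative factor $\mathcal{N}/N$ rather than something weaker.
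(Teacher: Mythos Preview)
Your CCR decomposition into a normal-ordered connected piece and four single-contraction pieces is exactly how the paper organizes $\mathcal{E}$: the paper's two-body kernel $G^{(I,I')}$ and one-body kernel $K^{(I,I')}$ are precisely those arising from your split, indexed by the finite low-momentum set $\mathcal{I}$. The gap is in how you propose to extract the prefactor $\mathcal{N}/N$ in front of the $c$-kinetic energy.

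For the connected piece with $I=I'=(0,0,0)$ one finds $\tfrac14\sum_k|k|^2\,a_0^{\dagger 3}A_kA_k^\dagger a_0^3$ (with $A_kA_k^\dagger$, not $A_k^\dagger A_k$ and not the factor $9/4$), whereas $(c_k-a_k)^\dagger(c_k-a_k)=\tfrac14 a_0^{\dagger 3}A_k^\dagger A_k a_0^3$; these differ by the single contractions, so the identification is circular. More seriously, $a_0^{\dagger 3}a_0^3/N^3\approx 1$ does \emph{not} produce a factor $\mathcal{N}/N$, and the Cauchy--Schwarz bound $(c_k-a_k)^\dagger(c_k-a_k)\le 2c_k^\dagger c_k+2a_k^\dagger a_k$ would only yield the full kinetic energy $\sum_k|k|^2c_k^\dagger c_k$, which is too large by a factor $\sim N/\mathcal{N}$ and cannot be absorbed in the application of the lemma. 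The paper obtains the extra $\mathcal{N}$ from a different source: it bounds the \emph{weighted} operator norm $\|\mathcal{K}_{\tau,2}^{-1/2}G^{(I,I')}\mathcal{K}_{\tau,2}^{-1/2}\|\lesssim N^{-4}$ by a Schur test (the unweighted norm $\|G^{(I,I')}\|$ is in fact infinite), and then invokes an abstract estimate (Corollary~\ref{Cor:Operator_Estimates}) which converts $\sum G_{ij,i'j'}a_i^\dagger a_j^\dagger X a_{i'}a_{j'}$ into a multiple of $\sum_k|k|^2c_k^\dagger \mathcal{N}\,c_k+(\mathcal{N}+N^\tau)\mathcal{N}$. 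Here the factor $\mathcal{N}$ comes from summing one of the two excited annihilators $a_{i'},a_{j'}$ to $\sum_{j'\neq 0}a_{j'}^\dagger a_{j'}=\mathcal{N}$, while the other is converted from an $a$- to a $c$-operator via the kinetic comparison Lemma~\ref{lem:Comparison}. The exponent $N^{-1/3}$ is then simply $N^{\tau-1}$ with the choice $\tau=2/3$ (dictated by convergence of the Schur sum), not the result of a momentum-cutoff balancing. Likewise the $N^{-1/2}$ in the first term comes from $\|K^{(I,I')}\|\lesssim N^{-7/2}$ times $\|X^{(I,I')}\|\le N^3$ fed into the same corollary with $s=t=1$; your Hilbert--Schmidt-type bound $\sum_{ijk}|k|^2|(T-1)_{ijk,I}|^2\lesssim N^{-2}$ is correct but is not the operator-norm quantity that is actually needed.
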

\begin{proof}
Let us denote with $\mathcal{I}\subseteq (2\pi \mathbb{Z})^{3\times 3}$ the index set  
\begin{align*}
 \mathcal{I}:=\{(0,0,0)\}\cup \bigcup_{0<|\ell|\leq K}\{(\ell,0,0),(0,\ell,0),(0,0,\ell)\}   
\end{align*}
Then we define for $I=(I_1,I_2,I_3),I'=(I'_1,I'_2,I'_3)\in \mathcal{I}$ the operator $K^{(I,I')}$ acting on $L^2(\Lambda)$ and the operator $G^{(I,I')}$ acting on $L^2(\Lambda)^{\otimes 2}$ as
\begin{align*}
K^{(I,I')}_{i,i'}:&=\frac{1}{2}\sum_{j k }|k|^2\overline{(T-1)_{i'jk,I' }}\left((T-1)_{ijk,I}+(T-1)_{jik,I}\right),\\
  G^{(I,I')}_{ij,i' j'}:&=\frac{1}{4}\sum_{k}|k|^2 \overline{(T-1)_{i'j'k,I' }}(T-1)_{ijk,I},
\end{align*}
as well as $\mathcal{K}_{\tau,2}:=(-\Delta_{x_1})^{\tau}+(-\Delta_{x_2})^{\tau}$ acting on $L^2(\Lambda)^{\otimes 2}$. Then we can write $\mathcal{E}$ as  
\begin{align}
\label{Eq:First_Error_Term}
    \mathcal{E}=\sum_{I,I'\in \mathcal{I}}\left(\sum_{i,i'}K^{(I,I')}_{i,i'} a^\dagger_i \left(a_{I_1}^\dagger a_{I_2}^\dagger a_{I_3}^\dagger a_{I'_1} a_{I'_2} a_{I'_3}\right)a_{i'}+\sum_{ij,i'j'}G^{(I,I')}_{ij,i'j'}a^\dagger_{i}a_j^\dagger \left(a_{I_1}^\dagger a_{I_2}^\dagger a_{I_3}^\dagger a_{I'_1} a_{I'_2} a_{I'_3}\right) a_{i'}a_{j'}\right).
\end{align}
By the weighted Schur test, the operator norm of $\mathcal{K}_{\tau,2}^{-\frac{1}{2}} G^{(I,I')}\mathcal{K}_{\tau,2}^{-\frac{1}{2}}$ is bounded by 
\begin{align*}
  \|\mathcal{K}_{\tau,2}^{-\frac{1}{2}} G^{(I,I')}\mathcal{K}_{\tau,2}^{-\frac{1}{2}}\|\leq \sqrt{\alpha^{(I,I')}\alpha^{(I',I)}},  
\end{align*}
 where we define $\alpha^{(I,I')}:=\sup_{i'j'}\sum_{ij}\frac{|G^{(I,I')}_{ij,i' j'}|}{|i|^{2\tau}+|j|^{2\tau}}$. Let us furthermore introduce $s:=I_1+I_2+I_3$ and $s':=I'_1+I'_2+I'_3$. Making use of Lemma \ref{Lem:Coefficient_Control}, we obtain for the concrete choice $\tau:=\frac{2}{3}$
\begin{align*}
    \alpha^{(I,I')} \! & \leq \!  \sup_{i'j'} \! \sum_{ijk} \! \frac{|k|^2 |(T \! - \! 1)_{i'j'k,I' }|\, |(T \! - \! 1)_{ijk,I}|}{|i|^{2\tau} \! + \! |j|^{2\tau}} \! \lesssim  \!  N^{-4} \sup_{i'j'} \!\sum_{ijk\neq 0}\frac{  \delta_{i'+j'+k=s'} \delta_{i+j+k=s}}{(|i|^{2\tau} \! + \! |j|^{2\tau})(|i|^2 \! + \! |j|^2 \! + \! |k|^2)} \\
 &   \leq \!  N^{-4}  \! \sum_{i\neq 0}\frac{1}{|i|^{2+2\tau}}\lesssim N^{-4}.
\end{align*}
Consequently $\|\mathcal{K}_{\tau,2}^{-\frac{1}{2}} G^{(I,I')}\mathcal{K}_{\tau,2}^{-\frac{1}{2}}\|\lesssim N^{-4}$. Furthermore, the operator 
\begin{align*}
  X^{(I,I')}:=a_{I_1}^\dagger a_{I_2}^\dagger a_{I_3}^\dagger a_{I'_1} a_{I'_2} a_{I'_3}  
\end{align*}
satisfies $\|X^{(I,I')}\|\leq N^3$. Therefore we obtain by Corollary \ref{Cor:Operator_Estimates}
\begin{align*}
\sum_{ij,i'j'}G^{(I,I')}_{ij,i'j'}a^\dagger_{i}a_j^\dagger \left(a_{I_1}^\dagger a_{I_2}^\dagger a_{I_3}^\dagger a_{I'_1} a_{I'_2} a_{I'_3}\right) a_{i'}a_{j'}\lesssim \sum_{k}|k|^2 c_k^\dagger \frac{\mathcal{N}}{N}c_k + (\mathcal{N}+N^\tau)\frac{\mathcal{N}}{N}.
\end{align*}
Again by Lemma \ref{Lem:Coefficient_Control} we have $\|K^{(I,I')}\|\lesssim N^{-\frac{7}{2}}$, which concludes the proof by Corollary \ref{Cor:Operator_Estimates}, together with the observation that the set $\mathcal{I}$ in the definition of $\mathcal{E}$ in Eq.~(\ref{Eq:First_Error_Term}) is finite.
\end{proof}

The next Lemma \ref{Lem:Rest_Estimate} will give us sufficient bounds on the error term $\mathcal{E}'$ defined in Eq.~(\ref{Eq:Error_Term_With_Primie}), which will be responsible for the appearance of an order $O_{N\rightarrow \infty}\! \left(\sqrt{N}\right)$ error in the main results of this Section Theorem \ref{Th:(1)} and Corollary \ref{Cor:Condensation}.

\begin{lem}
\label{Lem:Rest_Estimate}
There exists a constant $C>0$ such that for $K\leq \sqrt{N}$, where $K$ is as in the definition of $\pi_K$ below Eq.~(\ref{Eq:Definition_pi_K}), and $\epsilon>0$
    \begin{align}
    \label{Eq:Rest_Estimate_I}
       \pm \left(\sum_{|\ell|>K}\lambda_{0,\ell}\,  a_{\ell}^\dagger a_{-\ell}^\dagger\, a_0^\dagger a_0^3 \! + \! \mathrm{H.c.}\right)&\leq \epsilon \sum_\ell |\ell|^2 c_\ell^\dagger c_\ell \! + \! \epsilon\,  \mathcal{N} \! + \! C\frac{\mathcal{N}}{\sqrt{N}} \! + \! \frac{C}{\epsilon} \! \left(\sqrt{N} \! + \! \frac{\mathcal{N}}{\sqrt{K \! + \! 1}}\right),\\ 
           \label{Eq:Rest_Estimate_II}
      \pm \left(\sum_{\ell}\lambda_{k,\ell}  a_\ell^\dagger a_{k-\ell}^\dagger a_k\,  a_0^\dagger a_0^2+\mathrm{H.c.}\right)&\lesssim  \epsilon \sum_\ell |\ell|^2 c_\ell^\dagger  c_\ell+\epsilon \frac{\mathcal{N}^2}{N}+C\frac{\mathcal{N}}{N}+\frac{C}{\epsilon}\left(\frac{\mathcal{N}}{\sqrt{N}}+\frac{\mathcal{N}^2}{N}\right).
 \end{align}
 Furthermore, we have
\begin{align}
\label{Eq:Rest_Estimate_I_with_m}
     \pm \left(\sum_{|\ell|>K}\lambda_{0,\ell}\,  a_{\ell}^\dagger a_{-\ell}^\dagger\, a_0^\dagger a_0^3\frac{\mathcal{N}}{N} \! + \! \mathrm{H.c.}\right) \leq N^{-\frac{1}{2}}\left(\sum_\ell |\ell|^2 c_\ell^\dagger c_\ell+\mathcal{N}\right)+N^{-\frac{3}{2}}\mathcal{N}^2\left(\mathcal{N}+\sqrt{N}\right).
\end{align}
\end{lem}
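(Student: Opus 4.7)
All three bounds rest on the same scheme: an operator Cauchy–Schwarz $\pm(YZ+Z^\dagger Y^\dagger)\leq \epsilon\,YY^\dagger+\epsilon^{-1}Z^\dagger Z$ applied termwise in $\ell$, followed by decay estimates on $\lambda_{0,\ell},\lambda_{k,\ell}$ coming from Lemma~\ref{Lem:Coefficient_Control}, and finally a conversion from $\sum_\ell|\ell|^2 a_\ell^\dagger a_\ell$ to $\sum_\ell|\ell|^2 c_\ell^\dagger c_\ell$ provided by Corollary~\ref{Cor:Operator_Estimates} at the price of an $\epsilon\,\mathcal N+C\mathcal N/\sqrt N$ remainder.

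For (\ref{Eq:Rest_Estimate_I}) set $Y_\ell:=|\ell|\,a_\ell^\dagger$ and $Z_\ell:=\lambda_{0,\ell}|\ell|^{-1}a_{-\ell}^\dagger a_0^\dagger a_0^3$, so that $Y_\ell Z_\ell$ is the $\ell$-summand. The kinetic side sums to $\epsilon\sum_{|\ell|>K}|\ell|^2 a_\ell^\dagger a_\ell$, which after the $a\to c$ conversion produces the $\epsilon\sum_\ell|\ell|^2 c_\ell^\dagger c_\ell+\epsilon\,\mathcal N+C\mathcal N/\sqrt N$ contribution. On the error side, $Z_\ell^\dagger Z_\ell\lesssim |\ell|^{-2}|\lambda_{0,\ell}|^2\,a_0^{\dagger 3}a_0\,a_{-\ell}a_{-\ell}^\dagger\,a_0^\dagger a_0^3$; using $a_0^{\dagger 3}a_0 a_0^\dagger a_0^3\lesssim N^4$ and $a_{-\ell}a_{-\ell}^\dagger=1+a_{-\ell}^\dagger a_{-\ell}$, this splits into a scalar piece bounded by
\[
\epsilon^{-1}N^4\!\sum_{|\ell|>K}\!\tfrac{|\lambda_{0,\ell}|^2}{|\ell|^2}\lesssim \epsilon^{-1}\sqrt N
\]
(the coefficient sum being $\lesssim N^{-7/2}$ once one uses $|\lambda_{0,\ell}|\lesssim N^{-2}$ together with the rapid decay of $\widehat V$ for $|\ell|\gtrsim\sqrt N$) and an $\mathcal N$-piece bounded by $\epsilon^{-1}N^4\sup_{|\ell|>K}|\lambda_{0,\ell}|^2/|\ell|^2\cdot\mathcal N\lesssim \epsilon^{-1}\mathcal N/\sqrt{K+1}$.

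Estimate (\ref{Eq:Rest_Estimate_II}) is proved by the same template with $Z_\ell:=\lambda_{k,\ell}|\ell|^{-1}a_{k-\ell}^\dagger a_k\,a_0^\dagger a_0^2$; the zero-mode factor in $Z^\dagger Z$ is now only $\lesssim N^3$ and we carry an additional $a_k^\dagger a_k\leq \mathcal N$, which combined with $\sum_\ell|\lambda_{k,\ell}|^2/|\ell|^2\lesssim N^{-7/2}$ yields the $\epsilon^{-1}\mathcal N/\sqrt N$ term from the scalar part and the $\epsilon^{-1}\mathcal N^2/N$ term from the $a_{k-\ell}^\dagger a_{k-\ell}$ commutator. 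Estimate (\ref{Eq:Rest_Estimate_I_with_m}) is just (\ref{Eq:Rest_Estimate_I}) with the extra factor $\mathcal N/N$ placed on the $Z$-side so that $Z^\dagger Z$ picks up $(\mathcal N/N)^2$; choosing $\epsilon=N^{-1/2}$ converts the scalar and $\mathcal N$-pieces into $\mathcal N^2/N$ and $\mathcal N^3/N^{3/2}$ respectively, matching the right-hand side.

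The main technical obstacle is establishing the two coefficient estimates on $|\lambda_{0,\ell}|,|\lambda_{k,\ell}|$ with the correct dependence on $N$ and $K$, which is delegated to Lemma~\ref{Lem:Coefficient_Control}. The rest is bookkeeping: tracking commutators of $a_0,a_0^\dagger$ with the excited modes (which on the $N$-particle sector reduces to replacing $a_0^\dagger a_0$ by $N-\mathcal N$) and invoking the $a\to c$ conversion from Corollary~\ref{Cor:Operator_Estimates}, along with the elementary inequality $K^{-2}\lesssim 1/\sqrt{K+1}$ needed to match the stated form of the $\mathcal N$-error.
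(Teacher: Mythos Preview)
Your scheme has a genuine gap in the ``$a\to c$ conversion'' step. You propose to Cauchy--Schwarz with $Y_\ell=|\ell|\,a_\ell^\dagger$, obtain $\epsilon\sum_{|\ell|>K}|\ell|^2 a_\ell^\dagger a_\ell$ on the kinetic side, and then convert this to $c$-variables. But the relevant comparison (Lemma~\ref{lem:Comparison} with $\tau=1$, $s=0$) reads
\[
\sum_k |k|^2 a_k^\dagger a_k \;\lesssim\; \sum_k |k|^2 c_k^\dagger c_k + \frac{\mathcal N^2}{N} + N,
\]
so after multiplication by $\epsilon$ you pick up $\epsilon N$, not the $\epsilon\,\mathcal N$ claimed. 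This is fatal: in the application (Theorem~\ref{Th:(1)}) one takes $\epsilon$ to be a fixed small constant, so $\epsilon N$ is of order $N$ and swamps the $O(\sqrt N)$ error budget. Neither Corollary~\ref{Cor:Operator_Estimates} nor any other result in the paper furnishes a conversion of the full kinetic energy $\sum_k|k|^2 a_k^\dagger a_k$ with remainder $\mathcal N$ rather than $N$; the $N^\tau$ term is intrinsic to that comparison.

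The paper avoids this by inserting $c_\ell$ \emph{before} the Cauchy--Schwarz step rather than after. One writes $a_\ell^\dagger=c_\ell^\dagger-(c_\ell-a_\ell)^\dagger$ directly in the summand, producing three pieces: a term $\sum\lambda_{0,\ell}\,c_\ell^\dagger a_{-\ell}^\dagger X$ on which Cauchy--Schwarz yields $\epsilon\sum|\ell|^2 c_\ell^\dagger c_\ell$ with no conversion needed; a term $\sum\lambda_{0,\ell}\,a_{-\ell}^\dagger(c_\ell-a_\ell)^\dagger X$ handled via Cauchy--Schwarz with weight $|\ell|^{1/2}$ and the first part of Lemma~\ref{lem:Comparison}, namely $\sum_\ell|\ell|^{1/2}(c_\ell-a_\ell)(c_\ell-a_\ell)^\dagger\lesssim\mathcal N^2/N\le\mathcal N$ (this is where the $\mathcal N/\sqrt{K+1}$ contribution actually originates, and why the exponent $1/2$ appears); and a commutator correction $[(c_\ell-a_\ell)^\dagger,a_{-\ell}^\dagger]$ which is bounded directly and produces the $C\mathcal N/\sqrt N$ term. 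The same decomposition, with the analogous commutator bookkeeping, drives the proof of \eqref{Eq:Rest_Estimate_II}. Your estimates on $\sum|\lambda_{0,\ell}|^2/|\ell|^2$ and $\sup|\lambda_{0,\ell}|^2/|\ell|^2$ are correct and are used in the paper as well, but only after this splitting.
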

\begin{proof}
Given $m\in \{0,1\}$, let us define the operator $X:=a_0^\dagger a_0^3\frac{\mathcal{N}^m}{N^m}$ and the coefficients
\begin{align*}
    \Lambda^{(n)}_{\ell,k}:=\overline{(T-1)_{(n-k)(k-\ell)\ell,n 0 0}}+\overline{(T-1)_{(n-k)(k-\ell)\ell,0 n 0}}+\overline{(T-1)_{(n-k)(k-\ell)\ell,0 0 n}},
\end{align*}
and observe that by Lemma \ref{Lem:Coefficient_Control} there exists a constant $C>0$ such that
\begin{align}
\label{Eq:Lambda_Estimate_Rest}
    |\Lambda^{(n)}_{\ell,k}|\leq \frac{C}{N^2(|\ell|^2+|k|^2)}\left(1+\frac{|\ell|^2+|k|^2}{N}\right)^{-1},
\end{align}
where we have assumed w.l.o.g. that $|n|\leq \sqrt{N}$, since $\Lambda^{(n)}_{\ell,k}=0$ in case $|n|>K$ and $K\leq \sqrt{N}$. In order to verify Eq.~(\ref{Eq:Rest_Estimate_I}), respectively Eq.~(\ref{Eq:Rest_Estimate_I_with_m}), let us write
\begin{align}
\nonumber
   & \ \ \ \sum_{|\ell|>K}\lambda_{0,\ell}\,  a_{\ell}^\dagger a_{-\ell}^\dagger a_0^\dagger a_0^3\frac{\mathcal{N}^m}{N^m}=\sum_{|\ell|>K}\lambda_{0,\ell}\,  c_{\ell}^\dagger a_{-\ell}^\dagger X-\sum_{|\ell|>K}\lambda_{0,\ell}\,  (c_\ell-a_{\ell})^\dagger a_{-\ell}^\dagger X\\
       \label{Eq:Rest_Estimate_I_rewritten}
   & = \sum_{|\ell|>K}\lambda_{0,\ell}\,  c_{\ell}^\dagger a_{-\ell}^\dagger X-\sum_{|\ell|>K}\lambda_{0,\ell}\,   a_{-\ell}^\dagger (c_\ell-a_{\ell})^\dagger X-\sum_{|\ell|>K}\sum_{n\neq 0} \lambda_{0,\ell}\, \Lambda^{(n)}_{\ell,0} a_0^{2\dagger }a_n^\dagger a_n X.
\end{align}
Regarding the first term in in Eq.~(\ref{Eq:Rest_Estimate_I_rewritten}), note that we have for $\epsilon>0$ the estimate
\begin{align*}
   \pm \sum_{|\ell|>K}\lambda_{0,\ell}\,  c_{\ell}^\dagger a_{-\ell}^\dagger X\pm \mathrm{H.c.}\leq \epsilon\sum_{\ell}|\ell|^2c_\ell^\dagger c_\ell+\frac{1}{\epsilon} X^\dagger \left(\sum_{|\ell|>K}\frac{|\lambda_{0,\ell}|^2}{\ell^2}a_\ell^\dagger a_\ell+\sum_{|\ell|>K}\frac{|\lambda_{0,\ell}|^2}{\ell^2} \right)X.
\end{align*}
Using $|\lambda_{k,\ell}|\lesssim N^{-2}(1+\frac{|\ell|^2}{N})^{-1}$, see Lemma \ref{Lem:Coefficient_Control}, we have $\sum_{|\ell|>K}\frac{|\lambda_{0,\ell}|^2}{\ell^2}\lesssim N^{-\frac{7}{2}}$ and $\frac{|\lambda_{0,\ell}|^2}{\ell^2}\leq \frac{1 }{N^4 (K^2+1)}$ for $|\ell|>K$, and therefore we obtain for such $K$
\begin{align*}
      \pm  \sum_{|\ell|>K}\lambda_{0,\ell}\,  c_{\ell}^\dagger a_{-\ell}^\dagger X & \pm  \mathrm{H.c.}\lesssim \epsilon\sum_{\ell}|\ell|^2c_\ell^\dagger c_\ell \! + \! \frac{1}{\epsilon N^4 }X^\dagger \left(\frac{\mathcal{N} }{ K^2+1} \! + \! \sqrt{N}\right)X\\
      & \lesssim \epsilon\sum_{\ell}|\ell|^2c_\ell^\dagger c_\ell \! + \frac{1}{\epsilon}\frac{\mathcal{N}^{2m}}{N^{2m}}\left(\frac{\mathcal{N}}{K^2+1} \! + \! \frac{\sqrt{N}}{\epsilon}\right),
\end{align*}
where we have used $X^\dagger X\lesssim N^2 \frac{\mathcal{N}^{2m}}{N^{2m}}$ and $[X,\mathcal{N}]=0$. Regarding the second term in Eq.~(\ref{Eq:Rest_Estimate_I_rewritten}), let us use $\sum_{|\ell|>K}|\ell|^{-\frac{1}{2}}|N^4\lambda_{0,\ell}|^2 a_{\ell}^\dagger a_{\ell}\lesssim \frac{\mathcal{N}}{\sqrt{K+1}}$ as well as the fact that $\sum_{\ell}\sqrt{|\ell|}(c_\ell-a_\ell) (c_\ell-a_\ell)^\dagger\lesssim \mathcal{N}$ by Lemma \ref{lem:Comparison}, to estimate for $\kappa>0$
\begin{align*}
   \pm \! \sum_{|\ell|>K}\lambda_{0,\ell}\,   a_{-\ell}^\dagger (c_\ell \! - \! a_{\ell})^\dagger X \pm \mathrm{H.c.}\lesssim \frac{1}{\kappa \sqrt{K+1}}\mathcal{N}+\frac{\kappa }{N^4}X^\dagger \mathcal{N} X\leq \frac{1}{\kappa \sqrt{K+1}}\mathcal{N}+\kappa \frac{\mathcal{N}^{2m}}{N^{2m}}\mathcal{N}.
\end{align*}
Regarding the final term in Eq.~(\ref{Eq:Rest_Estimate_I_rewritten}) we have that $\sum_{\ell}|\Lambda^{(n)}_{\ell,0}|\lesssim N^{-\frac{3}{2}}$ by Eq.~(\ref{Eq:Lambda_Estimate_Rest}), and hence
\begin{align*}
   \pm \sum_{|\ell|>K}\sum_{n\neq 0} \lambda_{0,\ell}\, \Lambda^{(n)}_{\ell,0} a_0^{2\dagger }a_n^\dagger a_n X \pm \mathrm{H.c.}\lesssim N^{-\frac{1}{2}}\mathcal{N}.
\end{align*}
For $m=0$, the choice $\kappa:=\epsilon$ yields Eq.~(\ref{Eq:Rest_Estimate_I}) and for $m=1$ the choice $\kappa:=\sqrt{N}$ and $\epsilon:=\frac{1}{\sqrt{N}}$ yields Eq.~(\ref{Eq:Rest_Estimate_I_with_m}).

Regarding the proof of Eq.~(\ref{Eq:Rest_Estimate_II}), let us define the operators $d_\ell:=\lambda_{k,\ell} N^{\frac{3}{2}} a_{k-\ell}^\dagger a_k$ and write $a^\dagger_\ell d_\ell=c_\ell^\dagger d_\ell+d_\ell (c_\ell-a_\ell)^\dagger+[(c_\ell-a_\ell)^\dagger,d_\ell]$. We compute 
\begin{align*}
   \sum_\ell &[(c_\ell-a_\ell)^\dagger,d_\ell]=\frac{1}{2} N^{\frac{3}{2}}\Big(a^{3\dagger}_{0} \sum_{i j \ell}\frac{1}{3}\Lambda^{(0)}_{\ell,-i}\lambda_{k,\ell}[ a_i a_{-i-\ell} ,a_{k-\ell}^\dagger a_k]\\
    & \ \ \ \ +a^{2\dagger}_{0}\sum_{|n|\leq K}\sum_{i  \ell}\Lambda^{(n)}_{\ell,n-i}\lambda_{k,\ell}[ a^\dagger_{n} a_i a_{n-i-\ell} ,a_{k-\ell}^\dagger a_k]\Big)\frac{a_0^\dagger a_0^2}{N^{\frac{3}{2}}}\\
    &=\frac{a_0^{3\dagger}}{N^{\frac{3}{2}}}\mu^{(1)}_k a_{k}a_{-k}\frac{a_0^\dagger a_0^2}{N^{\frac{3}{2}}}+\frac{a_0^{2\dagger}}{N}\sum_{|n|\leq K}\mu^{(2)}_{k,n} a^{\dagger}_n a_{k}a_{n-k}\frac{a_0^\dagger a_0^2}{N^{\frac{3}{2}}}-\frac{a_0^{2\dagger}}{N}\sum_{i,\ell }\mu^{(3)}_{k, i,\ell}a^\dagger_{k-\ell}a_i a_{k-i-\ell}\frac{a_0^\dagger a_0^2}{N^{\frac{3}{2}}},
\end{align*}
where we define the coefficients
\begin{align*}
    \mu^{(1)}_k: & =N^{3}\sum_{\ell }\frac{1}{3}\Lambda^{(0)}_{\ell, k}\lambda_{k,\ell},\\
    \mu^{(2)}_{k,n}: & =N^{\frac{5}{2}}\sum_{\ell}\Lambda^{(n)}_{\ell, k}\lambda_{k,\ell},\\
    \mu^{(3)}_{k,i,\ell }: & =N^{\frac{5}{2}}\Lambda^{(k)}_{\ell,k-i}\lambda_{k,\ell}.
\end{align*}
Using again Eq.~(\ref{Eq:Lambda_Estimate_Rest}) and $|\lambda_{k,\ell}|\lesssim N^{-2}(1+\frac{|\ell|^2}{N})^{-1}$, we immediately obtain $|\mu^{(1)}_k|\lesssim \frac{1}{\sqrt{N}}$, $|\mu^{(2)}_{k,n}|\lesssim \frac{1}{N}$, $|\mu^{(3)}_{k,i,\ell }|\lesssim N^{-\frac{3}{2}}$ and $\sum_{i}|\mu^{(3)}_{k,i,\ell }|\lesssim \frac{1}{N}$, and therefore by Cauchy-Schwarz
\begin{align*}
    \sum_\ell \left([(c_\ell-a_\ell)^\dagger,d_\ell]\frac{a_0^\dagger a_0^2}{N^{\frac{3}{2}}}+\mathrm{H.c.}\right)\lesssim \frac{\mathcal{N}}{\sqrt{N}}.
\end{align*}
Consequently
   \begin{align*}
     &    \left(\sum_{\ell}a_\ell^\dagger d_\ell \frac{a_0^\dagger a_0^2}{N^\frac{3}{2}}+\mathrm{H.c.}\right)\lesssim \epsilon\sum_{\ell}|\ell|^2c_\ell^\dagger c_\ell+\frac{1}{\epsilon}\sum_{\ell}\frac{1}{|\ell|^2} d_\ell^\dagger d_\ell+\epsilon\sum_{\ell}d_\ell d_\ell^\dagger\\
        & \ \ \ \ \   \ \ \ \ \   \ \ \ \ \  +\frac{1}{\epsilon}\sum_{\ell}\frac{a_0^{3\dagger} a_0}{N^2}(c_\ell-a_\ell) (c_\ell-a_\ell)^\dagger \frac{a_0^\dagger a_0^3}{N^2}+\frac{\mathcal{N}}{\sqrt{N}}.
   \end{align*}
   Similar to the proof of Eq.~(\ref{Eq:Rest_Estimate_I}), we observe that $\sum_{\ell}d_\ell d_\ell^\dagger\lesssim \frac{\mathcal{N}}{N}\mathcal{N}$ and, using Lemma \ref{lem:Comparison},
   \begin{align*}
  & \ \ \ \  \sum_{\ell}\frac{a_0^{3\dagger} a_0}{N^2}(c_\ell-a_\ell) (c_\ell-a_\ell)^\dagger \frac{a_0^\dagger a_0^3}{N^2}\lesssim \frac{a_0^{3\dagger} a_0}{N^2}\frac{\mathcal{N}^2}{N} \frac{a_0^\dagger a_0^3}{N^2}\leq \frac{\mathcal{N}^2}{N},\\
  &  \sum_{\ell}\frac{1}{|\ell|^2} d_\ell^\dagger d_\ell\lesssim \frac{\mathcal{N}}{\sqrt{N}}+N^3\sum_{\ell}\frac{1}{|\ell|^2}|\lambda_{k,\ell}|^2   a_k^\dagger a_{k-\ell}^\dagger a_{k-\ell} a_k\lesssim \frac{\mathcal{N}}{\sqrt{N}}+\frac{\mathcal{N}^2}{N}.
   \end{align*}
\end{proof}

Having Lemma \ref{Lem:Quadratic_Potential_Estimate}, Lemma \ref{Lem:First_Error_Term_Estimate} and Lemma \ref{Lem:Rest_Estimate} at hand, we can use the lower bound in Eq.~(\ref{Eq:First_Lower_Bound}) in order to derive the following Theorem \ref{Th:(1)}, which provides strong lower bounds on the quantity $\braket{\Psi,H_N \Psi}$. Note however that Theorem \ref{Th:(1)} is only applicable for states $\Psi$ satisfying (BEC) in the spectral sense $\mathds{1}\! \left(\mathcal{N}\leq \epsilon N\right)\Psi=\Psi$, where the orthogonal projection $\mathds{1}\! \left(\mathcal{N}\leq \epsilon N\right)$ is defined by the means of functional calculus. Here we refer to a Hilbert space element $\Psi$ as a state, in case $\|\Psi\|=1$.

\begin{thm}
\label{Th:(1)}
    There exist constants $\delta,C>0$ and $\epsilon>0$, such that 
    \begin{align*}
        \braket{\Psi,H_N\Psi}\geq \frac{1}{6} b_{\mathcal{M}}(V)N+\delta \sum_{k}|k|^2 c_k^\dagger  c_k+\delta\braket{\Psi,\mathcal{N} \Psi}-C\sqrt{N}
    \end{align*}
    for any state $\Psi$ satisfying $\mathds{1}\! \left(\mathcal{N}\leq \epsilon N\right)\Psi=\Psi$, where $\mathcal{N}:=\sum_{k\neq 0}a_k^\dagger a_k$.
\end{thm}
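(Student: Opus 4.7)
The plan is to start from the operator inequality Eq.~(\ref{Eq:First_Lower_Bound})
\begin{align*}
    H_N\geq \lambda_{0,0}(a_0^\dagger)^3 a_0^3 + \sum_{k}|k|^2 c_k^\dagger c_k + \mathbb{Q}_K + \mathcal{E}'-\mathcal{E}
\end{align*}
and to absorb, one by one, every error produced by Lemma \ref{Lem:Quadratic_Potential_Estimate}, Lemma \ref{Lem:First_Error_Term_Estimate} and Lemma \ref{Lem:Rest_Estimate} into a small fraction of $\sum_k|k|^2 c_k^\dagger c_k$, a small fraction of $\mathcal{N}$, and a $-C\sqrt{N}$ remainder, exploiting the spectral BEC assumption $\mathds{1}(\mathcal{N}\leq\epsilon N)\Psi=\Psi$ to turn every $\mathcal{N}/N$-factor into a small constant.

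First, apply Lemma \ref{Lem:Quadratic_Potential_Estimate} with some $\tau<1$ (say $\tau=\frac{1}{2}$) and a small $\alpha>0$. This extracts the main term $\frac{1}{6}b_{\mathcal{M}}(V)N$, fixes a finite $K$ (independent of $N$), and leaves an error $\alpha\sum_k|k|^{2\tau}a_k^\dagger a_k + C(1+\mathcal{N}^2/N+\mathcal{N}/\sqrt{N})$. Young's inequality gives $|k|^{2\tau}\leq \alpha'|k|^2+C_{\alpha'}$, so the weighted number operator is controlled by $\alpha'\sum_{k\neq 0}|k|^2 a_k^\dagger a_k+C\mathcal{N}$; the appendix comparison Lemma \ref{lem:Comparison} (used in the same guise in Lemma \ref{Lem:Rest_Estimate}) lets us transfer $\sum_k|k|^2 a_k^\dagger a_k$ to $\sum_k|k|^2 c_k^\dagger c_k$ modulo further multiples of $\mathcal{N}$. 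On the BEC subspace we have $\mathcal{N}^2/N\leq\epsilon\mathcal{N}$ and $\mathcal{N}/\sqrt{N}\leq \delta'\mathcal{N}+C\sqrt{N}$ by AM–GM, which disposes of the remaining scalar errors.

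Next, bound $\mathcal{E}$ via Lemma \ref{Lem:First_Error_Term_Estimate}. Since $c_k-a_k$ is a polynomial of bounded degree in creation and annihilation operators, it shifts the excitation number by $O(1)$, so on $\mathds{1}(\mathcal{N}\leq\epsilon N)\Psi=\Psi$ one has $\braket{\Psi,c_k^\dagger \mathcal{N} c_k\Psi}\leq(\epsilon N+C)\braket{\Psi,c_k^\dagger c_k\Psi}$. Dividing by $N$, the dressed kinetic piece in Lemma \ref{Lem:First_Error_Term_Estimate} is bounded by $C_K(\epsilon+N^{-1/2})\sum_k|k|^2 c_k^\dagger c_k$, which becomes a small fraction of the bare kinetic energy once $\epsilon$ is chosen small; the scalar piece $(\mathcal{N}/N+N^{-1/3})(\mathcal{N}+1)$ is at most $C\epsilon\mathcal{N}+C$ on the BEC subspace. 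Then control $\mathcal{E}'$ via Lemma \ref{Lem:Rest_Estimate}: Eq.~(\ref{Eq:Rest_Estimate_I}) is applied once (with a small parameter $\epsilon'>0$) to the first contribution of $\mathcal{E}'$, and Eq.~(\ref{Eq:Rest_Estimate_II}) is applied to each of the $\lesssim K^3$ summands with $0<|k|\leq K$. The $\epsilon'\sum|\ell|^2 c_\ell^\dagger c_\ell$- and $\epsilon'\mathcal{N}$-terms are absorbed into a fraction of the kinetic/number operators on the right-hand side; the $\mathcal{N}/(\epsilon'\sqrt{K+1})$-term is killed by enlarging $K$ further (which only strengthens Lemma \ref{Lem:Quadratic_Potential_Estimate}); and the $\mathcal{N}^2/N$- and $\mathcal{N}/\sqrt{N}$-terms are handled as above. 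What survives is precisely the $\sqrt{N}/\epsilon'$-contribution from Eq.~(\ref{Eq:Rest_Estimate_I}), which produces the $-C\sqrt{N}$ in the statement.

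The main obstacle is organising the small parameters in the correct order. One first fixes $\tau$ and $\alpha$, which fixes $K$ through Lemma \ref{Lem:Quadratic_Potential_Estimate}; then $\epsilon'$ in Lemma \ref{Lem:Rest_Estimate} is chosen small enough that its errors are dominated by a prescribed fraction of $\sum_k|k|^2 c_k^\dagger c_k$ and $\mathcal{N}$; finally $\epsilon$ is chosen small enough that the $K$-dependent constant $C_K\epsilon$ from Lemma \ref{Lem:First_Error_Term_Estimate} is likewise subdominant. The mildly delicate point along the way is the propagation of the spectral BEC assumption through $c_k$, i.e.\ the passage from $\mathcal{N}/N\leq\epsilon$ on $\Psi$ to the bound $\braket{\Psi,c_k^\dagger(\mathcal{N}/N)c_k\Psi}\leq(\epsilon+o(1))\braket{\Psi,c_k^\dagger c_k\Psi}$ used above; once this is in place the rest of the proof is bookkeeping.
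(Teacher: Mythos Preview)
Your outline is essentially the paper's own proof: start from Eq.~(\ref{Eq:First_Lower_Bound}), apply Lemmas~\ref{Lem:Quadratic_Potential_Estimate}--\ref{Lem:Rest_Estimate}, propagate the spectral BEC assumption through $c_k$ exactly as you describe (since $c_k$ shifts $\mathcal{N}$ by at most two), and use Lemma~\ref{lem:Comparison} to swap $a$-kinetic for $c$-kinetic energy. Two small points deserve correction, though.

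First, the Young--inequality detour does not work as written. After $|k|^{2\tau}\le \alpha'|k|^2+C_{\alpha'}$ you invoke Lemma~\ref{lem:Comparison} at $\tau=1$, but that lemma produces an error $N^{\tau}=N$, not ``further multiples of $\mathcal{N}$''; since $\alpha,\alpha'$ are fixed before $N\to\infty$, you end up with a stray $O(N)$ term that kills the $-C\sqrt N$ remainder. The paper avoids this by applying Lemma~\ref{lem:Comparison} directly at the original exponent $\tau<\tfrac12$ (your $\tau=\tfrac12$ would also be fine), which yields the harmless $N^{\tau}\le N^{1/2}$ error; no Young step is needed.

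Second, your write-up does not say where the $+\delta\,\mathcal{N}$ in the conclusion comes from. The starting inequality only carries $\sum_k|k|^2 c_k^\dagger c_k$; the paper extracts the number operator at the very end by one more use of Lemma~\ref{lem:Comparison} (case $\tau=0$), giving $\mathcal{N}\big(1-R\tfrac{\mathcal{N}}{N}\big)\le R\sum_k|k|^2 c_k^\dagger c_k+R$, which on the BEC subspace yields $\langle\Psi,\mathcal{N}\Psi\rangle\lesssim\langle\Psi,\sum_k|k|^2 c_k^\dagger c_k\Psi\rangle+1$ once $R\epsilon<1$. Splitting off a fraction of the $c$-kinetic energy via this bound then produces the $\delta\,\mathcal{N}$ term. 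With these two fixes your argument matches the paper's.
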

\begin{proof}
By Eq.~(\ref{Eq:First_Lower_Bound}) together with the estimates in Lemma \ref{Lem:Quadratic_Potential_Estimate}, Lemma \ref{Lem:First_Error_Term_Estimate} and Lemma \ref{Lem:Rest_Estimate} we can bound for $\alpha,\tau>0$ the operator $H_N-\frac{1}{6} b_{\mathcal{M}}(V)N$ from below by
\begin{align*}
\frac{1}{2}\sum_k |k|^2 c_k^\dagger c_k-C\sum_k |k|^2 c_k^\dagger \left(\frac{\mathcal{N}}{N}+N^{-\frac{1}{2}}\right) c_k - \alpha\sum_k |k|^{2\tau}a_k^\dagger a_k-C\left(\frac{\mathcal{N}}{N}+N^{-\frac{1}{2}}\right)\mathcal{N}-C N^{\frac{1}{2}},
\end{align*}
for a suitable constant $C$. In the following let $\Psi$ be a state satisfying $\mathds{1}_{[0,\epsilon N)}\! \left(\mathcal{N}\right)\Psi=\Psi$ and define $\Psi_k:=c_k \Psi$. By the definition of $c_k$ it is clear that $\mathds{1}_{[0,\epsilon N+2)}\! \left(\mathcal{N}\right)\Psi_k=\Psi_k$, and therefore
\begin{align*}
  & \ \  \left\langle \Psi, \sum_k |k|^2 c_k^\dagger \left(\frac{\mathcal{N}}{N}+N^{-\frac{1}{2}}\right) c_k \Psi\right\rangle=\sum_k |k|^2\left\langle \Psi_k,\left(\frac{\mathcal{N}}{N}+N^{-\frac{1}{2}}\right)\Psi_k\right\rangle\\
  & =\sum_k |k|^2\left\langle \Psi_k,\left(\frac{\mathcal{N}}{N}+N^{-\frac{1}{2}}\right)\mathds{1}_{[0,\epsilon N+2)}\! \left(\mathcal{N}\right)\Psi_k\right\rangle\leq \sum_k |k|^2\left\langle \Psi_k,\left(\frac{\epsilon N +2}{N}+N^{-\frac{1}{2}}\right)\Psi_k\right\rangle\\
  & = \left(\frac{\epsilon N +2}{N}+N^{-\frac{1}{2}}\right)\left\langle \Psi, \sum_k |k|^2 c_k^\dagger  c_k \Psi\right\rangle.
\end{align*}
In a similar fashion we have $\braket{\Psi,\mathcal{N}^2\Psi}\leq \epsilon N \braket{\Psi,\mathcal{N} \Psi}$. Furthermore, note that
\begin{align*}
 \sum_k |k|^{2\tau}a_k^\dagger a_k\lesssim \sum_k |k|^2 c_k^\dagger c_k+\frac{\mathcal{N}^2}{N}+N^{\frac{1}{2}}   
\end{align*}
by Lemma \ref{lem:Comparison} for $\tau<\frac{1}{2}$. Choosing $\alpha$ small enough, we therefore obtain 
\begin{align*}
\braket{\Psi,H_N\Psi}\geq \frac{1}{6} b_{\mathcal{M}}(V)N+\frac{1}{3}\left\langle \Psi, \sum_k |k|^2 c_k^\dagger c_k\Psi\right\rangle - C N^{\frac{1}{2}} - \epsilon C \braket{\Psi, \mathcal{N}\Psi}
\end{align*}
for states $\Psi$ satisfying $\mathds{1}_{[0,\epsilon N)}\! \left(\mathcal{N}\right)\Psi=\Psi$. Again by Lemma \ref{lem:Comparison} we have 
\begin{align*}
  \mathcal{N}\left(1-R\frac{\mathcal{N}}{N}\right)\leq R \sum_k |k|^2 c_k^\dagger c_k+R  
\end{align*}
for a suitable constant $R$. Using $\mathds{1}\! \left(\mathcal{N}\leq \epsilon N\right)\Psi=\Psi$ with $\epsilon$ small enough such that $R\epsilon<1$, we therefore have $\braket{\Psi,\mathcal{N}\Psi}\lesssim \left\langle \Psi, \sum_k |k|^2 c_k^\dagger c_k\Psi\right\rangle+1$. Choosing $\epsilon$ small enough concludes the proof. 
\end{proof}

Before we come to the lower bound on the ground state energy $E_N$ in the main result of this Section Corollary \ref{Cor:Condensation}, let us first state the corresponding upper bound in the subsequent Theorem \ref{Th:First_Order_Upper_Bound}. The proof of Theorem \ref{Th:First_Order_Upper_Bound} is content of the following Section \ref{Sec:First_Order_Upper_Bound}.

\begin{thm}
\label{Th:First_Order_Upper_Bound}
    There exists a constant $C>0$ such that the ground state energy $E_N$ is bounded from above by $E_N\leq \frac{1}{6}b_\mathcal{M}(V)N+C\sqrt{N}$.
\end{thm}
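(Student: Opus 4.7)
The plan is to test $H_N$ against a trial state that bakes in the three-body correlation structure encoded by the map $T$ from Eq.~(\ref{Eq:Definition_Feshbach-Schur}). I would take $\Psi_* := U\, u_0^{\otimes N}$ with a unitary of cubic Bogoliubov type,
\begin{equation*}
U := \exp\!\left(\frac{1}{6}\sum_{ijk \neq 0}\eta_{ijk}\, a_i^\dagger a_j^\dagger a_k^\dagger\, \frac{a_0^3}{N^{3/2}} - \mathrm{H.c.}\right),
\end{equation*}
where the permutation-symmetric kernel is chosen as $\eta_{ijk} \approx -N^{3/2}(T-1)_{ijk,000}$, with a momentum cut-off $|i|+|j|+|k| \lesssim \sqrt{N}$ to keep the generator bounded. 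Using the pointwise bounds on $(T-1)_{ijk,000}$ alluded to in Lemma \ref{Lem:Coefficient_Control}, one checks that $\sum|\eta_{ijk}|^2 = O(1)$, which yields the a priori excitation bound $\braket{\Psi_*, \mathcal{N}\Psi_*} = O(1)$; this control also propagates along the interpolating path $s \mapsto e^{s(B - B^*)} u_0^{\otimes N}$.

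For the energy I would evaluate $\braket{\Psi_*, H_N \Psi_*}$ via the algebraic representation of Lemma \ref{Lem:Many_Body_Block_Diagonal}, equivalently via Eq.~(\ref{Eq:First_Lower_Bound}) read as an identity up to the errors $\pm \mathcal{E}$ and $\pm \mathcal{E}'$. The leading contribution is the $(000,000)$-matrix element
\begin{equation*}
\binom{N}{3}\braket{u_0 u_0 u_0,\, (V_N - V_N R V_N)\, u_0 u_0 u_0} = \frac{1}{6} b_{\mathcal{M}}(V)\, N + O(1),
\end{equation*}
by Lemma \ref{Lem:Coefficient_Control_II}. The generator of $U$ is designed exactly so that $U^* c_k U$ differs from $a_k$ (for $k\neq 0$) only by terms whose expectation on $u_0^{\otimes N}$ is at most of order $\sqrt{N}$, making the kinetic contribution $\sum_k|k|^2 \braket{\Psi_*, c_k^\dagger c_k \Psi_*}$ of order $\sqrt{N}$. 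The residual $\mathcal{E}$ is controlled by Lemma \ref{Lem:First_Error_Term_Estimate} together with $\braket{\Psi_*,\mathcal{N}\Psi_*} = O(1)$, while the off-diagonal error $\mathcal{E}'$ from Eq.~(\ref{Eq:Error_Term_With_Primie}) is precisely of order $\sqrt{N}$ by Lemma \ref{Lem:Rest_Estimate}; this is what ultimately sets the size of the error in the theorem.

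The main obstacle is the technical expansion of $U^* H_N U$: since $U = e^G$ with cubic anti-Hermitian $G := B - B^*$, each commutator $[H_N, G]$ produces higher-order operators (quartic and quintic in creation/annihilation operators), and a clean Duhamel iteration
\begin{equation*}
U^* H_N U = H_N + \int_0^1 e^{-sG}[H_N, G]\, e^{sG}\, ds
\end{equation*}
must be controlled term by term using the decay estimates for the kernels $(T-1)_{ijk,\ell m n}$. A somewhat less algebraic alternative is a Jastrow-type trial function
\begin{equation*}
\Psi_*(x_1,\ldots,x_N) \propto \prod_{i<j<k}\!\bigl(1 - \omega(\sqrt{N}(x_i-x_k),\sqrt{N}(x_j-x_k))\bigr),
\end{equation*}
as in \cite{NRT1}; the correlation energy is then evaluated directly using the variational identity in Eq.~(\ref{Eq:Definition_of_b}), and the $O(\sqrt{N})$ error arises from carefully tracking the normalization and the residual interactions between overlapping triples. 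Either route reduces the bound to the scattering-length identity of Lemma \ref{Lem:Coefficient_Control_II} combined with the estimates already established in this section.
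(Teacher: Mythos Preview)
Your trial state is essentially the paper's: with $K=0$ one takes $\Gamma=U\Gamma_0$, $\Gamma_0=u_0^{\otimes N}$, $U=e^{\mathcal{G}^\dagger-\mathcal{G}}$ with $\mathcal{G}=\frac{1}{6}\sum_{ijk}\eta_{ijk}a_i^\dagger a_j^\dagger a_k^\dagger a_0^3$ and $\eta_{ijk}=(T-1)_{ijk,000}$ (no momentum cut-off is needed, the bounds of Lemma~\ref{Lem:Coefficient_Control} already give $\sum|\eta_{ijk}|^2\lesssim N^{-3}$). So the strategy is right, but your bookkeeping of where the $\sqrt{N}$ comes from is off, and this matters.

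The gap is that you propose to evaluate the energy via Eq.~(\ref{Eq:First_Lower_Bound}) ``read as an identity''. That inequality was obtained from Lemma~\ref{Lem:Many_Body_Block_Diagonal} by \emph{discarding} the non-negative potential term
\[
\mathcal{E}_\mathcal{P}(\psi)=\frac{1}{6}\sum_{(ijk),(\ell m n)\in A}\!\!(V_N)_{ijk,\ell m n}\,\psi_{ijk}^\dagger\psi_{\ell m n}\ \ge 0,
\]
which is legitimate for a lower bound but not for an upper bound. In the paper's proof this term is kept and it is precisely what produces the $C\sqrt{N}$: the Duhamel expansion of $U^{-1}\psi_{ijk}U$ in Eq.~(\ref{Eq:Double_Duhamel}) isolates the piece $(\delta_1\psi)_{ijk}$, and Lemma~\ref{Lem:First_Order_Upper_Bound_Useful} together with Corollary~\ref{Cor:First_Order_Trial_State_V} give $\langle\Gamma,\mathcal{E}_\mathcal{P}(\psi)\Gamma\rangle\le C\sqrt{N}$. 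By contrast, the two terms you flag as the source of $\sqrt{N}$ are in fact harmless on the trial state: since $\mathcal{G}$ changes $\mathcal{N}$ by $3$, the state $\Gamma$ lives on sectors with $\mathcal{N}\in 3\mathbb{Z}$, so $\langle\Gamma,a_0^\dagger a_0^3 a_\ell^\dagger a_{-\ell}^\dagger\Gamma\rangle=0$ and the contribution of $\mathcal{E}'$ vanishes identically; and the kinetic term is much smaller than you claim, $\langle\Gamma,\mathcal{E}_\mathcal{K}(c)\Gamma\rangle\lesssim N^{-1}$ by Lemma~\ref{Lem:First_Order_Upper_Bound_Useful_II} and Corollary~\ref{Cor:First_Order_Trial_State_Kinetic} (the double commutator $[[a_k,\mathcal{G}],\mathcal{G}^\dagger]$ is the only surviving piece in Eq.~(\ref{Eq:Decomposition_c})). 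So Lemma~\ref{Lem:Rest_Estimate} is not what sets the error here; the missing ingredient in your outline is the control of $\mathcal{E}_\mathcal{P}(\psi)$, and that is where all the work in Section~\ref{Sec:First_Order_Upper_Bound} goes.
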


It has been verified in \cite{NRT1}, for the more general setting of particles being confined by an additional external potential, that any approximate ground state $\Psi_N$ of the operator $H_N$ satisfies complete Bose-Einstein condensation $ \braket{\Psi_N,\mathcal{N}\Psi_N}=o_{N\rightarrow \infty}(N)$. Combining this observation with Theorem \ref{Th:(1)}, allows us to derive an asymptotically correct lower bound on the ground state energy in Corollary \ref{Cor:Condensation} with an error of the order $\sqrt{N}$, see Eq.~(\ref{Eq:Condensation_Energy}). In this context we call $\Psi_N$ an approximate ground state, in case $\|\Psi_N\|=1$ and there exists a constant $C>0$ such that
 \begin{align}
 \label{Eq:Def_Approx_GS}
     \braket{\Psi_N,H_N\Psi_N}\leq E_N+C.
 \end{align}
 Note that the assumption in Eq.~(\ref{Eq:Def_Approx_GS}) is more restrictive compared to the one employed in \cite{NRT1}, where the authors call $\Psi_N$ an approximate ground state in case $\|\Psi_N\|=1$ and
 \begin{align}
 \label{Eq:Def_Approx_GS_NRT}
      \lim_{N\rightarrow \infty}\frac{1}{N}\braket{\Psi_N,H_N\Psi_N}= \frac{1}{6}b_\mathcal{M}(V).
 \end{align}
 The fact that Eq.~(\ref{Eq:Def_Approx_GS}) implies Eq.~(\ref{Eq:Def_Approx_GS_NRT}) follows immediately from the leading order asymptotics in Eq.~(\ref{Eq:First_Order_Energy_Asymptotics}), which has been verified in \cite{NRT1}, together with the trivial lower bound $ \braket{\Psi_N,H_N\Psi_N}\geq E_N$. Furthermore, in combination with the upper bound on $E_N$ derived in Theorem \ref{Th:First_Order_Upper_Bound}, we obtain that the ground state $\Psi_N^{\mathrm{GS}}$ of $H_N$ satisfies (BEC) with a rate $\frac{\sqrt{N}}{N}=\frac{1}{\sqrt{N}}$, which concludes the proof of our second main Theorem \ref{Eq:Main_Theorem_Introduction_Condensation}. Finally we can improve this result to (BEC) in the spectral sense 
 \begin{align*}
   \mathds{1}\! \left(\mathcal{N}\leq K\sqrt{N}\right)\Phi_N=\Phi_N,  
 \end{align*}
however we have to consider slightly modified states $\Phi_N$ here. 

\begin{cor}
\label{Cor:Condensation}
The ground state $\Psi_N^{\mathrm{GS}}$ of the operator $H_N$ satisfies for a suitable $C>0$ 
\begin{align*}
    \braket{\Psi_N^{\mathrm{GS}},\mathcal{N}\Psi_N^{\mathrm{GS}}}\leq C\sqrt{N},
\end{align*}
and we have the lower bound
\begin{align}
\label{Eq:Condensation_Energy}
   E_N\geq \frac{1}{6}b_\mathcal{M}(V)N-C\sqrt{N} .
\end{align}
Furthermore there exists a constant $C>0$ and states $\Phi_N$, such that $\Phi_N$ is an approximate ground state of $H_N$ satisfying (BEC) in the spectral sense with rate $\frac{1}{\sqrt{N}}$, i.e.
\begin{align*}
    \braket{\Phi_N, H_N\Phi_N}\leq E_N+C,\\
\mathds{1}\! \left(\mathcal{N}\leq C\sqrt{N}\right)\Phi_N=\Phi_N,
\end{align*}
and we have the estimate on the kinetic energy $\left\langle \Phi_N, \sum_{k}|k|^2 c_k^\dagger  c_k \Phi_N \right\rangle\leq C\sqrt{N}$.
\end{cor}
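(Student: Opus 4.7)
The plan is to use the IMS localization from Appendix~\ref{Sec:IMS Localisation} to pass from the ground state to a modified state satisfying the spectral condensation assumption of Theorem~\ref{Th:(1)}, derive the quantitative bounds on this modified state, and transfer them back to $\Psi_N^{\mathrm{GS}}$. By Theorem~\ref{Th:First_Order_Upper_Bound}, $E_N \leq \frac{1}{6}b_\mathcal{M}(V)N + C\sqrt{N}$, so $\Psi_N^{\mathrm{GS}}$ is in particular an approximate ground state in the loose sense of \cite{NRT1}, yielding the a priori bound $\braket{\Psi_N^{\mathrm{GS}},\mathcal{N}\Psi_N^{\mathrm{GS}}} = o(N)$.

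With $\epsilon>0$ as provided by Theorem~\ref{Th:(1)} and $f \in C^\infty([0,\infty))$ a smooth cutoff with $f\equiv 1$ on $[0,1/2]$ and $f\equiv 0$ on $[1,\infty)$, set $\chi_1 := f(\mathcal{N}/(\epsilon N))$, $\chi_2 := \sqrt{1-\chi_1^2}$ and $\Phi_N^{(1)} := \chi_1\Psi_N^{\mathrm{GS}}/\|\chi_1\Psi_N^{\mathrm{GS}}\|$. Markov together with the a priori bound gives $\|\chi_2\Psi_N^{\mathrm{GS}}\|^2 \leq 2\braket{\Psi_N^{\mathrm{GS}},\mathcal{N}\Psi_N^{\mathrm{GS}}}/(\epsilon N) = o(1)$, so $\Phi_N^{(1)}$ is well-defined and satisfies $\mathds{1}(\mathcal{N}\leq\epsilon N)\Phi_N^{(1)} = \Phi_N^{(1)}$. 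The IMS identity
\begin{equation*}
\braket{\Psi_N^{\mathrm{GS}},H_N\Psi_N^{\mathrm{GS}}} = \braket{\chi_1\Psi_N^{\mathrm{GS}},H_N\chi_1\Psi_N^{\mathrm{GS}}} + \braket{\chi_2\Psi_N^{\mathrm{GS}},H_N\chi_2\Psi_N^{\mathrm{GS}}} + O(1/N)
\end{equation*}
has double-commutator error $O(\braket{V_N}/(\epsilon N)^2) = O(1/N)$ in expectation, using $\braket{V_N} \leq E_N = O(N)$ and the scaling of the derivatives of $\chi_i$. Since $\chi_2\Psi_N^{\mathrm{GS}}/\|\chi_2\Psi_N^{\mathrm{GS}}\|$ is a normalized $N$-particle state, the trivial lower bound $\braket{\chi_2\Psi_N^{\mathrm{GS}},H_N\chi_2\Psi_N^{\mathrm{GS}}} \geq E_N\|\chi_2\Psi_N^{\mathrm{GS}}\|^2$ inserted into the IMS identity gives, after normalization, $\braket{\Phi_N^{(1)},H_N\Phi_N^{(1)}} \leq E_N + O(1)$.

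Applying Theorem~\ref{Th:(1)} to $\Phi_N^{(1)}$ yields
\begin{equation*}
\braket{\Phi_N^{(1)},H_N\Phi_N^{(1)}} \geq \tfrac{1}{6}b_\mathcal{M}(V)N + \delta\braket{\Phi_N^{(1)},\mathcal{N}\Phi_N^{(1)}} + \delta\sum_k |k|^2\braket{\Phi_N^{(1)},c_k^\dagger c_k\Phi_N^{(1)}} - C\sqrt{N}.
\end{equation*}
Dropping the non-negative $\delta$-terms and combining with the previous paragraph immediately gives the energy lower bound $E_N \geq \braket{\Phi_N^{(1)},H_N\Phi_N^{(1)}} - O(1) \geq \tfrac{1}{6}b_\mathcal{M}(V)N - C\sqrt{N}$, which is Eq.~(\ref{Eq:Condensation_Energy}). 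Combining the Theorem~\ref{Th:(1)} lower bound with the upper bound $\braket{\Phi_N^{(1)},H_N\Phi_N^{(1)}} \leq E_N + O(1) \leq \tfrac{1}{6}b_\mathcal{M}(V)N + C\sqrt{N}$ then yields $\braket{\Phi_N^{(1)},\mathcal{N}\Phi_N^{(1)}} \leq C\sqrt{N}$ and the analogous kinetic bound. To convert this into $\braket{\Psi_N^{\mathrm{GS}},\mathcal{N}\Psi_N^{\mathrm{GS}}} \leq C\sqrt{N}$, I use the kinetic lower bound $H_N \geq (2\pi)^2\mathcal{N}$ on the $\chi_2$ sector in the IMS identity (together with the Theorem~\ref{Th:(1)} lower bound on $\braket{\Phi_N^{(1)},H_N\Phi_N^{(1)}}$ and the upper bound on $E_N$), which after rearranging gives $\braket{\chi_2\Psi_N^{\mathrm{GS}},\mathcal{N}\chi_2\Psi_N^{\mathrm{GS}}} \leq C\sqrt{N}$; adding this to $\braket{\chi_1\Psi_N^{\mathrm{GS}},\mathcal{N}\chi_1\Psi_N^{\mathrm{GS}}} \leq \braket{\Phi_N^{(1)},\mathcal{N}\Phi_N^{(1)}} \leq C\sqrt{N}$ completes the BEC rate. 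The main obstacle here is that the constants ($\epsilon$, the $\delta$ from Theorem~\ref{Th:(1)}, $(2\pi)^2$, and $b_\mathcal{M}(V)$) must combine favorably for the kinetic rearrangement to close, which is ensured by a suitable choice of $\epsilon$ in Theorem~\ref{Th:(1)}.

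For the last statement, I repeat the construction with the sharper cutoff $M := C_0\sqrt{N}$ for $C_0$ sufficiently large, defining $\Phi_N := \tilde\chi_1\Psi_N^{\mathrm{GS}}/\|\tilde\chi_1\Psi_N^{\mathrm{GS}}\|$ with $\tilde\chi_1 := f(\mathcal{N}/M)$ and $\tilde\chi_2 := \sqrt{1-\tilde\chi_1^2}$. By the just-established BEC, Markov gives $\|\tilde\chi_2\Psi_N^{\mathrm{GS}}\|^2 \leq 2C/C_0 \leq 1/2$; the double-commutator IMS error at this scale is $O(\braket{V_N}/M^2) = O(1)$; and the trivial bound argument from the second paragraph again produces $\braket{\Phi_N,H_N\Phi_N} \leq E_N + O(1)$, establishing the strict approximate-ground-state property. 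The spectral condensation $\mathds{1}(\mathcal{N}\leq M)\Phi_N = \Phi_N$ is automatic from the support of $\tilde\chi_1$, and the kinetic bound $\braket{\Phi_N,\sum_k |k|^2 c_k^\dagger c_k\Phi_N} \leq C\sqrt{N}$ follows from a final application of Theorem~\ref{Th:(1)} to $\Phi_N$, which satisfies the required spectral condensation since $M \leq \epsilon N$.
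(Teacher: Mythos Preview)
Your overall strategy---IMS localization at scale $\epsilon N$, apply Theorem~\ref{Th:(1)} to the localized state, then re-localize at scale $C_0\sqrt{N}$---matches the paper's, and your derivations of the energy lower bound~\eqref{Eq:Condensation_Energy}, of $\braket{\Phi_N^{(1)},\mathcal{N}\Phi_N^{(1)}}\le C\sqrt N$, and of the final construction of $\Phi_N$ are correct.

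The genuine gap is in the step where you transfer the BEC rate from $\Phi_N^{(1)}$ back to $\Psi_N^{\mathrm{GS}}$ via the kinetic bound $H_N\ge(2\pi)^2\mathcal{N}$ on the $\chi_2$ sector. Carrying out your rearrangement, one gets from the IMS identity and the energy bounds
\[
(2\pi)^2\braket{\chi_2\Psi_N^{\mathrm{GS}},\mathcal{N}\chi_2\Psi_N^{\mathrm{GS}}}\le \tfrac{1}{6}b_\mathcal{M}(V)\,N\,\|\chi_2\Psi_N^{\mathrm{GS}}\|^2+C\sqrt N,
\]
and using $\|\chi_2\Psi_N^{\mathrm{GS}}\|^2\le \tfrac{2}{\epsilon N}\braket{\chi_2\Psi_N^{\mathrm{GS}},\mathcal{N}\chi_2\Psi_N^{\mathrm{GS}}}$ closes only if $(2\pi)^2>\tfrac{b_\mathcal{M}(V)}{3\epsilon}$, i.e.\ $\epsilon>\tfrac{b_\mathcal{M}(V)}{3(2\pi)^2}$. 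But the $\epsilon$ in Theorem~\ref{Th:(1)} is an \emph{output} that must be taken \emph{small} (the proof explicitly chooses $\epsilon$ small to absorb several error terms), so you cannot freely pick it large; for $V$ with $b_\mathcal{M}(V)$ not small there is no admissible choice. Your sentence ``ensured by a suitable choice of $\epsilon$ in Theorem~\ref{Th:(1)}'' does not hold.

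The paper avoids this constant clash entirely. Instead of a quantitative kinetic bound on the $\chi_2$ piece, it argues by contradiction: if along a subsequence $\braket{\widetilde\Phi_N,H_N\widetilde\Phi_N}-E_N$ stayed bounded, then $\widetilde\Phi_N$ would be an approximate ground state in the sense of \cite{NRT1}, hence satisfy $\braket{\widetilde\Phi_N,\mathcal{N}\widetilde\Phi_N}=o(N)$---contradicting $\mathds{1}(\mathcal{N}>\tfrac{\epsilon}{2}N)\widetilde\Phi_N=\widetilde\Phi_N$. Thus $\braket{\widetilde\Phi_N,H_N\widetilde\Phi_N}-E_N\to\infty$, and the second estimate in Lemma~\ref{Lem:IMS} (Eq.~\eqref{Eq:Energy_After_IMS_II}) then gives $\braket{\Psi_N^{\mathrm{GS}},\mathcal{N}\Psi_N^{\mathrm{GS}}}\lesssim\braket{\Phi_N^{(1)},\mathcal{N}\Phi_N^{(1)}}+o(\sqrt N)\lesssim\sqrt N$. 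This soft argument uses only the \emph{qualitative} BEC from \cite{NRT1} and imposes no constraint linking $\epsilon$ to $b_\mathcal{M}(V)$.
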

\begin{proof}
    From the results in \cite{NRT1} we know that the ground state $\Psi^{\mathrm{GS}}_N$ of $H_N$ satisfies 
    \begin{align*}
       \braket{\Psi^{\mathrm{GS}}_N,\mathcal{N}\Psi^{\mathrm{GS}}_N}=o_{N\rightarrow \infty}(N) .
    \end{align*}
 Consequently we know by Lemma \ref{Lem:IMS} that there exist states $\xi_N$ satisfying 
    \begin{align}
    \label{Eq:xi_Energy}
        \braket{\xi_N,H_N \xi_N}\leq E_N+C\left(\frac{1}{\epsilon \sqrt{N}}+\frac{1}{\epsilon^2 N}\right)
    \end{align}
   and $\mathds{1}(\mathcal{N}\leq \epsilon N)\xi_N=\xi_N$, where we choose $\epsilon>0$ as in Theorem \ref{Th:(1)}. By Theorem \ref{Th:(1)} and Theorem \ref{Th:First_Order_Upper_Bound} we therefore obtain for a suitable constant $C>0$
   \begin{align}
   \nonumber
     &  \frac{1}{6} b_{\mathcal{M}}(V)N+\delta \left\langle \xi_N , \sum_{k}|k|^2 c_k^\dagger  c_k \xi_N \right\rangle+\delta\braket{\xi_N,\mathcal{N} \xi_N}-C\sqrt{N} \leq \braket{\xi_N,H_N \xi_N}\\
     \label{Eq:Inequality_Chain}
     & \ \ \ \ \  \ \ \ \ \  \ \ \ \ \  \ \ \ \ \ \leq E_N +C\leq \frac{1}{6} b_{\mathcal{M}}(V)N+C \sqrt{N}.
   \end{align}
   This immediately implies $E_N\geq \frac{1}{6} b_{\mathcal{M}}(V)N-C\sqrt{N}$ for a suitable constant $C>0$ and
   \begin{align}
   \label{Eq:Improved_Rate_BEC}
     \braket{\xi_N,\mathcal{N}\xi_N}=O_{N\rightarrow \infty}\! \left(\sqrt{N}\right) ,
   \end{align}
as well as $\left\langle \xi_N , \sum_{k}|k|^2 c_k^\dagger  c_k \xi_N \right\rangle=O_{N\rightarrow \infty}\! \left(\sqrt{N}\right) $. Furthermore, there exist by Lemma \ref{Lem:IMS} states $\widetilde{\xi}_N$ satisfying $\mathds{1}\! \left(\mathcal{N}> \frac{\epsilon}{2} N\right)\widetilde \xi_N=\widetilde \xi_N$ and
  \begin{align}
  \label{Eq:Condensation_proof_ground_state}
        \braket{\Psi^{\mathrm{GS}}_N, \mathcal{N}\Psi^{\mathrm{GS}}_N}  \lesssim \braket{\xi_N,\mathcal{N} \xi_N }+\frac{ \sqrt{N}}{\braket{\widetilde \xi_N,H_N \widetilde \xi_N}-E_N}\lesssim \sqrt{N}+\frac{ \sqrt{N}}{\braket{\widetilde \xi_N,H_N \widetilde \xi_N}-E_N}.
  \end{align}
In the following we show by a contradiction argument that
\begin{align}
\label{Eq:Condensation_Contradiction_Argument}
    \liminf_N \left\{\braket{\widetilde \xi_N,H_N \widetilde \xi_N}-E_N\right\}=\infty.
\end{align}
For this purpose let us assume that Eq.~(\ref{Eq:Condensation_Contradiction_Argument}) is violated, i.e. we assume that there exists a subsequence $N_j$ and a constant $C>0$ such that $\sup_{j}\braket{\widetilde \xi_{N_j},H_{N_j} \widetilde \xi_{N_j}}-E_{N_j}\leq C$. Let us complete this subsequence to a proper sequence by defining $\xi'_N:=\widetilde \xi_N$ in case $N=N_j$ for some $j$ and $\xi'_N:=\Psi^{\mathrm{GS}}_N$ otherwise. Clearly $\xi'_N$ is a sequence of approximate ground states, see Eq.~(\ref{Eq:Def_Approx_GS}), and as such $\xi'_N$ satisfies complete (BEC) by the results in \cite{NRT1}, i.e. $\braket{\xi'_N,\mathcal{N}\xi'_N}=o_{N\rightarrow \infty}(N)$. This is however a contradiction to
\begin{align*}
    \braket{\xi'_{N_j},\mathcal{N}\xi'_{N_j}}=\braket{\widetilde \xi_{N_j},\mathcal{N}\widetilde \xi_{N_j}}\geq \frac{\epsilon}{2}N,
\end{align*}
which concludes the proof of Eq.~(\ref{Eq:Condensation_Contradiction_Argument}). Combining Eq.~(\ref{Eq:Condensation_proof_ground_state}) and Eq.~(\ref{Eq:Condensation_Contradiction_Argument}) yields
\begin{align*}
 \braket{\Psi^{\mathrm{GS}}_N,\mathcal{N}\Psi^{\mathrm{GS}}_N}\leq C\sqrt{N},
\end{align*}
 for a suitable constant $C>0$. Applying again Lemma \ref{Lem:IMS} for the state $\Psi^{\mathrm{GS}}_N$ and $M:=K \sqrt{N}$, we obtain states $\Phi_N$ satisfying $\mathds{1}(\mathcal{N}\leq K \sqrt{N})\Phi_N=\Phi_N$ and
   \begin{align*}
       \braket{\Phi_N, H_N \Phi_N}\leq E_N+\frac{C}{1-\frac{2}{K\sqrt{N}}\braket{\Psi^{\mathrm{GS}}_N,\mathcal{N}\Psi^{\mathrm{GS}}_N}}\leq E_N+\frac{C}{1-\frac{2C}{K}},
   \end{align*}
   for a large enough $C>0$. Consequently $\Phi_N$ is a sequence of approximate ground states for $K>2C$. Finally we notice that the states $\Phi_N$ satisfy the chain of inequalities in Eq.~(\ref{Eq:Inequality_Chain}) as well, and therefore 
   \begin{align*}
      \left\langle \Phi_N , \sum_{k}|k|^2 c_k^\dagger  c_k \Phi_N \right\rangle=O_{N\rightarrow \infty}\! \left(\sqrt{N}\right) .
   \end{align*}
\end{proof}

\section{First Order Upper Bound}
\label{Sec:First_Order_Upper_Bound}
It is the goal of this section to introduce a trial state $\Gamma$, which simultaneously annihilates the variables $c_k$ for $k\neq 0$ and $\psi_{\ell m n}$ in case $(\ell,m,n)\neq 0$, at least in an approximate sense, which allows us to verify the upper bound on the ground state energy $E_N$ in Theorem \ref{Th:First_Order_Upper_Bound}. For the rest of this Section we specify the parameter $K$ introduced above the definition of $\pi_K$ in Eq.~(\ref{Eq:Definition_pi_K}) as $K:=0$, which especially means that with $\eta_{ijk}:=(T-1)_{i j k, 0 0 0}$ we have
\begin{align}
\label{Eq:c_for_K=0}
c_k &=a_k+\frac{1}{2}\sum_{i j}\eta_{ijk} \, a_i^\dagger a_j^\dagger a_{0}^3,\\
\label{Eq:Psi_for_K=0}
\psi_{i j k} &=a_ia_ja_k+\eta_{ijk} \,  a_0^3.
\end{align}
In order to find a suitable state $\Gamma$, let $\eta_{ijk}:=(T-1)_{i j k, 0 0 0}$ and let us follow the strategy in \cite{NRT1}, respectively in the case of Bose gases with two-particle interactions see e.g. \cite{BBCS,BCS,HST}, by defining the generator
\begin{align}
\label{Eq:Generator}
    \mathcal{G}:=\frac{1}{6}\sum_{i j k}\eta_{ijk}\, a_i^\dagger a_j^\dagger a_k^\dagger a_0^3 
\end{align}
of a unitary group $U_s:=e^{s\mathcal{G}^\dagger-s\mathcal{G}}$ and $U:=U_1$. As we show this Section, the unitary $U$ has the property $U^{-1}c_k U\approx a_k$ and $U^{-1}\psi_{ijk}U\approx a_i a_j a_k$ in a suitable sense. Denoting with
\begin{align}
\label{Eq:Gamma_0}
\Gamma_0(x_1,\dots ,x_N):=1    
\end{align}
the constant function in $L^2_\mathrm{sym}\! \left(\Lambda^N\right)$, i.e. $a_k \Gamma_0=0$ for $k\neq 0$, we observe that $\Gamma:=U \Gamma_0$ is a suitable trial state for the (approximate) annihilation of $c_k$, given $k\neq 0$, and $\psi_{\ell m n}$, given $(\ell,m,n)\neq 0$. Note that the action of the unitary $U$ introduces a three-particle correlation structure on the completely uncorrelated wave function $\Gamma_0$. 

For the purpose of verifying that $U^{-1}\psi_{ijk}U$ is approximately identical to $a_i a_j a_k$, we first apply Duhamel's formula, which yields
\begin{align}
\label{Eq:Duhamel_I}
    U^{-1} a_i a_j a_k U=a_i a_j a_k  - \int_0^1 U_{-s} \big[a_i a_j a_k ,\mathcal{G}\big]U_s \, \mathrm{d}s+\int_0^1 U_{-s} \big[a_i a_j a_k ,\mathcal{G}^\dagger \big]U_s \, \mathrm{d}s.
\end{align}
Furthermore, note that we can write 
\begin{align*}
    \big[a_i a_j a_k ,\mathcal{G}\big]=\eta_{ijk}a_0^3+(\delta_1 \psi)_{ijk}+(\delta_2 \psi)_{ijk}
\end{align*}
using the definition
\begin{align*}
(\delta_1 \psi)_{i_1 i_2 i_3}:  & =\frac{1}{2}\sum_{\sigma\in S_3}\sum_j \eta_{i_{\sigma_1}i_{\sigma_2} j}\mathds{1}(i_{\sigma_3}=0)\, a_{j}^\dagger a_0^4,\\
    (\delta_2 \psi)_{i_1 i_2 i_3}:  & =\frac{1}{2}\sum_{\sigma\in S_3}\sum_j \eta_{i_{\sigma_1}i_{\sigma_2} j}\mathds{1}(i_{\sigma_3}\neq 0)\, a_{j}^\dagger a_{i_{\sigma_3}}a_0^3+\frac{1}{4}\sum_{\sigma\in S_3}\sum_{j k} \eta_{i_{\sigma_1}j k}\, a_k^\dagger a_{j}^\dagger a_{i_{\sigma_2}}a_{i_{\sigma_3}}a_0^3.
\end{align*}
Therefore we can identify the transformed operators $U^{-1}\psi_{ijk}U$ as
   \begin{align}
    \nonumber
        U^{-1}\psi_{ijk} U \! & = \! a_i a_j a_k \! +  \! \!  \! \int_0^1    \! \! \!  \! U_{-s} \! \left\{ \! \big[a_i a_j a_k ,\mathcal{G}^\dagger \big] \! - \! (\delta_1 \psi)_{ijk} \! - \! (\delta_2 \psi)_{ijk} \! - \! \eta_{ijk}a_0^3 \!\right\} \! U_s \mathrm{d}s \! + \!\eta_{ijk} U^{-1}   a_0^3  U\\
        \nonumber
        &=\! a_i a_j a_k  +   \!  \! \int_0^1    \! \! \!  \! U_{-s}   \big[a_i a_j a_k ,\mathcal{G}^\dagger \big]   U_s \mathrm{d}s  -  \!  \! \int_0^1    \! \! \!  \! U_{-s}   (\delta_1 \psi)_{ijk}   U_s \mathrm{d}s -  \!  \! \int_0^1    \! \! \!  \! U_{-s}   (\delta_2 \psi)_{ijk}   U_s \mathrm{d}s\\
        \label{Eq:Double_Duhamel}
        & \ \ \ \ \ \ \ \ \ \ +\int_0^1\int_s^1 U_{-t} \eta_{ijk} \big[a_0^3 ,\mathcal{G}^\dagger \big]U_t\mathrm{d}t\mathrm{d}s,
    \end{align}
where we have used Duhamel's formula to express $U^{-1}\eta_{ijk}a_0^3U-U_{-s}\eta_{ijk}a_0^3U_s$. The following Lemma \ref{Lem:First_Order_Upper_Bound_Useful} demonstrates that we can treat the quantities $\delta_1 \psi$ and $\delta_2 \psi$ in Eq.~(\ref{Eq:Double_Duhamel}) as error terms. In order to formulate Lemma \ref{Lem:First_Order_Upper_Bound_Useful}, recall the set $\mathcal{L}_0:=\{(0,0,0)\}$ from Eq.~(\ref{Eq:Def_Low_Momentum_Set_0}) and let us define 
\begin{align*}
    A := (2\pi \mathbb{Z})^9\setminus \mathcal{L}_0 =(2\pi \mathbb{Z})^9\setminus \{(0,0,0)\},
\end{align*}
and the potential energy $\mathcal{E}_\mathcal{P}$ of an operator valued three particle vector $\Theta_{i_1 i_2 i_3}$ as
\begin{align}
\label{Eq:Potential_Energy_Operator}
    \mathcal{E}_\mathcal{P}(\Theta):=\sum_{(i_1 i_2 i_3),(i_1' i_2' i_3')\in A}(V_N)_{i_1 i_2 i_3,i_1' i_2' i_3'}\Theta_{i_1 i_2 i_3}^\dagger \Theta_{i_1' i_2' i_3'}.
\end{align}
To keep the notation light, we will occasionally write $\mathcal{E}_\mathcal{P}(\Theta_{i_1 i_2 i_3})$ for $\mathcal{E}_\mathcal{P}(\Theta)$ with dummy indices $i_1 i_2 i_3$.
\begin{lem}
\label{Lem:First_Order_Upper_Bound_Useful}
There exists a constant $C>0$, such that 
   \begin{align}
   \label{Eq:First_Order_Upper_Bound_Useful_I}
        \mathcal{E}_\mathcal{P}(\delta_1 \psi) & \leq CN^{\frac{1}{2}}(\mathcal{N}+1),\\
   \label{Eq:First_Order_Upper_Bound_Useful_II}
        \mathcal{E}_\mathcal{P}(\delta_2 \psi) & \leq C (\mathcal{N}+1)^4.
   \end{align}
   Furthermore, $\mathcal{E}_P\! \left(\big[a_{i_1} a_{i_2} a_{i_3} ,\mathcal{G}^\dagger \big]\right)\leq C N^{-\frac{3}{2}}(\mathcal{N}+1)^5$.
\end{lem}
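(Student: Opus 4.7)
All three estimates proceed by substituting the explicit expressions for $\delta_1 \psi$, $\delta_2 \psi$, and $[a_{i_1} a_{i_2} a_{i_3}, \mathcal{G}^\dagger]$—each of which is a finite sum of products of a scalar Fourier coefficient built from $\eta$ with a string of creation/annihilation operators on Fock space—and exploiting the positivity $V_N \geq 0$ as a multiplication operator on $L^2(\Lambda^3)$. Concretely,
\[
\mathcal{E}_\mathcal{P}(\Theta) = \int_{\Lambda^3} V_N(X)\, \Theta(X)^\dagger \Theta(X)\,\mathrm{d}X, \qquad \Theta(X) := \sum_{I \in A} u_I(X)\, \Theta_I,
\]
is a manifestly positive semidefinite quadratic form. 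In particular $\mathcal{E}_\mathcal{P}(\Theta_1 + \Theta_2) \leq 2 \mathcal{E}_\mathcal{P}(\Theta_1) + 2 \mathcal{E}_\mathcal{P}(\Theta_2)$, reducing each bound to estimating $\mathcal{E}_\mathcal{P}$ of individual summands.

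\textbf{Proof of (I).} Using the symmetry of $\eta$ and $\eta_{ijk} = 0$ whenever any index is zero, the sum over $\sigma \in S_3$ in the definition of $\delta_1 \psi$ collapses to
\[
(\delta_1 \psi)_I = \sum_{j} F_{I,j}\, a_j^\dagger a_0^4, \qquad F_{I,j} := \eta_{i_2 i_3 j}\mathds{1}(i_1=0) + \eta_{i_1 i_3 j}\mathds{1}(i_2=0) + \eta_{i_1 i_2 j}\mathds{1}(i_3=0),
\]
so that $\mathcal{E}_\mathcal{P}(\delta_1 \psi) = \sum_{j, j'} A_{jj'}\, (a_0^\dagger)^4 a_j a_{j'}^\dagger a_0^4$ with $A_{jj'} := \sum_{I, I' \in A} \overline{F_{I,j}} F_{I',j'} (V_N)_{I,I'}$ positive semidefinite. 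Splitting $a_j a_{j'}^\dagger = \delta_{jj'} + a_{j'}^\dagger a_j$, using $(a_0^\dagger)^4 a_0^4 \leq N^4$, and bounding $\sum_{j,j'} A_{jj'} a_{j'}^\dagger a_j \leq \|A\|_{\mathrm{op}} \mathcal{N}$, one reduces to estimating $\mathrm{tr}(A)$ and $\|A\|_{\mathrm{op}}$. Both are controlled using the decay estimates on $\eta$ from Lemma \ref{Lem:Coefficient_Control} together with $|(V_N)_{I,I'}| \leq C N^{-2}$—the operator norm via a weighted Schur test against the scattering weights—yielding the claimed $O(\sqrt{N}(\mathcal N + 1))$ bound.

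\textbf{Proofs of (II) and (III).} For (II), split $\delta_2 \psi$ into its two summand types, the quintic $\eta_{i_{\sigma_1} i_{\sigma_2} j}\, a_j^\dagger a_{i_{\sigma_3}} a_0^3$ and the sextic $\eta_{i_{\sigma_1} j k}\, a_k^\dagger a_j^\dagger a_{i_{\sigma_2}} a_{i_{\sigma_3}} a_0^3$, and treat each by the same scalar/operator factorization as in (I). The condensate factor is now $(a_0^\dagger)^3 a_0^3 \leq N^3$ rather than $N^4$, but the additional annihilators raise the $\mathcal N$-degree: the quintic contributes $\lesssim (\mathcal N + 1)^3$ and the sextic (dominant) contributes $\lesssim (\mathcal N + 1)^4$. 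For (III), using $[a_{i_k}, \mathcal{G}^\dagger] = \tfrac{1}{2}\delta_{i_k,0}\sum_{pqr} \bar\eta_{pqr} (a_0^\dagger)^2 a_p a_q a_r$ together with the Leibniz rule, the commutator has the schematic form (two creations in the condensate) $\times$ (five-particle annihilation string), with an indicator forcing at least one index of $I$ to vanish. The same machinery gives a scalar coefficient matrix of order $N^{-7/2}$ against an operator factor bounded by $N^2 (\mathcal N + 1)^5$, producing the $N^{-3/2}(\mathcal N + 1)^5$ bound.

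\textbf{Main obstacle.} The delicate step is tracking the precise powers of $N$ from three competing sources—the $N^{-2}$ per matrix element of $V_N$, the scattering decay of $\eta$ supplied by Lemma \ref{Lem:Coefficient_Control}, and the Fock-space bounds $(a_0^\dagger)^k a_0^k \leq N^k$ combined with $\mathcal N$-powers from the non-condensate modes. In particular one must bound $\|A\|_{\mathrm{op}}$ and not merely a diagonal surrogate; this is precisely where the quadratic form interpretation of $\mathcal{E}_\mathcal{P}$ via $V_N \geq 0$ is essential, enabling a weighted Schur test against the scattering weights that appear in the decay estimates on $\eta$.
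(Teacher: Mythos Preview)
Your proposal is correct and follows essentially the same approach as the paper: exploit positivity of $V_N$ to split $\mathcal{E}_\mathcal{P}$ into individual summands via Cauchy--Schwarz, factorize each into a scalar coefficient kernel and a Fock-space operator, then invoke the decay of $\eta$ from Lemma~\ref{Lem:Coefficient_Control} together with $(a_0^\dagger)^k a_0^k \leq N^k$.

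A few points where the paper's execution differs from (or simplifies) yours:
\begin{itemize}
\item In part (I), momentum conservation forces $j = -(i_1+i_2+i_3) = -(i_1'+i_2'+i_3') = j'$, so your matrix $A$ is automatically diagonal; the paper uses this directly and only needs $\sum_j A_{jj} \lesssim N^{-7/2}$, making the Schur test unnecessary. Your route still works since $A \geq 0$ gives $\|A\|_{\mathrm{op}} \leq \mathrm{tr}\,A$.
\item In part (II), the sextic term requires a normal-ordering decomposition (the paper's $G,G',G''$ split) that your sketch glosses over with ``same scalar/operator factorization.'' This is where the $(\mathcal{N}+1)^4$ actually arises, from the fully normal-ordered $G$ contribution; the contracted pieces $G',G''$ give lower powers but need the $\ell^2$ bound $\sum|\eta_{k_1 k_2(-k_1-k_2)}|^2 \lesssim N^{-3}$.
\item In part (III), rather than viewing the commutator as carrying a rank-one coefficient tensor of norm $N^{-7/2}$, the paper factors out $\mathbb{A} := \tfrac{1}{6}\sum\eta_{ijk}a_ia_ja_k$ and uses $\mathbb{A}^\dagger\mathbb{A} \lesssim N^{-3}(\mathcal{N}+1)^3$ together with the two-body bound $\sum (V_N)_{0jk,0mn}a_j^\dagger a_k^\dagger a_m a_n \lesssim (\mathcal{N}+1)^2$. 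Your tensor-factorization viewpoint is equivalent and arguably cleaner.
\end{itemize}
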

\begin{proof}
   Let us define $A_j$ as the set of all $s$ such that $(-s, j-s, 0)\in A$ and 
   \begin{align*}
      \alpha_j:=9\sum_{s,t\in A_j}\eta_{-s (s-j) j}\overline{\eta_{-t (t-j) j}}(V_N)_{-s (j-s) 0,-t (j-t) 0} .
   \end{align*}
Making use of the fact that $V_N\geq 0$, we obtain by the Cauchy-Schwarz inequality
    \begin{align*}
        \sum_{(i_1 i_2 i_3),(i'_1 i'_2 i'_3)\in A} \!  \! (V_N)_{i_1 i_2 i_3,i'_1 i'_2 i'_3}(\delta_1 \psi)_{i_1 i_2 i_3}^\dagger (\delta_1 \psi)_{i'_1 i'_2 i'_3}\leq (a_0^\dagger)^4 a_0^4\sum_{j}\alpha_j\,  a_{j}a_j^\dagger\leq N^4 \left(\sum_j \alpha_j\right) \! (\mathcal{N} \! + \! 1).
    \end{align*}
    By Lemma \ref{Lem:Coefficient_Control} and the fact that $|(V_N)_{-s (j-s) 0,-t (j-t) 0}|\lesssim \frac{1}{N^2}$, we have $N^4\sum_j \alpha_j\lesssim C N^{\frac{1}{2}}$, which concludes the proof of Eq.~(\ref{Eq:First_Order_Upper_Bound_Useful_I}). In order to verify Eq.~(\ref{Eq:First_Order_Upper_Bound_Useful_II}), let us first define 
    \begin{align*}
        (\widetilde{\delta}_{2,\sigma} \psi)_{i_1 i_2 i_3}:=\sum_{j k} \eta_{i_{\sigma_1}j k}\, a_k^\dagger a_{j}^\dagger a_{i_{\sigma_2}}a_{i_{\sigma_3}}a_0^3.
    \end{align*}
Then we obtain, using the sign $V_N\geq 0$ and a Cauchy-Schwarz estimate,
    \begin{align*}
        \mathcal{E}_\mathcal{P}(\delta_2 \psi) \! \lesssim  \! \sum_{\sigma\in S_3} \! \mathcal{E}_\mathcal{P}\! \left(\widetilde{\delta}_{2,\sigma} \psi\right) \! + \! \mathcal{E}_\mathcal{P}\! \left(\delta_2 \psi \! - \! \frac{1}{4} \! \sum_{\sigma\in S_3} \! \widetilde{\delta}_{2,\sigma} \psi\right) \! = \! 6\mathcal{E}_\mathcal{P}\! \left(\widetilde{\delta}_{2,\mathrm{id}} \psi\right) \! + \! \mathcal{E}_\mathcal{P}\! \left(\delta_2 \psi \! - \! \frac{1}{4}\sum_{\sigma\in S_3}\widetilde{\delta}_{2,\sigma} \psi\right).
    \end{align*}
    Proceeding as in the proof of Eq.~(\ref{Eq:First_Order_Upper_Bound_Useful_I}), we obtain
    \begin{align*}
       \mathcal{E}_\mathcal{P}\! \left(\delta_2 \psi \! - \! \frac{1}{4}\sum_{\sigma\in S_3}\widetilde{\delta}_{2,\sigma} \psi\right)\lesssim N^{-\frac{1}{2}}(\mathcal{N}+1)^2 .
    \end{align*}
   Regarding the term $\mathcal{E}_\mathcal{P}\! \left(\widetilde{\delta}_{2,\mathrm{id}} \psi\right)$, let us define
    \begin{align*}
       (G)_{i_1.. i_4, j_1..j_4}:=(V_N)_{i_1 i_2 (-j_3-j_4),j_1 j_2 (-i_3-i_4)}\eta_{i_3 i_4 (-i_3-i_4)}\overline{\eta_{j_3 j_4 (-j_3-j_4)}}
    \end{align*}
   for $(i_1,i_2,-j_3-j_4)\in A$ and $(j_1,j_2,-i_3-i_4)\in A$, and $(G)_{i_1.. i_4, j_1..j_4}:=0$ otherwise. Then
    \begin{align}
    \nonumber
        & \ \ \ \  \ \ \ \ \mathcal{E}_\mathcal{P}\! \left(\widetilde{\delta}_{2,\mathrm{id}} \psi\right)= \sum_{i_1.. i_4, j_1..j_4}(G)_{i_1.. i_4, j_1..j_4}(a_0^\dagger)^3 a_{i_1}^\dagger a_{i_2}^\dagger a_{j_3}a_{j_4}a_{i_3}^\dagger a_{i_4}^\dagger a_{j_1}a_{j_2} a_0^3\\
        \nonumber
        & = \!  \!  \!  \! \sum_{i_1.. i_4, j_1..j_4} \!  \!  \!  \!  \!  \! (G)_{i_1.. i_4, j_1..j_4}(a_0^\dagger)^3 a_{i_1}^\dagger a_{i_2}^\dagger a_{i_3}^\dagger a_{i_4}^\dagger a_{j_3}a_{j_4} a_{j_1}a_{j_2} a_0^3 \! + \! 2 \!  \!  \!  \!  \!  \!  \!  \! \sum_{i_1.. i_3, j_1..j_3,k} \!  \!  \!  \!  \!  \!  \! (G')_{i_1.. i_3 , j_1..j_3 }(a_0^\dagger)^3 a_{i_1}^\dagger a_{i_2}^\dagger a_{i_3}^\dagger  a_{j_3} a_{j_1}a_{j_2} a_0^3\\
         \label{Eq:First_Order_Upper_Bound_Useful_Aux_II}
        & \ \  \ \  \ \  \ \  \ \  \ \  \ \ + \sum_{i_1  i_2, j_1 j_2} \!  \!  \!  (G'')_{i_1  i_2 , j_1 j_2}(a_0^\dagger)^3 a_{i_1}^\dagger a_{i_2}^\dagger a_{j_1}a_{j_2} a_0^3,
    \end{align}
   with $(G')_{i_1.. i_3, j_1..j_3}:=\sum_k G_{i_1.. i_3 k, j_1..j_3 k}$ and $(G'')_{i_1 i_2, j_1 j_2}:=\sum_{k_1 k_2} G_{i_1 i_2 k_1 k_2, j_1 j_2 k_1 k_2}$. In the following let us study the term involving $G''$, which is responsible for the largest contribution. Since $(V_N)_{i_1 i_2 (-k_1-k_2),j_1 j_2 (-k_1-k_2)}=(V_N)_{i_1 i_2 0,j_1 j_2 0}$, see Eq.~(\ref{Eq:Coefficients_in_Fourier}), we obtain
   \begin{align*}
     & \sum_{i_1  i_2, j_1 j_2} \!   \! \!  \!  (G'')_{i_1  i_2 , j_1 j_2}(a_0^\dagger)^3 a_{i_1}^\dagger a_{i_2}^\dagger a_{j_1}a_{j_2} a_0^3 \! = \!  \! \left(\sum_{k_1 k_2}|\eta_{k_1 k_2 (-k_1-k_2)}|^2 \! \right) \!  \!  \! \sum_{i_1 i_2,j_1 j_2} \!  \!  \!  \! (V_N)_{i_1 i_2 0,j_1 j_2 0}(a_0^\dagger)^3 a_{i_1}^\dagger a_{i_2}^\dagger a_{j_1}a_{j_2} a_0^3\\
      & \ \  \ \  \  \lesssim  N^{-3}\sum_{i_1 i_2,j_1 j_2} \!  \!  \!  \! (V_N)_{i_1 i_2 0,j_1 j_2 0}(a_0^\dagger)^3 a_{i_1}^\dagger a_{i_2}^\dagger a_{j_1}a_{j_2} a_0^3 \lesssim N^{-3} (a_0^\dagger)^3  (\mathcal{N}+1)^2 a_0^3\leq (\mathcal{N}+1)^2,
   \end{align*}
   where we have used 
   \begin{align*}
       & \sum_{i_1 i_2,j_1 j_2} \!   (V_N)_{i_1 i_2 0,j_1 j_2 0} a_{i_1}^\dagger a_{i_2}^\dagger a_{j_1}a_{j_2}\lesssim (\mathcal{N}+1)^2,\\
       & \ \ \ \ \sum_{k_1 k_2}|\eta_{k_1 k_2 (-k_1-k_2)}|^2\lesssim N^{-3},
   \end{align*}
 see Lemma \ref{Lem:Coefficient_Control}. Proceeding similarly for the other terms in Eq.~(\ref{Eq:First_Order_Upper_Bound_Useful_Aux_II}), concludes the proof of Eq.~(\ref{Eq:First_Order_Upper_Bound_Useful_II}). Regarding the bound on $\mathcal{E}_P\! \left(\big[a_{i_1} a_{i_2} a_{i_3} ,\mathcal{G}^\dagger \big]\right)$, let us identify 
   \begin{align*}
      & \big[a_{i_1} a_{i_2} a_{i_3} ,\mathcal{G}^\dagger \big]  =\bigg\{\frac{3}{2}\mathds{1}(i_1 \! = \! 0) (a_0^\dagger )^2 a_{i_2} a_{i_3}\, \mathbb{A}+ 3\mathds{1}(i_1=i_2=0) a_0^\dagger a_{i_3}\, \mathbb{A}\\
       & \ \ \ \ \ \ \ + \mathds{1}(i_1=i_2=i_3=0)\,  \mathbb{A}\bigg\}+ \{\mathrm{Permutations}\},
  \end{align*}
  where $\mathbb{A}:=\frac{1}{6}\sum_{ijk}\eta_{ijk}a_i a_j a_k$. Due to the sign $V_N\geq 0$ and the permutations symmetry of $V_N$, as well as to the fact that there are $6$ permutations of the set $\{1,2,3\}$, we can bound the operator $ \mathcal{E}_P\! \left(\big[a_{i_1} a_{i_2} a_{i_3} ,\mathcal{G}^\dagger \big]\right)$ from above by
  \begin{align*}
   &  6\, \mathcal{E}_P\! \left(\frac{3}{2}\mathds{1}(i_1 \! = \! 0)(a_0^\dagger )^2 a_{i_2} a_{i_3}\, \mathbb{A}+ 3\mathds{1}(i_1=i_2=0) a_0^\dagger a_{i_3}\, \mathbb{A} + \mathds{1}(i_1=i_2=i_3=0) \, \mathbb{A}\right)\\
    & \ \ \ \ \ \ \ \ \ \ \ \ \leq 18\, \mathcal{E}_P\! \left(\frac{3}{2}\mathds{1}(i_1 \! = \! 0)(a_0^\dagger )^2 a_{i_2} a_{i_3}\, \mathbb{A}\right) + 18 \, \mathcal{E}_P\! \left(3\mathds{1}(i_1=i_2=0)  a_0^\dagger a_{i_3}\, \mathbb{A}\right)\\
    & \ \ \ \ \ \ \ \ \ \ \ \ \ \ \ \ \ \ \ \ \ \ \ \ \ \ \ \ + 18\, \mathcal{E}_P\! \left(  \mathds{1}(i_1=i_2=i_3=0) \, \mathbb{A}\right).
  \end{align*}
  In the following we focus on $\mathcal{E}_P\! \left(\mathds{1}(i_1 \! = \! 0)(a_0^\dagger )^2 a_{i_2} a_{i_3}\, \mathbb{A}\right)$, the other terms can be treated in a similar fashion. By Lemma \ref{Lem:Coefficient_Control} we have $\mathbb{A}^\dagger \mathbb{A}\lesssim N^{-3}(\mathcal{N}+1)^3$, and therefore
  \begin{align*}
      & \ \ \ \  \ \ \ \  \ \ \ \  \mathcal{E}_P\! \left(\mathds{1}(i_1 \! = \! 0)(a_0^\dagger )^2 a_{i_2} a_{i_3}\, \mathbb{A}\right) = \! \mathbb{A}^\dagger \!  \!    \!  \!  \!  \!   \!  \!  \! \sum_{(ij),(\ell m )\in A^0} \!  \! \!  \!   \!  \!    \!  \! (V_N)_{0jk,0 m n}a_k^\dagger a_j^\dagger a_0^2 (a_0^\dagger )^2  a_{m} a_{n} \mathbb{A}\\
     & \ \ \ \lesssim  a_0^2 (a_0^\dagger )^2 \mathbb{A}^\dagger  (\mathcal{N}+1)^2 \mathbb{A} \! = \! a_0^2(a_0^\dagger )^2 (\mathcal{N} \! + \! 4)\mathbb{A}^\dagger   \mathbb{A}(\mathcal{N} \! + \! 4)  \! \lesssim  \! a_0^2(a_0^\dagger )^2 N^{-3}(\mathcal{N} \! + \! 1)^5 \! \leq \!  N^{-1}(\mathcal{N} \! + \! 1)^5,
  \end{align*}
  where $A^0$ contains all pairs $(jk)$ such that $(0jk)\in A$.
\end{proof}

As a consequence of Lemma \ref{Lem:First_Order_Upper_Bound_Useful}, we obtain that the trial state $\Gamma$ defined below Eq.~(\ref{Eq:Gamma_0}) has a potential energy $\mathcal{E}_\mathcal{P}(\psi)$ of the order $O_{N\rightarrow \infty}\! \left(\sqrt{N}\right)$, see the following Corollary \ref{Cor:First_Order_Trial_State_V}.

\begin{cor}
\label{Cor:First_Order_Trial_State_V}
 There exists a constant $C>0$, such that $ \left\langle \Gamma, \mathcal{E}_\mathcal{P}(\psi+\delta_1 \psi )\Gamma\right\rangle\leq C$ and
    \begin{align*}
        \left\langle \Gamma, \mathcal{E}_\mathcal{P}(\psi)\Gamma\right\rangle\leq C\sqrt{N}.
    \end{align*}
\end{cor}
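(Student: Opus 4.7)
The reduction uses unitarity of $U$: since $\mathcal{E}_\mathcal{P}$ is a positive quadratic form with c-number coefficients, $\langle \Gamma, \mathcal{E}_\mathcal{P}(\Theta)\Gamma\rangle = \langle\Gamma_0, \mathcal{E}_\mathcal{P}(U^{-1}\Theta U)\Gamma_0\rangle$ for any operator-valued three-particle vector $\Theta$. The crucial observation is that $\langle\Gamma_0, \mathcal{E}_\mathcal{P}(a_{i_1}a_{i_2}a_{i_3})\Gamma_0\rangle = 0$: whenever $(i_1,i_2,i_3) \in A$, at least one index is nonzero and hence $a_{i_1}a_{i_2}a_{i_3}\Gamma_0 = 0$. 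The task therefore reduces to controlling the four integral remainders that appear in $U^{-1}\psi U - a_ia_ja_k$ via Eq.~(\ref{Eq:Double_Duhamel}).

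Before applying this, I would establish the uniform moment bounds $\langle U_s\Gamma_0, (\mathcal{N}+1)^k U_s\Gamma_0\rangle \leq C_k$ for each fixed $k$ by Gronwall's inequality on $f(s) := \langle U_s\Gamma_0, (\mathcal{N}+1)^k U_s\Gamma_0\rangle$, using that $\mathcal{G}$ shifts $\mathcal{N}$ by $\pm 3$ and that $\sum_{ijk}|\eta_{ijk}|^2 \lesssim N^{-3}$ (Lemma \ref{Lem:Coefficient_Control}), so that $\pm[(\mathcal{N}+1)^k, \mathcal{G}^\dagger - \mathcal{G}] \lesssim (\mathcal{N}+1)^k + C_k$.

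To prove the $O(1)$ bound on $\langle \Gamma,\mathcal{E}_\mathcal{P}(\psi+\delta_1\psi)\Gamma\rangle$, I combine Eq.~(\ref{Eq:Double_Duhamel}) with a second Duhamel expansion for $U^{-1}\delta_1\psi U$. Writing $f(s):=U_{-s}\delta_1\psi\, U_s$, integration by parts gives $U^{-1}\delta_1\psi U - \int_0^1 f(s)ds = \int_0^1 s\,U_{-s}[\delta_1\psi, \mathcal{G}^\dagger-\mathcal{G}]U_s\,ds$, which is exactly what cancels the dangerous term $-\int_0^1 U_{-s}\delta_1\psi\,U_s\,ds$ of Eq.~(\ref{Eq:Double_Duhamel}) against the bare $\delta_1\psi$. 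This leaves
\begin{align*}
U^{-1}(\psi+\delta_1\psi)_{ijk}U &= a_ia_ja_k + \int_0^1 U_{-s}[a_ia_ja_k,\mathcal{G}^\dagger]U_s\,ds - \int_0^1 U_{-s}(\delta_2\psi)_{ijk}U_s\,ds \\
&\quad + \int_0^1\!\!\int_s^1 U_{-t}\eta_{ijk}[a_0^3,\mathcal{G}^\dagger]U_t\,dt\,ds + \int_0^1 s\,U_{-s}[\delta_1\psi,\mathcal{G}^\dagger-\mathcal{G}]_{ijk}U_s\,ds.
\end{align*}
For each integral remainder $\int_0^1 U_{-s}X\,U_s\,ds$, the spectral decomposition of $V_N\geq 0$ together with Cauchy-Schwarz yields $\langle\Phi,\mathcal{E}_\mathcal{P}(\int_0^1 U_{-s}X U_s\,ds)\Phi\rangle \leq \int_0^1 \langle U_s\Phi,\mathcal{E}_\mathcal{P}(X)U_s\Phi\rangle ds$. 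The bounds of Lemma \ref{Lem:First_Order_Upper_Bound_Useful} on $\mathcal{E}_\mathcal{P}([a_ia_ja_k,\mathcal{G}^\dagger])$ and $\mathcal{E}_\mathcal{P}(\delta_2\psi)$, together with a direct $O(1)$ estimate for the $\eta_{ijk}[a_0^3,\mathcal{G}^\dagger]$ term (using the factorization $R_{4,ijk}=\eta_{ijk}Y$, the bound $\sum|\eta|^2\lesssim N^{-3}$, and $\|[a_0^3,\mathcal{G}^\dagger]U_t\Gamma_0\|\lesssim \sqrt{N}$) and the moment bounds, produce the required $O(1)$ contribution from the first four terms.

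The $O(\sqrt{N})$ bound on $\langle\Gamma,\mathcal{E}_\mathcal{P}(\psi)\Gamma\rangle$ then follows immediately from $\mathcal{E}_\mathcal{P}(\psi)\leq 2\mathcal{E}_\mathcal{P}(\psi+\delta_1\psi) + 2\mathcal{E}_\mathcal{P}(\delta_1\psi)$, the first part of the corollary, and $\mathcal{E}_\mathcal{P}(\delta_1\psi)\leq CN^{1/2}(\mathcal{N}+1)$ from Lemma \ref{Lem:First_Order_Upper_Bound_Useful} combined with the moment bound. The main obstacle is the commutator estimate $\mathcal{E}_\mathcal{P}([\delta_1\psi,\mathcal{G}^\dagger-\mathcal{G}])\leq C(\mathcal{N}+1)^K$ at order $O(1)$: one must show that forming the commutator with $\mathcal{G}^\dagger-\mathcal{G}$ removes the $N^{1/2}$ factor carried by $\delta_1\psi$ itself. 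This should follow by Wick-type expansion exploiting $[a_0^k,\mathcal{G}]=0$ together with the $N^{-3/2}$ smallness of $(\sum|\eta|^2)^{1/2}$, which effectively reduces the phase-volume factor of $\sqrt{N}$ present in the raw bound $\mathcal{E}_\mathcal{P}(\delta_1\psi)\lesssim N^{1/2}(\mathcal{N}+1)$.
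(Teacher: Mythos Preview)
Your strategy is sound and closely parallels the paper's: the Duhamel decomposition Eq.~(\ref{Eq:Double_Duhamel}), the Jensen/Cauchy--Schwarz passage $\mathcal{E}_\mathcal{P}\bigl(\int_0^1 U_{-s}X\,U_s\,ds\bigr)\leq\int_0^1 U_{-s}\mathcal{E}_\mathcal{P}(X)U_s\,ds$ via $V_N\geq 0$, and the moment control (which is Lemma~\ref{Lem:Particle_Number_Creation}) are exactly what the paper uses. Your integration-by-parts identity for $U^{-1}(\delta_1\psi)U$ is correct---note that in fact $[\delta_1\psi,\mathcal{G}]=0$, since $(\delta_1\psi)_{ijk}$ is built from $a_j^\dagger$ with $j\neq 0$ and powers of $a_0$, so only $[\delta_1\psi,\mathcal{G}^\dagger]$ survives---and the resulting cancellation of $\int_0^1 U_{-s}(\delta_1\psi)U_s\,ds$ is almost certainly the content of the paper's terse ``analogously''.

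Two points deserve comment. First, the paper proves the $O(\sqrt{N})$ bound on $\langle\Gamma,\mathcal{E}_\mathcal{P}(\psi)\Gamma\rangle$ \emph{directly} from the five-term expansion Eq.~(\ref{Eq:Double_Duhamel}), with $\sqrt{N}$ coming solely from the $\delta_1\psi$ term via Eq.~(\ref{Eq:First_Order_Upper_Bound_Useful_I}) and all other pieces being $O(1)$. Your deduction of the $\sqrt{N}$ bound from the $O(1)$ bound via $\mathcal{E}_\mathcal{P}(\psi)\leq 2\mathcal{E}_\mathcal{P}(\psi+\delta_1\psi)+2\mathcal{E}_\mathcal{P}(\delta_1\psi)$ is valid but makes the easier statement contingent on the harder one.

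Second, the estimate $\mathcal{E}_\mathcal{P}\bigl([\delta_1\psi,\mathcal{G}^\dagger]\bigr)\leq C(\mathcal{N}+1)^K$ that you flag as the main obstacle is true but is not established by your heuristic. Writing $(\delta_1\psi)_{i_1i_2i_3}=\sum_{j} c_j\,a_j^\dagger a_0^4$, the commutator splits as $\sum_j c_j\bigl([a_j^\dagger,\mathcal{G}^\dagger]a_0^4+a_j^\dagger[a_0^4,\mathcal{G}^\dagger]\bigr)$. The second piece factors into a variant of $\delta_1\psi$ times $\mathbb{A}=\tfrac{1}{6}\sum\overline{\eta_{i'j'k'}}\,a_{i'}a_{j'}a_{k'}$, and $\mathbb{A}^\dagger\mathbb{A}\lesssim N^{-3}(\mathcal{N}+1)^3$ (already used in Lemma~\ref{Lem:First_Order_Upper_Bound_Useful}) supplies the needed gain. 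In the first piece the free $j$-sum---the source of the $\sqrt{N}$ in Eq.~(\ref{Eq:First_Order_Upper_Bound_Useful_I})---is contracted against the extra $\overline{\eta_{j\cdot\cdot}}$ coming from $[a_j^\dagger,\mathcal{G}^\dagger]$, and an argument parallel to the treatment of $\mathcal{E}_\mathcal{P}(\widetilde{\delta}_{2,\mathrm{id}}\psi)$ in the proof of Lemma~\ref{Lem:First_Order_Upper_Bound_Useful} yields the required $O(1)$ bound. These computations are routine given Lemma~\ref{Lem:Coefficient_Control}, but your ``should follow'' is precisely where the actual work sits.
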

\begin{proof}
   Recall that we can express the transformed quantity $U^{-1}\psi_{ijk} U$ by Eq.~(\ref{Eq:Double_Duhamel}) as
    \begin{align*}
        U^{-1}\psi_{ijk} U \! & =\! a_i a_j a_k  +   \!  \! \int_0^1    \! \! \!  \! U_{-s}   \big[a_i a_j a_k ,\mathcal{G}^\dagger \big]   U_s \mathrm{d}s  -  \!  \! \int_0^1    \! \! \!  \! U_{-s}   (\delta_1 \psi)_{ijk}   U_s \mathrm{d}s -  \!  \! \int_0^1    \! \! \!  \! U_{-s}   (\delta_2 \psi)_{ijk}   U_s \mathrm{d}s\\
        & \ \ \ \ \ \ \ \ \ \ +\int_0^1\int_s^1 U_{-t} \eta_{ijk} \big[a_0^3 ,\mathcal{G}^\dagger \big]U_t\mathrm{d}t\mathrm{d}s,
    \end{align*}
    where we have used Duhamel's formula to express $U^{-1}\eta_{ijk}a_0^3U-U_{-s}\eta_{ijk}a_0^3U_s$. Using the sign $V_N\geq 0$ and Lemma \ref{Lem:First_Order_Upper_Bound_Useful}, we estimate using the Cauchy-Schwarz inequality
    \begin{align}
\nonumber
       \mathcal{E}_\mathcal{P}\! \left(\int_0^1    \! \! \!  \! U_{-s}   (\delta_1 \psi)_{ijk}   U_s \mathrm{d}s\right)     \leq \!  \! \int_0^1   \! \! \! U_{-s}\mathcal{E}_\mathcal{P}(\delta_1 \psi)U_s\,  \mathrm{d}s  \! \leq  \! CN^{\frac{1}{2}} \!  \! \int_0^1 \!  \!  U_{-s}(\mathcal{N} \! + \! 1)^4 U_s  \mathrm{d}s \! \leq  \! C'N^{\frac{1}{2}}(\mathcal{N} \! + \! 1),\\
       \label{Eq:Upper_Bound_CS_I}
           \mathcal{E}_\mathcal{P}\! \left(\int_0^1    \! \! \!  \! U_{-s}   (\delta_2 \psi)_{ijk}   U_s \mathrm{d}s\right)     \leq \!  \! \int_0^1   \! \! \! U_{-s}\mathcal{E}_\mathcal{P}(\delta_2 \psi)U_s\,  \mathrm{d}s  \! \leq  \! CN^{\frac{1}{2}} \!  \! \int_0^1 \!  \!  U_{-s}(\mathcal{N} \! + \! 1)^4 U_s  \mathrm{d}s \! \leq  \! C'(\mathcal{N} \! + \! 1)^4.
    \end{align}
    for suitable $C,C'$, where we utilize Lemma \ref{Lem:Particle_Number_Creation} in order to estimate $U_{-s}(\mathcal{N}+1)^4 U_s$. Similarly
    \begin{align}
    \label{Eq:Upper_Bound_CS_II}
        \mathcal{E}_\mathcal{P}\! \left(\int_0^1    \! \! \!  \! U_{-s}   \big[a_i a_j a_k ,\mathcal{G}^\dagger \big]  U_s \mathrm{d}s\right)\leq C'N^{-\frac{3}{2}}(\mathcal{N} \! + \! 1)^5
    \end{align}
    follows from Lemma \ref{Lem:First_Order_Upper_Bound_Useful}. Regarding the term in the last line of Eq.~(\ref{Eq:Double_Duhamel}), we note that 
    \begin{align*}
        \big[a_0^3 ,\mathcal{G}^\dagger \big]^\dagger \big[a_0^3 ,\mathcal{G}^\dagger \big]\lesssim N(\mathcal{N}+1)^3
    \end{align*}
follows from an analogous argument as we have seen in the proof of Lemma \ref{Lem:First_Order_Upper_Bound_Useful} and
\begin{align*}
   \sum_{(i j k),(\ell m n)\in A} \! \left( V_N\right)_{i j k,\ell m n} \overline{\eta_{ijk}}\eta_{\ell m n}\lesssim N^{-2} 
\end{align*}
by Lemma \ref{Lem:Coefficient_Control}. Therefore
    \begin{align}
     \label{Eq:Upper_Bound_CS_III}
       & \mathcal{E}_\mathcal{P}\!\left(\int_0^1\int_s^1 U_{-t} \eta_{ijk} \big[a_0^3 ,\mathcal{G}^\dagger \big]U_t\mathrm{d}t\mathrm{d}s\right)\leq \frac{1}{2}\int_0^1\int_s^1 \! \! \! U_{-t}\mathcal{E}_\mathcal{P}\! \left(\eta_{ijk} \big[a_0^3 ,\mathcal{G}^\dagger \big]\right) \! U_t\, \mathrm{d}t\mathrm{d}s\lesssim N^{-1}(\mathcal{N}+1)^3,
    \end{align}
    where we have used Lemma \ref{Lem:Particle_Number_Creation} again. Using $a_i a_j a_k \Gamma_0=0$ in case $(ijk)\in A$, we obtain by Eq.~(\ref{Eq:Double_Duhamel}) together with Eq.~(\ref{Eq:Upper_Bound_CS_I}), Eq.~(\ref{Eq:Upper_Bound_CS_II}) and  Eq.~(\ref{Eq:Upper_Bound_CS_III}) for a suitable constant $C$
    \begin{align*}
        & \left\langle \Gamma, \mathcal{E}_\mathcal{P}\! \left(\psi\right)\Gamma\right\rangle=\left\langle \Gamma_0, \mathcal{E}_\mathcal{P}\! \left(U^{-1}\psi U\right)\Gamma_0\right\rangle \leq CN^{\frac{1}{2}}\left\langle \Gamma_0, (\mathcal{N}+1)^5 \Gamma_0\right\rangle=CN^{\frac{1}{2}}.
    \end{align*}
    Analogously we obtain $ \left\langle \Gamma, \mathcal{E}_\mathcal{P}(\psi+\delta_1 \psi )\Gamma\right\rangle\leq C$.
\end{proof}

Regarding the variable $c_k=a_k+[a_k,\mathcal{G}]$ from Eq.~(\ref{Eq:c_for_K=0}), let us apply Duahmel's formula 
\begin{align}
\label{Eq:Decomposition_c}
    U^{-1}c_k U & =a_k-\int_0^1 U_{-s}[a_k,\mathcal{G}]U_s \, \mathrm{d}s+U^{-1}[a_k,\mathcal{G}]U=a_k+\int_0^1 \! \int_s^1 U_{-t} \left[[a_k,\mathcal{G}],\mathcal{G}^\dagger \right]U_t \, \mathrm{d}t\mathrm{d}s,
\end{align}
where we have used $\left[a_k,\mathcal{G}^\dagger\right]=0$ for $k\neq 0$ and $[[a_k,\mathcal{G}],\mathcal{G}]=0$, which follows from the observation that $\eta_{ijk}=0$ in case one of the indices in $\{i,j,k\}$ is zero. The following Lemma \ref{Lem:First_Order_Upper_Bound_Useful_II} provides useful estimates on the quantity $\left[[a_k,\mathcal{G}],\mathcal{G}^\dagger \right]$. In order to formulate Lemma \ref{Lem:First_Order_Upper_Bound_Useful_II} let us define the kinetic energy of an operator valued one particle vector $\Theta_k$, written as $\mathcal{E}_\mathcal{K}(\Theta)$ or $\mathcal{E}_\mathcal{K}(\Theta_k)$ with $k$ being a dummy index, as
\begin{align}
\label{Eq:Kinetic_Energy_Operator}
    \mathcal{E}_\mathcal{K}(\Theta):=\sum_k |k|^2 \Theta_k^\dagger \Theta_k.
\end{align}

\begin{lem}
\label{Lem:First_Order_Upper_Bound_Useful_II}
For $m\geq 0$ there exists a constant $C_m>0$, such that 
\begin{align*}
  \mathcal{E}_\mathcal{K}\! \left(\mathcal{N}^m \left[[a_k,\mathcal{G}],\mathcal{G}^\dagger \right]\right)\leq C_m N^{-1}(\mathcal{N}+1)^{5+2m}  .
\end{align*}
\end{lem}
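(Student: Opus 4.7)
The plan is to compute $\bigl[[a_k, \mathcal{G}], \mathcal{G}^\dagger\bigr]$ in closed form, split it into two normal-ordered pieces, and bound the kinetic-energy sum of each piece by Cauchy--Schwarz in the index sums, using the decay estimates for $\eta_{ijk}$ from Lemma~\ref{Lem:Coefficient_Control}.

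First I would identify $[a_k,\mathcal{G}]$. Using that $\eta_{ijk}$ vanishes whenever one of its indices is zero and is fully permutation symmetric, one gets for $k\neq 0$ the representation
\begin{align*}
[a_k, \mathcal{G}] = \tfrac{1}{2}B_k\,a_0^3,\qquad B_k := \sum_{ij}\eta_{ijk}\,a_i^\dagger a_j^\dagger.
\end{align*}
Writing $\mathcal{G}^\dagger = \tfrac{1}{6}(a_0^\dagger)^3 \mathcal{C}$ with $\mathcal{C}:=\sum_{\ell mn}\overline{\eta_{\ell mn}}\,a_\ell a_m a_n$, and observing that $B_k$ and $\mathcal{C}$ only involve non-zero momentum modes (hence commute with $a_0$ and $a_0^\dagger$), the second commutator splits into
\begin{align*}
\bigl[[a_k,\mathcal{G}], \mathcal{G}^\dagger\bigr] = \tfrac{1}{12}\,B_k\bigl[a_0^3,(a_0^\dagger)^3\bigr]\,\mathcal{C} + \tfrac{1}{12}\,(a_0^\dagger)^3\,[B_k, \mathcal{C}]\,a_0^3.
\end{align*}
Here $\bigl[a_0^3,(a_0^\dagger)^3\bigr]$ is a polynomial of degree $2$ in $\mathcal{N}_0 = N-\mathcal{N}$ bounded by $C(N+1)^2$, while $[B_k,\mathcal{C}]$ can be Wick-ordered into a linear combination of $a^\dagger a^\dagger a a$, $a^\dagger a$ and scalar terms with $\eta$-weighted kernels.

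Next, for each piece I would estimate $\sum_k|k|^2(\mathcal{N}^m\,\Theta_k)^\dagger(\mathcal{N}^m\,\Theta_k)$. Since each piece shifts $\mathcal{N}$ by a $k$-independent constant, $\mathcal{N}^m$ commutes past at the cost of replacing $\mathcal{N}$ by $\mathcal{N}+c$ with bounded $c$, so it suffices to control the case $m=0$ and then pick up at most an $(\mathcal{N}+1)^{2m}$ factor on the right. For the first piece I would combine the bound $\bigl[a_0^3,(a_0^\dagger)^3\bigr]\le C(N+1)^2$ with $\|\eta\|_{\ell^2}^2\lesssim N^{-3}$ (yielding $\mathcal{C}^\dagger \mathcal{C}\lesssim N^{-3}(\mathcal{N}+1)^3$) and a weighted Schur estimate $\sum_k|k|^2 B_k^\dagger B_k\lesssim N^{-3}(\mathcal{N}+1)^2$ coming from Lemma~\ref{Lem:Coefficient_Control}, to conclude a bound of order $N^{-2}(\mathcal{N}+1)^{5+2m}$, which is even stronger than required. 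For the second piece I would use $a_0^3(a_0^\dagger)^3\le C N^3$ together with analogous kernel bounds on $[B_k,\mathcal{C}]$ to produce the target bound $N^{-1}(\mathcal{N}+1)^{5+2m}$.

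The main obstacle is the careful bookkeeping of powers of $N$ against powers of $(\mathcal{N}+1)$: the $N^{-1}$ gain emerges only by combining the decay $\|\eta\|^2\sim N^{-3}$ with the macroscopic factors $N^2$ from $\bigl[a_0^3,(a_0^\dagger)^3\bigr]$ (respectively $N^3$ from $a_0^3(a_0^\dagger)^3$), and by performing each Cauchy--Schwarz so that one does not carelessly replace $(\mathcal{N}+1)^\alpha$ with $N^\alpha$. A secondary technical point is to keep the factor $\mathcal{N}^m$ strictly to the right of each $\mathcal{N}$-lowering term when pulling it past operators that do not commute with $\mathcal{N}$; this is a finite combinatorial shift and costs only bounded constants.
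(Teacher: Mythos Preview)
Your approach is essentially the paper's: the paper writes $[[a_k,\mathcal{G}],\mathcal{G}^\dagger]=(\delta_1 c)_k+(\delta_2 c)_k+(\delta_3 c)_k$, where $(\delta_3 c)_k$ is exactly your first piece $B_k[a_0^3,(a_0^\dagger)^3]\mathcal{C}$ (the polynomial in $\mathcal{N}_0$ commutes with $B_k$), and $(\delta_1 c)_k+(\delta_2 c)_k$ is the Wick-ordering of your second piece $(a_0^\dagger)^3[B_k,\mathcal{C}]a_0^3$. The estimates are then carried out with the same ingredients (Lemma~\ref{Lem:Coefficient_Control} and Cauchy--Schwarz/Schur bounds).

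One correction: your intermediate claim $\sum_k|k|^2 B_k^\dagger B_k\lesssim N^{-3}(\mathcal{N}+1)^2$ cannot hold as stated. Since $B_k^\dagger B_k$ is anti-normal ordered, its vacuum expectation contributes the scalar $2\sum_{ijk}|k|^2|\eta_{ijk}|^2$, and by Lemma~\ref{Lem:Coefficient_Control} this is of order $N^{-4}\sum_{jk}|k|^2(|j|^2+|k|^2)^{-2}(1+(|j|^2+|k|^2)/N)^{-4}\sim N^{-2}$, not $N^{-3}$. With the corrected bound your first piece yields $N^{-1}(\mathcal{N}+1)^{5+2m}$ rather than $N^{-2}(\mathcal{N}+1)^{5+2m}$; this is exactly what the paper obtains for $(\delta_3 c)_k$ and is still sufficient for the lemma, so your overall strategy goes through unchanged.
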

\begin{proof}
Let us write the double commutator as $\left[[a_k,\mathcal{G}],\mathcal{G}^\dagger \right]= (\delta_1 c)_k+(\delta_2 c)_k+(\delta_3 c)_k$, where
    \begin{align*}
    (\delta_1 c)_k : & =(a_0^\dagger)^3 a_0^3 \sum_{ij}|\eta_{ijk}|^2 a_k,\\
    (\delta_2 c)_k: & =(a_0^\dagger)^3 a_0^3 \sum_{ij,j' k'}\overline{\eta_{i j' k'}}\eta_{ijk} a_j^\dagger a_{j'}a_{k'},\\
        (\delta_3 c)_k: & =\left[a_0^3, (a_0^\dagger)^3\right] \! \left(\sum_{ij}\eta_{ijk}a_j^\dagger a_k^\dagger\right) \! \left(\sum_{i'j'k'}\overline{\eta_{i'j'k'}}a_{i'} a_{j'}a_{k'}\right).
    \end{align*}
    By Lemma \ref{Lem:Coefficient_Control} it is clear that 
    \begin{align*}
     \sum_{ij}|\eta_{ijk}|^2\lesssim \frac{1}{N^4}\sum_{t}\frac{1}{(|k|^2+|t|^2)^2}\lesssim \frac{1}{N^4 |k|},   
    \end{align*}
and therefore
    \begin{align*}
        \mathcal{E}_\mathcal{K} \! \left(\mathcal{N}^m \delta_1 c\right)=\sum_k |k|^2 (\delta_1 c)_k^\dagger \mathcal{N}^{2m} (\delta_1 c)_k\lesssim \frac{1}{N^8}\sum_{k\neq 0}a_k^\dagger \left((a_0^\dagger)^3 a_0^3\right)^2  \mathcal{N}^{2m}  a_k\leq \frac{1}{N^2}\mathcal{N}^{2m+1}.
    \end{align*}
   Using $J_{p_1 p_2 p_3,p_1' p_2' p_3'}:=\sum_{q q' k}|k|^2\overline{\eta_{q' p_2' p_3'}\eta_{q p_1' k}}\eta_{q' p_1 k}\eta_{q p_2 p_3}$ and $\widetilde{J}_{p_2 p_3,p_2' p_3'}:=\sum_{p_1}J_{p_1 p_2 p_3,p_1 p_2' p_3'}$
    \begin{align*}
       \mathcal{E}_\mathcal{K} \! \left(\mathcal{N}^{m} \delta_2 c\right) & =\left((a_0^\dagger)^3 a_0^3\right)^2\sum_{j p' n' , p j' k'}J_{j p' n' , p j' k'} \, a_j^\dagger a_{p'}^\dagger a_{n'}^\dagger (\mathcal{N}+2)^{2m} a_p a_{j'}a_{k'}\\
       & \ \  \ \  \ + \left((a_0^\dagger)^3 a_0^3\right)^2\sum_{ p' n' , j'  k'}\widetilde{J}_{p' n' , j' k'} \, a_{p'}^\dagger a_{n'}^\dagger  (\mathcal{N}+1)^{2m} a_{j'}a_{k'}.
    \end{align*}
    Utilizing the operator $X_{jk,j'k'}:=\sum_q |k|\eta_{q j k}\overline{\eta_{q j' k'}}$ acting on $L^2(\Lambda)^{\otimes 2}$ and the permutation operator $(S\Psi)(x_1,x_2,x_3):=\Psi(x_2,x_1,x_3)$ acting on $L^2(\Lambda)^{\otimes 3}$, we can write 
    \begin{align*}
    J=(1\otimes X^\dagger )S (1\otimes X),    
    \end{align*}
 and $\widetilde{J}=X^\dagger X$. Consequently
\begin{align*}
    \|J\|\leq \|S\|\, \|1\otimes X\|^2=\|X\|^2=\|\widetilde{J}\|.
\end{align*}
    By Lemma \ref{Lem:Coefficient_Control} we have $\|\widetilde{J}\|\leq C N^{-\frac{15}{2}}$ for a suitable constant $C$. Consequently 
    \begin{align*}
        \mathcal{E}_\mathcal{K} \! \left(\mathcal{N}^{m} \delta_2 c\right)\leq CN^{-\frac{15}{2}}\left((a_0^\dagger)^3 a_0^3\right)^2(\mathcal{N}+2)^{3+2m}\leq C N^{-\frac{3}{2}}(\mathcal{N}+2)^{3+2m}.
    \end{align*}
   Similarly one can show that $\mathcal{E}_\mathcal{K} \! \left(\mathcal{N}^{m} \delta_3 c\right)\lesssim N^{-1}(\mathcal{N}+1)^{5+2m}$, and therefore
   \begin{align*}
     &  \ \ \ \ \ \ \ \ \mathcal{E}_\mathcal{K}\! \left(\mathcal{N}^m \left[[a_k,\mathcal{G}],\mathcal{G}^\dagger \right]\right)=\mathcal{E}_\mathcal{K}\! \left(\mathcal{N}^m \delta_1 + \mathcal{N}^m \delta_2 + \mathcal{N}^m \delta_3\right)\\
       & \leq 3\mathcal{E}_\mathcal{K}\! \left(\mathcal{N}^m \delta_1\right)+3\mathcal{E}_\mathcal{K}\! \left(\mathcal{N}^m \delta_2\right)+3\mathcal{E}_\mathcal{K}\! \left(\mathcal{N}^m \delta_3\right)\lesssim C_m N^{-1}(\mathcal{N}+1)^{5+2m}.
   \end{align*}
\end{proof}

As a consequence of Lemma \ref{Lem:First_Order_Upper_Bound_Useful_II}, we obtain that the trial state $\Gamma$ defined below Eq.~(\ref{Eq:Gamma_0}) has a kinetic energy $\mathcal{E}_\mathcal{K}(c)$ of the order $O_{N\rightarrow \infty}\! \left(1\right)$ in the subsequent Corollary \ref{Cor:First_Order_Trial_State_Kinetic}. Since in the residual term $\mathcal{E}$ defined in Lemma \ref{Lem:Many_Body_Block_Diagonal} the term $\mathcal{E}_\mathcal{K}\! \left(\sqrt{\mathcal{N}}c\right)\leq \frac{1}{2}\mathcal{E}_\mathcal{K}\! \left(c\right)+\frac{1}{2}\mathcal{E}_\mathcal{K}\! \left(\mathcal{N} c\right)$ appears, it will be convenient to estimate the expectation value in the state $\Gamma$ of $\mathcal{E}_\mathcal{K}\! \left(\mathcal{N}^m c\right)$ for $m\geq 1$ as well.

\begin{cor}
\label{Cor:First_Order_Trial_State_Kinetic}
   Lt $\Gamma$ be the state defined below Eq.~(\ref{Eq:Gamma_0}) and $m\geq 0$. Then there exists a constant $C>0$, such that $\left\langle \Gamma, \! \mathcal{E}_\mathcal{K}\! \left(\mathcal{N}^m c\right)\Gamma\right\rangle\leq \frac{C_m}{N}$.
\end{cor}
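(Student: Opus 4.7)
The plan is to reduce the bound to a combination of the Duhamel expansion of $U^{-1}c_k U$ from Eq.~(\ref{Eq:Decomposition_c}) and the commutator estimate already provided by Lemma \ref{Lem:First_Order_Upper_Bound_Useful_II}, with the only auxiliary tool being the particle--number growth estimate from Lemma \ref{Lem:Particle_Number_Creation} to commute $\mathcal{N}^m$ through the unitary group $U_s$.

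First I would rewrite the quantity in a form where the expansion of $c_k$ can be used. Since
\begin{align*}
\left\langle \Gamma,\mathcal{E}_\mathcal{K}(\mathcal{N}^m c)\Gamma\right\rangle=\sum_k |k|^2 \|\mathcal{N}^m c_k U\Gamma_0\|^2,
\end{align*}
and $c_k U\Gamma_0=U(U^{-1}c_k U)\Gamma_0$, I would invoke Eq.~(\ref{Eq:Decomposition_c}) together with the fact that $a_k\Gamma_0=0$ for $k\neq 0$ to get
\begin{align*}
c_k U\Gamma_0=U\int_0^1\!\!\int_s^1 U_{-t}\bigl[[a_k,\mathcal{G}],\mathcal{G}^\dagger\bigr]U_t\Gamma_0\,\mathrm{d}t\,\mathrm{d}s.
\end{align*}
The $k=0$ contribution drops out due to the weight $|k|^2$. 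Applying Cauchy--Schwarz to the double integral (with the triangle $\{0\le s\le t\le 1\}$ of measure $\tfrac12$) gives
\begin{align*}
\|\mathcal{N}^m c_k U\Gamma_0\|^2 \le \tfrac{1}{2}\int_0^1\!\!\int_s^1 \bigl\|\mathcal{N}^m U U_{-t}\bigl[[a_k,\mathcal{G}],\mathcal{G}^\dagger\bigr]U_t\Gamma_0\bigr\|^2 \mathrm{d}t\,\mathrm{d}s.
\end{align*}

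Next I would move $\mathcal{N}^m$ past the unitaries $U$ and $U_{-t}$, using the bound $U_{s'}^\dagger (\mathcal{N}+1)^{p}U_{s'}\le C_p(\mathcal{N}+1)^{p}$ from Lemma \ref{Lem:Particle_Number_Creation}, so that (at the cost of an $N$--independent constant)
\begin{align*}
\bigl\|\mathcal{N}^m U U_{-t}\bigl[[a_k,\mathcal{G}],\mathcal{G}^\dagger\bigr]U_t\Gamma_0\bigr\|^2 \lesssim \bigl\|(\mathcal{N}+1)^m\bigl[[a_k,\mathcal{G}],\mathcal{G}^\dagger\bigr]U_t\Gamma_0\bigr\|^2.
\end{align*}
Summing over $k$ with weight $|k|^2$ converts the right--hand side into an expectation of the operator $\mathcal{E}_\mathcal{K}\bigl((\mathcal{N}+1)^m[[a_k,\mathcal{G}],\mathcal{G}^\dagger]\bigr)$ in the state $U_t\Gamma_0$. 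At this point Lemma \ref{Lem:First_Order_Upper_Bound_Useful_II} yields
\begin{align*}
\mathcal{E}_\mathcal{K}\!\bigl((\mathcal{N}+1)^m\bigl[[a_k,\mathcal{G}],\mathcal{G}^\dagger\bigr]\bigr)\le C_m N^{-1}(\mathcal{N}+1)^{5+2m},
\end{align*}
and a second application of Lemma \ref{Lem:Particle_Number_Creation} gives $\langle U_t\Gamma_0,(\mathcal{N}+1)^{5+2m}U_t\Gamma_0\rangle\le C_m$, since $\langle\Gamma_0,(\mathcal{N}+1)^{p}\Gamma_0\rangle=1$. Integrating the uniform bound over $(s,t)$ produces the claimed estimate $C_m/N$.

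The only mildly delicate point is the bookkeeping in the second paragraph: one has to be careful that $\mathcal{N}^m$ is always translated through a unitary by the Heisenberg--type bound rather than by a literal commutator, so that only the overall constant degrades and the crucial $N^{-1}$ factor from Lemma \ref{Lem:First_Order_Upper_Bound_Useful_II} is preserved. Everything else is routine Cauchy--Schwarz, and the proof is essentially a one--line combination of Eq.~(\ref{Eq:Decomposition_c}), Lemma \ref{Lem:First_Order_Upper_Bound_Useful_II}, and Lemma \ref{Lem:Particle_Number_Creation}.
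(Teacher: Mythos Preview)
Your proof is correct and follows essentially the same route as the paper: both arguments combine the Duhamel expansion Eq.~(\ref{Eq:Decomposition_c}) with Lemma \ref{Lem:First_Order_Upper_Bound_Useful_II} and commute powers of $\mathcal{N}$ through $U_s$ via Lemma \ref{Lem:Particle_Number_Creation}. The only cosmetic difference is that the paper works at the operator level (conjugating $\mathcal{E}_\mathcal{K}(\mathcal{N}^m c)$ by $U^{-1}$ and bounding the resulting operator before taking the vacuum expectation), whereas you work directly with the vectors $\mathcal{N}^m c_k U\Gamma_0$; the underlying estimates are identical.
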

\begin{proof}
By Lemma \ref{Lem:Particle_Number_Creation} we have 
\begin{align*}
  U^{-1}\mathcal{N}^{2m}U=(U^{-1}\mathcal{N}^{m}U)^\dagger  U^{-1}\mathcal{N}^{m}U\lesssim (\mathcal{N}^{m}+1)^2  
\end{align*}
and hence
\begin{align*}
    U^{-1}\mathcal{E}_\mathcal{K}\! \left(\mathcal{N}^m c\right)U=\mathcal{E}_\mathcal{K}\! \left(U^{-1}\mathcal{N}^m U\, U^{-1} c\, U\right)\lesssim \mathcal{E}_\mathcal{K}\! \left((\mathcal{N}^m+1) U^{-1} c\, U\right),
\end{align*}
where we have used that for operators $f_k$ and $A,B$ satisfying $A^\dagger A\leq C B^\dagger B$ we have 
\begin{align*}
    \mathcal{E}_\mathcal{K}\! \left(A f_k\right)\leq C\mathcal{E}_\mathcal{K}\! \left(B f_k\right).
\end{align*}
Proceeding as in the proof of Corollary \ref{Cor:First_Order_Trial_State_V}, we obtain by Eq.~(\ref{Eq:Decomposition_c}) and Lemma \ref{Lem:First_Order_Upper_Bound_Useful_II} 
\begin{align*}
   &  \mathcal{E}_\mathcal{K}\! \left((\mathcal{N}^m+1) U^{-1} c\, U\right)\lesssim \mathcal{E}_\mathcal{K}\! \left((\mathcal{N}^m+1) a\right)+\int_0^1 \mathcal{E}_\mathcal{K}\! \left((\mathcal{N}^m+1) U_{-t} \left[[a_k,\mathcal{G}],\mathcal{G}^\dagger \right] U_t\right)\mathrm{d}t\\
   & \ \ \ \ \  =\mathcal{E}_\mathcal{K}\! \left((\mathcal{N}^m+1) a\right)+\int_0^1  U_{-t}\mathcal{E}_\mathcal{K}\! \left(U_t(\mathcal{N}^m+1)U_{t} \left[[a_k,\mathcal{G}],\mathcal{G}^\dagger \right] \right)U_t\mathrm{d}t\\
    & \ \ \ \ \  \lesssim \mathcal{E}_\mathcal{K}\! \left((\mathcal{N}^m+1) a\right)+\int_0^1  U_{-t}\mathcal{E}_\mathcal{K}\! \left((\mathcal{N}^m+1) \left[[a_k,\mathcal{G}],\mathcal{G}^\dagger \right] \right)U_t\mathrm{d}t\\
        & \ \ \ \ \  \lesssim \mathcal{E}_\mathcal{K}\! \left((\mathcal{N}^m+1) a\right)+ N^{-1}\int_0^1  U_{-t}(\mathcal{N}+1)^{5+2m}  U_t\mathrm{d}t\\
        & \ \ \ \ \  \lesssim \mathcal{E}_\mathcal{K}\! \left((\mathcal{N}^m+1) a\right)+ N^{-1}(\mathcal{N}+1)^{5+2m}  .
\end{align*}
where we have made use of Lemma \ref{Lem:Particle_Number_Creation} again. Using $a_k\Gamma_0=0$ for $k\neq 0$ therefore yields
\begin{align*}
&  \ \ \ \ \ \ \ \left\langle \Gamma, \! \mathcal{E}_\mathcal{K}\! \left(\mathcal{N}^m c\right)\Gamma\right\rangle=\left\langle \Gamma_0, \!  U^{-1}\mathcal{E}_\mathcal{K}\! \left(\mathcal{N}^m c\right)U \Gamma_0\right\rangle\\
& \lesssim \left\langle \Gamma_0, \!  \mathcal{E}_\mathcal{K}\! \left((\mathcal{N}^m+1) a\right) \Gamma_0\right\rangle +N^{-1} \left\langle \Gamma_0, \!  (\mathcal{N}+1)^{5+2m} \Gamma_0\right\rangle =\frac{1}{N}.
\end{align*}
\end{proof}

Having Corollary \ref{Cor:First_Order_Trial_State_V} and Corollary \ref{Cor:First_Order_Trial_State_Kinetic} at hand, we are in a position to verify the upper bound on the ground state energy $E_N$ in Theorem \ref{Th:First_Order_Upper_Bound}.

\begin{proof}[Proof of Theorem \ref{Th:First_Order_Upper_Bound}]
Let $A:=(2\pi \mathbb{Z})^9 \setminus \{(0,0,0)\}$, and let $\Gamma$ be the state defined below Eq.~(\ref{Eq:Gamma_0}). Using Eq.~(\ref{Eq:First_Algebraic_Representation}) and Eq.~(\ref{Eq:Definition_tilde_V}), and the fact that $\left(\widetilde{V}_N\right)_{i j k,\ell m n}=\left(V_N\right)_{i j k,\ell m n}$ for index triples $(ijk),(\ell m n)\in A$, we obtain
    \begin{align*}
       H_N & =\sum_{k}|k|^2  c_k^\dagger c_k  +\lambda_{0,0} (a_0^\dagger)^3 a_0^3 +  \frac{1}{6}\sum_{(i j k),(\ell m n)\in A}\left(V_N\right)_{i j k,\ell m n}\psi_{i j k}^\dagger \psi_{\ell m n}\\
       & \ \ \ \ \ \ +\left(3 a_0^\dagger a_0^3\sum_{\ell\neq 0}\lambda_{0,\ell}\,  a_{\ell}^\dagger a_{-\ell}^\dagger + \mathrm{H.c.}\right)-\mathcal{E}\\
       & =\lambda_{0,0} (a_0^\dagger)^3 a_0^3 +\mathcal{E}_\mathcal{K}\! \left(c\right) +  \mathcal{E}_\mathcal{P}(\psi)+\left(3 a_0^\dagger a_0^3\sum_{\ell\neq 0}\lambda_{0,\ell}\,  a_{\ell}^\dagger a_{-\ell}^\dagger + \mathrm{H.c.}\right)-\mathcal{E}.
    \end{align*}
    By a symmetry argument it is clear that $\braket{\Gamma,a_0^\dagger a_0^3a_{\ell}^\dagger a_{-\ell}^\dagger\Gamma}=0$. Applying Corollary \ref{Cor:First_Order_Trial_State_V} as well as Corollary \ref{Cor:First_Order_Trial_State_Kinetic}, with $m=0$, yields for suitable constants $C>0$
    \begin{align*}
       &  \left\langle \Gamma, \mathcal{E}_\mathcal{P}(\psi)\Gamma\right\rangle\leq C\sqrt{N}, \ \ \ \ \ \ \ \ \ \ \left\langle \Gamma, \! \mathcal{E}_\mathcal{K}\! \left(c\right)\Gamma\right\rangle\leq \frac{C}{N}.
    \end{align*}
Furthermore, observe that $N^3\lambda_{0,0}\leq \frac{1}{6}b_\mathcal{M}(V)N +C'$ by Lemma \ref{Lem:Coefficient_Control} for a suitable $C'$. In order to estimate the final term $\braket{\Psi,\mathcal{E}\Psi}$, note that we have by Lemma \ref{Lem:Coefficient_Control} for $m\in \mathbb{N}$\begin{align}
\label{Eq:Gamma_Particle_Control}
    \braket{\Gamma,\mathcal{N}^m \Gamma}=\braket{\Gamma_0, U^{-1}\mathcal{N}^m U \Gamma_0}\lesssim \braket{\Gamma_0, (\mathcal{N}+1)^m \Gamma_0}=1.
\end{align}
    Using Lemma \ref{Lem:First_Error_Term_Estimate} together with the estimate from Corollary \ref{Cor:First_Order_Trial_State_Kinetic} for $m=0$ and $m=1$, we therefore obtain $ \left|\braket{\Psi,\mathcal{E}\Psi}\right|\lesssim N^{-\frac{1}{3}}$.
\end{proof}

\section{Refined Correlation Structure}
\label{Section:Second_Order_Lower_Bound}
Utilizing the set of operators defined in Eq.~(\ref{Eq:Definition_c_variable}) and Eq.~(\ref{Eq:Definition_psi}), we where able to identify the ground state energy $E_N$ up to errors of the magnitude $O_{N\rightarrow \infty}(\sqrt{N})$ in the previous Sections \ref{Sec:Condensation with a Rate} and \ref{Sec:First_Order_Upper_Bound}. It is the purpose of this Section to obtain a higher resolution of the energy, which especially captures the subleading term proportional to $\sqrt{N}$ in the asymptotic expansion of $E_N$, using a more refined correlation structure compared to the one introduced in Subsection \ref{Subsec:The three-body Problem}. On a technical level, the new correlation structure is implemented by the new set of operators $d_k$ and $\xi_{ijk}$ defined below in Eq.~(\ref{Eq:Second_Family_of_Variables}) and Eq.~(\ref{Eq:Second_Family_of_Variables_xi}), which constitute a refined version of the operators $c_k$ and $\psi_{ijk}$ respectively. Writing the operator $H_N$ in terms of $d_k$ and $\xi_{ijk}$ will then allow us to verify the lower bound from Theorem \ref{Eq:Main_Theorem_Introduction} in Subsection \ref{Subsection:Proof_of_Theorem_Main_in_Text} and the corresponding upper bound in Section \ref{Sec:Second_Order_Upper_Bound}. \\

The approach presented in Sections \ref{Sec:Condensation with a Rate} and \ref{Sec:First_Order_Upper_Bound} fails to capture the correct term of order $\sqrt{N}$ for two reasons: (I) The following expression appearing in Eq.~(\ref{Eq:First_Lower_Bound})
\begin{align}
\label{Eq:Two-particle_Extra_Correlations}
    3  \! \sum_{|\ell|>K} \!  \! \lambda_{0,\ell}\,  a_{\ell}^\dagger a_{-\ell}^\dagger\, a_0^\dagger a_0^3
\end{align}
is expected to lower the ground state energy by an amount proportional to $\sqrt{N}$, which is consistent with our estimate in Lemma \ref{Lem:Rest_Estimate}. (II) In the pursue of an upper bound on $E_N$ we expressed the unitary conjugated variables $U^{-1} \psi_{ijk} U$ as a sum of $a_i a_j a_k$ and various error terms according to Eq.~(\ref{Eq:Double_Duhamel}) as
\begin{align}
\nonumber
     U^{-1}\psi_{ijk} U \! & =\! a_i a_j a_k  +   \!  \! \int_0^1    \! \! \!  \! U_{-s}   \big[a_i a_j a_k ,\mathcal{G}^\dagger \big]   U_s \mathrm{d}s  -  \!  \! \int_0^1    \! \! \!  \! U_{-s}   (\delta_1 \psi)_{ijk}   U_s \mathrm{d}s -  \!  \! \int_0^1    \! \! \!  \! U_{-s}   (\delta_2 \psi)_{ijk}   U_s \mathrm{d}s\\
     \label{Eq:Duhamel_I_Copy}
        & \ \ \ \ \ \ \ \ \ \ +\int_0^1\int_s^1 U_{-t} \eta_{ijk} \big[a_0^3 ,\mathcal{G}^\dagger \big]U_t\mathrm{d}t\mathrm{d}s.
\end{align}
While most of the terms appearing in Eq.~(\ref{Eq:Duhamel_I_Copy}) give a contribution of the order $o_{N\rightarrow}\! \left(\sqrt N \right)$, the term $\delta_1 \psi$ is expected to increase the ground state energy by an amount proportional to $\sqrt{N}$, which is consistent with our estimate in Eq.~(\ref{Eq:First_Order_Upper_Bound_Useful_I}). In order to extract the energy shift due to the expression in Eq.~(\ref{Eq:Two-particle_Extra_Correlations}), we follow the strategy in Subsection \ref{Subsec:The three-body Problem} and introduce an additional two-particle correlation structure via a map acting on the two-particle space 
\begin{align*}
 T_2:L^2(\Lambda^2)\longrightarrow L^2(\Lambda^2)   
\end{align*}
in Eq.~(\ref{Eq:Definition_T_2}), which will give rise to the negative energy correction $-\mu(V)\sqrt{N}$ from Theorem \ref{Eq:Main_Theorem_Introduction}. Regarding the energy shift associated with $\delta_1 \psi$, it is a natural idea to include this term in the definition of our new operators $\xi_{ijk}$, giving rise to the positive energy correction $\gamma(V)\sqrt{N}$ from Theorem \ref{Eq:Main_Theorem_Introduction}. However a computation in Eq.~(\ref{Eq:Representation_Of_Second_Order_Transformations}) demonstrates that the presence of $\delta_1 \psi$ produces new four-particle correlation terms of the form 
\begin{align*}
    a_u^\dagger a_i^\dagger a_j^\dagger a_k^\dagger a_0^4 +\mathrm{H.c.},
\end{align*}
with $\{u,i,j,k\}$ all different from zero, which are expected to lower the ground state energy by an amount proportional to $\sqrt{N}$. Again we extract the correlation energy by introducing a map, acting this time on the four-particle space
\begin{align*}
 T_4:L^2(\Lambda^4) \longrightarrow L^2(\Lambda^4)   
\end{align*}
in Eq.~(\ref{Eq:Definition_T_4}), which gives rise to the negative energy correction $-\sigma(V)\sqrt{N}$ from Theorem \ref{Eq:Main_Theorem_Introduction}.

In the following let $T:L^2(\Lambda^3)\longrightarrow L^2(\Lambda^3)$ be the map constructed in Eq.~(\ref{Eq:Definition_Feshbach-Schur}), and for now let $T_2:L^2(\Lambda^2)\longrightarrow L^2(\Lambda^2)$ and $T_4:L^2(\Lambda^4)\longrightarrow L^2(\Lambda^4)$ be generic bounded permutation symmetric operators modelling the two-particle and four-particle correlation structure respectively. Following the approach in Section \ref{Sec:Condensation with a Rate}, we are implementing many-particle counterparts to the transformations $T$, $T_2$ and $T_4$ as
\begin{align}
\label{Eq:Second_Family_of_Variables}
    d_k:&=a_k+\sum_{j,m n}(T_2-1)_{jk,m n} \,  a_j^\dagger a_{m}a_{n}+\frac{1}{2}\sum_{i j,\ell m n}(T-1)_{ijk,\ell m n} \, a_i^\dagger a_j^\dagger a_{\ell}a_{m}a_{n}\\
    \nonumber
    &\ \ \ \ +\frac{1}{6}\sum_{u i j,v \ell m n}(T_4-1)_{u ijk,v \ell m n} \, a_u^\dagger a_i^\dagger a_j^\dagger a_v a_{\ell}a_{m}a_{n},\\
    \label{Eq:Second_Family_of_Variables_xi}
\xi_{i j k}:&=\sum_{\ell m n}(T)_{ijk,\ell m n} \,  a_{\ell}a_{m}a_{n}+(\delta_1 \psi)_{ijk}+\sum_{u,v\ell m n}(T_4-1)_{uijk,v\ell m n} \,  a_u^\dagger a_v a_{\ell}a_{m}a_{n}.
\end{align}
Note that $T_2$ is not included in the definition of $\xi_{ijk}$, as it would only give contributions of the order $O_{N\rightarrow \infty}(1)$. Using the Laplace operator $\Delta_s$ acting on the space $L^2(\Lambda)^{\otimes s}$ and the coefficients
\begin{align*}
    (\chi)_{i_1..i_4,j_1..j_4}:  =\frac{1}{2}\sum_{\sigma\in S_3}\mathds{1}(j_1=..=j_4=i_{\sigma_3}=0)\eta_{i_{\sigma_1}i_{\sigma_2}i_4},
\end{align*}
let us furthermore define the operators $X_2: =T_2^\dagger (-\Delta_2)T_2+\Delta_2$ and 
\begin{align}
     \label{Eq:X_4}
   X_4:   & = \bigg(\left\{(-\Delta_4 +4( \widetilde{V}_N\otimes 1))(T_4-1)+ 4( \widetilde{V}_N\otimes 1)\chi\right\}+\mathrm{H.c.}\bigg) \\ 
     \nonumber
     & + (T_4-1)^\dagger (-\Delta_4)(T_4-1)+\left((T\otimes 1 - 1)^\dagger 4( \widetilde{V}_N\otimes 1)(T_4-1+\chi)+\mathrm{H.c.}\right)\\
     \nonumber
     & +(T_4-1+\chi)^\dagger 4( \widetilde{V}_N\otimes 1)(T_4-1+\chi).
\end{align}
A straightforward computation, similar to the one in Eq.~(\ref{Eq:First_Algebraic_Representation}), reveals that up to excess terms involving $X_2$, $X_4$ and an error term $\widetilde{\mathcal{E}}$, we can write the operator $H_N$ as a sum of squares in the variables $d_k$ and $\xi_{ijk}$ according to
 \begin{align}
 \nonumber
    & \ \ \ \ \  \  \ \ \ \  \ \ \ \ \  \  \ \ \ \ \sum_{k}|k|^2  d_k^\dagger d_k \!+ \! \frac{1}{6}\sum_{i j k,\ell m n}\left(\widetilde{V}_N\right)_{i j k,\ell m n}\xi_{i j k}^\dagger \xi_{\ell m n}\\
    \label{Eq:Representation_Of_Second_Order_Transformations}
    & \ =H_N+\frac{1}{2}\sum_{jk,mn} (X_2)_{jk,mn} \, a_j^\dagger a_k^\dagger a_m a_n+\frac{1}{24}\sum_{uijk,v \ell m n}(X_4)_{uijk,v \ell m n} \, a_u^\dagger a_i^\dagger a_j^\dagger a_k^\dagger a_v a_\ell a_m a_n+ \widetilde{\mathcal{E}},
 \end{align}
 where the error $\widetilde{\mathcal{E}}$ contains all the non-fully contracted products appearing in the squares 
 \begin{align}
 \label{Eq:Non-Fully_Contracted}
  &   \sum_{k}|k|^2  (d_k-a_k)^\dagger (d_k - a_k) ,  \frac{1}{6}\sum_{i j k,\ell m n}\left(\widetilde{V}_N\right)_{i j k,\ell m n}(\xi_{i j k}-\psi_{ijk})^\dagger (\xi_{\ell m n} -\psi_{ijk}).
 \end{align}
 In this context we define the fully contracted part of a product of monomials
 \begin{align*}
    \left(a_{i_1}^\dagger \dots a_{i_r}^\dagger a_{j_1}\dots a_{j_t}\right)\left(a_{i'_1}^\dagger \dots a_{i'_{r'}}^\dagger a_{j'_1}\dots a_{j'_{t'}}\right) 
 \end{align*}
as $C_{j_1\dots j_{t},i'_1\dots i'_{r'}} a_{i_1}^\dagger \dots a_{i_r}^\dagger a_{j'_1}\dots a_{j'_{t'}}$ with $C_{j_1\dots j_{t},i'_1\dots i'_{r'}}$ being the expectation of $a_{j_1}\dots a_{j_t}a_{i'_1}^\dagger \dots a_{i_{r'}}^\dagger$ in the vacuum. For a term by term definition of $\widetilde{\mathcal{E}}$ see Eq.~(\ref{Eq:Term-by-Term}) in Subsection \ref{Subsection:E_total}. 

In the following we want to choose $T_4$, such that the term $4( \widetilde{V}_N\otimes 1)\chi$ is cancelled in the expression $\{..\}$ from Eq.~(\ref{Eq:X_4}), at least after symmetrization and projection onto the range of $Q^{\otimes 4}$, i.e. we define 
\begin{align}
\label{Eq:Definition_T_4}
   T_4:=1-R_4 \Pi_{\mathrm{sym}}Q^{\otimes 4}4( \widetilde{V}_N\otimes 1)\chi=1-R_4 \Pi_{\mathrm{sym}}Q^{\otimes 4}4( V_N\otimes 1)\chi, 
\end{align}
where $\Pi_{\mathrm{sym}}$ is the orthogonal projection onto the subspace $L^2_{\mathrm{sym}}(\Lambda^4)\subseteq L^2(\Lambda^4)$ and $R_4$ is the pseudo-inverse of 
\begin{align}
\label{Eq:Def_R_4}
Q^{\otimes 4}\left(-\Delta_4 +4( \widetilde{V}_N\otimes 1)\right)Q^{\otimes 4}=Q^{\otimes 4}\left(-\Delta_4 +4( V_N\otimes 1)\right)Q^{\otimes 4}.    
\end{align}
In order to obtain an improved representation of the operator $X_4$ defined in Eq.~(\ref{Eq:X_4}), let us introduce the constants
\begin{align}
\label{Eq:sigma}
 \sigma_N: & =\frac{N^4}{6}\braket{(\widetilde{V}_N\otimes 1)\chi u_0^{\otimes 4},(1-T_4) u_0^{\otimes 4}},\\
 \nonumber
 &= \frac{N^4}{24} \left\langle (T_4-1) u_0^{\otimes 4},\left(-\Delta_4 +4 V_N\otimes 1\right)(T_4-1) u_0^{\otimes 4}\right\rangle, \\
 \label{Eq:gamma}
 \gamma_N: & =\frac{N^4}{6}\braket{( \widetilde{V}_N \otimes 1)\chi u_0^{\otimes 4},\chi u_0^{\otimes 4}}=\frac{N^4}{6}\braket{( V_N \otimes 1)\chi u_0^{\otimes 4},\chi u_0^{\otimes 4}},
\end{align}
which allow us to write  $\gamma_N-\sigma_N=\frac{N^4}{24}(X_4)_{0000,0000}$. Furthermore, we define the three-particle state $\Theta$ as
\begin{align}
\label{Eq:X_3_Definition}
    (\Theta)_{ijk}: & =4\left(\Pi_{\mathrm{sym}}\frac{X_4}{24} \Pi_{\mathrm{sym}}\right)_{0ijk,000},\\
    \nonumber
    (\Theta)_{0jk}: & =6\left(\Pi_{\mathrm{sym}}\frac{X_4}{24} \Pi_{\mathrm{sym}}\right)_{00jk,000}  
\end{align}
for $\{i,j,k\}$ all different from zero and $ (\Theta)_{ijk}:=0$ otherwise. According to the definition of $T_4$ we have $Q^{\otimes 4}\Pi_{\mathrm{sym}}X_4\Pi_{\mathrm{sym}} P^{\otimes 4}=0$, and therefore
\begin{align}
\nonumber
  & \frac{1}{24} \!  \! \sum_{uijk,v \ell m n} \!  \!  \!  \! (X_4)_{uijk,v \ell m n} \, a_u^\dagger a_i^\dagger a_j^\dagger a_k^\dagger a_v a_\ell a_m a_n \! = \! \frac{1}{24} \!  \! \sum_{uijk,v \ell m n} \!  \!  \!  \! (\Pi_{\mathrm{sym}}X_4 \Pi_{\mathrm{sym}})_{uijk,v \ell m n} \, a_u^\dagger a_i^\dagger a_j^\dagger a_k^\dagger a_v a_\ell a_m a_n \\
  \label{Eq:X_3}
  & \ \ \ \ \ =(a_0^\dagger )^4 a_0^4\, N^{-4}(\gamma_N-\sigma_N) +\left(\sum_{ijk, \ell m n}(\Theta)_{ijk} \,   a_i^\dagger a_j^\dagger a_k^\dagger\,  a_0^\dagger a_0^4 + \mathrm{H.c.}\right).
\end{align}
In order to understand the size of the term in Eq.~(\ref{Eq:X_3}) better, we are going to rewrite it in terms of the variables $\psi_{ijk}$ defined in Eq.~(\ref{Eq:Definition_psi}), respectively the variables
\begin{align}
\label{Eq:widetilde_psi_Definition}
\widetilde{\psi}_{i j k} &=a_ia_ja_k+\eta_{ijk} \,  a_0^3
\end{align}
defined in Eq.~(\ref{Eq:Definition_psi}) for the concrete choice $K:=0$, see Eq.~(\ref{Eq:Psi_for_K=0}), with the corresponding operator $T_{K=0}:=1+RV_N\pi_0$, see Eq.~(\ref{Eq:Definition_pi_K}). Note that
\begin{align*}
   (T_{K=0}^{-1 })^\dagger \Theta = \Theta +2(\sigma_N-\gamma_N) u_0^{\otimes 3} ,
\end{align*}
and therefore
\begin{align*}
 &   \frac{1}{24}\!  \!  \!  \! \sum_{uijk,v \ell m n} \!  \! \!  \! \!  \! (X_4)_{uijk,v \ell m n} \, a_u^\dagger a_i^\dagger a_j^\dagger a_k^\dagger a_v a_\ell a_m a_n \! = \!  N^{-4}(a_0^\dagger )^4 a_0^4\, (  \sigma_N-\gamma_N)  \\
    & \ \ \ \ \ \  \ \ \ \ \ \  \ \ \ \ \ \ +  \left(\sum_{ijk, \ell m n}(\Theta)_{ijk} \,   \widetilde{\psi}_{ijk}^\dagger \,  a_0^\dagger a_0^4 + \mathrm{H.c.}\right).
\end{align*}

In order to address the correlation term in Eq.~(\ref{Eq:Two-particle_Extra_Correlations}), we make the concrete choice for $T_2$
 \begin{align}
 \label{Eq:Definition_T_2}
(T_2-1)_{\ell (-\ell),00}:=(T_2-1)_{00, \ell (-\ell)}:=3N\frac{\lambda_{0,\ell}}{|\ell|^2},     
 \end{align}
for $|\ell|>K$ and $(T_2-1)_{jk,mn}:=0$ otherwise, where $\lambda_{k,\ell}$ is defined below Eq.~(\ref{Eq:First_Lower_Bound}). With 
 \begin{align}
 \label{Eq:mu}
  \mu_N:=\frac{N^2}{2} (X_2)_{00,00},   
 \end{align}
this choice for a transformation $T_2$ yields
 \begin{align*}
     \frac{1}{2}\sum_{jk,mn} (X_2)_{jk,mn} \, a_j^\dagger a_k^\dagger a_m a_n= N^{-2}(a_0^\dagger )^2 a_0^2\, \mu_N +\bigg(3 N a_0^2\sum_{|\ell|>K}\lambda_{0,\ell}\,  a_{\ell}^\dagger a_{-\ell}^\dagger  +  \mathrm{H.c.}\bigg).
 \end{align*}
Summarizing what we have so far, allows us to write the operator $H_N$ in terms of the new variables $d_k$ and $\xi_{ijk}$ as
 \begin{align}
\nonumber
    & \  H_N=\sum_{k}|k|^2  d_k^\dagger d_k \!+ \! \frac{1}{6}\sum_{i j k,\ell m n}\left(\widetilde{V}_N\right)_{i j k,\ell m n}\xi_{i j k}^\dagger \xi_{\ell m n}+ N^{-4}(a_0^\dagger )^4 a_0^4\, (\gamma_N-\sigma_N)-N^{-2}(a_0^\dagger )^2 a_0^2\, \mu_N  \\
 \label{Eq:Pre_Precise_Lower_Bound}
    & \ \ \ \ \  -\bigg(3 N a_0^2\sum_{|\ell|>K}\lambda_{0,\ell}\,  a_{\ell}^\dagger a_{-\ell}^\dagger  +  \mathrm{H.c.}\bigg)  - \left(\sum_{ijk, \ell m n}(\Theta)_{ijk} \,   \widetilde{\psi}_{ijk}^\dagger \,  a_0^\dagger a_0^4 + \mathrm{H.c.}\right) - \widetilde{\mathcal{E}}.
 \end{align}
 Defining the error term
 \begin{align}
 \label{Eq:Definition_E_Star}
     \mathcal{E}_*:= 3  \!  \left(a_0^\dagger a_0 \! - \! N\right)  \! a_0^2 \! \sum_{|\ell|>K} \!  \! \lambda_{0,\ell}\,  a_{\ell}^\dagger a_{-\ell}^\dagger   \! + \!   9 a_0^\dagger a_0^2 \! \sum_{\ell,0<|k|\leq K} \!  \! \lambda_{k,\ell}  a_\ell^\dagger a_{k-\ell}^\dagger a_k,
 \end{align}
 we obtain as a consequence of Eq.~(\ref{Eq:Pre_Precise_Lower_Bound}) the following Corollary \ref{Cor:Many_Body_Block_Diagonal_II}.
 \begin{cor}
 \label{Cor:Many_Body_Block_Diagonal_II}
     Let $d_k$ and $\xi_{ijk}$ be as in Eq.~(\ref{Eq:Second_Family_of_Variables}) and Eq.~(\ref{Eq:Second_Family_of_Variables_xi}), with $T_2$ defined in Eq.~(\ref{Eq:Definition_T_2}) and $T_4$ defined in Eq.~(\ref{Eq:Definition_T_4}), $\gamma_N,\sigma_N$ and $\mu_N$ as in Eq.~(\ref{Eq:gamma}), Eq.~(\ref{Eq:sigma}) and Eq.~(\ref{Eq:mu}), and let $\mathcal{E}_*$ be as in Eq.~(\ref{Eq:Definition_E_Star}), $\Theta$ as in Eq.~(\ref{Eq:X_3_Definition}) and $\widetilde{\psi}_{ijk}$ as in Eq.~(\ref{Eq:widetilde_psi_Definition}). Furthermore, recall the definition of $\lambda_{0,0}$ below Eq.~(\ref{Eq:Identity_Hamiltonian_A}) and $\mathbb{Q}_K$ in Eq.~(\ref{Eq:Definition_V_Quadratic}). Then
     \begin{align}
         \label{Eq:Precise_Lower_Bound}
    & H_N\geq    (a_0^\dagger )^3 a_0^3 \lambda_{0,0}+ N^{-4}(a_0^\dagger )^4 a_0^4\, (\gamma_N - \sigma_N) -N^{-2}(a_0^\dagger )^2 a_0^2\, \mu_N +\sum_{k}|k|^2  d_k^\dagger d_k  +\mathbb{Q}_K\\
    \nonumber 
    & \ \  \ \  \ \  \ \  \ -  \left(\sum_{ijk, \ell m n}(\Theta)_{ijk} \,   \widetilde{\psi}_{ijk}^\dagger \,  a_0^\dagger a_0^4 + \mathrm{H.c.}\right) +  \left(\mathcal{E}_*+\mathcal{E}_*^\dagger \right)-  \widetilde{\mathcal{E}}.
     \end{align}
    Making use of the notation $\mathcal{E}_\mathcal{K}(d)$ from Eq.~(\ref{Eq:Kinetic_Energy_Operator}) and $\mathcal{E}_\mathcal{P}(\xi)$ from Eq.~(\ref{Eq:Potential_Energy_Operator}), we obtain in the case $K=0$ the identity
     \begin{align}
          \nonumber
        &  H_N =\lambda_{0,0}(a_0^\dagger)^3 a_0^3  \! + \! (\gamma_N \! - \! \sigma_N)  N^{-4}(a_0^\dagger)^4 a_0^4 \! - \! \mu_N  N^{-2}(a_0^\dagger)^2 a_0^2 +\mathcal{E}_\mathcal{K}(d)+  \mathcal{E}_\mathcal{P}(\xi) \\
\label{Eq:Precise_Identity}
       & \ \ \ \ \ \ \ \ \ \ -  \left(\sum_{ijk, \ell m n}(\Theta)_{ijk} \,   \widetilde{\psi}_{ijk}^\dagger \,  a_0^\dagger a_0^4 + \mathrm{H.c.}\right)+ \left(\mathcal{E}_*+\mathcal{E}_*^\dagger \right) -  \widetilde{\mathcal{E}}.
     \end{align}
 \end{cor}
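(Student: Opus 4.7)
The plan is to start from the master identity Eq.~(\ref{Eq:Pre_Precise_Lower_Bound}), which has been set up in the discussion preceding the corollary by combining Eq.~(\ref{Eq:Representation_Of_Second_Order_Transformations}) with the consequences of the specific choices of $T_2$ (producing $\mu_N$ together with the residual $3Na_0^2\sum\lambda_{0,\ell}a_\ell^\dagger a_{-\ell}^\dagger$) and $T_4$ (producing $\gamma_N-\sigma_N$ and $\Theta$ via Eq.~(\ref{Eq:X_3}) and the rewriting $(T_{K=0}^{-1})^\dagger\Theta=\Theta+2(\sigma_N-\gamma_N)u_0^{\otimes 3}$). Both parts of the corollary are then obtained by unpacking $\tfrac{1}{6}\sum(\widetilde{V}_N)_{ijk,\ell mn}\xi_{ijk}^\dagger\xi_{\ell mn}$ and recombining with the other terms already displayed.

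For the lower bound Eq.~(\ref{Eq:Precise_Lower_Bound}), I would first apply the decomposition from Eq.~(\ref{Eq:Definition_tilde_V}):
\[
\widetilde{V}_N = \pi_K(V_N-V_NRV_N)\pi_K + (1-\pi_K)V_N(1-\pi_K) + B + B^*,
\]
with $B:=(1-\pi_K-Q^{\otimes 3})(V_N-V_NRV_N)\pi_K$, and discard the middle piece since $V_N\geq 0$ gives $\tfrac{1}{6}\sum((1-\pi_K)V_N(1-\pi_K))_{ijk,\ell mn}\xi_{ijk}^\dagger\xi_{\ell mn}\geq 0$. The key algebraic fact driving the rest of the computation is that $\xi_{ijk}=a_ia_ja_k$ whenever $(ijk)\in\mathcal{L}_K$, because: (i) $(T-1)_{ijk,\ell mn}=0$ when any of $\{i,j,k\}$ is zero, as $R$ has range in $Q^{\otimes 3}$, so $T_{ijk,\ell mn}=\delta_{(ijk)(\ell mn)}$; (ii) every surviving summand in $(\delta_1\psi)_{ijk}$ for such $(ijk)$ contains some $\eta_{i_{\sigma_1}i_{\sigma_2}j}$ with a zero index, so $(\delta_1\psi)_{ijk}=0$; and (iii) the $T_4$-correction vanishes because $R_4$ has range in $Q^{\otimes 4}$. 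Using this to replace $\xi$ by $a$ on both sides of the $A$-contribution, and on the $\mathcal{L}_K$-side of the $B+B^*$-contribution, reduces these sums to exactly the expressions already computed in Eq.~(\ref{Eq:Identity_Hamiltonian_A}) and Eq.~(\ref{Eq:Identity_Hamiltonian_B}).

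The remaining step is pure repackaging. Splitting $\sum_{\ell\neq 0}\lambda_{0,\ell}a_\ell^\dagger a_{-\ell}^\dagger$ into $\sum_{0<|\ell|\leq K}$ and $\sum_{|\ell|>K}$, the first piece combines with the quadratic $\lambda_{k,0}$ term coming from $A$ to assemble exactly $\mathbb{Q}_K$ as defined in Eq.~(\ref{Eq:Definition_V_Quadratic}). The $|\ell|>K$ piece, combined with the $-3Na_0^2\sum_{|\ell|>K}\lambda_{0,\ell}a_\ell^\dagger a_{-\ell}^\dagger+\mathrm{H.c.}$ already sitting inside Eq.~(\ref{Eq:Pre_Precise_Lower_Bound}), produces the first summand of $\mathcal{E}_*+\mathcal{E}_*^\dagger$ via $a_0^\dagger a_0^3=(a_0^\dagger a_0)a_0^2$; the remaining $9a_0^\dagger a_0^2\sum_{\ell,0<|k|\leq K}\lambda_{k,\ell}a_\ell^\dagger a_{k-\ell}^\dagger a_k+\mathrm{H.c.}$ from $B+B^*$ is precisely the second summand of $\mathcal{E}_*+\mathcal{E}_*^\dagger$, and Eq.~(\ref{Eq:Precise_Lower_Bound}) follows.

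For the identity Eq.~(\ref{Eq:Precise_Identity}) in the case $K=0$ the same bookkeeping becomes an exact equality: $\mathbb{Q}_K$ is empty, and I would identify $\tfrac{1}{6}\sum((1-\pi_0)V_N(1-\pi_0))_{ijk,\ell mn}\xi_{ijk}^\dagger\xi_{\ell mn}$ with $\mathcal{E}_\mathcal{P}(\xi)$, since $1-\pi_0$ projects onto exactly the index set $A$ from Eq.~(\ref{Eq:Potential_Energy_Operator}), while $\mathcal{E}_\mathcal{K}(d)=\sum_k|k|^2d_k^\dagger d_k$ is just a renaming. The main obstacle---shared by both statements---is the $B+B^*$-contribution when the outer index $(ijk)$ has exactly one zero component and thus $(\delta_1\psi)_{ijk}\neq 0$: one must verify that the only fully-contracted piece of $\xi_{ijk}^\dagger$ paired with $a_\ell a_m a_n$ for $(\ell mn)\in\mathcal{L}_K$ is the one reproducing Eq.~(\ref{Eq:Identity_Hamiltonian_B}), while the $(\delta_1\psi)_{ijk}^\dagger a_\ell a_m a_n$ cross terms are of the non-fully-contracted type that is, by construction, already absorbed into $\widetilde{\mathcal{E}}$ through the squares in Eq.~(\ref{Eq:Non-Fully_Contracted}).
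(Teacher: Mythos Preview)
Your overall strategy is the same as the paper's: start from Eq.~(\ref{Eq:Pre_Precise_Lower_Bound}), split $\widetilde V_N=A+(B+B^*)+(1-\pi_K)V_N(1-\pi_K)$, reduce the first two pieces to the $a$–variable identities Eq.~(\ref{Eq:Identity_Hamiltonian_A})–(\ref{Eq:Identity_Hamiltonian_B}), and then repackage into $\lambda_{0,0}(a_0^\dagger)^3a_0^3+\mathbb Q_K+\mathcal E_*+\mathcal E_*^\dagger$ (dropping the nonnegative $(1-\pi_K)V_N(1-\pi_K)$–block for the lower bound, or identifying it with $\mathcal E_{\mathcal P}(\xi)$ when $K=0$). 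Your verification that $\xi_{ijk}=a_ia_ja_k$ on $\mathcal L_K$ is correct and is exactly what is needed for the $A$–part and for the $\mathcal L_K$–side of $B$.

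The one place where your argument goes wrong is the last paragraph. The cross term $(\delta_1\psi)_{ijk}^\dagger\, a_\ell a_m a_n$ for $(\ell m n)\in\mathcal L_K$ is \emph{already normally ordered}: $(\delta_1\psi)^\dagger$ is of the form $(a_0^\dagger)^4a_p\cdot\overline{\eta}$, and multiplying on the right by annihilation operators keeps normal order. Hence it is its own fully–contracted part and is \emph{not} absorbed into $\widetilde{\mathcal E}$, whose definition concerns only the non–fully–contracted parts of the squares in Eq.~(\ref{Eq:Non-Fully_Contracted}). The correct resolution is different and purely kinematic: if $(ijk)$ has exactly one zero and $(\ell mn)\in\mathcal L_K$, translation invariance forces $i+j+k=\ell+m+n$. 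When $(\ell mn)=(0,0,0)$ this gives, say, $(ijk)=(0,j,-j)$, and then $(\delta_1\psi)_{0,j,-j}=\eta_{j,-j,0}\,a_0^\dagger a_0^4=0$ because $\eta$ vanishes whenever one index is zero. This already suffices for the identity Eq.~(\ref{Eq:Precise_Identity}) at $K=0$, since $\mathcal L_0=\{(0,0,0)\}$. For $K>0$ the same mechanism handles the $(0,0,0)$–channel but leaves a residual contribution from $(\ell mn)=(0,0,k')$ with $0<|k'|\le K$, of the form $\sum_{j'}\lambda_{k',j'}\,\overline{\eta_{j',k'-j',-k'}}\,(a_0^\dagger)^4a_0^2\,a_{-k'}a_{k'}+\mathrm{H.c.}$; this term is not in $\widetilde{\mathcal E}$, $\mathcal E_*$, or the $\Theta$–block as stated. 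It is, however, of size $O(N^{-1/2}\mathcal N)$ (by the bounds in Lemma~\ref{Lem:Coefficient_Control}) and hence harmless for the downstream estimates, but your stated mechanism for disposing of it is not the right one.
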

 \begin{proof}
    Using Eq.~(\ref{Eq:Pre_Precise_Lower_Bound}) and the definition of $\widetilde{V}_N$ in Eq.~(\ref{Eq:Definition_tilde_V}), as well as the identities in Eq.~(\ref{Eq:Identity_Hamiltonian_A}) and Eq.~(\ref{Eq:Identity_Hamiltonian_B}), we obtain
     \begin{align*}
         & H_N=\lambda_{0,0}(a_0^\dagger)^3 a_0^3  \! + \! (\gamma_N \! - \! \sigma_N)  N^{-4}(a_0^\dagger)^4 a_0^4 \! - \! \mu_N  N^{-2}(a_0^\dagger)^2 a_0^2  +\sum_{k}|k|^2  d_k^\dagger d_k+\mathbb{Q}_K\\
         &  \ \ \ \ \ \ \ \ \ \ + \frac{1}{6}\sum_{i j k,\ell m n}\left((1-\pi_K)V_N(1-\pi_K)\right)_{i j k,\ell m n}\xi_{i j k}^\dagger \xi_{\ell m n}\\
              & \ \ \ \ \ \ \ \ \ \  -\left(\sum_{ijk, \ell m n}(\Theta)_{ijk} \,   \widetilde{\psi}_{ijk}^\dagger \,  a_0^\dagger a_0^4 + \mathrm{H.c.}\right) + \left(\mathcal{E}_*+\mathcal{E}_*^\dagger \right) -  \widetilde{\mathcal{E}}.
     \end{align*}
     Since $\mathbb{Q}_0=0$ and
     \begin{align*}
         & \frac{1}{6}\sum_{i j k,\ell m n}\left((1-\pi_K)V_N(1-\pi_K)\right)_{i j k,\ell m n}\xi_{i j k}^\dagger \xi_{\ell m n}\geq 0,\\
          & \frac{1}{6}\sum_{i j k,\ell m n}\left((1-\pi_0)V_N(1-\pi_0)\right)_{i j k,\ell m n}\xi_{i j k}^\dagger \xi_{\ell m n}=\mathcal{E}_\mathcal{P}(\xi),
     \end{align*}
     we immediately obtain Eq.~(\ref{Eq:Precise_Lower_Bound}), respectively Eq.~(\ref{Eq:Precise_Identity}).
 \end{proof}

\subsection{Analysis of the Error Terms}
\label{Subsection:E_total}

In the following we are providing an explicit representation of the error term $\widetilde{\mathcal{E}}$ introduced in Eq.~(\ref{Eq:Representation_Of_Second_Order_Transformations}). For this purpose let us introduce the two-particle state $(\varphi_2^0)_{jk}:=N(T_2-1)_{jk,00}$, the three-particle states $(\varphi^0_3)_{ijk}:=\frac{N^{\frac{3}{2}}}{2}(T-1)_{ijk,0 0 0}$ and for $m\in (2\pi \mathbb{Z})^3\setminus \{0\}$
\begin{align*}
    (\varphi_3^m)_{ijk}:=\frac{N^{\frac{3}{2}}}{2}(T-1)_{ijk,m 0 0}+\frac{N^{\frac{3}{2}}}{2}(T-1)_{ijk,0 m 0}+\frac{N^{\frac{3}{2}}}{2}(T-1)_{ijk,0 0 m},
\end{align*}
and the four particle state $(\varphi^0_4)_{uijk}:=\frac{N^2}{6} (T_4-1)_{uijk,0000}$ as well as $(\varphi_4)_{uijk}:=N^2 (T_4-1+\chi)_{uijk,0000}$. Furthermore, let us introduce for $\varphi\in L^2(\Lambda^s)$ and $\psi\in L^2(\Lambda^t)$ with $s,t\geq 0$, and $\ell\leq \min\{s,t\}$, the operator
\begin{align}
\label{Eq:G-Kernel}
   G_{\ell}(\varphi,\psi):=\mathrm{Tr}_{1\rightarrow \ell} \! \left[(-\Delta)_{x_1} \varphi \psi^\dagger \right]
\end{align}
acting on $L^2(\Lambda^{t-\ell})\longrightarrow L^2(\Lambda^{s-\ell})$. In coordinates, the operator is given by
\begin{align*}
    \Big(G_{\ell}(\varphi,\psi)\Big)_{i_1\dots i_{s-\ell},j_1\dots j_{t-\ell}} \! :=  \sum_{k_1\dots k_\ell} \! |k_1|^2 \varphi_{k_1\dots k_\ell i_1\dots i_{s-\ell}}\overline{\psi}_{k_1\dots k_\ell j_1\dots j_{t-\ell}}.
\end{align*}
Finally let $\widetilde{G}:=\mathrm{Tr}_{1\rightarrow 3}\! \left[\widetilde{V}_N\otimes 1 \varphi_4 \varphi_4^\dagger \right]$. With this at hand we can write
\begin{align}
\nonumber
   \widetilde{\mathcal{E}} & = \!  \! \sum_{(s,t,\ell,m,n)\in \mathcal{S}}C_{s,t,\ell}\underset{j_1\dots j_{t-\ell}}{\sum_{i_1\dots i_{s-\ell}}}   \!  \!  \Big(G_{\ell}(\varphi_s^m,\varphi_t^n)\Big)_{i_1\dots i_{s-\ell},j_1\dots j_{t-\ell}}a_{i_{s-\ell}}^\dagger \dots a_{i_1}^\dagger  \frac{a_m^\dagger (a_0^\dagger)^{s-1} a_0^{t-1}a_n}{N^{ \frac{s+t}{2}  }} a_{j_1}\dots a_{j_{t-\ell}}\\
   \label{Eq:Term-by-Term}
   & \ \ \ \ \ \  \ \ \ \   \ \  \ \ \  \ \ \ \ +\sum_{i,j}\left(\widetilde{G}\right)_{i,j}a_i^\dagger \frac{(a_0^\dagger)^4 a_0^4}{N^4} a_j,
\end{align}
where $C_{s,t,\ell}:=\frac{(s-1)! (t-1)!}{(\ell-1)! (s-\ell)! (t-\ell)!}$ and the set of allowed configurations $(s,t,\ell,m,n)\in\mathcal{S}$ is defined by the rules $\ell\leq \min\{s,t\}$ and $\ell<\max\{s,t\}$, where $2\leq s,t\leq 4$ and $|n|,|m|\leq K$ with $m=0$, respectively $n=0$, in case $s\neq 3$, respectively $t\neq 3$. Note that the criterion $\ell<\max\{s,t\}$ makes sure that we only include non-fully contracted parts of the first product in Eq.~(\ref{Eq:Non-Fully_Contracted}) and the constant $C_{s,t,\ell}$ counts the various different ways of contracting, while $\widetilde{G}$ is the kernel associated with the non-fully contracted part of the second product in Eq.~(\ref{Eq:Non-Fully_Contracted}). Estimating the various terms appearing in Eq.~(\ref{Eq:Term-by-Term}) individually allows us to prove the following Lemma \ref{Lemma:Main_Error_Precise}.

\begin{lem}
\label{Lemma:Main_Error_Precise}
    There exists a constant $C>0$ and a function $\epsilon:[0,\infty)\longrightarrow (0,C)$ satisfying $\lim_{K\rightarrow \infty}\epsilon(K)=0$, such that we have for $K$ as in the definition of $\pi_K$ below Eq.~(\ref{Eq:Definition_pi_K})
    \begin{align*}
        \pm \widetilde{\mathcal{E}}\leq C N^{-\frac{1}{4}} \! \sum_k |k|^2 c_k^\dagger \! \left(\frac{\mathcal{N}}{\sqrt{N}}+1\right)^2  \!  \! c_k+CN^{-\frac{1}{4}}  \! \left(\frac{\mathcal{N}}{\sqrt{N}}+1\right)^2 \!  \!  \left(\mathcal{N}+\sqrt{N}\right)+\epsilon(K) \mathcal{N}.
    \end{align*}
\end{lem}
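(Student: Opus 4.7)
The plan is to bound $\widetilde{\mathcal{E}}$ term by term using the explicit representation in Eq.~(\ref{Eq:Term-by-Term}). For each configuration $(s,t,\ell,m,n)\in\mathcal{S}$ the associated summand factors, in normal order, into an outer creation/annihilation string in the indices $i_1,\dots,i_{s-\ell},j_1,\dots,j_{t-\ell}$, a central zero-mode bookkeeping piece
\[
Y_{s,t,m,n}:=N^{-(s+t)/2}\,a_m^\dagger(a_0^\dagger)^{s-1}a_0^{t-1}a_n,
\]
which has operator norm $\lesssim 1$ and commutes with $\mathcal{N}$ up to integer shifts, and the kernel $G_\ell(\varphi_s^m,\varphi_t^n)$ acting from $L^2(\Lambda)^{\otimes(t-\ell)}$ to $L^2(\Lambda)^{\otimes(s-\ell)}$. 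The final summand in Eq.~(\ref{Eq:Term-by-Term}) involving $\widetilde G$ is of exactly the same type and is handled identically.

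The principal tool is a weighted Schur test on the kernel combined with the abstract operator inequality from Corollary~\ref{Cor:Operator_Estimates}. Using the sharp Fourier bounds on $T-1$, $T_2-1$ and $T_4-1+\chi$ supplied by Lemma~\ref{Lem:Coefficient_Control}, one obtains for a suitable $\tau<\tfrac{1}{2}$
\[
\bigl\|\mathcal{K}_{\tau,s-\ell}^{-1/2}\,G_\ell(\varphi_s^m,\varphi_t^n)\,\mathcal{K}_{\tau,t-\ell}^{-1/2}\bigr\|\ \lesssim\ N^{-\alpha(s,t,\ell)},
\]
with $\mathcal{K}_{\tau,r}:=\sum_{p=1}^{r}(-\Delta_{x_p})^\tau$ and an exponent $\alpha(s,t,\ell)$ tracked in Lemma~\ref{Lem:Coefficient_Control}. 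Feeding this together with $\|Y_{s,t,m,n}\|\lesssim 1$ into Corollary~\ref{Cor:Operator_Estimates} yields an estimate of the form
\[
\pm\,\mathcal{A}_{s,t,\ell,m,n}\ \lesssim\ N^{-\alpha(s,t,\ell)}\!\left(\sum_{k}|k|^2 c_k^\dagger F_{s,t}\,c_k\ +\ F_{s,t}(\mathcal{N}+N^\tau)^{r_{s,t}}\right),
\]
where $F_{s,t}$ is a monomial in $\mathcal{N}/N$ dictated by the number of non-trivial zero-mode factors in $Y_{s,t,m,n}$. A case-by-case analysis over the finitely many allowed $(s,t,\ell)$ then shows that, after choosing $\tau$ close enough to $\tfrac12$, each such contribution can be absorbed into $N^{-1/4}\sum_{k}|k|^2 c_k^\dagger(\mathcal{N}/\sqrt{N}+1)^2 c_k$ or into $N^{-1/4}(\mathcal{N}/\sqrt N+1)^2(\mathcal{N}+\sqrt N)$; this is the source of the uniform prefactor $N^{-1/4}$.

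The main obstacle will be the configurations $(s,t,\ell,m,n)=(3,3,\ell,m,n)$ with $m,n\neq 0$, for which $Y_{3,3,m,n}$ no longer effectively acts as a scalar on zero-momentum states and the naive kernel bound fails to produce any $N^{-1/4}$ gain. For these terms one has to exploit the smoothness of the three-body scattering kernel: the difference $\varphi_3^m-\varphi_3^0$ enjoys an improved estimate of order $|m|/\sqrt{N}$ in the weighted norm entering the Schur test. One then decomposes the contribution into a $\varphi_3^0$-piece, whose fully contracted part is already captured by the quadratic term $\mathbb{Q}_K$ from Eq.~(\ref{Eq:Definition_V_Quadratic}) appearing in the lower bound Eq.~(\ref{Eq:Precise_Lower_Bound}), and a residual piece whose norm is controlled by a tail in the momenta $|m|,|n|\le K$ that tends to zero as $K\to\infty$. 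This yields the $\epsilon(K)\mathcal{N}$ contribution. Combining the uniform $N^{-1/4}$-estimate for all other configurations with this $K$-dependent control concludes the proof.
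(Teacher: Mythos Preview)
Your overall strategy is the same as the paper's: use the term-by-term representation in Eq.~(\ref{Eq:Term-by-Term}), bound each kernel $\mathcal{K}_{\tau,s-\ell}^{-1/2}G_\ell(\varphi_s^m,\varphi_t^n)\mathcal{K}_{\tau,t-\ell}^{-1/2}$ via a weighted Schur test (with the Fourier decay from Lemma~\ref{Lem:Coefficient_Control} and Lemma~\ref{Lem:Coefficient_Control_III}), and feed the result into Corollary~\ref{Cor:Operator_Estimates}. That part is correct.

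However, you have misidentified the source of the $\epsilon(K)\mathcal{N}$ term. It does \emph{not} come from the $(s,t)=(3,3)$ configurations with $m,n\neq 0$. In the paper's proof the $\epsilon(K)$ arises from the single configuration $(s,t,\ell)=(2,2,1)$, i.e.\ from the two-particle kernel $G_1(\varphi_2^0,\varphi_2^0)$. The reason is that $\varphi_2^0$ is built from $T_2-1$ as in Eq.~(\ref{Eq:Definition_T_2}), which is supported on momenta $|\ell|>K$; hence $\|G_1(\varphi_2^0,\varphi_2^0)\|$ is a high-momentum tail sum that tends to zero as $K\to\infty$. This term is then estimated with the unweighted choice $\sigma=\tau=0$ (the improved bound of type Eq.~(\ref{Eq:First_Estimate_tilde_Error'})), giving exactly $\epsilon(K)\mathcal{N}$.

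Your proposed mechanism for the $(3,3,\ell,m,n)$ terms---splitting $\varphi_3^m=\varphi_3^0+(\varphi_3^m-\varphi_3^0)$ and claiming the $\varphi_3^0$-piece is ``already captured by $\mathbb{Q}_K$''---does not work as stated: $\widetilde{\mathcal{E}}$ consists by definition of the \emph{non}-fully contracted products in Eq.~(\ref{Eq:Non-Fully_Contracted}), so there is no fully contracted part to be absorbed by $\mathbb{Q}_K$. The paper simply treats all $(3,3)$ terms with the same kernel bounds (those for $G_\ell(\varphi_3^m,\varphi_3^m)$ listed in the proof) and the choice $\delta=N^{(t-s)/4}$; no refined $\varphi_3^m-\varphi_3^0$ estimate is needed.
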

\begin{proof}
   In the following let $\tau\leq \frac{1}{2}$. Using the fact that $\|\frac{a_m^\dagger (a_0^\dagger)^{s-1} a_0^{t-1}a_n}{N^{ \frac{s+t}{2}  }}\|\leq 1$, there exists by Corollary \ref{Cor:Operator_Estimates} a constant $C>0$ such that for $\delta>0$ and $s,t\geq \ell+1$ 
    \begin{align}
    \nonumber
        & \pm\left(\underset{j_1\dots j_{t-\ell}}{\sum_{i_1\dots i_{s-\ell}}}   \!  \!  \Big(G_{\ell,\sigma,\tau}(\varphi_s^m,\varphi_t^n)\Big)_{i_1\dots i_{s-\ell},j_1\dots j_{t-\ell}}(\Phi_{\sigma,s})_{i_1\dots i_{s-\ell}}^\dagger \frac{a_m^\dagger 
        (a_0^\dagger)^{s-1} a_0^{t-1}a_n}{N^{ \frac{s+t}{2}  }}(\Phi_{\tau,t})_{j_1\dots j_{t-\ell}}+\mathrm{H.c.}\right)\\
        \nonumber
        &  \ \ \ \  \ \  \ \ \ \  \leq C \left\|\mathcal{K}_{\tau,s-\ell}^{-\frac{1}{2}} G_{\ell}(\varphi_s^m,\varphi_t^n)\mathcal{K}_{\tau,t-\ell}^{-\frac{1}{2}}\right\| \! \Bigg( \sum_k |k|^{2 }c_k^\dagger\left( \delta \mathcal{N}^{s-\ell-1}+\delta^{-1}\mathcal{N}^{t-\ell-1}\right)c_k  \\
        \label{Eq:First_Estimate_tilde_Error}
        & \ \ \ \  \ \  \ \ \ \  \ \ \ \  \ \ \ \  \ \ \ \ + \left(\mathcal{N}+\sqrt{N}\right) \left(\delta \mathcal{N}^{s-\ell-1}+\delta^{-1}\mathcal{N}^{t-\ell-1}\right) \! \Bigg).
    \end{align}
    For $\tau=\sigma=0$, we have the improved bound on the left hand side of Eq.~(\ref{Eq:First_Estimate_tilde_Error})
    \begin{align}
    \label{Eq:First_Estimate_tilde_Error'}
        C \left\|G_{\ell}(\varphi_s^m,\varphi_t^n)\right\|\left(\delta \mathcal{N}^{s-\ell}+\delta^{-1}\mathcal{N}^{t-\ell}\right) .
    \end{align}
    In the case that either $s$ or $t$ is equal to $\ell$, e.g. $t=\ell$ we obtain by Corollary \ref{Cor:Operator_Estimates}
    \begin{align}
    \nonumber
         &  \ \ \ \  \pm\left(\sum_{i_1\dots i_{s-\ell}}  \!  \!  \Big(G_{\ell,\sigma,\tau}(\varphi_s^m,\varphi_t^n)\Big)_{i_1\dots i_{s-\ell}}(\Phi_{\sigma,s})_{i_1\dots i_{s-\ell}}^\dagger \frac{a_m^\dagger (a_0^\dagger)^{s-1} a_0^{t-1}a_n}{N^{ \frac{s+t}{2}  }}+\mathrm{H.c.}\right)\\
                 \label{Eq:Second_Estimate_tilde_Error}
        & \leq C \left\|\mathcal{K}_{\tau,s-\ell}^{-\frac{1}{2}} G_{\ell}(\varphi_s^m,\varphi_t^n)\right\| \! \Bigg( \delta^{-1}+\delta \sum_k |k|^{2 }c_k^\dagger \mathcal{N}^{s-\ell-1}c_k   + \delta \left(\mathcal{N}+\sqrt{N}\right)  \mathcal{N}^{s-\ell-1} \! \Bigg).
    \end{align}
    In order to obtain good estimates on the operator norms $\left\|\mathcal{K}_{\sigma,s-\ell}^{-\frac{1}{2}} G_{\ell}(\varphi_s^m,\varphi_t^n)\mathcal{K}_{\tau,t-\ell}^{-\frac{1}{2}}\right\|$, observe that we obtain by a Cauchy-Schwarz argument 
    \begin{align*}
        \left\|\mathcal{K}_{\sigma,s-\ell}^{-\frac{1}{2}} G_{\ell}(\varphi_s^m,\varphi_t^n)\mathcal{K}_{\tau,t-\ell}^{-\frac{1}{2}}\right\|\leq \sqrt{\left\|\mathcal{K}_{\sigma,s-\ell}^{-\frac{1}{2}} G_{\ell}(\varphi_s^m,\varphi_s^m)\mathcal{K}_{\sigma,s-\ell}^{-\frac{1}{2}}\right\|\left\|\mathcal{K}_{\tau,t-\ell}^{-\frac{1}{2}} G_{\ell}(\varphi_t^n,\varphi_t^n)\mathcal{K}_{\tau,t-\ell}^{-\frac{1}{2}}\right\|},
    \end{align*}
    i.e. it is enough to control the norm of the symmetric ones. In the following we choose $\sigma=\tau=\frac{1}{2}$, except for the case $s=t=2$ where we choose $\sigma=\tau=0$. By Lemma \ref{Lem:Coefficient_Control} and Lemma \ref{Lem:Coefficient_Control_III} we obtain for a suitable $C>0$
    \begin{align*}
       &  \left\|\mathcal{K}_{\frac{1}{2},3}^{-\frac{1}{2}} G_{1}(\varphi_4^0,\varphi_4^0)\mathcal{K}_{\frac{1}{2},3}^{-\frac{1}{2}}\right\|\leq C N^{-\frac{3}{2}},
       & \left\|\mathcal{K}_{\frac{1}{2},2}^{-\frac{1}{2}} G_{1}(\varphi_3^m,\varphi_3^m)\mathcal{K}_{\frac{1}{2},2}^{-\frac{1}{2}}\right\|\leq C N^{-1},\\
       & \left\|\mathcal{K}_{\frac{1}{2},2}^{-\frac{1}{2}} G_{2}(\varphi_4^0,\varphi_4^0)\mathcal{K}_{\frac{1}{2},2}^{-\frac{1}{2}}\right\|\leq C N^{-\frac{3}{2}},
              & \left\|\mathcal{K}_{\frac{1}{2},1}^{-\frac{1}{2}} G_{2}(\varphi_3^m,\varphi_3^m)\mathcal{K}_{\frac{1}{2},1}^{-\frac{1}{2}}\right\|\leq C N^{-\frac{1}{2}},\\
       & \left\|\mathcal{K}_{\frac{1}{2},1}^{-\frac{1}{2}} G_{3}(\varphi_4^0,\varphi_4^0)\mathcal{K}_{\frac{1}{2},1}^{-\frac{1}{2}}\right\|\leq C N^{-1},
       & \left\|G_{3}(\varphi_3^m,\varphi_3^m)\right\|\leq C N.
    \end{align*}
    Furthermore, $\left\| G_{1}(\varphi_2^0,\varphi_2^0)\right\|\leq \epsilon$ in case $K$ is large enough and $\left\| G_{2}(\varphi_2^0,\varphi_2^0)\right\|\leq C \sqrt{N}$, as well as $\left\| \widetilde{G}\right\|\leq C N^{-1}$. Choosing $\delta:=N^{\frac{t-s}{4}}$, and combining the estimates on the operator norms with Eq.~(\ref{Eq:First_Estimate_tilde_Error}), respectively Eq.~(\ref{Eq:First_Estimate_tilde_Error'}), and Eq.~(\ref{Eq:Second_Estimate_tilde_Error}) concludes the proof by Eq.~(\ref{Eq:Term-by-Term}).
\end{proof}

Following the ideas in the proof of Lemma \ref{Lemma:Main_Error_Precise}, we can furthermore compare the operator $\sum_{k}|k|^{2\tau}a_k^\dagger a_k$ with the corresponding operator $\sum_{k}|k|^{2\tau}d_k^\dagger d_k$ in the variables $d_k$ defined in Eq.~(\ref{Eq:Second_Family_of_Variables}). This is the content of the subsequent Lemma \ref{Lem:Comparison_a_d}.

\begin{lem}
    \label{Lem:Comparison_a_d}
    Let $0\leq \tau< \frac{1}{4}$. Then there exists  $K_0,C>0$ such that for $K\geq K_0$, with $K$ as in the definition of $\pi_K$ below Eq.~(\ref{Eq:Definition_pi_K}),
    \begin{align*}
        \sum_{k}|k|^{2\tau}a_k^\dagger a_k\leq C\sum_{k}|k|^{2}d_k^\dagger d_k+CN^{-\frac{3}{2}}\mathcal{N}^3+CN^{\tau}.
    \end{align*}
\end{lem}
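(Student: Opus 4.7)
The plan is to write $a_k=d_k-(d_k-a_k)$ and apply Cauchy--Schwarz
\begin{align*}
a_k^\dagger a_k\leq 2d_k^\dagger d_k+2(d_k-a_k)^\dagger(d_k-a_k),
\end{align*}
together with the trivial bound $|k|^{2\tau}\leq (2\pi)^{2\tau-2}|k|^2$ valid for every $k\in (2\pi\mathbb{Z})^3\setminus\{0\}$ and $\tau<1$. This reduces the claim to
\begin{align*}
\sum_k|k|^{2\tau}(d_k-a_k)^\dagger(d_k-a_k)\leq CN^{-\frac{3}{2}}\mathcal{N}^3+CN^{\tau}
\end{align*}
for $K$ large enough. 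Decomposing $d_k-a_k=\alpha_k+\beta_k+\gamma_k$ according to the three contributions in Eq.(\ref{Eq:Second_Family_of_Variables}) coming from $T_2-1$, $T-1$ and $T_4-1$ respectively, and using $(d_k-a_k)^\dagger(d_k-a_k)\leq 3(\alpha_k^\dagger\alpha_k+\beta_k^\dagger\beta_k+\gamma_k^\dagger\gamma_k)$, it suffices to control each piece separately.

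For the $T_2$-contribution, the explicit formula in Eq.(\ref{Eq:Definition_T_2}) yields $\alpha_k=3N\lambda_{0,k}|k|^{-2}a_{-k}^\dagger a_0^2\,\mathds{1}(|k|>K)$ for $k\neq 0$. Inserting the pointwise bound $|\lambda_{0,k}|\lesssim N^{-2}(1+|k|^2/N)^{-1}$ from Lemma \ref{Lem:Coefficient_Control} together with $a_0^{\dagger 2}a_0^2\leq N^2$ and $a_{-k}a_{-k}^\dagger=1+a_{-k}^\dagger a_{-k}$, the sum $\sum_k|k|^{2\tau}\alpha_k^\dagger\alpha_k$ reduces to $\sum_{|k|>K}|k|^{2\tau-4}(1+|k|^2/N)^{-2}(1+a_{-k}^\dagger a_{-k})$, which is bounded by $C+K^{2\tau-4}\mathcal{N}\leq C$ uniformly in $N$ provided $\tau<\frac{1}{2}$ and $K\geq 1$.

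The $T$-contribution $\beta_k=c_k-a_k$ coincides with the operator from Eq.(\ref{Eq:Definition_c_variable}) of Section \ref{Sec:Condensation with a Rate}. Writing $\beta_k^\dagger\beta_k$ as a normal-ordered quartic in creation and annihilation operators, and applying the weighted Schur test in the style of the proof of Lemma \ref{Lem:First_Error_Term_Estimate} together with the pointwise decay of $(T-1)_{ijk,\ell mn}$ from Lemma \ref{Lem:Coefficient_Control} and the operator bounds of Corollary \ref{Cor:Operator_Estimates}, gives $\sum_k|k|^{2\tau}\beta_k^\dagger\beta_k\lesssim N^{-\frac{3}{2}}\mathcal{N}^3+N^{\tau}$. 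The same strategy applies to $\gamma_k$: its defining kernel $(T_4-1)_{uijk,v\ell mn}$ is implicit through the four-particle pseudo-inverse $R_4$ in Eq.(\ref{Eq:Definition_T_4}), but its pointwise decay is provided by Lemma \ref{Lem:Coefficient_Control_III}, and a Schur-type bound of the same form $N^{-\frac{3}{2}}\mathcal{N}^3+N^{\tau}$ follows.

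I expect the main obstacle to be the $\gamma_k$-piece, since $T_4-1$ is only defined implicitly via the four-particle pseudo-inverse $R_4$ of Eq.(\ref{Eq:Def_R_4}), so its matrix entries have to be controlled through the more delicate estimates of Lemma \ref{Lem:Coefficient_Control_III}. Moreover, the 8-index structure of $\gamma_k^\dagger\gamma_k$ introduces an additional momentum summation compared to $\beta_k$, which is ultimately responsible for the restriction $\tau<\frac{1}{4}$ and for the precise form $CN^{-\frac{3}{2}}\mathcal{N}^3+CN^{\tau}$ of the error once the factors of $N$ are tracked carefully through the Schur test.
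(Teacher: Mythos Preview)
Your overall plan is sound and close to the paper's, but there is a concrete error in the $T_2$--piece. You claim that
\[
\sum_{|k|>K}|k|^{2\tau-4}\Big(1+\tfrac{|k|^2}{N}\Big)^{-2}\big(1+a_{-k}^\dagger a_{-k}\big)\leq C+K^{2\tau-4}\mathcal{N}\leq C.
\]
The last inequality is false: $\mathcal{N}$ is an unbounded operator (it can be as large as $N$), so $K^{2\tau-4}\mathcal{N}$ is \emph{not} bounded by a constant for any fixed $K$. Consequently your target bound $\sum_k|k|^{2\tau}(d_k-a_k)^\dagger(d_k-a_k)\leq CN^{-3/2}\mathcal{N}^3+CN^\tau$ cannot hold as stated; for instance on states with $\mathcal{N}\sim N^{1/2}$ the left side contains a contribution of order $K^{2\tau-4}N^{1/2}$, while the right side is at most $O(N^{1/4})$.

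The fix is to keep the offending term and absorb it back into the left--hand side of the original inequality. You have $\mathcal{N}\leq\sum_k|k|^{2\tau}a_k^\dagger a_k$, so the estimate becomes
\[
\sum_k|k|^{2\tau}a_k^\dagger a_k\leq C\sum_k|k|^{2}d_k^\dagger d_k+CK^{2\tau-4}\sum_k|k|^{2\tau}a_k^\dagger a_k+CN^{-3/2}\mathcal{N}^3+CN^\tau,
\]
and choosing $K_0$ large enough that $CK_0^{2\tau-4}<1$ yields the claim. This is precisely the absorption step the paper performs at the end of its proof.

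As for the route taken: the paper does not bound $d_k-a_k$ directly but instead goes through $c_k$, first invoking Lemma~\ref{lem:Comparison} to compare $a_k$ with $c_k$ (which already contains the $T$--contribution $\beta_k$), and then only needs to estimate $d_k-c_k$, i.e.\ the $T_2$-- and $T_4$--pieces. This has the advantage that the bound on your $\beta_k$ is simply quoted rather than redone, whereas your sketch for $\beta_k$ (``weighted Schur test in the style of Lemma~\ref{Lem:First_Error_Term_Estimate}'') is essentially a reproof of the hard part of Lemma~\ref{lem:Comparison}. Your direct approach is fine in principle, but you should be aware that the $\beta_k$ and $\gamma_k$ parts require more than a one--line invocation of the Schur test: one has to normal--order the resulting expressions as in Eq.~(\ref{Eq:Increment_Analysis}) and track the fully contracted pieces, which produce the $N^\tau$ term.
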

\begin{proof}
    By Lemma \ref{lem:Comparison}, there exists a constant $C>0$ such that
    \begin{align*}
         \sum_{k}|k|^{2\tau}a_k^\dagger a_k\leq C\sum_{k}|k|^{2\tau}c_k^\dagger c_k+CN^{\tau}.
    \end{align*}
    Furthermore, we have by Cauchy-Schwarz the estimate
    \begin{align*}
        \sum_{k}|k|^{2\tau}c_k^\dagger c_k\leq 2\sum_{k}|k|^{2\tau}d_k^\dagger d_k+\sum_{k}|k|^{2\tau}(d_k-c_k)^\dagger (d_k-c_k).
    \end{align*}
    Similar to the definition of $G_\ell$ in Eq.~(\ref{Eq:G-Kernel}) let us introduce 
    \begin{align*}
      G'_{\ell}: & =\mathrm{Tr}_{1\rightarrow \ell} \! \left[(-\Delta)_{x_1} \varphi^0_2 (\varphi^0_2)^\dagger \right],\\
   G''_{\ell}: & =\mathrm{Tr}_{1\rightarrow \ell} \! \left[(-\Delta)_{x_1} \varphi_4^0 (\varphi_4^0)^\dagger \right].
\end{align*}
A similar computation as in Eq.~(\ref{Eq:Term-by-Term}) together with a Cauchy-Schwarz estimate yields 
\begin{align*}
  & \sum_{k}|k|^{2\tau}(d_k-c_k)^\dagger (d_k-c_k) \leq CG'_2 \frac{ (a_0^\dagger)^{2} a_0^{2}}{N^{ 2  }}  +C\sum_{i,j}(G'_1)_{i,j}a_i^\dagger \frac{ (a_0^\dagger)^{2} a_0^{2}}{N^{ 2  }} a_j\\
  & \ \ \  \ \ \ +C\sum_{\ell=1}^4\underset{j_1\dots j_{4-\ell}}{\sum_{i_1\dots i_{4-\ell}}}   \!  \!  \Big(G''_{\ell}\Big)_{i_1\dots i_{4-\ell},j_1\dots j_{4-\ell}}a_{i_{4-\ell}}^\dagger \dots a_{i_1}^\dagger  \frac{ (a_0^\dagger)^{4} a_0^{4}}{N^{ 4  }} a_{j_1}\dots a_{j_{4-\ell}}.  
\end{align*}
for a suitable constant $C>0$. Utilizing the estimates in Lemma \ref{Lem:Coefficient_Control_III} we obtain that $|G'_2|\lesssim 1$, $\|G'_2\|\lesssim \frac{1}{K^2}$, $|G''_4|\lesssim N^{\tau-\frac{1}{2}}\leq 1$ and $\|G''_\ell\|\lesssim N^{\tau-2}\leq N^{-\frac{3}{2}}$ for $\ell\leq 3$. Consequently there exists a $C>0$ such that
\begin{align*}
     \sum_{k}|k|^{2\tau}(d_k-c_k)^\dagger (d_k-c_k) \leq C+\frac{C}{K^2}\mathcal{N}+CN^{-\frac{3}{2}}(\mathcal{N}+1)^3.
\end{align*}
Using $\mathcal{N}\leq  \sum_{k}|k|^{2\tau}a_k^\dagger a_k$ and $\sum_{k}|k|^{2\tau}d_k^\dagger d_k\leq \sum_{k}|k|^{2}d_k^\dagger d_k$ we therefore obtain
\begin{align*}
    \sum_{k}|k|^{2\tau}a_k^\dagger a_k\leq \frac{C}{K^2}\sum_{k}|k|^{2\tau}a_k^\dagger a_k+C\sum_{k}|k|^{2}d_k^\dagger d_k+CN^{-\frac{3}{2}}\mathcal{N}^3+CN^{\tau} .
\end{align*}
Choosing $K$ large enough such that $\frac{C}{K^2}<1$ concludes the proof.
\end{proof}

Before we come to the proof of the lower bound in Theorem \ref{Eq:Main_Theorem_Introduction} in the following Subsection \ref{Subsection:Proof_of_Theorem_Main_in_Text}, we are going to derive sufficient estimates on 
\begin{align*}
   \sum_{ijk, \ell m n}(\Theta)_{ijk} \,   \widetilde{\psi}_{ijk}^\dagger \,  a_0^\dagger a_0^4 + \mathrm{H.c.}
\end{align*}
in Lemma \ref{Lemma:X_3}. Lemma \ref{Lem:General_Comparison} is an auxiliary result required in the proof of Lemma \ref{Lemma:X_3}.

\begin{lem}
\label{Lem:General_Comparison}
  Then there exists a $C>0$, such that
    \begin{align*}
\sum_k |k|^{2} \tilde{c}_k^\dagger \tilde{c}_k\leq C\left(\sum_k |k|^2 c_k^\dagger c_k+\mathcal{N}+1\right),        
    \end{align*}
where we define $ \tilde{c}_k:=a_k +\frac{1}{2}\sum_{ij} (T-1)_{ijk,000}a_i^\dagger a_j^\dagger a_0^3$.
\end{lem}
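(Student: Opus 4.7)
My plan is to reduce the estimate to a comparison between the operators $c_k$ and $\tilde{c}_k$, noting that they differ only by the contributions coming from $(\ell,m,n) \in \mathcal{L}_K \setminus \{(0,0,0)\}$ in Eq.~(\ref{Eq:Definition_c_variable}). For such indices, one of $\ell,m,n$ equals some $s$ with $0<|s|\leq K$ and the other two vanish, so I can write
\begin{align*}
c_k - \tilde{c}_k = \frac{1}{2}\sum_{ij}\sum_{0<|s|\leq K}\beta^{(s)}_{ijk}\, a_i^\dagger a_j^\dagger a_0^2 a_s,
\end{align*}
with $\beta^{(s)}_{ijk}:=(T-1)_{ijk,s00}+(T-1)_{ijk,0s0}+(T-1)_{ijk,00s}$. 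The first step is then the Cauchy-Schwarz splitting
\begin{align*}
\sum_k |k|^2 \tilde{c}_k^\dagger \tilde{c}_k \leq 2\sum_k |k|^2 c_k^\dagger c_k + 2\sum_k |k|^2 (c_k-\tilde{c}_k)^\dagger(c_k-\tilde{c}_k),
\end{align*}
which reduces the lemma to bounding the second term by $C(\mathcal{N}+1)$.

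Next, I would expand the second term in normal-ordered form as
\begin{align*}
\sum_k |k|^2 (c_k-\tilde{c}_k)^\dagger(c_k-\tilde{c}_k) = \sum_{0<|s|,|s'|\leq K}\sum_{ij,i'j'} M^{(s,s')}_{ij,i'j'}\, a_{s'}^\dagger (a_0^\dagger)^2 a_{i'}a_{j'} a_i^\dagger a_j^\dagger a_0^2 a_s,
\end{align*}
with the kernel
\begin{align*}
M^{(s,s')}_{ij,i'j'}:=\frac{1}{4}\sum_k |k|^2 \overline{\beta^{(s')}_{i'j'k}}\,\beta^{(s)}_{ijk}.
\end{align*}
This is structurally the same as the object $G^{(I,I')}$ encountered in the proof of Lemma \ref{Lem:First_Error_Term_Estimate}, except that the relevant indices are restricted to the finite range $0<|s|,|s'|\leq K$ (so only finitely many pairs $(s,s')$ appear). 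By the quantitative decay of the matrix elements $(T-1)_{ijk,s00}$ established in Lemma \ref{Lem:Coefficient_Control}, the weighted Schur test yields $\|\mathcal{K}_{\tau,2}^{-1/2} M^{(s,s')}\mathcal{K}_{\tau,2}^{-1/2}\| \lesssim N^{-4}$ for an appropriate exponent $\tau$, uniformly in $s,s'$.

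Combining this operator-norm bound with the trivial estimate $(a_0^\dagger)^2 a_0^2 \leq N^2$, the identity $\sum_{0<|s|\leq K} a_s^\dagger a_s \leq \mathcal{N}$, and Corollary \ref{Cor:Operator_Estimates} to handle the normal-ordering and the commutators that arise when pulling $a_{i'}a_{j'}$ through $a_i^\dagger a_j^\dagger$, gives
\begin{align*}
\sum_k |k|^2 (c_k-\tilde{c}_k)^\dagger(c_k-\tilde{c}_k) \leq C(\mathcal{N}+1),
\end{align*}
which completes the proof. The main technical point is invoking the correct coefficient bound on $\beta^{(s)}_{ijk}$ from Lemma \ref{Lem:Coefficient_Control}; once that is in place, the operator-theoretic manipulations are routine because the set of low momenta $\{s:0<|s|\leq K\}$ is finite and every factor $(a_0^\dagger)^2 a_0^2$ is controlled by $N^2$, which is absorbed by the $N^{-4}$ decay of the kernel, leaving only an excited-particle count as residual size.
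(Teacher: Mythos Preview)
Your approach is essentially the paper's: both start from the Cauchy--Schwarz splitting, write $c_k-\tilde c_k=\tfrac12\sum_{ij}\sum_{0<|s|\le K}\beta^{(s)}_{ijk}a_i^\dagger a_j^\dagger a_0^2 a_s$, and control the resulting kernel via Lemma~\ref{Lem:Coefficient_Control} together with the normal-ordering machinery. The paper phrases this by specializing the decomposition Eq.~(\ref{Eq:Increment_Analysis}) to $s=0$ and $I,I'\neq 0$, exploiting the improved bound $\pm X^{I,I'}_m\lesssim N^2\mathcal N$ (one factor $N$ becomes $\mathcal N$ because $I,I'$ carry a nonzero momentum), which is exactly your ``$a_s^\dagger(\cdots)a_{s'}$ gives $\mathcal N$ instead of $N$'' observation.

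One inaccuracy worth noting: you assert that Corollary~\ref{Cor:Operator_Estimates} yields $\sum_k|k|^2(c_k-\tilde c_k)^\dagger(c_k-\tilde c_k)\le C(\mathcal N+1)$. That corollary always produces a $\sum_k|k|^2 c_k^\dagger(\cdots)c_k$ contribution, and indeed the paper bounds the normal-ordered quartic piece only by $\sum_k|k|^2 a_k^\dagger\tfrac{\mathcal N}{N}a_k$, which is then converted to $\sum_k|k|^2 c_k^\dagger c_k+\mathcal N+1$ via Lemma~\ref{lem:Comparison}. So the honest intermediate bound is $\le C(\sum_k|k|^2 c_k^\dagger c_k+\mathcal N+1)$, not $\le C(\mathcal N+1)$. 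This is harmless for the lemma, since its conclusion already contains the $c$-kinetic term.
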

\begin{proof}
    Similar to Eq.~(\ref{Eq:Increment_Analysis}), we can write
    \begin{align*}
       \sum_{k}|k|^{2}(c_k \! - \! \tilde{c}_k)^\dagger  (c_k \! - \! \tilde{c}_k) \! = \! \sum_{I,I'\neq 0} \! \left( \! f^{I,I'} X^{I,I'}_0 \! + \! \sum_{k\neq 0} \! g^{I,I'}_k a_k^\dagger X^{I,I'}_1 a_k \! + \! \frac{1}{2}\Phi_{1}^\dagger \left(\widetilde{\Upsilon}^{(I,I')}_{1,1} \! + \! \mathrm{H.c.} \! \right) \!  \Phi_1 \! \right) ,
    \end{align*}
    where $f^{I,I'},g_k^{I,I'}, \widetilde{\Upsilon}^{(I,I')}_{1,1}, X^{I,I'}_0$ and $X^{I,I'}_1$ are defined below Eq.~(\ref{Eq:Increment_Analysis}) for the concrete choice $s:=0$ and $\Phi_1$ is defined at the start of the proof of Lemma \ref{lem:Comparison}. Using $I,I'\neq 0$, we obtain the improved estimates $\pm X^{I,I'}_0\lesssim N^2 \mathcal{N}$ and $\pm X^{I,I'}_1\lesssim N^2 \mathcal{N}$. Consequently
    \begin{align*}
        & \pm (f^{I,I'} X^{I,I'}_0+\mathrm{H.c.})\lesssim \mathcal{N},\\
       & (\sum_{k\neq 0} \! g^{I,I'}_k a_k^\dagger X^{I,I'}_1 a_k+\mathrm{H.c.})\lesssim N^{-\frac{3}{2}}\mathcal{N}^2\leq \mathcal{N}.
    \end{align*}
   Furthermore,
    \begin{align*}
        \pm \frac{1}{2}\Phi_1^\dagger \left(\widetilde{\Upsilon}^{(I,I')}_{1,1} \! + \! \mathrm{H.c.} \! \right) \!  \Phi_1\lesssim \sum_k |k|^2 a_k^\dagger \frac{\mathcal{N}}{N} a_k\lesssim \sum_k |k|^2 c_k^\dagger c_k +\mathcal{N}+1,
    \end{align*}
    where we have used Lemma \ref{lem:Comparison} in the last estimate.
\end{proof}

Having Lemma \ref{Lem:General_Comparison} at hand, we are in a position to verify the subsequent Lemma \ref{Lemma:X_3}.

\begin{lem}
\label{Lemma:X_3}
    Let $0\leq \gamma<\frac{1}{4}$. Then there exists a $C>0$ such that
    \begin{align*}
       \pm \left(\sum_{ijk, \ell m n}(\Theta)_{ijk} \,   \widetilde{\psi}_{ijk}^\dagger \,  a_0^\dagger a_0^4 + \mathrm{H.c.}\right)\leq N^{-\frac{1}{4}}\sum_{k}|k|^2 c_k^\dagger c_k+N^{-\frac{1}{4}} \mathcal{N}+CN^{\frac{1}{4}}.
    \end{align*}
\end{lem}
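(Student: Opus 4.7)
The plan is a two-step Cauchy--Schwarz argument. First I would use $\widetilde\psi_{ijk}=a_ia_ja_k+\eta_{ijk}a_0^3$ (where $\eta_{ijk}$ vanishes unless $i,j,k$ are all non-zero) to split the target expression into a genuinely excited piece
\begin{align*}
A_1:=\sum_{ijk}\Theta_{ijk}\,a_k^\dagger a_j^\dagger a_i^\dagger\,a_0^\dagger a_0^4
\end{align*}
and a purely scalar condensate piece
\begin{align*}
A_2:=\Bigl(\sum_{ijk}\Theta_{ijk}\overline{\eta_{ijk}}\Bigr)(a_0^\dagger)^4 a_0^4.
\end{align*}
For $A_2+\mathrm{H.c.}$, bound the scalar $\sum\Theta\overline\eta$ by the Cauchy--Schwarz inequality $|\sum\Theta\overline\eta|\le\|\Theta\|\,\|\eta\|$, combined with the bound $\|\eta\|^2\lesssim N^{-3}$ already used in the proof of Lemma~\ref{Lem:First_Order_Upper_Bound_Useful} and an analogous bound on $\|\Theta\|^2$ coming from the scattering construction for $T_4$; multiplying by $(a_0^\dagger)^4 a_0^4\le N^4$ produces a contribution of size $O(N^{1/4})$ or smaller, compatible with the $CN^{1/4}$ term on the right-hand side.

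For $A_1+\mathrm{H.c.}$ split the summation according to how many of $i,j,k$ equal $0$; only the cases of no vanishing index and of exactly one vanishing index survive, and translation invariance reduces the free summation by one momentum in each block. I would then apply an operator-valued Cauchy--Schwarz $XY+Y^\dagger X^\dagger\le \epsilon\,XX^\dagger+\epsilon^{-1}Y^\dagger Y$, choosing the split so that $X$ carries one momentum factor of the form $|k|\,\widetilde c_k^\dagger$ with $\widetilde c_k$ as in Lemma~\ref{Lem:General_Comparison}, while $Y$ collects the remaining $a_0,a_0^\dagger$ operators together with a $\Theta$-coefficient divided by $|k|$. After commuting the zero-momentum factors past the non-zero ones, the $XX^\dagger$ side is controlled by $\sum_k |k|^2\,\widetilde c_k^\dagger\widetilde c_k$, which by Lemma~\ref{Lem:General_Comparison} is in turn dominated by $\sum_k|k|^2 c_k^\dagger c_k+\mathcal{N}+1$; the $Y^\dagger Y$ side reduces to a weighted $\ell^2$-norm of $\Theta$ times a fixed power of $N$ coming from the remaining $a_0^\#$ factors. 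Balancing with $\epsilon:=N^{-1/4}$ yields the claimed inequality.

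The main technical obstacle is the sharp size control of $\Theta$ and of its weighted norms of the form $\sum_{ijk}|\Theta_{ijk}|^2/(|i|^2+|j|^2+|k|^2)$ and the partial-zero analogues, which need to be small enough to convert the $N^5$ factor coming from $(a_0^\dagger a_0^4)^\dagger(a_0^\dagger a_0^4)$ into the final $N^{1/4}$. These estimates are the four-body analogues of the three-body bounds on $\eta$ and $\omega$ from Lemma~\ref{Lem:Coefficient_Control} and Lemma~\ref{Lem:Coefficient_Control_III}; they should follow from the definition~(\ref{Eq:Definition_T_4}) of $T_4$ as the pseudo-inverse $R_4$ of a four-body Helmholtz-type operator applied to the source $\Pi_{\mathrm{sym}}Q^{\otimes 4}\,4(V_N\otimes 1)\chi$, in a manner parallel to the three-body scattering equation defining $\omega$. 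Once the weighted-norm bounds on $\Theta$ are in hand the rest is Cauchy--Schwarz bookkeeping.
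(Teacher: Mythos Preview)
Your overall strategy—case split by how many indices vanish, insert $\tilde c_k$, Cauchy--Schwarz, then invoke Lemma~\ref{Lem:General_Comparison}—matches the paper's. The technical obstacle you identify (sharp weighted bounds on the four-body coefficients) is also the right one, and the paper obtains them exactly as you predict, from Lemma~\ref{Lem:Coefficient_Control} and Lemma~\ref{Lem:Coefficient_Control_III}.

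There is, however, a genuine gap in your treatment of $A_2$. The scalar $\sum_{ijk}\Theta_{ijk}\overline{\eta_{ijk}}$ is not small: comparing the two displayed representations of $\frac{1}{24}\sum(X_4)_{\cdots}a^\dagger a^\dagger a^\dagger a^\dagger a\,a\,a\,a$ (the one in Eq.~(\ref{Eq:X_3}) using $a_i^\dagger a_j^\dagger a_k^\dagger$ and the one just below it using $\widetilde\psi_{ijk}^\dagger$) shows that
\[
A_2+A_2^\dagger \;=\; 2\,\mathrm{Re}\Bigl(\sum_{ijk}\Theta_{ijk}\overline{\eta_{ijk}}\Bigr)(a_0^\dagger)^4 a_0^4 \;=\; 2N^{-4}(\gamma_N-\sigma_N)\,(a_0^\dagger)^4 a_0^4,
\]
which by Lemma~\ref{Lemma:Correction_Coefficients} is of order $\sqrt{N}$, not $N^{1/4}$. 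So bounding $A_2$ on its own by Cauchy--Schwarz cannot work; the piece is genuinely too large.

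The paper avoids this by \emph{not} splitting $\widetilde\psi_{ijk}$. It writes $\widetilde\psi_{ijk}=a_ia_j\tilde c_k-\bigl[a_ia_j(\tilde c_k-a_k)-\eta_{ijk}a_0^3\bigr]$ directly; the fully contracted part of $a_ia_j(\tilde c_k-a_k)$ is exactly $\eta_{ijk}a_0^3$, so the bracket contains only non-fully-contracted (hence lower-order) pieces. In your language: if you insert $\tilde c_k$ into $A_1$ as planned, the commutator $[a_ia_j,\tilde c_k-a_k]$ has a fully contracted part equal to $-A_2$, and you must use that cancellation rather than estimate $A_2$ and the correction separately. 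Once you recombine them, your proof goes through along the lines you describe.
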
 
\begin{proof}
  Let us define for $ijk\neq 0$ 
  \begin{align}
  \label{Eq:Def_lambda_coef}
    \zeta_{ijk}: &  = \!  \frac{1}{24} \! \Big(  \! \Pi_{\mathrm{sym}} 4( \widetilde{V}_N\otimes 1)(T_4 \! - \! 1 \! + \! \chi )\Big)_{0 i j k,0 0 0 0}  \!  \! =\frac{1}{24} \! \Big(  \! \Pi_{\mathrm{sym}} 4( V_N\otimes 1)(T_4 \! - \! 1 \! + \! \chi )\Big)_{0 i j k,0 0 0 0} ,\\
\nonumber 
\zeta_k: & = \frac{1}{24}\Big( \Pi_{\mathrm{sym}} 4( T^\dagger  \widetilde{V}_N\otimes 1)(T_4-1+\chi )\Big)_{0 0 k (-k),0 0 0 0}.
  \end{align}
 Then we have the decomposition
   \begin{align}
   \nonumber
       &  \ \ \     \ \ \    \ \ \    \ \ \    \ \ \  \left(\sum_{ijk, \ell m n}(\Theta)_{ijk} \,   \widetilde{\psi}_{ijk}^\dagger \,  a_0^\dagger a_0^4 + \mathrm{H.c.}\right) \\
       \label{Eq:Split_X_3}
       &= 4\left(\sum_{ijk\neq 0}\overline{\zeta_{ijk}}   (a_0^\dagger )^4  a_0 \, \widetilde{\psi}_{ijk} +\mathrm{H.c.}\right)+6 \left(\sum_{k\neq 0}\overline{\zeta_{k}}   (a_0^\dagger )^4 a_0^2\,   a_k a_{-k} +\mathrm{H.c.}\right).
   \end{align}
   Note that $\zeta_k= \frac{1}{24}\Big( \Pi_{\mathrm{sym}} 4(  V_N\otimes 1)(T_4-1+\chi )\Big)_{0 0 k (-k),0 0 0 0}$ for $|k|>K$. Using the regularity of $V$ and the bounds derived in Lemma \ref{Lem:Coefficient_Control} and Lemma \ref{Lem:Coefficient_Control_III}, we observe that we have $|N^3 \zeta_k|\lesssim N^{-\frac{1}{2}}\left(1+\frac{|k|^2}{N}\right)^{-1}$, and therefore  
 \begin{align*}
     \pm \left( \! \sum_{k\neq 0} \!  \overline{ \zeta_{k}}  (a_0^\dagger )^4 a_0^2  a_k a_{-k}  \! + \! \mathrm{H.c.} \!  \!  \! \right) \! \!  \lesssim    \! \epsilon \!  \sum_{k\neq 0}   \! |k|^{2\tau} \! a_k^\dagger a_k \! + \! \epsilon^{-1} N^{\frac{1}{2}-\tau} \!  \! \lesssim    \! \epsilon \sum_{k\neq 0}   \! |k|^{2\tau} \! c_k^\dagger c_k \! + \! \epsilon \mathcal{N} \! + \! \epsilon N^{\tau} \! + \! \epsilon^{-1}N^{\frac{1}{2}-\tau},
 \end{align*}
 where we have used Lemma \ref{lem:Comparison}. Choosing $\epsilon$ of the order $N^{-\frac{1}{4}}$ and $\tau=\frac{1}{2}$ concludes the analysis of the the second term in Eq.~(\ref{Eq:Split_X_3}). Regarding the first term, we use the definition of $\tilde{c}_k$ from Lemma \ref{Lem:General_Comparison}, in order to identify $\sum_{ijk}\overline{\zeta_{ijk}}   (a_0^\dagger )^4  \widetilde{\psi}_{ijk}a_0$ as
  \begin{align}
  \nonumber
     & \sum_{ijk}\overline{\zeta_{ijk}}(a_0^\dagger )^4 a_i a_j \tilde{c}_k a_0 - \sum_{ijk}\overline{\zeta_{ijk}}(a_0^\dagger )^4 a_i a_j (\tilde{c}_k-a_k) a_0+ \sum_{ijk}\zeta_{ijk}(a_0^\dagger )^4  a_0^4 (T-1)_{ijk,000}\\
     \nonumber
     & \ \ \ = \sum_{ijk}\overline{\zeta_{ijk}}(a_0^\dagger )^4 a_i a_j \tilde{c}_k a_0 - \frac{1}{2}\sum_{ijk,i' j'}\overline{\zeta_{ijk}}(T-1)_{i'j'k,000}a_{i'}^\dagger a_{j'}^\dagger a_0^{4\dagger}a_0^4  a_i a_j\\
     \label{Eq:Split_c_Insertion}
     & \ \ \ \ \ \ \ \ \ \ -2\sum_{ijk,i' }\overline{\zeta_{ijk}}(T-1)_{i'j k,000}a_{i'}^\dagger  a_0^{4\dagger}a_0^4  a_i .
  \end{align}
  In the following we are going to verify that the most significant term $\sum_{ijk}\overline{\zeta_{ijk}}(a_0^\dagger )^4 a_i a_j \tilde{c}_k a_0 $ in Eq.~(\ref{Eq:Split_c_Insertion}) satisfies the desired bound. By Cauchy-Schwarz, we have for $\epsilon>0$
    \begin{align*}
&\left(\sum_{ijk}\overline{\zeta_{ijk}}(a_0^\dagger )^4 a_i a_j \tilde{c}_k a_0 +\mathrm{H.c.}\right)\leq \epsilon \sum_{k}|k|^{2} \tilde{c}_k^\dagger \tilde{c}_k+\epsilon^{-1}\sum_k \frac{1}{|k|^{2}}\left|\sum_{ij}\zeta_{ijk}  a_0^\dagger a_0^4 a_i^\dagger a_j^\dagger   \right|^2\\
& \ =\epsilon \sum_{k}|k|^{2} \tilde{c}_k^\dagger \tilde{c}_k+\epsilon^{-1}G^{(0)} X+\epsilon^{-1}\sum_{i,i'} (G^{(1)})_{i,i'}a_i^\dagger a_{i'} X+\epsilon^{-1}\sum_{ij,i'j'}(G^{(2)})_{ij,i'j'}a_i^\dagger a^\dagger_j X a_{j'} a_{i'}
    \end{align*}
    with $G^{(0)}:=N^5\sum_{ijk}\frac{|\zeta_{ijk}|^2}{|k|^{2}}$, $G^{(1)}_{i,i'}:=N^5\delta_{i,i'}\sum_{jk}\frac{|\zeta_{ijk}|^2+\overline{\zeta_{ijk}}\zeta_{jik}}{|k|^{2}}$ and $G^{(2)}_{ij,i' j'}:=N^5\sum_k \frac{\zeta_{ijk}\zeta_{i' j' k}}{|k|^{2}}$, and $X:=N^{-5}a_0^{4\dagger}a_0 a_0^\dagger a_0^4$. Using again the regularity of $V$ and Lemma \ref{Lem:Coefficient_Control}, as well as the bounds on $T_4$ from Lemma \ref{Lem:Coefficient_Control_III}, yields
    \begin{align*}
        |\zeta_{ijk}|\leq C N^{-\frac{7}{2}}\delta_{i+j+k=0}\left(1+\frac{|i|^2+|j|^2+|k|^2}{N}\right)^{-3},
    \end{align*}
    and therefore $|G^{(0)}|\lesssim 1 $ and $\|G^{(1)}\|\lesssim N^{-\frac{3}{2}}$. The choice $\epsilon:=N{-\frac{1}{4}}$ then yields
    \begin{align*}
        \epsilon^{-1}G^{(0)}X+\epsilon^{-1}\sum_{i,i'} G^{(1)}_{i,i'}a_i^\dagger a_{i'}X\lesssim N^{\frac{1}{4}}+N^{-\frac{5}{4}}\mathcal{N}.
    \end{align*}
    Finally $\|G^{(2)}\|\leq N^{-\frac{3}{2}}$, and therefore
    \begin{align*}
        \sum_{ij,i'j'}G^{(2)}_{ij,i'j'}a_i^\dagger a^\dagger_j X a_{j'} a_{i'}\lesssim N^{-\frac{3}{2}}\mathcal{N}^2\leq N^{-\frac{1}{2}}\mathcal{N}.
    \end{align*}
   This concludes the proof together with Lemma \ref{Lem:General_Comparison}.
\end{proof}

\subsection{Proof of the lower Bound in Theorem \ref{Eq:Main_Theorem_Introduction}}
\label{Subsection:Proof_of_Theorem_Main_in_Text}

In this Subsection, we are going to verify the lower bound in Theorem \ref{Eq:Main_Theorem_Introduction} making use of the sequence of states $\Phi_N$ constructed in Corollary \ref{Cor:Condensation}, which simultaneously satisfies
\begin{align*}
& \mathds{1}\! \left(\mathcal{N}\leq C \sqrt{N}\right)\Phi_N=\Phi_N,\\
&     \braket{\Phi_N,H_N \Phi_N}\leq E_N+C,\\
& \left\langle \Phi_N,\sum_k |k|^2 c_k^\dagger c_k \Phi_N\right\rangle\leq C\sqrt{N}.
\end{align*}
Starting point for our investigations is then the lower bound
\begin{align*}
    & H_N\geq \sum_{k}|k|^{2}d_k^\dagger d_k+   (a_0^\dagger )^3 a_0^3 \lambda_{0,0}+N^{-4}(a_0^\dagger )^4 a_0^4\, (\mu_N - \sigma_N) -N^{-2}(a_0^\dagger )^2 a_0^2\, \mu_N +\mathbb{Q}_K\\
    & \ \  \  \ \  \  \ \  \ - \left(\sum_{ijk, \ell m n}(\Theta)_{ijk} \,   \widetilde{\psi}_{ijk}^\dagger \,  a_0^\dagger a_0^4 + \mathrm{H.c.}\right) +  \left(\mathcal{E}_*+\mathcal{E}_*^\dagger \right)-  \widetilde{\mathcal{E}},
\end{align*}
see Eq.~(\ref{Eq:Precise_Lower_Bound}). Given $\epsilon>0$, assume that $K$ is large enough such that the function $\epsilon(K)$ from Lemma \ref{Lemma:Main_Error_Precise} satisfies $\epsilon(K)\leq \epsilon$. Making use of the fact that 
\begin{align*}
& \ \ \ \mathds{1}\! \left(\mathcal{N}\leq C\sqrt{N}\right)\Phi_N=\Phi_N  ,\\
& \left\langle \Phi_N, \sum_{k}|k|^2 c_k^\dagger  c_k \Phi_N \right\rangle\lesssim \sqrt{N},
\end{align*}
we immediately obtain for $C$ and $\widetilde{C}$ large enough
\begin{align*}
    |\braket{\Phi_N,\widetilde{\mathcal{E}} \Phi_N}| \! \leq \!  C N^{-\frac{1}{4}} \! \left\langle  \! \Phi_N,  \! \sum_k |k|^2 c_k^\dagger   c_k  \Phi_N  \! \right\rangle  \! + \! \epsilon \!  \left\langle   \Phi_N, \mathcal{N}  \Phi_N   \right\rangle \! + \! CN^{\frac{1}{4}} \! \leq  \! \widetilde{C}N^{\frac{1}{4}} \! + \! \epsilon \! \left\langle   \Phi_N, \mathcal{N}  \Phi_N  \right\rangle \! .
\end{align*}
Similarly we obtain by Lemma \ref{Lemma:X_3} and Lemma \ref{Lem:Rest_Estimate} for suitable $C,\widetilde{C}>0$
\begin{align*}
   &  \ \ \ \ \   \ \ \ \ \ \left|\braket{\Phi_N, \left(\sum_{ijk, \ell m n}(\Theta)_{ijk} \,   \widetilde{\psi}_{ijk}^\dagger \,  a_0^\dagger a_0^4 + \mathrm{H.c.}\right) \Phi_N}\right|\\
     & \leq C N^{-\frac{1}{4}}\left(\left\langle \Phi_N, \sum_k |k|^2 c_k^\dagger   c_k  \Phi_N \right\rangle+\sqrt{N}\right)+C N^{\frac{1}{4}}\leq \widetilde{C}N^{\frac{1}{4}},\\
     &  \ \ \ \ \   \ \ \ \ \  \left|\braket{\Phi_N, \left(\mathcal{E}_*+\mathcal{E}_*^\dagger \right) \Phi_N}\right|\leq CN^{\frac{1}{4}}.
\end{align*}
By Lemma \ref{Lem:Quadratic_Potential_Estimate} we furthermore obtain for $\tau,\epsilon>0$ and $K$ large enough, and a suitable $C>0$,
\begin{align*}
    & \braket{\Phi_N, \left((a_0^\dagger )^3 a_0^3 \lambda_{0,0}+\mathbb{Q}_K\right)\Phi_N}\geq \frac{1}{6}b_{\mathcal{M}}(V)N-\epsilon\left\langle\Phi_N, \sum_k |k|^{2\tau}a_k^\dagger a_k\Phi_N \right\rangle -C.
\end{align*}
Moreover we note that we have by Lemma \ref{Lemma:Correction_Coefficients} 
\begin{align*}
   &  \braket{\Phi_N, N^{-4}(a_0^\dagger )^4 a_0^4\, (\sigma_N-\gamma_N) \Phi_N}\leq (\sigma_N-\gamma_N) + |\sigma_N-\gamma_N|\braket{\Phi_N, \left(1-N^{-4}(a_0^\dagger )^4 a_0^4\right)  \Phi_N}\\
   & \ \ \ \ \ \ \ \leq \sigma_N-\gamma_N + |\sigma_N-\gamma_N|\left\langle \Phi_N,\frac{\mathcal{N}}{N}\Phi_N\right\rangle  \leq (\sigma(V)-\gamma(V))\sqrt{N} + o_{N\rightarrow \infty}\! \left(N^{\frac{1}{4}}\right),
\end{align*}
and similarly $\braket{\Phi_N, N^{-2}(a_0^\dagger )^2 a_0^2\, \mu_N \Phi_N}\leq \mu(V) \sqrt{N}+ o_{N\rightarrow \infty}\! \left(1\right)$. Finally by Lemma \ref{Lem:Comparison_a_d}
\begin{align*}
    \left\langle   \Phi_N, \mathcal{N}  \Phi_N  \right\rangle\leq \left\langle\Phi_N, \sum_k |k|^{2\tau}a_k^\dagger a_k\Phi_N \right\rangle\leq C\left\langle\Phi_N, \sum_k |k|^{2}d_k^\dagger d_k\Phi_N \right\rangle+CN^\tau.
\end{align*}
Choosing $\tau<\frac{1}{4}$ and $\epsilon<\frac{1}{2C}$ concludes the proof, since 
\begin{align*}
    E_N +C & \geq\braket{\Phi_N, H_N \Phi_N}\geq \frac{1}{6}b_{\mathcal{M}}(V)N+\left(\gamma-\sigma-\mu\right)\! \sqrt{N}\\
    & \ \ \ \ -CN^{\frac{1}{4}}+\left(1-2C\epsilon\right)\left\langle\Phi_N, \sum_k |k|^{2}d_k^\dagger d_k\Phi_N \right\rangle.
\end{align*}

\section{Second Order Upper Bound}
\label{Sec:Second_Order_Upper_Bound}
It is the goal of this Section to introduce a trial state $\Phi$, which simultaneously annihilates the variables $d_k$ for $k\neq 0$ and $\xi_{\ell m n}$ in case $(\ell,m ,m)\neq 0$, at least in an approximate sense. We are then going to use this trial state $\Phi$ to verify the upper bound in Theorem \ref{Eq:Main_Theorem_Introduction}. For the rest of this Section we specify the parameter $K$ introduced above the definition of $\pi_K$ in Eq.~(\ref{Eq:Definition_pi_K}) as $K:=0$. In order to find $\Phi$, we define $\alpha_{jk}:=(T_2-1)_{jk,00}$ and $\beta_{u i j k}:=(T_4-1)_{uijk,0000}$, and the generator
\begin{align}
\nonumber
  &  \mathcal{G}_2:=\frac{1}{2}\sum_{j k}\alpha_{jk} a_j^\dagger a_k^\dagger a^2_0,\\
  \label{Eq:Four_particle_generator}
   & \mathcal{G}_4:=\frac{1}{24}\sum_{u i j k}\beta_{u i j k}a_u^\dagger a_i^\dagger a_j^\dagger a_k^\dagger a^4_0
\end{align}
of a unitary group $W_s:=e^{s(\mathcal{G}_2+\mathcal{G}_4)^\dagger - s (\mathcal{G}_2+\mathcal{G}_4)}$ and $W:=W_1$. Applying Duhamel's formula, we can express $W^{-1}a_{i_1} a_{i_2} a_{i_3} W$ as
\begin{align}
\label{Eq:Duhamel_Xi_I}
   & W^{-1}a_{i_1} a_{i_2} a_{i_3} W=a_{i_1} a_{i_2} a_{i_3} \! - \! \int_0^1 W_{-s} [a_{i_1} a_{i_2} a_{i_3},\mathcal{G}_4]W_s\mathrm{d}s \! + \! \int_0^1 W_{-s} [a_{i_1} a_{i_2} a_{i_3},\mathcal{G}_2^\dagger  \! + \!  \mathcal{G}_4^\dagger \!  - \! \mathcal{G}_2]W_s\mathrm{d}s.
\end{align}
Furthermore, note that we can write
\begin{align}
\label{Eq:Duhamel_Xi_II}
    [a_{i_1} a_{i_2} a_{i_3},\mathcal{G}_4]=\sum_u \beta_{u i_1 i_2 i_3} a_u^\dagger a_0^4+(\delta \xi)_{i_1 i_2 i_3},
\end{align}
where we define the error term
\begin{align*}
    (\delta \xi)_{i_1 i_2 i_3}:=\frac{1}{4} \sum_{\sigma \in S_3}\sum_{jk} \beta_{i_{\sigma_2}i_{\sigma_3}j k}a_j^\dagger a_k^\dagger a_{i_{\sigma_1}}a_0^4 + \frac{1}{12} \sum_{\sigma \in S_3}\sum_{ijk} \beta_{i_{\sigma_3}i j k}a_i^\dagger a_j^\dagger a_k^\dagger a_{i_{\sigma_1}}a_{i_{\sigma_2}}a_0^4 .
\end{align*}
Therefore we can write the transformed operators $W^{-1}\xi_{i_1 i_2 i_3}W$ as
  \begin{align}
  \nonumber
       & W^{-1}\xi_{i_1 i_2 i_3}W=( \psi \! + \! \delta_1 \psi)_{i_1 i_2 i_3} \! - \! \int_0^1 W_s^{-1 }(\delta \xi)_{i_1 i_2 i_3} W_s\, \mathrm{d}s \! + \! \int_0^1 W_{-s} [a_i a_j a_k,\mathcal{G}_2^\dagger  \! +  \! \mathcal{G}_4^\dagger  \! -  \! \mathcal{G}_2]W_s\mathrm{d}s\\
       \nonumber
        & \ \ \   \ \ \  +\int_0^1 \int_0^s W_t^{-1}\left[\sum_u \beta_{u i_1 i_2 i_3} a_u^\dagger a_0^4,\, \mathcal{G}_2^\dagger +\mathcal{G}_4^\dagger - \mathcal{G}_2 - \mathcal{G}_4 \right]W_t\,   \mathrm{d}t \mathrm{d}s\\   
        \label{Eq:W_transform_of_xi}
    & \ \ \   \ \ \   + \int_0^1 W_{-s} [\xi_{i_1 i_2 i_3} - a_{i_1}a_{i_2}a_{i_3},\, \mathcal{G}_2^\dagger +\mathcal{G}_4^\dagger - \mathcal{G}_2 - \mathcal{G}_4] W_s\, \mathrm{d}s.
    \end{align}
Recall the definition of $\mathcal{E}_\mathcal{P}$ defined in Eq.~(\ref{Eq:Potential_Energy_Operator}). The following Lemma \ref{Lem:delta_xi} provides sufficient bounds on the various error terms appearing in Eq.~(\ref{Eq:W_transform_of_xi}).

\begin{lem}
\label{Lem:delta_xi}
There exists a constant $C>0$, such that 
   \begin{align}
   \label{Eq:delta_xi}
       \mathcal{E}_\mathcal{P}(\delta \xi)  \leq C (\mathcal{N}+1)^6.
   \end{align}
   Furthermore, we have $\mathcal{E}_\mathcal{P}\! \left([a_{i_1} a_{i_2} a_{i_3},\mathcal{G}_2^\dagger  \! +  \! \mathcal{G}_4^\dagger  \! -  \! \mathcal{G}_2]\right)\leq C (\mathcal{N}+1)^6$ and
   \begin{align}
   \label{Eq:Potential_Evolution_Difference}
  & \mathcal{E}_\mathcal{P} \! \left([\xi_{i_1 i_2 i_3} - a_{i_1}a_{i_2}a_{i_3},\, \mathcal{G}_2^\dagger +\mathcal{G}_4^\dagger - \mathcal{G}_2 - \mathcal{G}_4]\right)\leq C (\mathcal{N}+1)^6,\\ 
    \label{Eq:Potential_Evolution_four_particle}
      & \mathcal{E}_\mathcal{P}\! \left(\left[\sum_u \beta_{u i_1 i_2 i_3} a_u^\dagger a_0^4,\, \mathcal{G}_2^\dagger +\mathcal{G}_4^\dagger - \mathcal{G}_2 - \mathcal{G}_4 \right]\right)\leq C (\mathcal{N}+1)^6.
   \end{align}
\end{lem}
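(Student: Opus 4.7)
The plan is to follow closely the template of Lemma \ref{Lem:First_Order_Upper_Bound_Useful}: exploit the sign $V_N \geq 0$ to apply a Cauchy--Schwarz bound of the form $\mathcal{E}_\mathcal{P}(A+B)\leq 2\mathcal{E}_\mathcal{P}(A)+2\mathcal{E}_\mathcal{P}(B)$, decompose each commutator into a finite list of concrete operator monomials (indexed by the choice of permutation and by which of $\mathcal{G}_2,\mathcal{G}_2^\dagger,\mathcal{G}_4,\mathcal{G}_4^\dagger$ is used), and then estimate $\mathcal{E}_\mathcal{P}$ on each monomial by computing it explicitly in Fock space, pulling the $a_0$'s to the outside, and reducing to an operator norm bound on a kernel obtained by contracting $V_N$ with the relevant coefficients $\alpha_{jk}=(T_2-1)_{jk,00}$, $\beta_{uijk}=(T_4-1)_{uijk,0000}$ and/or the three-particle coefficients $(T-1)_{ijk,\ell m n}$.

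For Eq.~(\ref{Eq:delta_xi}) I would split $\delta\xi$ into its two types of summands (those with two annihilation operators $a_j^\dagger a_k^\dagger a_{i_\sigma}a_0^4$ and those with three annihilation operators $a_i^\dagger a_j^\dagger a_k^\dagger a_{i_{\sigma}}a_{i_{\sigma'}}a_0^4$), use the permutation symmetry of $V_N$ to reduce to a single representative $\sigma=\mathrm{id}$, and then proceed exactly as in the treatment of $\widetilde{\delta}_{2,\mathrm{id}}\psi$ in Eq.~(\ref{Eq:First_Order_Upper_Bound_Useful_Aux_II}): the fully contracted pieces produce kernels $\sum_{ij}|\beta_{\cdots ij}|^2 (V_N)_{\cdots,\cdots}$ controlled in operator norm by the coefficient bounds from Lemma \ref{Lem:Coefficient_Control_III}, after which the factor $(a_0^\dagger)^4 a_0^4 \leq N^4$ is absorbed by the $N$-scaling of those kernel bounds, leaving a polynomial in $\mathcal{N}$ of total degree at most six.

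For the bound on $\mathcal{E}_\mathcal{P}([a_{i_1}a_{i_2}a_{i_3},\mathcal{G}_2^\dagger+\mathcal{G}_4^\dagger-\mathcal{G}_2])$, note that $[a_{i_\nu},\mathcal{G}_2^\dagger]$, $[a_{i_\nu},\mathcal{G}_4^\dagger]$ vanish when $i_\nu\neq 0$, so as in the treatment of $[a_i a_j a_k,\mathcal{G}^\dagger]$ in the earlier lemma one obtains only "boundary" monomials weighted by $\mathds{1}(i_\nu=0)$ with creation/annihilation operators of total degree at most five; the desired $(\mathcal{N}+1)^6$ bound then follows by the same scheme. For Eq.~(\ref{Eq:Potential_Evolution_Difference}) and Eq.~(\ref{Eq:Potential_Evolution_four_particle}), the commutators involve products $\alpha\cdot\beta$, $\beta\cdot\beta$ and $(T-1)\cdot(\alpha+\beta)$; each such product is either fully contracted (producing a scalar multiple of a lower-degree monomial of the same form already handled) or leaves a kernel for which Lemma \ref{Lem:Coefficient_Control} and Lemma \ref{Lem:Coefficient_Control_III} supply the needed operator-norm estimate. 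In every case, the highest number of excitation creation and annihilation operators that can appear in a single monomial of the double commutator is six, which dictates the power in the final bound.

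The main obstacle is not analytical but combinatorial: the nested commutator in Eq.~(\ref{Eq:Potential_Evolution_Difference}) expands into many cross terms mixing the two-particle generator $\mathcal{G}_2$, the four-particle generator $\mathcal{G}_4$ and the three-particle structure encoded in $\xi_{ijk}-a_{i_1}a_{i_2}a_{i_3}$ (which itself contains $T$, $\delta_1\psi$ and $T_4-1$). Organizing these into a manageable list, pairing each with the best available kernel estimate, and verifying that the generous bounds are sufficient (since here one only needs $(\mathcal{N}+1)^6$, not a sharp estimate), is the principal bookkeeping task; once this list is laid out, every individual estimate is of the same flavor as those carried out in Lemma \ref{Lem:First_Order_Upper_Bound_Useful} and Lemma \ref{Lemma:Main_Error_Precise}.
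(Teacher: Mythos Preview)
Your proposal is correct and follows essentially the same approach as the paper: decompose each commutator into explicit monomials, use $V_N\ge 0$ together with Cauchy--Schwarz to reduce $\mathcal{E}_\mathcal{P}$ to diagonal pieces, and control the resulting kernels via Lemmas~\ref{Lem:Coefficient_Control} and~\ref{Lem:Coefficient_Control_III}. One refinement the paper carries out that your sketch glosses over: the commutator with $\mathcal{G}_2$ (unlike $\mathcal{G}_2^\dagger,\mathcal{G}_4^\dagger$) does \emph{not} produce only boundary monomials weighted by $\mathds{1}(i_\nu=0)$ but rather terms such as $\alpha_{i_2 i_3}a_{i_1}a_0^2$ and $\sum_u\alpha_{u i_3}a_u^\dagger a_{i_1}a_{i_2}$ with free indices $i_1,i_2$ that may equal zero, and to absorb the resulting factors of $N$ from $a_0^\dagger a_0$ the paper splits the index set as $A=A'\cup A''$ with $A'=\{(i_1,i_2,i_3)\in A:i_1,i_2\neq 0\}$, exploiting a sharper kernel bound on the thinner set $A''$; the same split is used for the second summand of $\delta\xi$.
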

\begin{proof}
    Let us define 
    \begin{align*}
    (\delta_1 \xi)_{i_1 i_2 i_3}: & =\frac{1}{4} \sum_{jk} \beta_{i_{2}i_{3}j k}a_j^\dagger a_k^\dagger a_{i_{1}}a_0^4,\\
        C_N: & =\sup_{i_1}\sum_{jk,i_2 i_3,i'_1 i'_2 i'_3}\left|(V_N)_{i_1 i_2 i_3,i'_1 i'_2 i'_3} \beta_{i_{2}i_{3}j k}\beta_{i'_{2}i'_{3}j k}\right|\lesssim N^{-5},
    \end{align*}
    where we have used Lemma \ref{Lem:Coefficient_Control_III} to estimate $C_N$. Applying Cauchy-Schwarz yields
    \begin{align}
    \label{Eq:delta_1_of_xi}
        \sum_{(i_1 i_2 i_3),(i_1' i_2' i_3')\in A}(V_N)_{i_1 i_2 i_3,i_1' i_2' i_3'}(\delta_1 \xi)_{i_1 i_2 i_3}^\dagger (\delta_1 \xi)_{i_1' i_2' i_3'}\leq C_N (a_0^\dagger )^4a_0^4 \left(\sum_{i_1} a_{i_1}^\dagger a_{i_1}\right)(\mathcal{N}+1)^2.
    \end{align}
    Using the fact that $C_N(a_0^\dagger )^4a_0^4 \left(\sum_{i_1} a_{i_1}^\dagger a_{i_1}\right)\leq C_N N^5\lesssim 1$, we observe that the quantity in Eq.~(\ref{Eq:delta_1_of_xi}) is bounded by the right hand side of Eq.~(\ref{Eq:delta_xi}). Let us furthermore define
    \begin{align*}
        (\delta_2 \xi)_{i_1 i_2 i_3}:=\frac{1}{12} \sum_{ijk} \beta_{i_{3}i j k}a_i^\dagger a_j^\dagger a_k^\dagger a_{i_{1}}a_{i_{2}}a_0^4.
    \end{align*}
In the following we want to distinguish between the cases $A':=\{(i_1 i_2 i_3)\in A: i_1,i_2\neq 0\}$ and $A'':=A\setminus A'$, leading to the definition
    \begin{align*}
        C'_N:=\sup_{i_1 i_2}\sum_{ijk,i_3, i_1' i_2' i_3'} \mathds{1}\big((i_1,i_2,i_3),(i'_1 i'_2 i'_3)\in A'\big) \left|(V_N)_{i_1 i_2 i_3,i'_1 i'_2 i'_3} \beta_{i_{3} ij k}\beta_{i'_{3} ij k}\right|\lesssim N^{-5},\\
         C''_N:=\sup_{i_1 i_2}\sum_{ijk,i_3, i_1' i_2' i_3'} \mathds{1}\big((i_1,i_2,i_3),(i'_1 i'_2 i'_3)\in A''\big) \left|(V_N)_{i_1 i_2 i_3,i'_1 i'_2 i'_3} \beta_{i_{3} ij k}\beta_{i'_{3} ij k}\right|\lesssim N^{-\frac{13}{2}},
    \end{align*}
    where we have again used Lemma \ref{Lem:Coefficient_Control_III}. Applying Cauchy-Schwarz leads to the estimate
    \begin{align*}
        \sum_{(i_1 i_2 i_3),(i_1' i_2' i_3')\in A}(V_N)_{i_1 i_2 i_3,i_1' i_2' i_3'}(\delta_2 \xi)_{i_1 i_2 i_3}^\dagger (\delta_2 \xi)_{i_1' i_2' i_3'}\leq C_N' N^4 (\mathcal{N}+1)^5 + C_N'' N^6 (\mathcal{N}+1)^3,
    \end{align*}
    which is bounded by the right hand side of Eq.~(\ref{Eq:delta_xi}). Finally we use that $V_N$ is permutation symmetric and non-negative, and therefore the left hand side of Eq.~(\ref{Eq:delta_xi}) is bounded by
    \begin{align*}
        & \ \ \ \ \ \ \  \ \ \ \ \ \ \ 6 \sum_{(i_1 i_2 i_3),(i_1' i_2' i_3')\in A}(V_N)_{i_1 i_2 i_3,i_1' i_2' i_3'}(\delta_1 \xi+\delta_2 \xi)_{i_1 i_2 i_3}^\dagger (\delta_1 \xi+\delta_2 \xi)_{i_1' i_2' i_3'}\\
        & \leq 12 \! \! \!  \! \! \!  \! \! \! \sum_{(i_1 i_2 i_3),(i_1' i_2' i_3')\in A} \! \! \!  \! \! \! (V_N)_{i_1 i_2 i_3,i_1' i_2' i_3'}(\delta_1 \xi)_{i_1 i_2 i_3}^\dagger (\delta_1 \xi)_{i_1' i_2' i_3'}+ 12 \! \! \!   \! \! \!  \! \! \! \sum_{(i_1 i_2 i_3),(i_1' i_2' i_3')\in A} \! \! \!  \! \! \! (V_N)_{i_1 i_2 i_3,i_1' i_2' i_3'}(\delta_2 \xi)_{i_1 i_2 i_3}^\dagger (\delta_2 \xi)_{i_1' i_2' i_3'}.
    \end{align*}
    Regarding the term $\mathcal{E}_\mathcal{P}\! \left([a_{i_1} a_{i_2} a_{i_3},\mathcal{G}_2^\dagger  \! +  \! \mathcal{G}_4^\dagger  \! -  \! \mathcal{G}_2]\right)$, let us analyse the term involving the commutator with $\mathcal{G}_2$, the terms involving $\mathcal{G}_2^\dagger  \! +  \! \mathcal{G}_4^\dagger$ can be analysed in a similar fashion as has been done in Lemma \ref{Lem:First_Order_Upper_Bound_Useful}. We compute
    \begin{align}
    \label{Eq:Three_Particle_two_particle_conjugation}
       [a_{i_1} a_{i_2} a_{i_3},\mathcal{G}_2]=\alpha_{i_2 i_3}a_{i_1} a_0^2+\sum_u \alpha_{u i_3}a_u^\dagger a_{i_2} a_{i_3} +\{\mathrm{Permutations}\}.
    \end{align}
    In order to analyse the first term on the right hand side of Eq.~(\ref{Eq:Three_Particle_two_particle_conjugation}), let us define $D_N:=\sup_{i_1}\sum_{i_2 i_3, i_1' i_2' i_3'}\left|(V_N)_{i_1 i_2 i_3, i_1' i_2' i_3'} \alpha_{i_2 i_3}\alpha_{i_2' i_3'}\right|$ and note that $D_N\lesssim N^{-3}$ by Lemma \ref{Lem:Coefficient_Control_III}. Hence
    \begin{align*}
        \sum_{(i_1 i_2 i_3),(i_1' i_2' i_3')\in A}(V_N)_{i_1 i_2 i_3,i_1' i_2' i_3'}\left(\alpha_{i_2 i_3}a_{i_1} a_0^2\right)^\dagger \alpha_{i'_2 i'_3}a_{i'_1} a_0^2\leq D_N N^3 \lesssim 1.
    \end{align*}
    Regarding the second term on the right hand side of Eq.~(\ref{Eq:Three_Particle_two_particle_conjugation}), we use again the split $A=A'\cup A''$ and define
    \begin{align*}
       & D'_N:=\sup_{i_1 i_2}\sum_{u, i_3, i'_1 i'_2 i'_3}\mathds{1}\big((i_1,i_2,i_3),(i'_1 i'_2 i'_3)\in A'\big)\left| (V_N)_{i_1 i_2 i_3,i'_1 i'_2 i'_3}\alpha_{u i_3} \alpha_{u i'_3} \right|\lesssim N^{-\frac{5}{2}},\\
       & D'_N:=\sup_{i_1 i_2}\sum_{u, i_3, i'_1 i'_2 i'_3}\mathds{1}\big((i_1,i_2,i_3),(i'_1 i'_2 i'_3)\in A''\big)\left| (V_N)_{i_1 i_2 i_3,i'_1 i'_2 i'_3}\alpha_{u i_3} \alpha_{u i'_3} \right|\lesssim N^{-4},
    \end{align*}
    where we have used Lemma \ref{Lem:Coefficient_Control_III}. Consequently
    \begin{align*}
           \sum_{(i_1 i_2 i_3),(i_1' i_2' i_3')\in A}(V_N)_{i_1 i_2 i_3,i_1' i_2' i_3'}\left(\sum_u \alpha_{u i_3}a_u^\dagger a_{i_2} a_{i_3}\right)^\dagger \left(\sum_u \alpha_{u i_3}a_u^\dagger a_{i_2} a_{i_3}\right)\lesssim N^{-\frac{1}{2}}(\mathcal{N}+1)^3+1,
    \end{align*}
    and therefore $\mathcal{E}_\mathcal{P}\! \left([a_{i_1} a_{i_2} a_{i_3},\mathcal{G}_2]\right)\leq 12 \mathcal{E}_\mathcal{P}\! \left(\alpha_{i_2 i_3}a_{i_1} a_0^2\right)+ 12 \mathcal{E}_\mathcal{P}\! \left(\sum_u \alpha_{u i_3}a_u^\dagger a_{i_2} a_{i_3}\right)\lesssim (\mathcal{N}+1)^3$. The inequalities in Eq.~(\ref{Eq:Potential_Evolution_Difference}) and Eq.~(\ref{Eq:Potential_Evolution_four_particle}) can be verified in similarly.
\end{proof}

With Lemma \ref{Lem:delta_xi} at hand, we show in the subsequent Corollary \ref{Corollary:Xi_Upper_Bound} that after conjugation with the unitary $W$, the potential energy of the operators $\xi_{ijk}$ is comparable to the potential energy of $(\psi+\delta_1)_{ijk}$.

\begin{cor}
\label{Corollary:Xi_Upper_Bound}
There exists a constant $C>0$, such that
\begin{align*}
  W^{-1}\mathcal{E}_\mathcal{P}(\xi)W  \leq   C\,    \mathcal{E}_\mathcal{P}\! \left(\psi+\delta_1 \psi\right) + C\,  (\mathcal{N}+1)^6.
\end{align*}
\end{cor}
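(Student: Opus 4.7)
The plan is to conjugate $\mathcal{E}_\mathcal{P}(\xi)$ by $W$ term-wise, using the identity \eqref{Eq:W_transform_of_xi} for $W^{-1}\xi_{i_1 i_2 i_3}W$, and then to estimate each of the five summands via the positivity of the quadratic form $\mathcal{E}_\mathcal{P}$. Since $V_N\geq 0$, the form $\Theta\mapsto \mathcal{E}_\mathcal{P}(\Theta)$ is positive semidefinite in the operator-valued three-particle vector $\Theta$, so one has the Cauchy--Schwarz type inequality
\begin{align*}
 \mathcal{E}_\mathcal{P}\!\left(\Theta^{(1)}+\dots+\Theta^{(5)}\right)\;\leq\; 5\sum_{r=1}^{5}\mathcal{E}_\mathcal{P}\!\left(\Theta^{(r)}\right),
\end{align*}
as well as the Jensen-type bound $\mathcal{E}_\mathcal{P}\!\left(\int_0^1\Theta_s\,\mathrm{d}s\right)\leq \int_0^1\mathcal{E}_\mathcal{P}(\Theta_s)\,\mathrm{d}s$ for any operator-valued curve $s\mapsto \Theta_s$. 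Applying these two facts to \eqref{Eq:W_transform_of_xi} leaves one main term $\mathcal{E}_\mathcal{P}(\psi+\delta_1\psi)$ plus four integral error terms, which will be absorbed into $C(\mathcal{N}+1)^6$.

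Next I would use the covariance $\mathcal{E}_\mathcal{P}(W_{-s}\Theta W_s)=W_{-s}\,\mathcal{E}_\mathcal{P}(\Theta)\,W_s$ to pull the unitary factors outside the quadratic form, and then invoke Lemma~\ref{Lem:delta_xi} to estimate each of the four error contributions. Concretely, for the $\delta\xi$ term this gives
\begin{align*}
\mathcal{E}_\mathcal{P}\!\left(\int_0^1 W_{-s}(\delta\xi)W_s\,\mathrm{d}s\right)\;\leq\;\int_0^1 W_{-s}\,\mathcal{E}_\mathcal{P}(\delta\xi)\,W_s\,\mathrm{d}s\;\leq\; C\int_0^1 W_{-s}(\mathcal{N}+1)^6 W_s\,\mathrm{d}s,
\end{align*}
and analogously for the commutator terms with $\mathcal{G}_2^\dagger+\mathcal{G}_4^\dagger-\mathcal{G}_2$, the commutator term $[\sum_u\beta_{ui_1i_2i_3}a_u^\dagger a_0^4,\,\mathcal{G}_2^\dagger+\mathcal{G}_4^\dagger-\mathcal{G}_2-\mathcal{G}_4]$ (using the bound in \eqref{Eq:Potential_Evolution_four_particle}) and the term $[\xi-aaa,\,\mathcal{G}_2^\dagger+\mathcal{G}_4^\dagger-\mathcal{G}_2-\mathcal{G}_4]$ (using \eqref{Eq:Potential_Evolution_Difference}). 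In every case the same ceiling $C(\mathcal{N}+1)^6$ appears.

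To finish I would apply Lemma~\ref{Lem:Particle_Number_Creation} to show that $W_{-s}(\mathcal{N}+1)^6 W_s\lesssim (\mathcal{N}+1)^6$ uniformly in $s\in[0,1]$, since $W_s$ creates or annihilates at most a bounded number of excitations per application of its generator and the coefficients $\alpha,\beta$ are uniformly small; the point is that $\mathcal{G}_2+\mathcal{G}_4$ only changes $\mathcal{N}$ by a bounded amount at each step, so Gr\"onwall on $W_{-s}(\mathcal{N}+1)^{6}W_s$ gives the desired uniform control. Combining this with the five-term Cauchy--Schwarz bound produces
\begin{align*}
 W^{-1}\mathcal{E}_\mathcal{P}(\xi)W \;\leq\; 5\,\mathcal{E}_\mathcal{P}(\psi+\delta_1\psi) \;+\; C\,(\mathcal{N}+1)^6,
\end{align*}
which is the claimed inequality after renaming the constant.

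The main obstacle I anticipate is the double-integral term coming from the four-particle commutator $[\sum_u\beta_{ui_1i_2i_3}a_u^\dagger a_0^4,\,\mathcal{G}_2^\dagger+\mathcal{G}_4^\dagger-\mathcal{G}_2-\mathcal{G}_4]$: verifying \eqref{Eq:Potential_Evolution_four_particle} requires carefully tracking the small coefficients $\beta_{ui_1i_2i_3}$ from Lemma~\ref{Lem:Coefficient_Control_III}, together with the split of index sets $A=A'\cup A''$ already used in the proof of Lemma~\ref{Lem:delta_xi}, in order to compensate the particle-number growth produced by the outer creation operators $a_u^\dagger(a_0^\dagger)^4$. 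Once this commutator bound is in hand, the proof is a purely mechanical application of the two positivity principles stated at the beginning together with Lemma~\ref{Lem:Particle_Number_Creation}.
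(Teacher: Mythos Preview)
Your proposal is correct and follows essentially the same approach as the paper's own proof: apply the five-term Cauchy--Schwarz and Jensen inequalities for the positive form $\mathcal{E}_\mathcal{P}$ to the decomposition \eqref{Eq:W_transform_of_xi}, pull the unitaries through via $\mathcal{E}_\mathcal{P}(W_{-s}\Theta W_s)=W_{-s}\mathcal{E}_\mathcal{P}(\Theta)W_s$, invoke Lemma~\ref{Lem:delta_xi} for each error term, and finish with Lemma~\ref{Lem:Particle_Number_Creation}. The obstacle you flag is already handled by \eqref{Eq:Potential_Evolution_four_particle} in Lemma~\ref{Lem:delta_xi}, so nothing further is needed.
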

\begin{proof}
   Using the sign $V_N\geq 0$, we obtain by the Cauchy-Schwarz inequality and the representation of $W^{-1}\xi_{i_1 i_2 i_3} W$ in Eq.~(\ref{Eq:W_transform_of_xi}) the estimate
    \begin{align*}
        & W^{-1} \mathcal{E}_\mathcal{P}(\xi)W=\mathcal{E}_\mathcal{P}\! \left(W^{-1}\xi W\right)\leq 5\mathcal{E}_\mathcal{P}\! \left(\psi \! + \! \delta_1 \right)+5 \int_0^1 W_s^{-1 }\mathcal{E}_\mathcal{P}\! \left( \delta \xi\right) W_s \mathrm{d}s\\
        & \ \  + 5\int_0^1 W_{-s} \mathcal{E}_\mathcal{P}\! \left([a_i a_j a_k,\mathcal{G}_2^\dagger  \! +  \! \mathcal{G}_4^\dagger  \! -  \! \mathcal{G}_2]\right)\mathrm{d}s W_s \\
        & \ \  +\frac{5}{2}\int_0^1 \int_0^s W_t^{-1}\mathcal{E}_\mathcal{P}\left(\left[\sum_u \beta_{u i_1 i_2 i_3} a_u^\dagger a_0^4,\, \mathcal{G}_2^\dagger +\mathcal{G}_4^\dagger - \mathcal{G}_2 - \mathcal{G}_4 \right]\right)W_t \mathrm{d}t \mathrm{d}s\\
        & \ \ + 5\int_0^1 W_{-s} \mathcal{E}_\mathcal{P} \left([\xi_{i_1 i_2 i_3} - a_{i_1}a_{i_2}a_{i_3},\, \mathcal{G}_2^\dagger +\mathcal{G}_4^\dagger - \mathcal{G}_2 - \mathcal{G}_4] \right) W_s\mathrm{d}s\\
        & \lesssim \! \mathcal{E}_\mathcal{P}\! \left(\psi \! + \! \delta_1 \right) \! + \! \int_0^1 \!  W_{-s}(\mathcal{N} \! + \! 1)^6 W_s \, \mathrm{d}s  \! +  \! \int_0^1 \!  \int_0^s  \! W_{-t}(\mathcal{N} \! + \! 1)^6 W_t\, \mathrm{d}t\mathrm{d}s \! \lesssim \!  \mathcal{E}_\mathcal{P}\! \left(\psi \! + \! \delta_1 \right) \! + \! (\mathcal{N} \! + \! 1)^6,
    \end{align*}
    where we have first used Lemma \ref{Lem:delta_xi} and subsequently Lemma \ref{Lem:Particle_Number_Creation} in the last line.
\end{proof}

Regarding the variable $d_k$, Duhamel's formula yields for $k\neq 0$
\begin{align}
\label{Eq:Duhamel_d}
    W^{-1}d_kW \! = \! c_k \! + \! \int_0^1 \!  \int_0^s \!  W_{-t}\left[\left[a_k,G_2 \! + \! G_4\right],\mathcal{G}_2^\dagger  \! + \!  \mathcal{G}^\dagger_4\right]W_t \mathrm{d}t\mathrm{d}s \! + \! \int_0^1  \! W_{-s}\left[d_k \! - \!  a_k,G_2^\dagger  \! + \!  G_4^\dagger \right]W_s \mathrm{d}s.
\end{align}
Recall the kinetic energy of an operator valued one particle vector $\Theta_k$ defined in Eq.~(\ref{Eq:Kinetic_Energy_Operator}). Then the following Lemma \ref{Lem:delta_d} provides sufficient bounds on the various error terms appearing in Eq.~(\ref{Eq:Duhamel_d}).

\begin{lem}
\label{Lem:delta_d}
    There exists a constant $C>0$, such that for $m\in \mathbb{N}$ 
    \begin{align*}
   & \mathcal{E}_\mathcal{K}\! \left(\mathcal{N}^m \left[d_k- a_k,G_2^\dagger + G_4^\dagger \right]\right)\leq \frac{C}{N}(\mathcal{N}+1)^{5+2m},\\
    &    \mathcal{E}_\mathcal{K}\! \left(\mathcal{N}^m\left[\left[a_k,G_2+G_4\right],\mathcal{G}_2^\dagger + \mathcal{G}^\dagger_4\right]\right)\leq  \frac{C}{N}(\mathcal{N}+1)^{5+2m}.
    \end{align*}
\end{lem}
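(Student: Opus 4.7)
\textbf{Proof plan for Lemma \ref{Lem:delta_d}.} The strategy mirrors the proof of Lemma \ref{Lem:First_Order_Upper_Bound_Useful_II}, where the analogous estimate was obtained for the double commutator with $\mathcal{G}$ alone. The new feature is that $d_k-a_k$ now contains three separate pieces coming from $T_2$, $T$, and $T_4$, and one must commute each of them against $\mathcal{G}_2^\dagger+\mathcal{G}_4^\dagger$; similarly for the nested commutator. So the plan is to expand each commutator by Leibniz, sort the resulting monomials by ``number of contractions'' and by which of the coefficient families $\alpha_{jk}$, $\eta_{ijk}$, $\beta_{uijk}$ appears, and then bound each piece using the weighted bounds supplied by Lemma \ref{Lem:Coefficient_Control} and Lemma \ref{Lem:Coefficient_Control_III}.

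More concretely, for the first estimate I would write $d_k-a_k=T_2[a]+T[a]+T_4[a]$ according to Eq.~(\ref{Eq:Second_Family_of_Variables}), and then compute each of the six commutators $[T_j[a],\mathcal{G}_{j'}^\dagger]$ with $j,j'\in\{2,4\}$. Each such commutator produces a sum of normal-ordered monomials with a coefficient of the form
\begin{align*}
 \sum_{\text{contracted indices}} \overline{\alpha/\eta/\beta}\cdot (\alpha/\eta/\beta),
\end{align*}
multiplied by a polynomial in $a_0^\dagger,a_0$ and creation/annihilation operators on the non-condensate modes. For the fully contracted (diagonal) pieces the trick used in Lemma \ref{Lem:First_Order_Upper_Bound_Useful_II} applies verbatim: bound $\sum_{ij}|\eta_{ijk}|^2\lesssim N^{-4}|k|^{-1}$ and the analogous sums $\sum_{ij}|\alpha_{ij}|^2$, $\sum_{uij}|\beta_{uijk}|^2$ from Lemma \ref{Lem:Coefficient_Control_III}, which after multiplication by $|k|^2$ and summation in $k$ give the $N^{-1}$ factor. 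For the partially contracted pieces, I would introduce operators analogous to $X_{jk,j'k'}$, $J$, and $\widetilde J$ from the proof of Lemma \ref{Lem:First_Order_Upper_Bound_Useful_II} and reduce their norms to Schur-test bounds of the form $\|\widetilde J\|\lesssim N^{-15/2-\epsilon}$ etc.\ from Lemma \ref{Lem:Coefficient_Control_III}. The weighting by $\mathcal{N}^m$ is absorbed into the normal-ordered monomial since $\mathcal{N}$ commutes with $(a_0^\dagger)^{s}a_0^{s}$ and only shifts the relevant polynomial factor by $O(m)$.

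For the second estimate the structure is the same, except that one first computes $[a_k,\mathcal{G}_2+\mathcal{G}_4]$, which (like $[a_k,\mathcal{G}]$ in Lemma \ref{Lem:First_Order_Upper_Bound_Useful_II}) produces two- and four-body creation operators paired with $a_0^2$, respectively $a_0^4$; then one commutes with $\mathcal{G}_2^\dagger+\mathcal{G}_4^\dagger$ and again splits the output into $\delta_1+\delta_2+\delta_3$--type pieces based on whether the commutation is contractionless (pure $[a_0^s,(a_0^\dagger)^s]$ rearrangement) or produces one or two contractions between excited modes. The contractionless piece gives a term $\propto(a_0^\dagger)^s a_0^s\cdot (\text{coefficient})^2\cdot a^\dagger a$ on the excited sector, and is controlled exactly as $\delta_1 c$ was; the singly and doubly contracted pieces play the role of $\delta_2 c$ and $\delta_3 c$ and are bounded by the same Schur-test strategy.

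The main obstacle, as in Lemma \ref{Lem:First_Order_Upper_Bound_Useful_II}, is bookkeeping: there are many cross terms because of the three families $\alpha,\eta,\beta$, and one must check that every combination gains at least an $N^{-1}$ factor after using the appropriate weighted bound from Lemma \ref{Lem:Coefficient_Control_III}. The worst combinations are the ``pure $T_2$'' ones involving $\alpha_{jk}$, since these coefficients have the slowest decay; however, each such term comes with an accompanying factor $(a_0^\dagger)^{2}a_0^{2}\cdot N^{-2}$ (rather than $(a_0^\dagger)^3 a_0^3\cdot N^{-3}$), so after combining with $\sum_{jk}|\alpha_{jk}|^2\lesssim N^{-1}$ one still gets the desired $N^{-1}(\mathcal{N}+1)^{5+2m}$ bound. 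Once all combinations are checked, the two claimed inequalities follow by triangle inequality and the standard estimate $\sum_k|k|^2\Theta_k^\dagger\mathcal{N}^{2m}\Theta_k\leq (\mathcal{N}+C)^{2m}\mathcal{E}_\mathcal{K}(\Theta)$ used in conjunction with $\mathcal{N}$-commutation through $(a_0^\dagger)^s a_0^s$.
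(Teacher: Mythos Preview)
Your proposal is correct and follows essentially the same route as the paper: expand the commutators by Leibniz, sort the resulting monomials by which coefficient families ($\alpha,\eta,\beta$) appear and by contraction type, and bound each piece with the weighted decay estimates of Lemma~\ref{Lem:Coefficient_Control} and Lemma~\ref{Lem:Coefficient_Control_III}, mirroring the template of Lemma~\ref{Lem:First_Order_Upper_Bound_Useful_II}. The paper's own argument is in fact more compressed than yours---it works out a single representative subterm of $[[a_k,\mathcal G_4],\mathcal G_4^\dagger]$ and then asserts that ``the other estimates can be verified similarly''; your plan spells out the same bookkeeping in more detail. (A small slip: you write ``six commutators $[T_j[a],\mathcal G_{j'}^\dagger]$ with $j,j'\in\{2,4\}$'', but you clearly mean $j\in\{2,3,4\}$ corresponding to the three pieces $T_2,T,T_4$ of $d_k-a_k$, paired with $j'\in\{2,4\}$.)
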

\begin{proof}
    Let us compute as an example for $k\neq 0$
    \begin{align*}
        & \ \ \ \ \ \left[\left[a_k,G_4\right],\mathcal{G}^\dagger_4\right]= \frac{1}{24}\sum_{i_1 i_2 i_3}\beta_{k i_1 i_2 i_3}\left[ a_{i_1}^\dagger a_{i_2}^\dagger a_{i_3}^\dagger a^4_0,\mathcal{G}^\dagger_4\right]\\
        &  =\frac{1}{24}\sum_{i_1 i_2 i_3}\beta_{k i_1 i_2 i_3} a_{i_1}^\dagger a_{i_2}^\dagger \left[a_{i_3}^\dagger,\mathcal{G}^\dagger_4\right] a^4_0+\frac{1}{24}\sum_{i_1 i_2 i_3}\beta_{k i_1 i_2 i_3} a_{i_1}^\dagger \left[a_{i_2}^\dagger,\mathcal{G}^\dagger_4\right] a_{i_3}^\dagger a^4_0\\
        & \ \ +\frac{1}{24}\sum_{i_1 i_2 i_3}\beta_{k i_1 i_2 i_3}\left[ a_{i_1}^\dagger,\mathcal{G}^\dagger_4\right] a_{i_2}^\dagger a_{i_3}^\dagger a^4_0+\frac{1}{24}\sum_{i_1 i_2 i_3}\beta_{k i_1 i_2 i_3} a_{i_1}^\dagger a_{i_2}^\dagger a_{i_3}^\dagger \left[a^4_0,\mathcal{G}^\dagger_4\right],
    \end{align*}
     and let us focus on the term
     \begin{align*}
         \sum_{i_1 i_2 i_3}\beta_{k i_1 i_2 i_3} a_{i_1}^\dagger a_{i_2}^\dagger \left[a_{i_3}^\dagger,\mathcal{G}^\dagger_4\right] a^4_0=\frac{1}{6}  \!  \sum_{i_1 i_2 i_3, j_1 j_2 j_3}\beta_{k i_1 i_2 i_3}\overline{\beta_{i_3 j_1 j_2 j_3}} (a_0^\dagger)^4 a_0^4 a_{i_1}^\dagger a_{i_2}^\dagger a_{j_1}a_{j_2}a_{j_3}.
     \end{align*}
     Defining 
     \begin{align*}
         C_N:=\sum_{k,i_1 i_2,j_1 j_2 j_3}|k|^2 \left|\sum_{i_3}\beta_{k i_1 i_2 i_3}\overline{\beta_{i_3 j_1 j_2 j_3}}\right|^2\lesssim N^{-\frac{19}{2}},
     \end{align*}
    where we have used Lemma \ref{Lem:Coefficient_Control_III}, we obtain
     \begin{align*}
         \mathcal{E}_\mathcal{K}\! \left(\sum_{i_1 i_2 i_3}\beta_{k i_1 i_2 i_3} \mathcal{N}^m a_{i_1}^\dagger a_{i_2}^\dagger \left[a_{i_3}^\dagger,\mathcal{G}^\dagger_4\right] a^4_0\right)\lesssim N^{-\frac{19}{2}}\left((a_0^\dagger)^4 a_0^4\right)^2 (\mathcal{N}+1)^{5+2m}\leq N^{-\frac{3}{2}}(\mathcal{N}+1)^{5+2m}.
     \end{align*}
     The other estimates in Lemma \ref{Lem:delta_d} can be verified similarly.
\end{proof}

Similar to Corollary \ref{Corollary:Xi_Upper_Bound}, we show in the following Corollary \ref{Corollary:d_Upper_Bound} that, after conjugation with the unitary $W$, the kinetic energy of the operators $ d_k$ is comparable to the one of $c_k$.

\begin{cor}
\label{Corollary:d_Upper_Bound}
    There exists a constant $C>0$, such that for $m\in \mathbb{N}$
    \begin{align*}
        W^{-1} \mathcal{E}_K\! \left(\mathcal{N}^m d\right) W\leq C\mathcal{E}_K\! \left( c\right) + C \mathcal{E}_K\! \left(\mathcal{N}^m c\right) + \frac{C}{N} (\mathcal{N}+1)^{5+2m}.
    \end{align*}
\end{cor}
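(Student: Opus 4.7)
The plan is to mirror, step for step, the argument used in Corollary \ref{Cor:First_Order_Trial_State_Kinetic}, replacing the single-generator Duhamel expansion \eqref{Eq:Decomposition_c} by the two-generator expansion \eqref{Eq:Duhamel_d} and replacing Lemma \ref{Lem:First_Order_Upper_Bound_Useful_II} by Lemma \ref{Lem:delta_d}. First I would bring $W^{-1}$ through $\mathcal{E}_{\mathcal{K}}$: since $W$ is unitary,
\begin{align*}
W^{-1}\mathcal{E}_{\mathcal{K}}\!\left(\mathcal{N}^m d\right) W = \mathcal{E}_{\mathcal{K}}\!\left(W^{-1}\mathcal{N}^m W \cdot W^{-1}d_k W\right),
\end{align*}
and then use Lemma \ref{Lem:Particle_Number_Creation}, which controls $W^{-1}\mathcal{N}^{2m} W \lesssim (\mathcal{N}+1)^{2m}$, together with the elementary monotonicity $\mathcal{E}_{\mathcal{K}}(Af_k) \leq C \mathcal{E}_{\mathcal{K}}(Bf_k)$ whenever $A^\dagger A \leq C B^\dagger B$ (this is the same trick invoked explicitly in the proof of Corollary \ref{Cor:First_Order_Trial_State_Kinetic}), to reduce the bound to
\begin{align*}
\mathcal{E}_{\mathcal{K}}\!\left(W^{-1}\mathcal{N}^m W \cdot W^{-1}d_k W\right) \lesssim \mathcal{E}_{\mathcal{K}}\!\left((\mathcal{N}+1)^m\, W^{-1}d_k W\right).
\end{align*}

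Second, I would substitute the Duhamel representation \eqref{Eq:Duhamel_d}, writing $W^{-1}d_k W = c_k + \mathrm{I}_k + \mathrm{II}_k$ with $\mathrm{I}_k$ the double-integral commutator term and $\mathrm{II}_k$ the single-integral commutator term, and apply the Cauchy--Schwarz-type bound $\mathcal{E}_{\mathcal{K}}(A+B+C) \leq 3\bigl(\mathcal{E}_{\mathcal{K}}(A)+\mathcal{E}_{\mathcal{K}}(B)+\mathcal{E}_{\mathcal{K}}(C)\bigr)$ (a direct consequence of $(A+B+C)^\dagger(A+B+C)\leq 3(A^\dagger A + B^\dagger B + C^\dagger C)$). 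The main term yields $\mathcal{E}_{\mathcal{K}}\!\left((\mathcal{N}+1)^m c\right) \lesssim \mathcal{E}_{\mathcal{K}}\!\left(\mathcal{N}^m c\right) + \mathcal{E}_{\mathcal{K}}\!\left(c\right)$ via $(\mathcal{N}+1)^{2m}\lesssim \mathcal{N}^{2m}+1$, which accounts for the first two terms on the right-hand side of the claim.

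Third, for $\mathrm{II}_k$ I would pull the integral out by operator Jensen, obtaining
\begin{align*}
\mathcal{E}_{\mathcal{K}}\!\left((\mathcal{N}+1)^m \mathrm{II}_k\right) \leq \int_0^1 \mathcal{E}_{\mathcal{K}}\!\left((\mathcal{N}+1)^m W_{-s}\bigl[d_k-a_k, \mathcal{G}_2^\dagger + \mathcal{G}_4^\dagger\bigr] W_s\right)\mathrm{d}s,
\end{align*}
then insert $W_s W_{-s}$ on the left of the commutator and use Lemma \ref{Lem:Particle_Number_Creation} again (applied to $W_s(\mathcal{N}+1)^{2m}W_{-s}\lesssim (\mathcal{N}+1)^{2m}$) to move the number-operator weight inside the unitary conjugation. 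This reduces the integrand to $W_{-s}\mathcal{E}_{\mathcal{K}}\!\left((\mathcal{N}+1)^m\bigl[d_k-a_k, \mathcal{G}_2^\dagger + \mathcal{G}_4^\dagger\bigr]\right)W_s$, at which point Lemma \ref{Lem:delta_d} produces the bound $\frac{C}{N}(\mathcal{N}+1)^{5+2m}$; a final conjugation by $W_{-s}$, dealt with once more via Lemma \ref{Lem:Particle_Number_Creation}, delivers $\frac{C}{N}(\mathcal{N}+1)^{5+2m}$ uniformly in $s\in[0,1]$. The term $\mathrm{I}_k$ is treated identically, using the second estimate of Lemma \ref{Lem:delta_d} for the double commutator.

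I do not expect a true obstacle here: the whole proof is a power-of-$\mathcal{N}$ reinforcement of Corollary \ref{Cor:First_Order_Trial_State_Kinetic}, and all of the ingredients (Duhamel expansion of $W^{-1}d_kW$, quantitative commutator estimates on the error terms, and conjugation bounds for $\mathcal{N}^m$) are already in place. If anything, the most delicate bookkeeping step will be verifying that the two-stage conjugation $W_{-s}\,(\mathcal{N}+1)^{2m}\,W_s$ and its inverse both obey the polynomial-in-$\mathcal{N}$ bound of Lemma \ref{Lem:Particle_Number_Creation} with constants independent of $s\in[0,1]$, so that the integrals produce no divergent prefactors.
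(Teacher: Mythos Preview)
Your proposal is correct and follows essentially the same route as the paper's proof: both arguments pull $W$ through $\mathcal{E}_{\mathcal{K}}$, use Lemma \ref{Lem:Particle_Number_Creation} to control $W^{-1}\mathcal{N}^{2m}W$, apply the Duhamel decomposition \eqref{Eq:Duhamel_d} with a three-term Cauchy--Schwarz split, and invoke Lemma \ref{Lem:delta_d} plus Lemma \ref{Lem:Particle_Number_Creation} again on the integral remainders. The only cosmetic difference is that the paper splits $\mathcal{E}_{\mathcal{K}}\!\left((\mathcal{N}+1)^m\,W^{-1}dW\right)$ into $\mathcal{E}_{\mathcal{K}}(W^{-1}dW)+\mathcal{E}_{\mathcal{K}}(\mathcal{N}^m W^{-1}dW)$ up front, whereas you carry $(\mathcal{N}+1)^m$ through and split only the $c$-term at the end; these are equivalent since $(\mathcal{N}+1)^{2m}\lesssim 1+\mathcal{N}^{2m}$.
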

\begin{proof}
By Lemma \ref{Lem:Particle_Number_Creation} we have $W^{-1}\mathcal{N}^{2m}W=(W^{-1}\mathcal{N}^{m}W)^* W^{-1}\mathcal{N}^{m}W\lesssim \mathcal{N}^{2m}+1$, hence
\begin{align*}
    W^{-1}\mathcal{E}_\mathcal{K}\! \left(\mathcal{N}^m d\right)W=\mathcal{E}_\mathcal{K}\! \left(W^{-1}\mathcal{N}^m W\, W^{-1} d\, W\right)\lesssim \mathcal{E}_\mathcal{K}\! \left( W^{-1} d\, W\right)+\mathcal{E}_\mathcal{K}\! \left(\mathcal{N}^m W^{-1} d\, W\right).
\end{align*}
    Following the ideas in Corollary \ref{Corollary:Xi_Upper_Bound}, we estimate using Eq.~(\ref{Eq:Duhamel_d})
    \begin{align*}
       & \ \ \ \   \ \ \ \ \ \ \   \ \ \ \  \ \ \ \   \mathcal{E}_\mathcal{K}\! \left(\mathcal{N}^m W^{-1} d\, W\right)\leq 3\, \mathcal{E}_\mathcal{K}\left(\mathcal{N}^m c\right)\\
        &  \ \ \  +\frac{3}{2}\int_0^1 \!  \int_0^s \!   W_{-t}\mathcal{E}_\mathcal{K}\! \left(\mathcal{N}^m\left[\left[a_k,G_2 \!  + \!  G_4\right],\mathcal{G}_2^\dagger  \!  + \!   \mathcal{G}^\dagger_4\right]\right)W_t\mathrm{d}s \\
        & \ \ \  + \!  3\int_0^1 W_{-s}\mathcal{E}_\mathcal{K}\! \left(\mathcal{N}^m\left[d_k \!  -  \!  a_k,G_2^\dagger  \!  + \!   G_4^\dagger \right]\right)W_s\mathrm{d}s\\
        & \lesssim \!  \mathcal{E}_\mathcal{K}\left(\mathcal{N}^m c\right) \! +   \! \frac{1}{N}\int_0^1 \!  \int_0^s  \! \!   W_{-t}(\mathcal{N} \! + \! 1)^{5+2m} W_t\mathrm{d}s \! +  \! \frac{1}{N}\int_0^1  \!  \! W_{-s}(\mathcal{N} \! + \! 1)^5W_s\mathrm{d}s \\
        & \lesssim  \! \mathcal{E}_\mathcal{K}\left(\mathcal{N}^m c\right) \! + \!  \frac{1}{N}(\mathcal{N} \! + \! 1)^{5+2m},
    \end{align*}
     where we have first used Lemma \ref{Lem:delta_d} and subsequently Lemma \ref{Lem:Particle_Number_Creation} in the last line.
\end{proof}

Before we come to the proof of the upper bound in Theorem \ref{Eq:Main_Theorem_Introduction}, we are showing in the following Lemma \ref{Lem:c_d_comparision} that even without a unitary conjugation, the kinetic energy of $c_k$ is comparable with the one of $d_k$. The price for dropping the unitary $W$ is that we obtain an order $O_{N\rightarrow \infty}\! \left(\sqrt{N}\right)$ pre-factor in front of the excess term $(\mathcal{N}+3)^{3+2m}$, instead of a pre-factor of the order $O_{N\rightarrow \infty}\! \left(1\right)$.

\begin{lem}
\label{Lem:c_d_comparision}
    There exists a constant $C>0$, such that for $m\in \mathbb{N}$
    \begin{align*}
        \mathcal{E}_K\! \left(\mathcal{N}^m c\right)\leq C \mathcal{E}_K\! \left(\mathcal{N}^m d\right)+C\sqrt{N}(\mathcal{N}+1)^{3+2m}.
    \end{align*}
\end{lem}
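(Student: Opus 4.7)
The strategy is to expand the difference $d_k-c_k$ explicitly as $A_k+B_k$, where
\begin{align*}
A_k &:= \sum_{j,mn}(T_2-1)_{jk,mn}\, a_j^\dagger a_m a_n, \\
B_k &:= \frac{1}{6}\sum_{uij,\, v\ell mn}(T_4-1)_{uijk,v\ell mn}\, a_u^\dagger a_i^\dagger a_j^\dagger a_v a_\ell a_m a_n,
\end{align*}
and then to bound the kinetic-energy cost of each piece using the coefficient estimates of Lemma \ref{Lem:Coefficient_Control_III}. By the definition Eq.~(\ref{Eq:Definition_T_2}) of $T_2$, one has $A_k=(T_2-1)_{-k,k,00}\, a_{-k}^\dagger a_0^2$ for $|k|>K$ and $A_k=0$ otherwise. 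By the definition Eq.~(\ref{Eq:Definition_T_4}) of $T_4$ together with the structure of $\chi$, which forces the second index group to be $(0000)$, one has $B_k=\frac{1}{6}\sum_{u,i,j\neq 0}\beta_{uijk}\, a_u^\dagger a_i^\dagger a_j^\dagger a_0^4$ with $\beta_{uijk}:=(T_4-1)_{uijk,0000}$. Since $\mathcal{N}^{2m}\geq 0$, Cauchy--Schwarz then yields
\begin{align*}
\mathcal{E}_{\mathcal{K}}(\mathcal{N}^m c)\leq 2\,\mathcal{E}_{\mathcal{K}}(\mathcal{N}^m d) + 4\,\mathcal{E}_{\mathcal{K}}(\mathcal{N}^m A) + 4\,\mathcal{E}_{\mathcal{K}}(\mathcal{N}^m B),
\end{align*}
so it remains to control the last two summands by $\sqrt{N}(\mathcal{N}+1)^{3+2m}$.

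For the $T_2$ term I use $|\lambda_{0,k}|\lesssim N^{-2}(1+|k|^2/N)^{-1}$ from Lemma \ref{Lem:Coefficient_Control} to compute
\begin{align*}
\sum_{|k|>K}|k|^2\,|(T_2-1)_{-k,k,00}|^2 = 9N^2\sum_{|k|>K}\frac{|\lambda_{0,k}|^2}{|k|^2}\lesssim N^{-2}\cdot \sqrt{N} = N^{-3/2},
\end{align*}
where the remaining sum is estimated by rescaling $k\mapsto k/\sqrt N$. Normal-ordering yields $a_0^{2\dagger}a_{-k}\mathcal{N}^{2m}a_{-k}^\dagger a_0^2\leq a_0^{2\dagger}a_0^2(\mathcal{N}+1)^{1+2m}\leq N^2(\mathcal{N}+1)^{1+2m}$, giving $\mathcal{E}_{\mathcal{K}}(\mathcal{N}^m A)\lesssim \sqrt N(\mathcal{N}+1)^{1+2m}\leq \sqrt N(\mathcal{N}+1)^{3+2m}$.

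For the $T_4$ term I follow the pattern of Lemma \ref{Lem:First_Order_Upper_Bound_Useful_II} and Lemma \ref{Lem:delta_d}: I expand $B_k^\dagger\mathcal{N}^{2m}B_k$, normal-order the central $a\cdot\mathcal{N}^{2m}\cdot a^\dagger$ block, and control each resulting kernel $\sum_k |k|^2\beta_{uijk}\overline{\beta_{u'i'j'k}}$ via a weighted Schur test. The bounds on $\beta$ supplied by Lemma \ref{Lem:Coefficient_Control_III} (of the schematic form $|\beta_{uijk}|\lesssim N^{-7/2}\delta_{u+i+j+k=0}(1+(|u|^2+|i|^2+|j|^2+|k|^2)/N)^{-\gamma}$) give Schur norms of size $N^{-\alpha}$; together with $(a_0^\dagger)^4a_0^4\leq N^4$ and the operator inequality $a_0^{4\dagger}a_{j'}a_{i'}a_{u'}\mathcal{N}^{2m}a_u^\dagger a_i^\dagger a_j^\dagger a_0^4\lesssim N^4(\mathcal{N}+1)^{3+2m}$, each normal-ordered piece is bounded by $\sqrt N(\mathcal{N}+1)^{3+2m}$.

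The main obstacle is the last step: $B_k^\dagger\mathcal{N}^{2m}B_k$ has degree $14+2m$, and its normal-ordering produces many partially contracted kernels, each of which must be matched to the appropriate scattering bound from Lemma \ref{Lem:Coefficient_Control_III}. The combinatorics is analogous to, but somewhat heavier than, Lemma \ref{Lem:First_Order_Upper_Bound_Useful_II}. The improvement over Corollary \ref{Corollary:d_Upper_Bound} (where one saw $N^{-1}(\mathcal{N}+1)^{5+2m}$ instead of $\sqrt N(\mathcal{N}+1)^{3+2m}$) reflects the absence of the unitary $W$: no additional particles are created by Duhamel, so the $(\mathcal{N}+1)$-power drops by $2$, at the cost of the worse prefactor $\sqrt N$ coming from saturating the $\ell^2$-norm of the scattering solution on the scale $|k|\sim \sqrt N$.
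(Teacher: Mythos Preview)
Your decomposition $c_k=d_k-A_k-B_k$, the Cauchy--Schwarz reduction, and the treatment of the $T_2$ piece are exactly the paper's proof. For the $T_4$ piece your plan is correct but needlessly elaborate: because $B_k=\tfrac{1}{6}\sum_{u,i,j\neq 0}\beta_{uijk}\,a_u^\dagger a_i^\dagger a_j^\dagger a_0^4$ contains \emph{only} creation operators apart from $a_0^4$, no normal-ordering or Schur test is needed. One has the standard bound
\[
\Big(\sum_{uij}\overline{\beta_{uijk}}\,a_j a_i a_u\Big)\Big(\sum_{u'i'j'}\beta_{u'i'j'k}\,a_{u'}^\dagger a_{i'}^\dagger a_{j'}^\dagger\Big)\ \lesssim\ \|\beta_{\cdot\cdot\cdot k}\|_{\ell^2}^2\,(\mathcal{N}+3)^3,
\]
and combining this with $\mathcal{N}^m a_u^\dagger a_i^\dagger a_j^\dagger=a_u^\dagger a_i^\dagger a_j^\dagger(\mathcal{N}+3)^m$ gives directly
\[
\mathcal{E}_{\mathcal{K}}(\mathcal{N}^m B)\ \lesssim\ \Big(\sum_{u,i,j,k}|k|^2|\beta_{uijk}|^2\Big)\,(a_0^\dagger)^4 a_0^4\,(\mathcal{N}+3)^{3+2m}.
\]
The coefficient sum is $\lesssim N^{-7/2}$ by Lemma~\ref{Lem:Coefficient_Control_III} (the same $k\mapsto k/\sqrt N$ rescaling you used for $T_2$, now over $\mathbb{R}^9$), and $(a_0^\dagger)^4 a_0^4\leq N^4$ yields $\sqrt N(\mathcal{N}+1)^{3+2m}$. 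This is precisely what the paper's ``Similarly we obtain'' encodes: a single $\ell^2$ constant in place of any operator-norm argument. Your worry that the combinatorics is ``heavier than Lemma~\ref{Lem:First_Order_Upper_Bound_Useful_II}'' is thus misplaced---it is much lighter, because unlike the double commutator $[[a_k,\mathcal{G}],\mathcal{G}^\dagger]$ treated there, $B_k$ has no mixed creation/annihilation structure to unravel.
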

\begin{proof}
    Note that we can write $c_k$ as
    \begin{align*}
       c_k=d_k-2\sum_j \alpha_{jk}\alpha_{jk} a_j^\dagger a_0^2-4\sum_{uij}\beta_{uijk}a^\dagger_u a^\dagger_i a^\dagger_j a_0^4, 
    \end{align*}
and therefore
    \begin{align*}
       \mathcal{E}_K\! \left(\mathcal{N}^m c\right)\leq 3\, \mathcal{E}_K\! \left(\mathcal{N}^m d\right)+12\, \mathcal{E}_K\!\left(\sum_j \alpha_{jk}\, \mathcal{N}^m a_j^\dagger a_0^2\right)+48\, \mathcal{E}_K\!\left(\sum_{uij}\beta_{uijk}\, \mathcal{N}^m a^\dagger_u a^\dagger_i a^\dagger_j a_0^4\right).
    \end{align*}
    Defining the constant $C_N:=\sum_{jk}|k|^2 |\alpha_{jk}|^2\lesssim N^{-\frac{3}{2}}$, which follows from Lemma \ref{Lem:Coefficient_Control_III}, we obtain
    \begin{align*}
        \mathcal{E}_K\!\left(\sum_j \alpha_{jk}\, \mathcal{N}^m a_j^\dagger a_0^2\right)\leq C_N (a_0^\dagger)^2 a_0^2 (\mathcal{N}+1)^{m+1}\lesssim \sqrt{N}(\mathcal{N}+1)^{m+1}.
    \end{align*}
    Similarly we obtain 
    \begin{align*}
        \mathcal{E}_K\!\left(\sum_{uij}\beta_{uijk}\, \mathcal{N}^m a^\dagger_u a^\dagger_i a^\dagger_j a_0^4\right)\lesssim \sqrt{N}(\mathcal{N}+3)^3\lesssim \sqrt{N}(\mathcal{N}+1)^3.
    \end{align*}
\end{proof}

\begin{proof}[Proof of the upper bound in Theorem \ref{Eq:Main_Theorem_Introduction}]
    Let us define the trial state $\Phi:=W\Gamma$, where $\Gamma$ is the state defined below Eq.~(\ref{Eq:Gamma_0}), and recall the representation of $H_N$ in Eq.~(\ref{Eq:Precise_Identity})
    \begin{align*}
       & \braket{\Phi,H_N \Phi}=\lambda_{0,0}N^3 \braket{\Phi,N^{-3}(a_0^\dagger)^3 a_0^3 \Phi} \! + \! (\gamma_N \! - \! \sigma_N) \braket{\Phi, N^{-4}(a_0^\dagger)^4 a_0^4 \Phi} \! - \! \mu_N \braket{\Phi, N^{-2}(a_0^\dagger)^2 a_0^2 \Phi}\\
       & \ \ \ \ \ \ \ \ \ \  + \braket{\Phi,  \mathcal{E}_\mathcal{K}(d) \Phi}+\braket{\Phi,  \mathcal{E}_\mathcal{P}(\xi) \Phi}+ 2\mathfrak{Re}\left\langle \Phi , \mathcal{E}_* \Phi \right\rangle - \left\langle \Phi , \widetilde{\mathcal{E}}\Phi \right\rangle \\
     &   \ \ \ \ \ \ \ \ \ \  - \left\langle \Phi , \left(\sum_{ijk, \ell m n}(\Theta)_{ijk} \,   \widetilde{\psi}_{ijk}^\dagger \,  a_0^\dagger a_0^4 + \mathrm{H.c.}\right) \Phi\right\rangle.
    \end{align*}
    By Lemma \ref{Lem:Particle_Number_Creation} and the fact that $\pm \left(N^{-m}(a_0^\dagger )^m a_0^m-1\right)\lesssim N^{-1}\mathcal{N}$, we obtain that
    \begin{align*}
        \left| \braket{\Phi,N^{-3}(a_0^\dagger)^3 a_0^3 \Phi}-1\right|\lesssim N^{-1}\braket{\Phi, \mathcal{N}\Phi}=N^{-1}\braket{\Gamma, W^{-1}\mathcal{N}W\Gamma}\lesssim N^{-1}\braket{\Gamma, (\mathcal{N}+1)\Gamma}\lesssim N^{-1},
    \end{align*}
    see Eq.~(\ref{Eq:Gamma_Particle_Control}) for the last estimate. Making use of Lemma \ref{Lemma:Main_Error_Precise}, Lemma \ref{Lem:Rest_Estimate} and Lemma \ref{Lemma:X_3} yields
    \begin{align*}
      &  \left|\left\langle \Phi , \mathcal{E}_* \Phi \right\rangle\right|\lesssim N^{-\frac{1}{4}}\braket{\Phi,\mathcal{E}_\mathcal{K}(c)\Phi}+N^{\frac{1}{4}}\braket{\Phi,\mathcal{N}\Phi},\\
      & \Big| \Big\langle \! \Phi , \left(\sum_{ijk, \ell m n}(\Theta)_{ijk} \,   \widetilde{\psi}_{ijk}^\dagger \,  a_0^\dagger a_0^4 + \mathrm{H.c.}\right) \Phi \! \Big\rangle\Big|\lesssim N^{-\frac{1}{4}}\braket{\Phi,\mathcal{E}_\mathcal{K}(c)\Phi}+N^{-\frac{1}{4}}\braket{\Phi,\mathcal{N}\Phi}+N^{\frac{1}{4}},\\
       & \left| \! \left\langle \Phi , \widetilde{\mathcal{E}} \Phi  \! \right\rangle\right| \! \lesssim  \! N^{-\frac{1}{4}}\left\langle  \! \Phi,\mathcal{E}_\mathcal{K}\! \left(\left(\frac{\mathcal{N}}{\sqrt{N}} \! + \! 1\right)c\right)\Phi \! \right\rangle \! + \! N^{\frac{1}{4}}\left\langle  \! \Phi, \left(\frac{\mathcal{N}}{\sqrt{N}} \! + \! 1\right)^2  \! \! \left(\mathcal{N} \! + \! \sqrt{N}\right) \! \Phi\right\rangle  \! + \! \braket{\Phi,\mathcal{N}\Phi}.
    \end{align*}
    Observe that $\braket{\Phi, \mathcal{N}^m \Phi}\lesssim 1$ and furthermore we have by Lemma \ref{Lem:c_d_comparision} and Corollary \ref{Corollary:d_Upper_Bound}
    \begin{align*}
       & \left\langle \Phi, \mathcal{E}_K\! \left(\mathcal{N}^m c\right)\Phi \right\rangle\lesssim \left\langle \Phi, \mathcal{E}_K\! \left(\mathcal{N}^m d\right)\Phi \right\rangle + \sqrt{N}=\left\langle \Gamma , W^{-1}\mathcal{E}_K\! \left(\mathcal{N}^m d\right)W\Gamma \right\rangle + \sqrt{N}\\
       & \ \ \ \ \ \ \ \ \  \lesssim \left\langle \Gamma , \mathcal{E}_K\! \left( c\right)\Gamma \right\rangle + \left\langle \Gamma , \mathcal{E}_K\! \left(\mathcal{N}^m c\right)\Gamma \right\rangle + \sqrt{N}\lesssim \sqrt{N},
    \end{align*}
  where we used Corollary \ref{Cor:First_Order_Trial_State_Kinetic} in the last estimate. Putting together what we have so far, and utilizing Lemma \ref{Lemma:Correction_Coefficients}, yields
  \begin{align*}
      \braket{\Phi,H_N \Phi} \! = \! \frac{1}{6}b_{\mathcal{M}}(V)N \! + \! \big(\gamma(V)  \! - \!   \sigma(V)  \! - \!  \mu(V) \big) \sqrt{N} \! +  \! \braket{\Phi,  \mathcal{E}_\mathcal{K}(d) \Phi} \! + \! \braket{\Phi,  \mathcal{E}_\mathcal{P}(\xi) \Phi} \! + \! O_{N\rightarrow \infty}\! \left( \! N^{\frac{1}{4}} \! \right) \! .
  \end{align*}
  Using Corollary \ref{Corollary:Xi_Upper_Bound}, Corollary \ref{Corollary:d_Upper_Bound}, Corollary \ref{Cor:First_Order_Trial_State_V} and Corollary \ref{Cor:First_Order_Trial_State_Kinetic}, we further have
  \begin{align*}
     & 0\leq \braket{\Phi,  \mathcal{E}_\mathcal{P}(\xi) \Phi}=\braket{\Gamma ,  W^{-1} \mathcal{E}_\mathcal{P}(\xi)W \Gamma}\lesssim \braket{\Gamma ,   \mathcal{E}_\mathcal{P}(\psi+\delta_1 \psi) \Gamma}+1\lesssim 1,\\
    &  0\leq \braket{\Phi,  \mathcal{E}_\mathcal{K}(d) \Phi}=\braket{\Gamma, W^{-1} \mathcal{E}_\mathcal{K}(d)W \Gamma}\lesssim \braket{\Gamma,  \mathcal{E}_\mathcal{K}(c) \Gamma}+\frac{1}{N}\lesssim \frac{1}{N}.
  \end{align*}
\end{proof}

\section{Analysis of the Scattering Coefficients}
\label{Sec:Analysis of the Scattering Coefficients}
This Section is devoted to the study of the variational problems in the definition of $b_{\mathcal{M}}(V)$ in Eq.~(\ref{Eq:Definition_of_b}) and the definition of $\sigma(V)$ in Eq.~(\ref{Eq:Definition_of_sigma}) as well the study of their corresponding minimizers $\omega$ and $\eta$. Especially we want to compare $\gamma(V),\mu(V)$ and $\sigma(V)$ defined in Eq.~(\ref{Eq:Definition_of_gamma}), Eq.~(\ref{Eq:Definition_of_mu}) and Eq.~(\ref{Eq:Definition_of_sigma}) with $\gamma_N,\mu_N$ and $\sigma_N$ defined in Eq.~(\ref{Eq:sigma}), Eq.~(\ref{Eq:mu}) and Eq.~(\ref{Eq:gamma}), see Lemma \ref{Lemma:Correction_Coefficients}. The proof will be based on the observation that the $N$-dependent quantities can be seen as a counterpart on the three dimensional torus $\Lambda$ to the $N$-independent quantities defined in terms of variational problems on the full space $\mathbb{R}^3$. Similarly we will compare in Lemma \ref{Lem:Coefficient_Control_II} the modified scattering length $b_{\mathcal{M}}(V)$, which can be expressed in terms of the minimizer $\omega$ as
\begin{align*}
 b_{\mathcal{M}}(V)=\int_{\mathbb{R}^6}(1-\omega)V\, \mathrm{d}x , 
\end{align*}
see \cite{NRT1}, with its counterpart on the torus $\Lambda$ defined below Eq.~(\ref{Eq:Identity_Hamiltonian_A}) as
\begin{align*}
    6\lambda_{0,0}=\braket{u_0 u_0 u_{0},(V_N -V_N R V_N)u_0 u_0 u_0}=\left(V_N -V_N R V_N\right)_{000,000}.
\end{align*}
The proof of Lemma \ref{Lem:Coefficient_Control_II} is based on the observation that $(1-\omega)V=V-\omega V$ is the full space counterpart to the renormalized potential $V_N -V_N R V_N$. \\

In the following Lemma \ref{Lemma:Existence_of_solutions} we want to derive properties of $\mathcal{Q}$ defined in Eq.~(\ref{Eq:Q-Functional}) as
\begin{align*}
\mathcal{Q}(\varphi)=  \int_{\mathbb{R}^9}\left\{2\big|\mathcal{M}_*\nabla \varphi(x)\big|^2+\mathbb{V}(x)\! \left|\frac{f(x)}{\mathbb{V}(x)}-\varphi(x)\right|^2 \! \right\}\mathrm{d}x  
\end{align*}
and its minimizers. For this purpose it will be useful to introduce for a given cut-off parameter $\ell$ and a smooth function $\chi:\mathbb{R}^6\longrightarrow \mathbb{R}$ function with $\chi(x)=1$ for $|x|_\infty \leq \frac{1}{3}$ and $\chi(x)=0$ for $|x|_\infty >\frac{1}{2}$, the modified function
\begin{align*}
 f_\ell(x_1,x_2,x_3):=V(x_1,x_2)\chi(\ell^{-1} x_2,\ell^{-1}x_3)\omega(x_2,x_3).
\end{align*}
Furthermore, we define the corresponding functional, acting on $\dot{H}^1(\mathbb{R}^9)$, as
\begin{align*}
    \mathcal{Q}_\ell(\varphi):=\int_{\mathbb{R}^9}\left\{2\big|\mathcal{M}_*\nabla \varphi(x)\big|^2+\mathbb{V}(x)\! \left|\frac{f_\ell(x)}{\mathbb{V}(x)}-\varphi(x)\right|^2 \! \right\}\mathrm{d}x,
\end{align*}
and $\sigma_\ell(V):=\mathcal{Q}_\ell(0)-\inf_{\varphi\in \dot{H}^1(\mathbb{R}^9)}\mathcal{Q}_\ell(\varphi)$.

\begin{lem}
\label{Lemma:Existence_of_solutions}
    There exists a unique minimizer $\eta$ of the functional $\mathcal{Q}$ in $\dot{H}^1(\mathbb{R}^9)$, and $\eta$ satisfies the point-wise bounds $0\leq \eta\leq \frac{1}{-2\Delta_{\mathcal{M}_*}}f$ and $\sigma(V)=\int_{\mathbb{R}^9}f(x)\eta(x)\mathrm{d}x$, as well as
    \begin{align*}
       ( -2\Delta_{\mathcal{M}_*}+\mathbb{V})\eta=f
    \end{align*}
    in the sense of distributions. Furthermore, $\mathcal{Q}_\ell$ has a unique minimizer $\eta_\ell$, and $\eta_\ell$ satisfyies $0\leq \eta_\ell \leq \frac{1}{-2\Delta_{\mathcal{M}_*}}f_\ell$ and $\sigma_\ell(V)=\int_{\mathbb{R}^9}f_\ell(x)\eta_\ell(x)\mathrm{d}x$, as well as $( -2\Delta_{\mathcal{M}_*}+\mathbb{V})\eta_\ell=f_\ell$ and 
    \begin{align*}
        \sigma(V)=\lim_{\ell\rightarrow \infty}\sigma_\ell(V).
    \end{align*}
\end{lem}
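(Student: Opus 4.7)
\textbf{Proof plan for Lemma \ref{Lemma:Existence_of_solutions}.} The plan is to obtain existence and uniqueness by the direct method applied to the strictly convex, coercive functional $\mathcal{Q}$, derive the Euler--Lagrange equation and pointwise bounds from convexity/comparison arguments, and finally pass to the limit $\ell \to \infty$ via monotone convergence.

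First I would verify that $\mathcal{Q}$ is well-defined and coercive on $\dot{H}^1(\mathbb{R}^9)$. Expanding,
\begin{align*}
\mathcal{Q}(\varphi)=\int_{\mathbb{R}^9}\!\left\{2|\mathcal{M}_*\nabla\varphi|^2+\mathbb{V}\varphi^2-2f\varphi\right\}\mathrm{d}x+\int_{\mathbb{R}^9}\!\frac{f^2}{\mathbb{V}}\mathrm{d}x,
\end{align*}
where the last integral is finite since $f/\mathbb{V}$ is bounded on the compact support of $\mathbb{V}$. For the linear term, since $f$ is compactly supported and smooth, $u_0:=(-2\Delta_{\mathcal{M}_*})^{-1}f\in\dot{H}^1(\mathbb{R}^9)$ (Hardy--Littlewood--Sobolev), and $\int f\varphi=\langle 2\mathcal{M}_*\nabla u_0,\mathcal{M}_*\nabla\varphi\rangle$, so Cauchy--Schwarz gives coercivity. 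Since $\mathbb{V}\geq 0$ and $\mathcal{M}_*$ is positive definite, $\mathcal{Q}$ is strictly convex, and a minimizing sequence is bounded in $\dot{H}^1$. A weakly convergent subsequence yields a minimizer by weak lower semicontinuity; strict convexity gives uniqueness. The same argument applies to $\mathcal{Q}_\ell$.

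Next I would derive the Euler--Lagrange equation by varying with respect to $\varphi\in C_c^\infty(\mathbb{R}^9)$, obtaining $(-2\Delta_{\mathcal{M}_*}+\mathbb{V})\eta=f$ distributionally. Testing this equation against $\eta$ itself (justified by a cutoff/approximation argument, using $\eta\in\dot{H}^1$ and the compact support of $f$ and $\mathbb{V}$) gives $2\|\mathcal{M}_*\nabla\eta\|^2+\int\mathbb{V}\eta^2=\int f\eta$, which combined with the expansion above yields
\begin{align*}
\mathcal{Q}(\eta)=\int\frac{f^2}{\mathbb{V}}-\int f\eta=\mathcal{Q}(0)-\int f\eta,
\end{align*}
so $\sigma(V)=\int f\eta$. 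For the lower bound $\eta\geq 0$, I note that $|\eta|\in\dot{H}^1$ with $|\nabla|\eta||\leq|\nabla\eta|$, and since $f/\mathbb{V}\geq 0$ we also have $|f/\mathbb{V}-|\eta||\leq|f/\mathbb{V}-\eta|$ pointwise, so $\mathcal{Q}(|\eta|)\leq\mathcal{Q}(\eta)$; uniqueness forces $\eta=|\eta|\geq 0$. For the upper bound, let $\eta_\infty:=(-2\Delta_{\mathcal{M}_*})^{-1}f\geq 0$; then $(-2\Delta_{\mathcal{M}_*}+\mathbb{V})(\eta_\infty-\eta)=\mathbb{V}\eta_\infty\geq 0$, and the maximum principle for the elliptic operator $-2\Delta_{\mathcal{M}_*}+\mathbb{V}$ (applicable since $\mathbb{V}\geq 0$ and both $\eta,\eta_\infty$ vanish at infinity) gives $\eta\leq\eta_\infty$. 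All these arguments transfer verbatim to $\eta_\ell$ with $f$ replaced by $f_\ell$.

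Finally, for the convergence $\sigma_\ell(V)\to\sigma(V)$, I would exploit that $f_\ell\nearrow f$ pointwise as $\ell\to\infty$ (with $0\leq f_\ell\leq f$). Applying the comparison argument above to the difference $\eta-\eta_\ell$, which satisfies $(-2\Delta_{\mathcal{M}_*}+\mathbb{V})(\eta-\eta_\ell)=f-f_\ell\geq 0$, yields $0\leq\eta_\ell\leq\eta$, and similarly $\eta_\ell$ is monotone in $\ell$. The identities $\sigma_\ell(V)=\int f_\ell\eta_\ell$ and $\sigma(V)=\int f\eta$ then give convergence by monotone convergence, since $f_\ell\eta_\ell\nearrow f\eta$ and $f\eta$ is integrable ($f$ is compactly supported, $\eta\in L^2_{\mathrm{loc}}$). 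The main technical obstacle is justifying the maximum principle in the unbounded setting on $\dot{H}^1(\mathbb{R}^9)$ with the modified Laplacian $\Delta_{\mathcal{M}_*}$; this can be done by testing $(-2\Delta_{\mathcal{M}_*}+\mathbb{V})(\eta_\infty-\eta)$ against $(\eta-\eta_\infty)_+\in\dot{H}^1$ to conclude $(\eta-\eta_\infty)_+=0$.
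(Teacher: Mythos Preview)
Your overall strategy---direct method for existence, strict convexity for uniqueness, variational comparison for the pointwise bounds---matches the paper's, and your treatment of the upper bound via testing against $(\eta-\eta_\infty)_+$ is the variational form of the truncation argument the paper uses (the paper phrases it as $\widetilde{\mathcal Q}(\min\{\varphi,\eta_\infty\})\leq\widetilde{\mathcal Q}(\varphi)$ for a shifted functional $\widetilde{\mathcal Q}$). Two issues remain.

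First, a recurring slip: $f(x,y,z)=V(x,y)\omega(y,z)$ is \emph{not} compactly supported, since $\omega$ only decays like $|(y,z)|^{-4}$; only the truncated $f_\ell$ is compactly supported. Your conclusions (finiteness of $\mathcal Q(0)$, $u_0\in\dot H^1$, integrability of $f\eta$) still hold, but the justifications need the polynomial decay of $\omega$ rather than compact support.

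Second, and more seriously, your convergence argument is incomplete. From $(-2\Delta_{\mathcal M_*}+\mathbb V)(\eta-\eta_\ell)=f-f_\ell\geq 0$ and the maximum principle you get $\eta_\ell\leq\eta$ and monotonicity in $\ell$, hence $\eta_\ell\nearrow\tilde\eta$ for some $\tilde\eta\leq\eta$. Monotone convergence then yields only $\sigma_\ell\to\int f\,\tilde\eta\leq\sigma(V)$; you have not shown $\tilde\eta=\eta$, so the assertion ``$f_\ell\eta_\ell\nearrow f\eta$'' is unjustified. The paper closes this gap by proving the complementary sandwich bound $\eta\leq\eta_\ell+(-2\Delta_{\mathcal M_*})^{-1}(f-f_\ell)$ via the same truncation trick applied to a second shifted functional, and then shows $(-2\Delta_{\mathcal M_*})^{-1}(f-f_\ell)\to 0$ pointwise using the Green's function and the decay estimate $|f-f_\ell|\lesssim\mathds 1(|x|>\ell/3)\,|x|^{-4}$. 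An alternative fix would be to pass to the limit in the Euler--Lagrange equation (the $\eta_\ell$ are uniformly bounded in $\dot H^1$ since $\mathcal Q_\ell(\eta_\ell)\leq\mathcal Q_\ell(0)\leq\mathcal Q(0)$) and invoke uniqueness to identify $\tilde\eta=\eta$, but this too is an argument you have not supplied.
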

\begin{proof}
  Following the proof of \cite{NRT1}, we observe that since $\mathcal{Q}(0)<\infty$, there exists a minimizing sequence $\varphi_n$ for $\mathcal{Q}$ with $\sup_{n}\|\nabla \varphi_n\|<\infty$ and $\sup_{n}\left\|\sqrt{\mathbb{V}}\left(\frac{f}{\mathbb{V}}-\varphi_n\right)\right\|<\infty$, and therefore there exists by Banach-Alaoglu a subsequence $\varphi_n$ and an element $\eta\in \dot{H}^1(\mathbb{R}^9)$, such that $\nabla \varphi_n\rightharpoonup \nabla \eta$, $\sqrt{\mathbb{V}}\left(\frac{f}{\mathbb{V}}-\eta \right)\in L^2(\mathbb{R}^9)$ and $\sqrt{\mathbb{V}}\left(\frac{f}{\mathbb{V}}-\varphi_n\right)\rightharpoonup \sqrt{\mathbb{V}}\left(\frac{f}{\mathbb{V}}-\eta \right)$. We observe that $\eta$ is a minimizer of $\mathcal{Q}$, since
  \begin{align*}
      \mathcal{Q}(\eta) \! = \! 2\|\mathcal{M}_* \!  \nabla \eta\|^2 \! + \! \Big\| \! \sqrt{\mathbb{V}}\Big(\frac{f}{\mathbb{V}} \! - \! \eta \Big) \! \Big\|^2  \!  \! \! \leq  \! \liminf_n  \! \left\{ \! 2\|\mathcal{M}_* \! \nabla \varphi_n\|^2 \! + \! \Big\|\! \sqrt{\mathbb{V}}\Big(\frac{f}{\mathbb{V}} \! - \! \varphi_n \Big) \! \Big\|^2 \! \right\} \! = \! \liminf_n \mathcal{Q}(\varphi_n).
  \end{align*}
  Computing $0=\frac{\mathrm{d}}{\mathrm{d}t}\mathcal{Q}(\eta+t\varphi)$ for $\varphi\in C^\infty_{0}(\mathbb{R}^9)$ immediately gives in the sense of distributions
  \begin{align*}
      (-2\Delta_{\mathcal{M}_*}+\mathbb{V})\eta=f,
  \end{align*}
and computing $0=\frac{\mathrm{d}}{\mathrm{d}t}\mathcal{Q}(\eta+t\eta)$ yields $\sigma(V)=\int_{\mathbb{R}^9}f(x)\eta(x)\mathrm{d}x$. Regarding the uniqueness, we note that $\varphi\mapsto \|\mathcal{M}_* \!  \nabla \varphi\|^2$ is strictly convex on $\dot{H}^1(\mathbb{R}^9)$, and therefore $\mathcal{Q}$ is strictly convex too. Consequently the minimizer $\eta$ is unique. Using that $\frac{f}{\mathbb{V}}\geq 0$, we obtain $\mathcal{Q}(|\varphi|)\leq \mathcal{Q}(\varphi)$ for all $\varphi\in \dot{H}^1(\mathbb{R}^9)$ and by the uniqueness of the minimizer $\eta=|\eta|\geq 0$. For the purpose of obtaining an upper bound on $\eta$, we observe that $\frac{1}{\mathcal{M}_*\nabla}f\in L^2(\mathbb{R}^9)$ and define the functional
\begin{align*}
    \widetilde{\mathcal{Q}}(\varphi):=  \int_{\mathbb{R}^9}\left\{2\Big|\mathcal{M}_*\nabla \varphi(x)+\frac{1}{2\mathcal{M}_*\nabla}f(x)\Big|^2+\mathbb{V}(x)\! \left|\varphi(x)\right|^2 \! \right\}\mathrm{d}x .
\end{align*}
Since $\widetilde{\mathcal{Q}}(\varphi)=\mathcal{Q}(\varphi)+\widetilde{\mathcal{Q}}(0)-\mathcal{Q}(0)$, we observe that $\eta$ is the unique minimizer of $\widetilde{Q}$. It is furthermore clear that $\widetilde{\mathcal{Q}}(\min\{\varphi,\frac{1}{-2\Delta_{\mathcal{M}_*}}f\})\leq \widetilde{\mathcal{Q}}(\varphi)$, and therefore we obtain by the uniqueness of minimizer for $\widetilde{\mathcal{Q}}$
\begin{align*}
    \eta=\min \! \left\{\eta,\frac{1}{-2\Delta_{\mathcal{M}_*}}f\right\} \! \leq \frac{1}{-2\Delta_{\mathcal{M}_*}}f.
\end{align*}
The properties of $\mathcal{Q}_\ell$ can verified analogously.

In order to compare $\sigma(V)$ with $\sigma_\ell(V)$, let us first verify the point-wise bounds 
\begin{align}
\label{Eq:Comparison_Minima}
    \eta_\ell\leq \eta\leq \eta_\ell + \frac{1}{-2\Delta_{\mathcal{M}_*}}(f-f_\ell).
\end{align}
For this purpose we introduce the additional functionals $\mathcal{Q}'_\ell$ and $\mathcal{Q}''_\ell$ as
\begin{align*}
    \mathcal{Q}'_\ell(\varphi): & =\int_{\mathbb{R}^9}\left\{2\Big|\mathcal{M}_*\nabla \varphi-\mathcal{M}_*\nabla \eta\Big|^2+\mathbb{V}\! \left|\varphi-\eta+\frac{f-f_\ell}{\mathbb{V}}\right|^2 \! \right\}\mathrm{d}x,\\
     \mathcal{Q}''_\ell(\varphi): & =\int_{\mathbb{R}^9}\left\{2\Big|\mathcal{M}_*\nabla \varphi-\mathcal{M}_*\nabla \eta_\ell+\frac{1}{2\mathcal{M}_*\nabla}f-\frac{1}{2\mathcal{M}_*\nabla}f_\ell \Big|^2+\mathbb{V}\! \left|\varphi-\eta_\ell\right|^2 \! \right\}\mathrm{d}x.
\end{align*}
By a straightforward computation, we observe that $\mathcal{Q}'_\ell(\varphi)=\mathcal{Q}_\ell(\varphi)+\mathcal{Q}'_\ell(0)-\mathcal{Q}_\ell(0)$ and similarly $\mathcal{Q}''_\ell(\varphi)=\mathcal{Q}(\varphi)+\mathcal{Q}''_\ell(0)-\mathcal{Q}(0)$. Therefore $\eta_\ell$ is the unique minimizer of $\mathcal{Q}'_\ell$, and since $f(x)\geq f_\ell(x)$ we further have
\begin{align*}
    \mathcal{Q}'_\ell\! \left(\min\left\{\varphi,\eta\right\}\right)\leq \mathcal{Q}'_\ell(\varphi).
\end{align*}
Consequently $\eta_\ell\leq \eta$. The second inequality in Eq.~(\ref{Eq:Comparison_Minima}) follows analogously, using that $\eta$ is the unique minimizer of $\mathcal{Q}''_\ell$ and that 
\begin{align*}
    \mathcal{Q}''_\ell\! \left(\min\left\{\varphi,\eta_\ell+\frac{1}{-2\Delta_{\mathcal{M}_*}}(f-f_\ell)\right\}\right)\leq \mathcal{Q}''_\ell(\varphi) .
\end{align*}
Using the fact that $|f(x)-f_\ell(x)|\lesssim \frac{\mathds{1}(|x|>\frac{\ell}{3})}{|x|^4}$, see \cite{NRT1}, we obtain the estimate
\begin{align*}
    \frac{1}{-2\Delta_{\mathcal{M}_*}}(f-f_\ell)(x)=\frac{\Gamma\! \left(\frac{9}{2}\right)}{28 \pi^{\frac{9}{2}}\mathrm{det}[M_*]}\int_{\mathbb{R}^9}\frac{f(y)-f_\ell(y)}{|\mathcal{M}_*^{-1} (x-y)|^{7}}\mathrm{d}y\lesssim \int_{|y|\geq \frac{\ell}{3}}\frac{1}{|y|^4 |x-y|^{7}}\mathrm{d}y\lesssim \frac{1}{\ell},
\end{align*}
i.e. $\frac{1}{-2\Delta_{\mathcal{M}_*}}(f-f_\ell)$ converges point-wise to zero and consequently $\eta_\ell$ converges point-wise to $\eta$ by Eq.~(\ref{Eq:Comparison_Minima}). Using Fatou's Lemma and $f_\ell(x)\eta_\ell(x)\geq 0$, as well as the fact that $f_\ell$ converges point-wise to $f$, therefore yields 
\begin{align*}
   & \sigma(V) \! = \! \int_{\mathbb{R}^9}f(x)\eta(x)\mathrm{d}x\leq \liminf_{\ell\rightarrow \infty}\int_{\mathbb{R}^9}f_\ell(x)\eta_\ell(x)\mathrm{d}x = \liminf_{\ell\rightarrow \infty}\sigma_\ell (V) \leq \limsup_{\ell\rightarrow \infty}\sigma_\ell (V)\leq \sigma(V),
\end{align*}
where we have used in the last inequality that $f_\ell \eta_\ell \leq f \eta$ by Eq.~(\ref{Eq:Comparison_Minima}).
\end{proof}

Before we can compare the modified scattering length $b_\mathcal{M}(V)$ with its counterpart on the torus in Lemma \ref{Lem:Coefficient_Control_II}, we need the following auxiliary result Lemma \ref{Lem:Coefficient_Control}.

\begin{lem}
\label{Lem:Coefficient_Control}
  Recall the definition of the coefficients $\lambda_{k,\ell}$ below Eq.~(\ref{Eq:Identity_Hamiltonian_A}) and the definition of $T$ in Eq.~(\ref{Eq:Definition_Feshbach-Schur}). Then there exists a constant $C>0$ such that $|\lambda_{k,\ell}|\leq \frac{C}{N^2}\left(1+\frac{|\ell|^2}{N }\right)^{-1}$ and
    \begin{align}
    \label{Eq:T-1_Coefficient_Control}
        \left|(T-1)_{ i j k,\ell 00}\right|\leq \frac{C\, \mathds{1}(i+j+k=\ell)}{N^2(|i|^2+|j|^2+|k|^2)}\left(1+\frac{|i|^2+|j|^2+|k|^2}{N  + |\ell|^2}\right)^{-2}.
    \end{align}
\end{lem}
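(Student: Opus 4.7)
The plan is to exploit that $T-1 = RV_N\pi_K$ implements a scattering-type transformation, and that the Fourier-side analysis on $\Lambda^3$ leads, upon rescaling momenta by $\sqrt{N}$, to the full-space scattering equation $(-2\Delta_{\mathcal{M}}+V)\omega = V$ satisfied by the minimizer $\omega$ of (\ref{Eq:Definition_of_b}).

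\textbf{Step 1 (Momentum conservation).} Identify $(T-1)_{ijk,\ell 00} = (RV_N u_\ell u_0 u_0)_{ijk}$. Since $V_N$ is translation-invariant by assumption, $Q^{\otimes 3}$ and $\Delta_3$ commute with translations, and hence so does $R$ as the pseudo-inverse of $Q^{\otimes 3}(-\Delta_3+V_N)Q^{\otimes 3}$. Therefore momentum is conserved, yielding the indicator $\mathds{1}(i+j+k=\ell)$. Setting $\varphi_\ell := RV_N u_\ell u_0 u_0 \in \mathrm{range}(Q^{\otimes 3})$, by definition of $R$ one has
\begin{align*}
    Q^{\otimes 3}(-\Delta_3 + V_N)\varphi_\ell = Q^{\otimes 3} V_N u_\ell u_0 u_0.
\end{align*}

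\textbf{Step 2 (Free-resolvent expansion).} On the range of $Q^{\otimes 3}$ (i.e. for $i,j,k\neq 0$) we may invert $-\Delta_3$ to obtain
\begin{align*}
    (\varphi_\ell)_{ijk} = \frac{1}{|i|^2+|j|^2+|k|^2}\Bigl(V_N(u_\ell u_0 u_0 - \varphi_\ell)\Bigr)_{ijk},
\end{align*}
which already produces the factor $1/(|i|^2+|j|^2+|k|^2)$ in the bound. The overall $N^{-2}$ prefactor is built into the matrix element (\ref{Eq:Coefficients_in_Fourier}): each occurrence of $V_N$ carries a factor $N^{-2}\widehat V(\cdot/\sqrt N,\cdot/\sqrt N)$, and $u_\ell u_0 u_0 - \varphi_\ell$ is, after rescaling momenta by $\sqrt N$, the torus analogue of the $\mathbb{R}^6$ function $1-\omega$.

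\textbf{Step 3 (Rapid Fourier decay).} The stronger factor $(1+(|i|^2+|j|^2+|k|^2)/(N+|\ell|^2))^{-2}$ is obtained by iterating the scattering equation: substitute $\varphi_\ell$ in the right-hand side of Step 2 once more, producing a second factor of $(-\Delta_3)^{-1}V_N$. Because $V\in C^\infty(\mathbb{R}^6)$ is compactly supported and $\omega$ is smooth on $\mathrm{supp}(V)$ with decay $|\omega(x)|\lesssim |x|^{-4}$ at infinity, the product $V(1-\omega)$ lies in $C^\infty_c(\mathbb{R}^6)$ and has a Schwartz-class Fourier transform; after rescaling, its torus analogue therefore decays faster than any polynomial on the characteristic momentum scale, which is $\sqrt{N+|\ell|^2}$ once the external momentum $\ell$ is taken into account. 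Two applications of the scattering identity supply the two powers of decay claimed, uniformly in the cut-off parameter $K$.

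\textbf{Step 4 (Bound on $\lambda_{k,\ell}$ and main obstacle).} Expanding the definition,
\begin{align*}
    18\lambda_{k,\ell} = \bigl\langle u_0 u_\ell u_{k-\ell},\, V_N(u_0 u_0 u_k + \text{perms.})\bigr\rangle - \bigl\langle u_0 u_\ell u_{k-\ell},\, V_N RV_N(u_0 u_0 u_k + \text{perms.})\bigr\rangle,
\end{align*}
and substituting (\ref{Eq:Coefficients_in_Fourier}) for the first term and the Step 3 bound on $\varphi_k$ for the second, gives a sum involving $\widehat V(-\ell/\sqrt N,\cdot)$ times a bounded factor. Smoothness of $V$ forces $|\widehat V(-\ell/\sqrt N,\cdot)|\lesssim (1+|\ell|^2/N)^{-1}$ (in fact faster than any polynomial), producing the claimed decay. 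The main obstacle is the rigorous comparison between the torus solution $\varphi_\ell$ and the rescaled $\mathbb{R}^6$ solution $\omega$: this requires controlling discretization errors when replacing Riemann sums by integrals, boundary corrections from working on $\Lambda^3$ instead of $\mathbb{R}^9$, and the discrepancy between the pseudo-inverse $R$ (which projects via $Q^{\otimes 3}$ rather than via $1-\pi_K$). All these corrections should be lower order in $N^{-1}$, but tracking the combined decay in both the intrinsic momentum $(i,j,k)$ and the external momentum $\ell$, particularly in the crossover regime $|\ell|^2 \sim N$, is the key technical point of the proof.
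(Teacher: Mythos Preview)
Your approach differs substantially from the paper's, and as written it has a real gap.

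The paper does \emph{not} compare $\varphi_\ell = RV_N u_\ell u_0 u_0$ with the full-space scattering solution $\omega$ in this lemma. Instead it uses a purely operator-theoretic differentiation argument: one expands $\nabla^n(T-1)e^{i\ell x}=\nabla^n RV_N e^{i\ell x}$ by commuting derivatives through the resolvent $R$, which produces a finite sum of terms of the form
\[
Q^{\otimes 3}\nabla^{k_1}(V_N)Q^{\otimes 3}\cdots Q^{\otimes 3}\nabla^{k_m}(V_N)Q^{\otimes 3}\,\nabla^a R^{1-b}V_N e^{i\ell x},
\]
with $k_1+\cdots+k_m+2m+a+2b=n$. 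Each term is then bounded in $L^\infty$ of Fourier space using only the elementary operator estimates $\|\sqrt{Q^{\otimes 3}\nabla^k(V_N)Q^{\otimes 3}}\|\lesssim\sqrt{N}^{\,k/2+1}$, $\|\sqrt{Q^{\otimes 3}\nabla^k(V_N)Q^{\otimes 3}}\,e^{iK\cdot X}\|\lesssim\sqrt{N}^{\,k/2-2}$, and $\|\sqrt{Q^{\otimes 3}V_NQ^{\otimes 3}}\,R\nabla\|\lesssim 1$. This yields an $L^\infty$ bound of order $N^{-2}(\sqrt N+|\ell|)^{n-2}$ on the Fourier side for every $n\ge 2$, which is precisely (\ref{Eq:T-1_Coefficient_Control}). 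The bound on $\lambda_{k,\ell}$ follows by the same mechanism. No property of $\omega$ is used.

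Your route relies instead on the assertion that, after rescaling, $u_\ell u_0 u_0-\varphi_\ell$ is ``the torus analogue of $1-\omega$'' and inherits the smoothness of $V(1-\omega)$. You correctly flag this torus-to-full-space comparison as the main obstacle, but you do not resolve it, and in the paper's logical structure that comparison is carried out \emph{later}, in Lemma~\ref{Lem:Coefficient_Control_II}, which \emph{uses} Lemma~\ref{Lem:Coefficient_Control} as an input. So Step~3 as stated is either circular or incomplete: iterating the scattering identity on the torus produces factors of $(-\Delta_3)^{-1}V_N$ acting on $u_\ell u_0 u_0-\varphi_\ell$, and extracting additional decay from this requires exactly the Fourier regularity of $\varphi_\ell$ that you are trying to establish. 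The paper's commutator expansion sidesteps this by never invoking $\omega$ and by converting momentum decay directly into powers of $\nabla$ distributed over copies of $V_N$ and one copy of $R$.
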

\begin{proof}
In order to verify Eq.~(\ref{Eq:T-1_Coefficient_Control}), we observe that for $|\ell|\leq K$, $\nabla^n (T-1)e^{i \ell x}=\nabla^n R V_N e^{i \ell x}$ can be written as the sum of terms of the form
\begin{align}
\label{Eq:Split_of_nabla_R}
    Q^{\otimes 3}\nabla^{k_1} \! (V_N)Q^{\otimes 3}\dots  Q^{\otimes 3}\nabla^{k_m} \! (V_N)Q^{\otimes 3}\, \nabla^a R^{1-b}V_N  e^{i \ell x},
\end{align}
where the coefficients satisfy $k_1+\dots + k_m + 2m +a +2b=n$ and either (I) that $b=1$, (II) that $b=0$ and $a=1$ or (III) that $b=0$, $a=0$ and $m\geq 1$ as well as $k_m=0$. In the following we are going to verify individually for the three cases (I)-(III) that the Fourier transform of the expression in Eq.~(\ref{Eq:Split_of_nabla_R}) has an $L^{\infty}$ bound of the order $N^{-2}(\sqrt{N}+|\ell|)^{n-2}$ for $n\geq 2$, which immediately implies Eq.~(\ref{Eq:T-1_Coefficient_Control}). Let us first of all state the useful bounds
\begin{align}
\label{Eq:Lem:Coefficient_Control_Addendum_a}
  \left\|\sqrt{Q^{\otimes 3}\nabla^{k}  (V_N)Q^{\otimes 3}}\right\|  & \leq \sqrt{\|Q^{\otimes 3}\nabla^{k}  (V_N)Q^{\otimes 3}\|}\lesssim \sqrt{N}^{\frac{k}{2}+1},\\
  \label{Eq:Lem:Coefficient_Control_Addendum_b}
  \left\|\sqrt{Q^{\otimes 3}\nabla^{k}  (V_N)Q^{\otimes 3}}e^{i K\cdot X}\right\| &  = \sqrt{\braket{e^{i K\cdot X},Q^{\otimes 3}\nabla^{k}  (V_N)Q^{\otimes 3}e^{i K\cdot X}}} \lesssim \sqrt{N}^{\frac{k}{2}-2}
\end{align}
for $k\geq 0$. Regarding the case (I), we obtain immediately by Eq.~(\ref{Eq:Lem:Coefficient_Control_Addendum_b}) and Eq.~(\ref{Eq:Lem:Coefficient_Control_Addendum_b})
\begin{align*}
   & \left|\left\langle e^{i K\cdot X},Q^{\otimes 3}\nabla^{k_1} \! (V_N)Q^{\otimes 3}\dots  Q^{\otimes 3}\nabla^{k_m} \! (V_N)Q^{\otimes 3}\, \nabla^a V_N  e^{i \ell x} \right\rangle \right|\\
   & \lesssim \sqrt{N}^{k_1+\dots +k_m +2m-4}\big(\sqrt{N}+|\ell|\big)^a\leq N^{-2}(\sqrt{N}+|\ell|)^{n-2}.
\end{align*}
Since the case (II) is similar to the case (III), let us directly have a look at the case (III), where we use the fact that $\|\sqrt{Q^{\otimes 3} V_N Q^{\otimes 3}}R\nabla\|\lesssim 1$ to obtain
\begin{align}
\label{Eq:Fourier_Case_III}
   & \ \ \ \ \   \left|\left\langle e^{i K\cdot X},Q^{\otimes 3}\nabla^{k_1} \! (V_N)Q^{\otimes 3}\dots  Q^{\otimes 3}\nabla^{k_{m-1}} \! (V_N)Q^{\otimes 3}\, Q^{\otimes 3} V_N Q^{\otimes 3}\, R  V_N  e^{i \ell x} \right\rangle \right|\\
   \nonumber
    & \lesssim \left\|\sqrt{Q^{\otimes 3} V_N Q^{\otimes 3}}\, Q^{\otimes 3}\nabla^{k_{m-1}} \! (V_N)Q^{\otimes 3}\dots Q^{\otimes 3}\nabla^{k_{1}} \! (V_N)Q^{\otimes 3}\, e^{i K\cdot X}\right\|\, \left\|\frac{1}{\nabla}Q^{\otimes 3} V_N  e^{i \ell x} \right\|.
\end{align}
As a consequence of Eq.~(\ref{Eq:Lem:Coefficient_Control_Addendum_b}) and Eq.~(\ref{Eq:Lem:Coefficient_Control_Addendum_b}) we have
\begin{align}
\label{Eq:Lem:Coefficient_Control_Addendum}
        \left\|\sqrt{Q^{\otimes 3} V_N Q^{\otimes 3}}\, Q^{\otimes 3}\nabla^{k_{m-1}} \! (V_N)Q^{\otimes 3}\dots Q^{\otimes 3}\nabla^{k_{1}} \! (V_N)Q^{\otimes 3}\, e^{i K\cdot X}\right\|\lesssim \sqrt{N}^{k_1+\dots + k_m+2(m-1)-2}.
\end{align}
This yields the desired estimate for the term in Eq.~(\ref{Eq:Fourier_Case_III}), since $\left\|\frac{1}{\nabla}Q^{\otimes 3} V_N  e^{i \ell x} \right\|\lesssim \sqrt{N}^{-2}$. The bounds on $\lambda_{k,\ell}$ can be verified similarly.
\end{proof}

In the following Lemma \ref{Lem:Coefficient_Control_II}, we show that the renormalized potential $N^2(V_N - V_N R  V_N)$ converges to $b_{\mathcal{M}}(V)\delta(x-y,x-z)$ in a suitable sense. The analogous result for Bose gases with two-particle interactions has been verified in \cite[Lemma 1]{B}.

\begin{lem}
\label{Lem:Coefficient_Control_II}
Let $b_{\mathcal{M}}(V)$ be the modified scattering length introduced in Eq.~(\ref{Eq:Definition_of_b}). Then 
    \begin{align*}
        \left|N^2(V_N - V_N R V_N)_{0 0  0,0 0 0 }-b_{\mathcal{M}}(V)\right|\leq \frac{1}{N}
    \end{align*}
Furthermore, $(V_N - V_N R V_N)_{ijk,\ell m n}=0$ in case $i+j+k\neq \ell+m+n$ and otherwise 
    \begin{align*}
         \left|N^2(V_N - V_N R V_N)_{ijk,\ell m n}-b_{\mathcal{M}}(V)\right|\leq \frac{C_{ijk,\ell m n}}{\sqrt{N}}
    \end{align*}
    for suitable constants $C_{ijk,\ell m n}$.
\end{lem}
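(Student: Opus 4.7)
The selection rule $(V_N - V_N R V_N)_{ijk,\ell mn}=0$ for $i+j+k\neq \ell+m+n$ is a direct consequence of translation invariance. Both $V_N$ and $-\Delta_3$ commute with the diagonal action of $\Lambda$ by translations on $L^2(\Lambda^3)$; the range of $Q^{\otimes 3}$ is spanned by $u_i u_j u_k$ with $i,j,k\neq 0$, a set stable under the total-momentum grading, so $Q^{\otimes 3}$ and hence the pseudo-inverse $R$ also commute with translations. Therefore the operator $V_N - V_N R V_N$ is diagonal in the total momentum.

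For the diagonal entry at $(000,000)$, the plan is to match $RV_N u_0^{\otimes 3}$ with a torus analogue of the full-space scattering solution $\omega$ defined in Eq.~(\ref{Eq:Definition_of_b}). Recall $\omega$ satisfies $(-2\Delta_\mathcal{M}+V)\omega=V$ on $\mathbb{R}^6$ and $b_\mathcal{M}(V)=\int_{\mathbb{R}^6} V(1-\omega)\,\mathrm{d}x$. Using the identity $\Delta_3\varphi(x-u,y-u)=2(\Delta_\mathcal{M}\varphi)(x-u,y-u)$ together with the rescaling $\omega_N(u,v):=\omega(\sqrt N u,\sqrt N v)$, the lifted function $\tilde\omega_N(x_1,x_2,x_3):=\omega_N(x_1-x_3,x_2-x_3)$ (suitably symmetrized over the three possible ``background particle'' choices) solves $(-\Delta_3+V_N)\tilde\omega_N=V_N$ on $\mathbb{R}^9$. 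Since $V_N$ is supported in a ball of radius $O(N^{-1/2})$ in the relative coordinates and $\omega$ decays like $|x|^{-4}$ at infinity (see \cite{NRT1}), I would cut $\tilde\omega_N$ off by a smooth $\chi$ localizing to a fixed fraction of the fundamental domain and periodize it to obtain a trial function $\tilde\omega_N^{\mathrm{tr}}\in L^2(\Lambda^3)$ with $(-\Delta_3+V_N)\tilde\omega_N^{\mathrm{tr}} = V_N + r_N$, the remainder $r_N$ being controlled by the $|x|^{-4}$ tail and of size $O(N^{-5/2})$ in an appropriate norm. Then, using that $R$ inverts $Q^{\otimes 3}(-\Delta_3+V_N)Q^{\otimes 3}$ on its range, one computes that $Q^{\otimes 3}\tilde\omega_N^{\mathrm{tr}} - RV_N u_0^{\otimes 3}$ is small, which gives $N^2\langle u_0^{\otimes 3},V_N(1-RV_N)u_0^{\otimes 3}\rangle=N^2\langle u_0^{\otimes 3},V_N(1-\tilde\omega_N^{\mathrm{tr}})\rangle+O(1/N)$; changing variables $u=\sqrt N(x_1-x_3)$, $v=\sqrt N(x_2-x_3)$, the leading term converges to $\int_{\mathbb{R}^6}V(1-\omega)\,\mathrm{d}x=b_\mathcal{M}(V)$, with the error controlled by the cutoff. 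For sharpness of the $1/N$ bound, a two-sided argument comparing $\tilde\omega_N^{\mathrm{tr}}$ against the minimizer $RV_N u_0^{\otimes 3}/\|\cdot\|$ of the torus version of the variational problem, using the convexity of the quadratic form, should give the matching lower bound.

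For the off-diagonal momenta $(ijk)\neq(000)$ satisfying $i+j+k=\ell+m+n$, the translation invariance already established allows one to conjugate by the plane wave $u_p$ with $p=i+j+k$, reducing to a comparison between scattering problems with the same potential $V_N$ but slightly shifted momenta in the ``non-zero'' slots. The leading contribution remains $b_\mathcal{M}(V)$; the difference arises from the phase factors in the matrix elements of $\tilde\omega_N^{\mathrm{tr}}$ with $u_iu_ju_k\overline{u_\ell u_m u_n}$, and produces oscillating integrals of size $O(N^{-1/2})$ with constants $C_{ijk,\ell mn}$ depending on $|i|,|j|,|k|,|\ell|,|m|,|n|$ through bounded Sobolev norms of $\omega$.

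The main technical obstacle is handling the $Q^{\otimes 3}$ projection, which removes not only the $u_0^{\otimes 3}$ component but every contribution with \emph{some} slot in the zero mode; these ``partially low-momentum'' components have no counterpart in the unconfined variational problem on $\mathbb{R}^6$ and must be controlled separately. This is where the sharp coefficient bounds of Lemma \ref{Lem:Coefficient_Control}, particularly the decay $N^{-2}(|i|^2+|j|^2+|k|^2)^{-1}$ of $(T-1)_{ijk,\ell 00}$, are indispensable: they translate the $L^2$-closeness of $Q^{\otimes 3}\tilde\omega_N^{\mathrm{tr}}$ to $RV_N u_0^{\otimes 3}$ into the pointwise matrix-element estimate required by the lemma.
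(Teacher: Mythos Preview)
Your approach is essentially the paper's: construct a torus trial function from the rescaled, cut-off scattering solution $\omega$, compare it to $RV_Nu_0^{\otimes 3}$, and control the discrepancy coming from the $(1-Q^{\otimes 3})$ projection via Lemma~\ref{Lem:Coefficient_Control}. The paper makes this precise through the exact identity
\[
RV_N u_0^{\otimes 3}=\psi_0+R\xi_0+(RV_N-1)(1-Q^{\otimes 3})\psi_0,
\]
where $\psi_0$ is your $\tilde\omega_N^{\mathrm{tr}}$ and $\xi_0$ is your $r_N$, and then estimates each of the two error contributions pointwise in Fourier space rather than in a single norm; your $O(N^{-5/2})$ claim for $r_N$ is not the right bookkeeping, since what actually enters is $N^2\langle RV_N,\xi_0\rangle$, bounded by summing $|\widehat{RV_N}(K)|\,|\widehat\xi_0(K)|$ to get $O(\ell^{-2})$ with the cutoff at scale $\ell/\sqrt N$, then optimizing $\ell\sim\sqrt N$.

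One point to drop: the convexity/two-sided variational argument is unnecessary. The quantity $N^2(V_N-V_NRV_N)_{000,000}-b_{\mathcal M}(V)$ is computed as a \emph{signed} difference via the identity above, and both error terms are bounded in absolute value; there is no separate lower bound to establish. For the off-diagonal entries, the weaker $N^{-1/2}$ rate comes not primarily from phase factors in the main term (those give $O(N^{-5/2})$ corrections), but from additional first-order derivative terms in $\xi_n$ of the form $(n_1-n_2)\nabla\omega^{\sqrt N}$ that have only one power of decay in $|K|^{-1}$ instead of two.
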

\begin{proof}
    Let $\omega$ be the unique minimizer to the variational problem in Eq.~(\ref{Eq:Definition_of_b}), which exists according to \cite{NRT1} and satisfies in the sense of distributions
\begin{align*}
    (-2\Delta_{\mathcal{M}}+V)\omega=V.
\end{align*}
Furthermore, let $\chi$ be a smooth function with $\chi(x)=1$ for $|x|_\infty \leq \frac{1}{3}$ and $\chi(x)=0$ for $|x|_\infty>\frac{1}{2}$, and let us denote for a function $f$ the rescaled version with $f^L(x):=f(Lx)$. Then we define for $n=(n_1,n_2,n_3)\in \left(2\pi \mathbb{Z}\right)^{3\times 3}$ and $0<\ell<{\sqrt{N}}$
\begin{align}
\label{Eq:Definition_psi_n}
    \psi_n(x,y,z):=e^{i n_1 x}e^{i n_2 y}e^{i n_3 z}\chi^{\frac{{\sqrt{N}}}{\ell}} \! \left(x-y,x-z\right)\omega^{\sqrt{N}} (x-y,x-z).
\end{align}
In the following we want to show that $\psi_n$ is an approximation of $RV_Ne^{i n_1 x}e^{i n_2 y}e^{i n_3 z}$. For this purpose we observe that the function $\psi_n$ satisfies the differential equation
\begin{align}
\label{Eq:Differential_Equation_First}
    (-\Delta+V_{N})\psi_n=e^{i n_1 x}e^{i n_2 y}e^{i n_3 z} \Big(V_{N}-\xi_n(x-y,x-z)\Big),
\end{align}
where we define $\xi_n:\mathbb{T}^2\longrightarrow \mathbb{R}$ as 
\begin{align}
\label{Eq:Definition_xi_n}
 &   \ \ \ \ \ \ \ \xi_n:=2\Delta_{\mathcal{M}}(\chi^{\frac{{\sqrt{N}}}{\ell}})\omega^{\sqrt{N}}+4\mathcal{M}^2 \nabla(\chi^{\frac{{\sqrt{N}}}{\ell}})\nabla \omega^{\sqrt{N}}-2(n_1^2+n_2^2+n_3^2)\chi^{\frac{{\sqrt{N}}}{\ell}}\omega^{\sqrt{N}}\\
 \nonumber
   & + \! 4i(n_1 \! - \! n_2)\!  \! \left(\nabla_{x_1}(\chi^{\frac{{\sqrt{N}}}{\ell}})\omega^{\sqrt{N}} \! + \! \chi^{\frac{{\sqrt{N}}}{\ell}}\nabla_{x_1}(\omega^{\sqrt{N}})\right) \!  \! + \!  4i(n_1 \! - \! n_3)\!  \! \left(\nabla_{x_2}(\chi^{\frac{{\sqrt{N}}}{\ell}})\omega^{\sqrt{N}} \! + \! \chi^{\frac{{\sqrt{N}}}{\ell}}\nabla_{x_2}(\omega^{\sqrt{N}})\right) \!  \! .
\end{align}
In order to verify that $\xi_n$ can be treated as an error term, we first note that we have 
\begin{align*}
 & \ \ \widehat{\Delta_{\mathcal{M}}\chi^{\frac{{\sqrt{N}}}{\ell}}}   =\left({\sqrt{N}}^{-1}\ell\right)^{4} \widehat{\Delta_{\mathcal{M}}\chi}^{\frac{\ell}{{\sqrt{N}}}} ,\\
& \widehat{\mathcal{M}^2\nabla\chi^{\frac{{\sqrt{N}}}{\ell}}}   = \left({\sqrt{N}}^{-1}\ell\right)^{5}\widehat{\mathcal{M}^2\nabla\chi}^{\frac{\ell}{{\sqrt{N}}}},
\end{align*}
and utilizing the density $\rho:=-2\Delta_\mathcal{M}\omega$ we obtain
\begin{align*}
   \widehat{\omega^{\sqrt{N}}}(K) & = {\sqrt{N}}^{-6}\widehat{\omega}^{\frac{1}{{\sqrt{N}}}}(K)={\sqrt{N}}^{-6}  \frac{\widehat{\rho}({\sqrt{N}}^{-1}K)}{|{\sqrt{N}}^{-1}K |^2}={\sqrt{N}}^{-4}\frac{\widehat{\rho}({\sqrt{N}}^{-1}K)}{|K|^2},\\
  \widehat{\nabla \omega^{\sqrt{N}}}(K)  & ={\sqrt{N}}^{-4}\frac{\widehat{\rho}({\sqrt{N}}^{-1}K)K}{|K|^2},
\end{align*}
Furthermore, we observe that we can write the Fourier transform of $\xi_n$ as
\begin{align*}
  &  \widehat{\xi}_n=\widehat{\Delta_{\mathcal{M}}\chi^{\frac{{\sqrt{N}}}{\ell}}} * \widehat{\omega^{\sqrt{N}}} +2 \widehat{\mathcal{M}^2 \nabla\chi^{\frac{{\sqrt{N}}}{\ell}}} * \widehat{\nabla \omega^{\sqrt{N}}}-2(n_1^2+n_2^2+n_3^2)\widehat{\chi^{\frac{{\sqrt{N}}}{\ell}}} \! * \! \widehat{\omega^{\sqrt{N}}}\\
    & \ \ \ \ \ \ \ \ \ \ \ + \! 4i(n_1 \! - \! n_2)\!  \! \left( \! \widehat{\nabla_{x_1} \! (\chi^{\frac{{\sqrt{N}}}{\ell}})} \! * \! \widehat{\omega^{\sqrt{N}}} \! + \! \widehat{\chi^{\frac{{\sqrt{N}}}{\ell}}} \! * \! \widehat{\nabla_{x_1}\! (\omega^{\sqrt{N}})}\right) \\
    &  \ \ \ \ \ \ \ \ \ \ \ + \!  4i(n_1 \! - \! n_3)\!  \! \left(\widehat{\nabla_{x_2} \! (\chi^{\frac{{\sqrt{N}}}{\ell}})}\! * \! \widehat{\omega^{\sqrt{N}}} \! + \! \widehat{\chi^{\frac{{\sqrt{N}}}{\ell}}} \! * \! \widehat{\nabla_{x_2} \! (\omega^{\sqrt{N}})}\right) \!  \!.
\end{align*}
Since $\rho\in L^1(\mathbb{R}^6)$, see \cite{NRT1}, we have $\widehat{\rho}\in L^\infty(\mathbb{R}^6)$, and distinguishing between the cases $|K|\lesssim \frac{{\sqrt{N}}}{\ell}$ and $|K|\gg \frac{{\sqrt{N}}}{\ell}$, yields the estimate
\begin{align}
\label{Eq:Delta_chi_Fourier}
    \big|\widehat{\Delta_{\mathcal{M}}\chi^{\frac{{\sqrt{N}}}{\ell}}} * \widehat{\omega^{\sqrt{N}}}(K)\big| \! \lesssim  \! {\sqrt{N}}^{-4} \!  & \left({\sqrt{N}}^{-1}\ell\right)^{4} \!  \! \int_{\mathbb{R}^6} \!  \! \frac{|\widehat{\Delta_{\mathcal{M}}\chi}({\sqrt{N}}^{-1}\ell P)|}{|K+P|^2}\mathrm{d}P \!  \lesssim  \! {\sqrt{N}}^{-4}\min\left\{ \! \left( \! \frac{{\sqrt{N}}}{\ell |K|} \! \right)^2,1 \! \right\} \! .
\end{align}
Using that $\rho$ has compact support as a consequence of the scattering equation, we obtain that $x\nabla \rho(x)\in L^1(\mathbb{R}^6)$ and therefore 
\begin{align*}
  \left|K_1\widehat{\rho}(\sqrt{N}^{-1} K_1)-K_2\widehat{\rho}(\sqrt{N}^{-1} K_2)\right|\lesssim |K_1-K_2|.  
\end{align*}
Since $\widehat{\mathcal{M}^2\nabla\chi}$ is reflection anti-symmetric, we furthermore have
\begin{align*}
 &   \ \ \ \ \ \big|\widehat{\mathcal{M}^2 \nabla\chi^{\frac{{\sqrt{N}}}{\ell}}} * \widehat{\nabla \omega^{\sqrt{N}}}\big| \! \lesssim  \! {\sqrt{N}}^{-4} \! \!  \left({\sqrt{N}}^{-1}\ell\right)^{5}   \! \! \int_{\mathbb{R}^6}   \! \! \! \frac{|\widehat{\mathcal{M}^2 \nabla\chi}({\sqrt{N}}^{-1}\ell P)|}{|K \! + \! P|^2}  \left|  (K \! + \! P) \! - \! (K \! - \! P)  \right| \! \mathrm{d}P\\
    &   + \! {\sqrt{N}}^{-4} \! \left({\sqrt{N}}^{-1}\ell\right)^{5} \!  \int_{\mathbb{R}^6}  \! \! \Big|\widehat{\mathcal{M}^2\nabla\chi}({\sqrt{N}}^{-1}\ell P)\Big|\, \Big|(K \! - \! P)\widehat{\rho}\big(\sqrt{N}^{-1}(K \! - \! P) \big)\Big| \! \left|\frac{1}{|K \! + \! P|^2} \! - \! \frac{1}{|K \! - \! P|^2}\right|\\
    &    \ \ \ \ \     \ \ \ \ \   \ \ \ \ \   \ \ \ \ \ \ \ \ \   \ \ \ \ \  \lesssim {\sqrt{N}}^{-4} \! \min\left\{ \! \left(\frac{{\sqrt{N}}}{\ell |K|}\right)^2 \! ,1\right\}.
\end{align*}
Summarizing what we have so far, we can estimate the Fourier coefficients of $\xi_0$ by
\begin{align}
\label{Eq:Fourier_Xi}
    |\widehat{\xi}_0(K)|\lesssim  N^{-2}  \min\left\{ \! \left(\frac{{\sqrt{N}}}{\ell |K|}\right)^2 \! ,1\right\}.
\end{align}
Proceeding similarly for general $n\neq 0$ we observe the slightly weaker estimate
\begin{align}
\label{Eq:Fourier_Xi_general}
    |\widehat{\xi}_n(K)|\lesssim  N^{-2}  \min\left\{ \frac{{\sqrt{N}}}{\ell |K|}  ,1\right\}.
\end{align}
In the following let $R$ denote the resolvent of the operator $Q^{\otimes 3}(-\Delta+V_N)Q^{\otimes 3}$ on the torus, and note that we obtain as a consequence of the differential equation Eq.~(\ref{Eq:Differential_Equation_First})
\begin{align}
\label{Eq:Three_Scattering_Identity}
    RV_N e^{i n_1 x}e^{i n_2 y}e^{i n_3 z}=\psi_n+R\xi_n(x-y,x-z)e^{i n_1 x}e^{i n_2 y}e^{i n_3 z}+(RV_N-1)(1-Q^{\otimes 3})\psi_n.
\end{align}
Using the fact that $V$ has compact support, there exists a $\ell_0>0$ such that $\chi^{\frac{1}{\ell}}(x)=1$ for $x\in \mathrm{supp}(V)$ and $\ell\geq \ell_0$, and therefore we obtain for $n=(n_1,n_2,n_3)$ and $m=(m_1,m_2,m_3)$ 
\begin{align}
\nonumber
    (V_N)_{m,n}-\braket{V_N e^{i m_1 x}e^{i m_2 y}e^{i m_3 z},\psi_n} & =\frac{\delta_{\overline{n}=\overline{m}}}{N^2}\int_{\mathbb{R}^6}e^{i\frac{m_2-n_2}{\sqrt{N}}x}  e^{i\frac{m_3-n_3}{\sqrt{N}}y} V(x,y)\big(1-\omega(x,y)\big)\mathrm{d}x\mathrm{d}y\\
    \label{Eq:n_m_identity}
    & =\frac{\delta_{\overline{n}=\overline{m}}}{N^2} b_\mathcal{M}(V)+O_{N\rightarrow \infty}\! \left(N^{-\frac{5}{2}}\right),
\end{align}
where $\overline{n}:=n_1+n_2+n_3$ and we have used that we can express the minimum in Eq.~(\ref{Eq:Definition_of_b}) according to \cite{NRT1} as
\begin{align*}
   b_\mathcal{M}(V)=\int_{\mathbb{R}^6} V(x,y)\big(1-\omega(x,y)\big)\mathrm{d}x\mathrm{d}y.
\end{align*}
We observe that in the case $m=0$ and $n=0$, we even have the exact identity
\begin{align*}
  N^2 (V_N)_{000,000}-N^2 \braket{V_N ,\psi_0}=b_\mathcal{M}(V)  .
\end{align*}
Consequently we obtain
\begin{align*}
    N^2 (V_N \! -  \! V_N R V_N)_{000,000} \! =  \!  b_\mathcal{M}(V) \! - \! N^2\braket{RV_N,\xi_0} \! - \! 3N^2  \! \sum_{k} \! \!   \braket{V_N,(RV_N \! - \! 1)e^{ik(x-y)}} \! \braket{e^{ik(x-y)},\psi_0} \! .
\end{align*}
Using Lemma \ref{Lem:Coefficient_Control} and the fact that $(RV)_{ijk,000}=(T-1)_{ijk,000}=0$ in case $i\neq -(j+k)$, we can estimate
\begin{align*}
 N^2 \left|\braket{RV_N,\xi_0}\right| & = N^2\left|\sum_{jk}\overline{(RV_N)_{-(j+k)j k}}\, \widehat{\xi}_0(-(j+k),j,k)\right|\\
 & \lesssim N^{-2}\sum_{j k}\frac{\left(1+\frac{|j|^2+|k|^2}{N}\right)^{-2}}{|j|^2 + |k|^2}\min\left\{ \! \left(\frac{{\sqrt{N}}}{\ell (|j|+|k|)}\right)^2 \! ,1\right\}\\
   &  \leq N^{-2}\sum_{j k}\frac{ \left(1+\frac{|j|^2+|k|^2}{N}\right)^{-\frac{3}{2}}\left(\frac{{\sqrt{N}}}{\ell (|j|+|k|)}\right)^2}{|j|^2 + |k|^2}\lesssim \frac{1}{\ell^2}.
\end{align*}
Again by Lemma \ref{Lem:Coefficient_Control} we have that
\begin{align*}
   \left| \braket{V_N,(RV_N-1)e^{ik(x-y)}}\right|\lesssim N^{-2}\left(1+\frac{|k|^2}{N}\right)^{-1}, 
\end{align*}
and following the proof of Eq.~(\ref{Eq:Delta_chi_Fourier}) we obtain that $|\braket{e^{ik(x-y)},\psi_0}|\lesssim N^{-2}\frac{1}{1+|k|^2}$. Therefore
\begin{align*}
    N^2\sum_{k}\left|\braket{V_N,(RV_N-1)e^{ik(x-y)}}\braket{e^{ik(x-y)},\psi_0}\right|\lesssim N^{-2}\sum_k\frac{\left(1+\frac{|k|^2}{N}\right)^{-1}}{1+|k|^2}\lesssim N^{-\frac{3}{2}}.
\end{align*}
Choosing $\ell$ of the order $\sqrt{N}$ yields
\begin{align}
\label{Eq:Convergence_leading_Coefficients}
    \left|N^2   (V_N - V_N R V_N )_{000,000}-b_\mathcal{M}(V)\right|\lesssim \frac{1}{N}.
\end{align}
For general $n=(n_1, n_2 ,n_3)$ and $m=(m_1 , m_2 ,m_3)$ with $n_1+n_2+n_3=m_1+m_2+m_3$ the estimates in Eq.~(\ref{Eq:Fourier_Xi_general}) and Eq.~(\ref{Eq:n_m_identity}) yield in a similar fashion the desired estimate.
\end{proof}

In Eq.~(\ref{Eq:Three_Scattering_Identity}) we saw that $RV_N$, an object defined on the torus $\Lambda$, is approximated by 
\begin{align*}
    \psi_n(x,y,z):=\chi^{\frac{{\sqrt{N}}}{\ell}} \! \left(x-y,x-z\right)\omega^{\sqrt{N}} (x-y,x-z),
\end{align*}
which involves the corresponding object $\omega$ defined on the full space. In the following Lemma \ref{Lemma:Correction_Coefficients} we make use of this correspondence again, to compare $\gamma_N,\mu_N$ and $\sigma_N$ with $\gamma,\mu$ and $\sigma$.

\begin{lem}
\label{Lemma:Correction_Coefficients}
    Let $\gamma_N,\mu_N$ and $\sigma_N$ be as in Eq.~(\ref{Eq:sigma}), Eq.~(\ref{Eq:mu}) and Eq.~(\ref{Eq:gamma}), and $\sigma(V)$, $\mu(V)$ and $\gamma(V)$ as in Eq.~(\ref{Eq:Definition_of_gamma}), Eq.~(\ref{Eq:Definition_of_mu}) and Eq.~(\ref{Eq:Definition_of_sigma}). Then
    \begin{align*}
          \gamma_N & =\gamma(V)\sqrt{N}+O_{N\rightarrow \infty}\! \left(N^{-\frac{1}{4}}\right),\\
         \mu_N & = \mu(V)\sqrt{N}+O_{N\rightarrow \infty}\left(1\right),\\
               \sigma_N & =\sigma(V)\sqrt{N}+O_{N\rightarrow \infty}\! \left(N^{\frac{1}{4}}\right),
    \end{align*}
    and there exists a constant $\lambda(V)>0$ such that for $0<\lambda\leq \lambda(V)$
    \begin{align}
    \label{Eq:Sign_Of_Coefficients_Together}
        \gamma(\lambda V) - \mu(\lambda V)-  \sigma(\lambda V)<0.
    \end{align}
    Furthermore, $\sigma_N$ and $\gamma_N$ are independent of the parameter $K$ from the definition of $\pi_K$ below Eq.~(\ref{Eq:Definition_pi_K}), and the limit $\mu(V)=\lim_N \frac{\mu_N}{\sqrt{N}}$ is independent of $K$ as well.
\end{lem}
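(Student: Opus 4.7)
My plan is to treat each of the three asymptotic identifications by rewriting the torus-defined coefficient as a finite-volume approximation of its full-space counterpart via the rescaling $x\mapsto\sqrt{N}x$ exploited in Lemma \ref{Lem:Coefficient_Control_II}, and then to deduce the sign inequality from a scaling expansion in $\lambda$. The $K$-independence is essentially a book-keeping matter: since $\pi_K u_0^{\otimes 3}=u_0^{\otimes 3}$ for every $K\geq 0$, the scalar $\eta_{ijk}=(T-1)_{ijk,000}$ does not depend on $K$ at all, hence neither do $\chi$, the operator $T_4$ defined in Eq.~(\ref{Eq:Definition_T_4}), nor the matrix elements in Eq.~(\ref{Eq:sigma}) and Eq.~(\ref{Eq:gamma}); in $\mu_N$ the $K$-dependent contributions consist of finitely many modes $|\ell|\leq K$ each of bounded individual size by Lemma \ref{Lem:Coefficient_Control}, which disappear in the limit $\mu_N/\sqrt{N}$.

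For $\gamma_N$, I would expand $\chi$ by its definition in terms of $\eta_{ijk}$, substitute the scattering identification $(T-1)u_0^{\otimes 3}\simeq \psi_0$ from Eq.~(\ref{Eq:Three_Scattering_Identity}), and perform the change of variables $x\mapsto\sqrt{N}x$. The matrix element $N^4\langle(V_N\otimes 1)\chi u_0^{\otimes 4},\chi u_0^{\otimes 4}\rangle$ becomes $\sqrt{N}$ times an integral over $\mathbb{R}^{9}$ of products of $V(x,y)$ with two factors $\omega(\cdot)$, and the two distinct contraction patterns produced by the symmetrization in $\chi$ reproduce the two terms in Eq.~(\ref{Eq:Definition_of_gamma}); the errors inherited from Lemma \ref{Lem:Coefficient_Control_II} contribute $O(N^{-1/4})$. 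For $\mu_N$, a direct computation with the explicit $T_2$ from Eq.~(\ref{Eq:Definition_T_2}) yields $\mu_N=9N^{4}\sum_{|\ell|>K}|\lambda_{0,\ell}|^{2}/|\ell|^{2}$. Using that $N^{2}\lambda_{0,\ell}$ is, up to the errors controlled in Lemma \ref{Lem:Coefficient_Control_II}, the Fourier transform of the effective potential $V_{\mathrm{eff}}$ evaluated at $\ell/\sqrt{N}$, the rescaled sum is a Riemann approximation on $\mathbb{R}^3$ of
\begin{align*}
\int_{\mathbb{R}^3}\frac{|\widehat{V_{\mathrm{eff}}}(k)|^{2}}{2|k|^{2}}\,dk
=\int_{\mathbb{R}^6}\frac{V_{\mathrm{eff}}(x)V_{\mathrm{eff}}(y)}{8\pi|x-y|}\,dx\,dy=\mu(V),
\end{align*}
and tracking the discretization error gives the claimed $O(1)$ correction.

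The main obstacle is the asymptotics of $\sigma_N$. From the second form of $\sigma_N$ in Eq.~(\ref{Eq:sigma}) and the defining equation of $T_4$, the vector $(T_4-1)u_0^{\otimes 4}$ solves the discrete Neumann-type scattering equation
\begin{align*}
Q^{\otimes 4}\bigl(-\Delta_4+4(V_N\otimes 1)\bigr)(T_4-1)u_0^{\otimes 4}=-Q^{\otimes 4}\Pi_{\mathrm{sym}}\,4(V_N\otimes 1)\chi u_0^{\otimes 4}.
\end{align*}
Setting $\tilde{\eta}_N(x):=N^{2}\bigl((T_4-1)u_0^{\otimes 4}\bigr)(x/\sqrt{N})$, I would show, in the style of Eq.~(\ref{Eq:Three_Scattering_Identity}), that $\tilde{\eta}_N$ approximates the localized minimizer $\eta_{\ell}$ of $\mathcal{Q}_{\ell}$ from Lemma \ref{Lemma:Existence_of_solutions} for $\ell=c\sqrt{N}$. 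The identity $\sigma_{\ell}(V)=\int f_{\ell}\eta_{\ell}$ combined with the convergence $\sigma_{\ell}\to\sigma$ (Lemma \ref{Lemma:Existence_of_solutions}) then gives the desired asymptotic. The delicate point, which is where the $O(N^{1/4})$ error arises, is bounding the errors produced by the projections $\Pi_{\mathrm{sym}}$ and $Q^{\otimes 4}$; these are analogous to the three-body corrections in Eq.~(\ref{Eq:Three_Scattering_Identity}) but act in one more spatial dimension, which degrades the decay of the resolvent kernel just enough to downgrade the expected $O(1)$ rate to $O(N^{1/4})$.

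Finally, for the strict inequality I would expand each coefficient in $\lambda$. Linearizing the Euler-Lagrange equation $(-2\Delta_{\mathcal{M}}+\lambda V)\omega_\lambda=\lambda V$ gives $\omega_\lambda=\lambda(-2\Delta_{\mathcal{M}})^{-1}V+O(\lambda^{2})$, so $\omega_\lambda=O(\lambda)$. Substitution into Eq.~(\ref{Eq:Definition_of_gamma}) then yields $\gamma(\lambda V)=O(\lambda^{3})$, and since $f_\lambda=\lambda V\omega_\lambda=O(\lambda^{2})$ forces $\eta_\lambda=O(\lambda^{2})$ via Lemma \ref{Lemma:Existence_of_solutions}, $\sigma(\lambda V)=\int f_\lambda\eta_\lambda=O(\lambda^{4})$. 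For $\mu$ the effective potential satisfies $V_{\mathrm{eff},\lambda}=\lambda V_{\mathrm{eff},0}+O(\lambda^{2})$ with $V_{\mathrm{eff},0}(x)=\int V(x,y)\,dy\not\equiv 0$, and positive-definiteness of the Riesz kernel $1/(8\pi|\cdot|)$ yields
\begin{align*}
\mu(\lambda V)=c\lambda^{2}+O(\lambda^{3}),\qquad c:=\int_{\mathbb{R}^{6}}\frac{V_{\mathrm{eff},0}(x)V_{\mathrm{eff},0}(y)}{8\pi|x-y|}\,dx\,dy>0.
\end{align*}
Combining these expansions, $\gamma(\lambda V)-\mu(\lambda V)-\sigma(\lambda V)=-c\lambda^{2}+O(\lambda^{3})$, which is strictly negative for all $\lambda$ below a suitable threshold $\lambda(V)>0$.
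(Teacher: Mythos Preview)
Your proposal is correct and follows essentially the same route as the paper: replace $RV_N u_0^{\otimes 3}$ by $\psi_0$ via Eq.~(\ref{Eq:Three_Scattering_Identity}) to handle $\gamma_N$, build an analogous four-body scattering approximation (the paper's Eq.~(\ref{Eq:Psi_and_zeta})) for $\sigma_N$, treat $\mu_N$ by the Riemann-sum argument on $V_{\mathrm{eff}}$, and prove the sign inequality by the $\lambda$-expansion you describe.

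Two minor points where your sketch diverges from the paper's execution. First, for $\sigma_N$ the paper does not directly compare $(T_4-1)u_0^{\otimes 4}$ with $\eta_\ell$; instead it inserts an intermediate quantity $\sigma_{N,\ell}=N^{7/2}\langle f_{N,\ell},R_4 f_{N,\ell}\rangle$ built from the truncated source $f_{N,\ell}=(V_N\otimes 1)\,1\otimes\psi_0$, and exploits the operator inequality $(V_N\otimes 1)R_4(V_N\otimes 1)\lesssim V_N\otimes 1$ to transfer the $\gamma_N$-type error bounds directly to $|\sqrt{N}^{-1}\sigma_N-\sigma_{N,\ell}|$. This two-step comparison (first replace $RV_N$ by $\psi_0$ in the source, then compare $R_4 f_{N,\ell}$ with the cut-off full-space solution $\eta_\ell$) is what makes the argument go through cleanly. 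Second, your heuristic for the $O(N^{1/4})$ error is not quite the mechanism: the bottleneck is the rate $|\sigma_\ell(V)-\sigma(V)|\lesssim \ell^{-1/2}$, which the paper extracts by running the finite-$N$ comparison at two different cutoffs and then sending $N\to\infty$; with $\ell\sim\sqrt{N}$ this yields exactly $N^{-1/4}$. For the sign inequality the paper takes a slightly cruder route than yours, using only $\sigma(\lambda V)\geq 0$ rather than your sharper $O(\lambda^4)$, but both arguments work.
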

\begin{proof}
    In order to analyse $\gamma_N$, let us denote with $L_i:L^2\! \left(\Lambda^4\right)\longrightarrow L^2\! \left(\Lambda^4\right)$ the linear map that exchanges the fist factor in the tensor product $L^2\! \left(\Lambda^4\right)\cong L^2\! \left(\Lambda\right)^{\otimes 4}$ with the $i$-th factor and observe that
\begin{align*}
   \sqrt{N}^{-1}\gamma_N= \frac{N^{\frac{7}{2}}}{6}\sum_{i=1}^3\braket{L_i\, 1\otimes (RV_N),(V_N\otimes 1)L_j\, 1\otimes (RV_N)}.
\end{align*}
Furthermore, recall the definition of $\psi_0$ from Eq.~(\ref{Eq:Definition_psi_n}) in the proof of Lemma \ref{Lem:Coefficient_Control_II} and define
\begin{align*}
   \gamma^{(\ell)}: & = \int_{\mathbb{R}^9}V(x,y)\! \left( \chi^{\frac{1}{\ell}}(x,z) \omega(x,z)\chi^{\frac{1}{\ell}}(y,z) \omega(y,z)+\frac{1}{2}\left|\chi^{\frac{1}{\ell}}(y,z) \omega(y,z)\right|^2 \right) \! \mathrm{d}x \mathrm{d}y \mathrm{d}z\\
  & =\frac{N^{\frac{7}{2}}}{6}\sum_{i,j=1}^3\braket{L_i\, 1\otimes \psi_0 ,(V_N\otimes 1)L_j \, 1\otimes \psi_0},
\end{align*}
where the second identity holds by a scaling argument for all $0<\ell<N$. We observe that by the permutation symmetry of $V_N$ we have $L_i V_N\otimes 1 L_i=V_N\otimes 1$ and therefore
\begin{align*}
   & N^{\frac{7}{2}} \Big|\braket{L_i\, 1\otimes \psi_0 ,(V_N\otimes 1)L_j \, 1\otimes \psi_0}\Big|\leq \sup_{i\in \{1,2,3\}}\braket{L_i\, 1\otimes \psi_0 ,(V_N\otimes 1)L_i \, 1\otimes \psi_0}\\
   & \ =   N^{\frac{7}{2}} \braket{ 1\otimes \psi_0 ,(V_N\otimes 1) 1\otimes \psi_0}=\int_{\mathbb{R}^9}V(x,y)\! \left|\chi^{\frac{1}{\ell}}(y,z) \omega(y,z)\right|^2\! \mathrm{d}x \mathrm{d}y \mathrm{d}z\lesssim 1.
\end{align*}
Using $L_i V_N\otimes 1 L_i=V_N\otimes 1$ again, together with the identity in Eq.~(\ref{Eq:Three_Scattering_Identity}) and the Cauchy-Schwarz inequality yields
\begin{align}
\label{Eq:Gamma_Top}
  &  \left|\sqrt{N}^{-1}\gamma_N-\gamma^{(\ell)}\right|\lesssim N^{\frac{7}{2}}\sqrt{\left\langle 1\otimes (R\xi_0(x \! - \! y,x \! - \! z)),(V_N\otimes 1) 1\otimes (R\xi_0(x \! - \! y,x \! - \! z))\right\rangle}\\
  \nonumber
  & \ \ \ \ \  +N^{\frac{7}{2}}\sqrt{\left\langle 1\otimes (RV_N-1)(1-Q^{\otimes 3}),(V_N\otimes 1) 1\otimes (RV_N-1)(1-Q^{\otimes 3})\right\rangle}.
\end{align}
Regarding the analysis of the term on the right side of Eq.~(\ref{Eq:Gamma_Top}), we observe that 
\begin{align*}
    \rho:=-2\Delta_{\mathcal{M}}\omega=V(1-\omega)
\end{align*}
satisfies $\nabla^k \rho\in L^1$ due to the regularity assumptions on $V$. Proceeding as in Eq.~(\ref{Eq:Delta_chi_Fourier}) we obtain the improved version of Eq.~(\ref{Eq:Fourier_Xi})
\begin{align}
\label{Eq:xi_estimate_Fourier}
    \left| \widehat{\xi}_0(K)\right|\lesssim N^{-2}  \min\left\{ \! \left(\frac{{\sqrt{N}}}{\ell |K|}\right)^2 \! ,1\right\}\left(1+\frac{|K|^2}{N}\right)^{-m},
\end{align}
Similar to Eq.~(\ref{Eq:Split_of_nabla_R}) we can write $\nabla^n R\xi_0 $, where $\xi_0$ is introduced in Eq.~(\ref{Eq:Definition_xi_n}), as the sum of terms of the form
\begin{align}
\label{Eq:Split_of_nabla_R_copy}
    Q^{\otimes 3}\nabla^{k_1} \! (V_N)Q^{\otimes 3}\dots  Q^{\otimes 3}\nabla^{k_m} \! (V_N)Q^{\otimes 3}\, \nabla^a R^{1-b}\xi_0 ,
\end{align}
where the coefficients satisfy $k_1+\dots + k_m + 2m +a +2b=n$ and either (I) that $b=1$, (II) that $b=0$ and $a=1$ or (III) that $b=0$, $a=0$ and $m\geq 1$ as well as $k_m=0$. In the following we are going to verify individually for the three cases (I)-(III) that the Fourier transform of the expression in Eq.~(\ref{Eq:Split_of_nabla_R_copy}) has an $L^{\infty}$ bound of the order $\frac{\sqrt{N}^{n}}{N^3 \ell^2}$ for $n\geq 4$, and consequently
\begin{align}
\label{Eq:R_xi_Bound}
    \left| \widehat{R\xi}_0(K) \right|\lesssim \frac{1}{ \ell^2  N^2|K|^2}\, \frac{N}{|K|^2} \! \! \left(1+\frac{|K|^2}{N}\right)^{-m}.
\end{align}
Regarding the case (I), we obtain using Eq.~(\ref{Eq:xi_estimate_Fourier}) and our regularity assumptions on $V$ by a direct computation in Fourier space, for $n\geq 4$
\begin{align*}
    \left|\left\langle e^{i K\cdot X},Q^{\otimes 3}\nabla^{k_1} \! (V_N)Q^{\otimes 3}\dots  Q^{\otimes 3}\nabla^{k_m} \! (V_N)Q^{\otimes 3}\, \nabla^a \xi_0 \right\rangle \right|\lesssim \frac{\sqrt{N}^{k_1+\dots +k_m +a+2m}}{N^2 \ell^2}=\frac{\sqrt{N}^{n}}{N^3 \ell^2}.
\end{align*}
Since the case (II) is similar to the case (III), let us directly have a look at the case (III), where we use the fact that $\|\sqrt{Q^{\otimes 3} V_N Q^{\otimes 3}}R\nabla\|\lesssim 1$ to obtain
\begin{align}
\label{Eq:Case_III}
   & \ \ \  \ \ \  \left|\left\langle e^{i K\cdot X},Q^{\otimes 3}\nabla^{k_1} \! (V_N)Q^{\otimes 3}\dots  Q^{\otimes 3}\nabla^{k_{m-1}} \! (V_N)Q^{\otimes 3}\, Q^{\otimes 3} V_N Q^{\otimes 3}\, R \xi_0 \right\rangle \right|\\
   \nonumber
    & \lesssim \left\|\sqrt{Q^{\otimes 3} V_N Q^{\otimes 3}}\, Q^{\otimes 3}\nabla^{k_{m-1}} \! (V_N)Q^{\otimes 3}\dots Q^{\otimes 3}\nabla^{k_{1}} \! (V_N)Q^{\otimes 3}\, e^{i K\cdot X}\right\|\, \left\|\frac{1}{\nabla}Q^{\otimes 3}\xi_0\right\|.
\end{align}
Since we have $\left\|\frac{1}{\nabla}Q^{\otimes 3}\xi_0\right\|\lesssim \frac{1}{N\ell^2 }$, we obtain together with Eq.~(\ref{Eq:Lem:Coefficient_Control_Addendum}) that the term in Eq.~(\ref{Eq:Case_III}) is bounded by $\frac{\sqrt{N}^{k_1+\dots + k_m+2(m-1)-2}}{N\ell^2 }=\frac{\sqrt{N}^{n}}{N^3 \ell^2 }$, which concludes the proof of Eq.~(\ref{Eq:R_xi_Bound}). Consequently 
\begin{align}
\label{Eq:Rxi_V_Rxi}
    N^{\frac{7}{2}}\braket{1\otimes (R\xi_0(x \! - \! y,x \! - \! z)),(V_N\otimes 1)1\otimes (R\xi_0(x \! - \! y,x \! - \! z))}\lesssim \frac{1}{\ell^4}.
\end{align}
Using $N^{\frac{7}{2}}\braket{1\otimes (RV_N-1)e^{ik(x_i-x_j)},(V_N\otimes 1)1\otimes (RV_N-1)e^{ik' (x_i-x_j)}}\lesssim N^{\frac{3}{2}}\left(1+\frac{|k-k'|^2}{N}\right)^{-2}$ by Lemma \ref{Lem:Coefficient_Control} for $i\neq j$, we further have
\begin{align}
\label{Eq:Diff_V_Diff}
 & N^{\frac{7}{2}}\braket{1\otimes (RV_N-1)(1-Q^{\otimes 3})\psi_0 ,(V_N\otimes 1)1\otimes (RV_N-1)(1-Q^{\otimes 3})\psi_0 }\\
 \nonumber
  & \ \ \ \ \lesssim \sum_{k,k'} N^{\frac{3}{2}}\left(1+\frac{|k-k'|^2}{N}\right)^{-2}\big|\braket{e^{ik(x-y)},\psi_0}\big|\, \big|\braket{e^{i{k'}(x-y)},\psi_0}\big|\\
  \nonumber
  & \ \ \ \  \lesssim N^{-\frac{5}{2}}\sum_{k,k'}\left(1+\frac{|k-k'|^2}{N}\right)^{-2}\frac{1}{1+|k|^2}\frac{1}{1+|k'|^2}\lesssim N^{-\frac{3}{2}}.   
\end{align}
By Eq.~(\ref{Eq:Gamma_Top}) we consequently obtain $\left|\sqrt{N}^{-1}\gamma_N-\gamma^{(\ell)}\right|\lesssim \ell^{-\frac{3}{2}}$ for $\ell\leq \sqrt{N}$. Note that for $\ell_1,\ell_2>0$ we can always pick an arbitrary $N\geq \max\{\ell_1,\ell_2\}^2$ yielding
\begin{align*}
 \left|\gamma^{(\ell_1)}-\gamma^{(\ell_2)}\right|\lesssim \frac{1}{\min\{\ell_1,\ell_2\}^{\frac{3}{2}} },   
\end{align*}
i.e. $\gamma^{(\ell)}$ is convergent with rate $\frac{1}{\ell^{\frac{3}{2}} }$, and by monotone convergence the limit is given by $\gamma(V)$.

In order to establish the convergence of $\sigma_N$, let us define $f_{N,\ell}:=(V_N\otimes 1) 1\otimes \psi_0$, where we keep track of the $N$ and $\ell$ dependence in our notation, and
\begin{align*}
    \sigma_{N,\ell}:=N^{\frac{7}{2}}\braket{f_{N,\ell},R_4 f_{N,\ell}},
\end{align*}
for $\ell<\sqrt{N}$ and let $R_4$ be defined above Eq.~(\ref{Eq:Def_R_4}). As a consequence of the operator inequality
\begin{align*}
    (V_N\otimes 1) R^{(N)}_4 (V_N\otimes 1) \leq  (V_N\otimes 1) \frac{1}{-\Delta} (V_N\otimes 1)\lesssim V_N\otimes 1,
\end{align*}
we obtain by Eq.~(\ref{Eq:Rxi_V_Rxi}) and Eq.~(\ref{Eq:Diff_V_Diff})
\begin{align*}
   & N^{\frac{7}{2}}\left\langle 1\otimes R\xi_0, (V_N\otimes 1) R^{(N)}_4 (V_N\otimes 1)1\otimes R\xi_0 \right\rangle\lesssim N^{\frac{7}{2}}\left\langle 1\otimes R\xi_0, (V_N\otimes 1) 1\otimes R\xi_0  \right\rangle\lesssim \frac{1}{\ell^2},\\
    & N^{\frac{7}{2}}\braket{1\otimes (RV_N-1)(1-Q^{\otimes 3})\psi_0,(V_N\otimes 1) R^{(N)}_4 (V_N\otimes 1) 1\otimes (RV_N-1)(1-Q^{\otimes 3})\psi_0}\\
    &  \lesssim N^{\frac{7}{2}}\braket{1\otimes (RV_N-1)(1-Q^{\otimes 3})\psi_0,(V_N\otimes 1)1\otimes (RV_N-1)(1-Q^{\otimes 3})\psi_0}\lesssim N^{-\frac{3}{2}}.
\end{align*}
Using the identity Eq.~(\ref{Eq:Three_Scattering_Identity}), this immediately implies for $\ell<\sqrt{N}$
\begin{align}
\label{Eq:Comparison_sigma}
& \left|\sqrt{N}^{-1}\sigma_N - \sigma_{N,\ell}\right|\lesssim \ell^{-\frac{3}{4}}.
 % &   \left|\sigma_{N,\ell_1}-\sigma_{N,\ell_2}\right|\lesssim \frac{1}{\min\{\ell_1,\ell_2\}}.
\end{align}
To understand the dependence of $\sigma_{N,\ell}$ on the parameter $N$, recall the function 
\begin{align*}
f_\ell(x_1,x_2,x_3):  =V(x_1,x_2)\chi^{\frac{1}{\ell}}\! (x_2,x_3)\omega(x_2,x_3)    
\end{align*}
and $\eta_\ell:\mathbb{R}^9\longrightarrow \mathbb{R}$ from Lemma \ref{Lemma:Existence_of_solutions}, which solves in the sense of distributions
\begin{align}
\label{Eq:Differential_Equation}
    \left(-2\Delta_{\mathcal{M}_*}+\mathbb{V}\right)\eta_\ell=f_\ell.
\end{align}
By Lemma \ref{Lemma:Existence_of_solutions} we have the point-wise bound $0 \leq \eta_\ell\leq \eta_\ell^*$ with
\begin{align*}
    \eta^*_\ell(x):=\frac{1}{-2\Delta_{\mathcal{M}_*}}f_\ell(x)=\frac{\Gamma\! \left(\frac{9}{2}\right)}{28 \pi^{\frac{9}{2}}\mathrm{det}[M_*]}\int_{\mathbb{R}^9}\frac{f_\ell(y)\mathrm{d}y}{|\mathcal{M}_*^{-1} (x-y)|^{7}}.
\end{align*}
In the following let us write $x_1 g$ for the function $x\mapsto x_1 g(x)$. By Eq.~(\ref{Eq:Differential_Equation}) we obtain that $\rho_{\ell}:=-2\Delta_{\mathcal{M}_*}\eta_\ell$ satisfies the (uniform in $\ell$) bounds
\begin{align}
\label{Eq:Density_l}
\left\|\rho_\ell \right\|_{L^1(\mathbb{R}^9)}   & \leq\left\|f_\ell\right\|_{L^1(\mathbb{R}^9)}+\| \mathbb{V}\eta_\ell^*\|_{L^1(\mathbb{R}^9)}\lesssim 1,\\
\label{Eq:Density_l_position}
    \left\|x_1 \rho_\ell \right\|_{L^1(\mathbb{R}^9)} & \leq\left\|x_1 f_\ell\right\|_{L^1(\mathbb{R}^9)}+\| \mathbb{V}x_1 \eta_\ell^*\|_{L^1(\mathbb{R}^9)}\lesssim 1,
\end{align}
where we have used in the second estimates that $f_\ell(x)$ is compactly supported in the variables $x_1$ and $x_2$, and satisfies $\sup_{x_1,x_2}f_\ell(x)\lesssim \frac{1}{1+|x_3|^4}$, see the estimates on $\omega$ in \cite{NRT1}, and therefore $\left\|(1+|x_1|) f_\ell\right\|_{L^1(\mathbb{R}^9)}\lesssim 1$, as well as the fact that $x\mapsto \frac{1}{|x|^4}\mathbb{V}(x)\in L^1(\mathbb{R}^9)$ and hence
\begin{align*}
     & \| \mathbb{V} x \eta_\ell^*\|_{L^1(\mathbb{R}^9)}\lesssim \||x|^5\eta_\ell^*\|_\infty\lesssim  \sup_{x}|x|^5 \int_{\mathbb{R}^9}\frac{\mathrm{d}y}{|x-y|^{7} (|y_1|+|y_2|)^5 |y_3|^2}\lesssim 1,
\end{align*}
and similarly we obtain $ \| \mathbb{V} \eta_\ell^*\|_{L^1(\mathbb{R}^9)}\lesssim 1$. Using Eq.~(\ref{Eq:Differential_Equation}), we obtain the analogue estimates on the derivatives of $\rho_\ell$
\begin{align}
\label{Eq:Differential_Equation_Derivaties_Estimates}
    \left\|\nabla^k \rho_\ell \right\|_{L^1(\mathbb{R}^9)}+\left\|x_1 \nabla^k \rho_\ell \right\|_{L^1(\mathbb{R}^9)}   \lesssim 1.
\end{align}
Having $\eta_\ell$ at hand, we use a smooth function $\chi_*$ with $\chi_*(x)=1$ for $|x|_\infty\leq \frac{1}{2}$ and $\chi_*(x)=0$ for $|x|_\infty>\frac{2}{3}$, in order to define
\begin{align*}
    \Psi:=\chi_*(x_1-x_2,x_1-x_3 , x_1-x_4)\, \eta_\ell^{\sqrt{N}}\! (x_1-x_2,x_1-x_3 , x_1-x_4).
\end{align*}
Notably, the state $\Psi$ allows us to express
\begin{align}
\label{Eq:Psi_and_zeta}
    R_4 f_{N,\ell}=\Psi+R_4 \zeta+\left(R_4\mathbb{V}_N-1\right)(1-Q^{\otimes 4})\Psi,
\end{align}
with $\zeta:=[2\Delta_{\mathcal{M}_*},\chi_*]  \eta_{\ell}^{\sqrt{N}}=2\Delta_{\mathcal{M}_*}(\chi_*)\eta_{\ell}^{\sqrt{N}}+4\mathcal{M}_*^2 \nabla (\chi_*)\nabla \eta_{\ell}^{\sqrt{N}}$ and
\begin{align*}
    \mathbb{V}_N(x_1,x_2,x_3,x_4):=N\mathbb{V}\Big(\sqrt{N}(x_1-x_2),\sqrt{N}(x_1-x_3),\sqrt{N}(x_1-x_4)\Big).
\end{align*}
Proceeding as in the proof of Eq.~(\ref{Eq:Fourier_Xi}), we have by Eq.~(\ref{Eq:Differential_Equation_Derivaties_Estimates})
\begin{align}
\label{Eq:zeta_estimate}
    \big|\widehat{\zeta}(K)\big|\lesssim N^{-\frac{7}{2}}\min \! \left\{\frac{1}{|K|^2},1\right\}\left(1+\frac{|K|^2}{N}\right)^{-m}.
\end{align}
Using the fact that $\|\sqrt{Q^{\otimes 4}\mathbb{V}_N Q^{\otimes 4}}R_4\nabla \|\leq 1$ we furthermore obtain
\begin{align*}
     \big|\widehat{R_4\zeta}(K)\big|=\frac{ \left|\widehat{\zeta}(K) \! - \! \left\langle K, \mathbb{V}_N R_4  \zeta \right\rangle\right|}{|K|^2}\leq \frac{ \left|\widehat{\zeta}(K)\right| \! +  \! \sqrt{\left\langle K, \mathbb{V}_N K \right\rangle \left\langle \zeta, \frac{1}{-\Delta}   \zeta \right\rangle}}{|K|^2}\lesssim \frac{\max\{N^{-\frac{1}{4}},|K|^{-2}\}}{N^{\frac{7}{2}}|K|^2}.
\end{align*}
Using furthermore Eq.~(\ref{Eq:Split_of_nabla_R_copy}), we can utilize Eq.~(\ref{Eq:zeta_estimate}) to improve this result to
\begin{align}
\label{Eq:R_zeta_estimate}
    \big|\widehat{R_4\zeta}(K)\big|\lesssim \frac{\max\{N^{-\frac{1}{4}},|K|^{-1}\}}{N^{\frac{7}{2}}|K|^2}\! \left(1+\frac{|K|^2}{N}\right)^{-m}.
\end{align}
In analogy to Eq.~(\ref{Eq:xi_estimate_Fourier}), one can show that $\left|\widehat{\chi^{\frac{\sqrt{N}}{\ell}}\omega^{\sqrt{N}}} (k)\right|\lesssim N^{-2}\frac{1}{1+|k|^2}\! \left(1+\frac{|k|^2}{N}\right)^{-m}$, and therefore we have $\left|\widehat{f}_{N,\ell}(K)\right|\lesssim N^{-\frac{7}{2}}\left(1+\frac{|K|^2}{N}\right)^{-m}$, which yields together with Eq.~(\ref{Eq:R_zeta_estimate})
\begin{align}
\label{Eq:Sigma_Main_I}
    N^{\frac{7}{2}}\left|\braket{f_{N,\ell},R_4 \zeta}\right|\lesssim N^{-\frac{1}{4}}.
\end{align}
Furthermore,, in analogy to Eq.~(\ref{Eq:R_zeta_estimate}), we have the estimate
\begin{align*}
    \left|\widehat{\Psi}(K)\right|\lesssim \frac{1}{N^{\frac{7}{2}}|K|^2}\! \left(1+\frac{|K|^2}{N}\right)^{-m}.
\end{align*}
Denoting with $\mathbb{I}$ the set of all indices $K=(k_1,\dots ,k_4)$ such that $k_1+\dots + k_4=0$ and at least one of the indices satisfies $k_\alpha=0$, we obtain
\begin{align}
\nonumber
    & N^{\frac{7}{2}}\left|\braket{f_{N,\ell},\left(R_4\mathbb{V}_N-1\right)(1-Q^{\otimes 4})\Psi}\right|=N^{\frac{7}{2}}\left|\sum_{K\in \mathbb{I}}\widehat{\Psi}(K)\braket{(\mathbb{V}_NR_4-1)f_{N,\ell},e^{iK\cdot X}}\right|\\
    \label{Eq:Sigma_Main_II}
    & \ \ \ \ \ \ \ \ \ \  \lesssim N^{-\frac{7}{2}}\sum_{K\in \mathbb{I}}\frac{N^{\frac{3}{4}}}{|K|^2}\! \left(1+\frac{|K|^2}{N}\right)^{-3}\lesssim N^{-\frac{3}{4}},
\end{align}
where we used 
\begin{align*}
  |\braket{\mathbb{V}_NR_4 f_{N,\ell},e^{iK\cdot X}}|\leq \braket{e^{iK\cdot X},\mathbb{V}_N e^{iK\cdot X}}^{\frac{1}{2}}\braket{f_{N,\ell},\frac{1}{-\Delta}f_{N,\ell}}^{\frac{1}{2}}\lesssim N^{-\frac{7}{2}}N^{\frac{3}{4}}. 
\end{align*}
Applying Eq.~(\ref{Eq:Psi_and_zeta}), Eq.~(\ref{Eq:Sigma_Main_I}) and Eq.~(\ref{Eq:Sigma_Main_II}) therefore yields for $\ell<\frac{\sqrt{N}}{2}$
\begin{align*}
    \sigma_{N,\ell} & =N^{\frac{7}{2}}\braket{f_{N,\ell},R_4 f_{N,\ell}}=N^{\frac{7}{2}}\braket{f_{N,\ell},\Psi}+O_{N\rightarrow \infty}\! \left(N^{-\frac{1}{4}}\right)=\braket{f_\ell,\eta_\ell}+O_{N\rightarrow \infty}\! \left(N^{-\frac{1}{4}}\right)\\
    & =\sigma_\ell(V)+O_{N\rightarrow \infty}\! \left(N^{-\frac{1}{4}}\right).
\end{align*}
In combination with Eq.~(\ref{Eq:Comparison_sigma}) and the fact that $\sigma(V)=\lim_\ell \sigma_\ell(V)$, see Lemma \ref{Lemma:Existence_of_solutions}, we obtain that $|\sigma_\ell(V)-\sigma(V)|\lesssim \frac{1}{\sqrt{\ell}}$ and conclude
\begin{align*}
    \left|\sqrt{N}^{-1}\sigma_N-\sigma(V)\right|\lesssim N^{-\frac{1}{4}}.
\end{align*}

To establish the convergence of $\sqrt{N}^{-1}\mu_N$, let us recall the effective potential 
\begin{align*}
    V_\mathrm{eff}:
\begin{cases}
    & \mathbb{R}^3\longrightarrow \mathbb{R},\\
     & x\mapsto \int_{\mathbb{R}^3}V(x,y)(1-\omega(x,y))\, \mathrm{d}y,
\end{cases}
\end{align*}
and let $\theta$ solve $-2\Delta \theta=V_\mathrm{eff}$ with $\theta(x)\underset{|x|\rightarrow \infty}{\longrightarrow} 0$. Then 
\begin{align*}
    \mu(V)=\int_{\mathbb{R}^3}V_\mathrm{eff}(x)\theta(x)\, \mathrm{d}x.
\end{align*}
Applying the techniques developed in this proof so far, yields furthermore
\begin{align*}
    |\sqrt{N}^{-1}\mu_N - \mu(V)|\lesssim \frac{1}{\sqrt{N}}.
\end{align*}

Finally, in order to establish Eq.~(\ref{Eq:Sign_Of_Coefficients_Together}) let us denote with $\omega_\lambda$ the minimizer in Eq.~(\ref{Eq:Definition_of_b}) for the re-scaled potential $\lambda V$, which satisfies $0\leq \omega_\lambda\leq 1$ and $\omega_\lambda(x,y)\leq \frac{\lambda C(V)}{1+|x|^4+|y|^4}$, for a $V$ dependent, constant $C(V)>0$. Consequently $\lim_{\lambda\rightarrow 0}(1-\omega_\lambda)=1$, and hence we obtain by dominated convergence
\begin{align*}
    \lim_{\lambda\rightarrow 0}\frac{1}{\lambda^2}\mu(\lambda V) & =\lim_{\lambda\rightarrow 0}\int_{\mathbb{R}^{12}}\frac{V(x,u)V(y,v)(1-\omega_\lambda(x,u))(1-\omega_\lambda(y,v))}{8\pi |x-y|}\mathrm{d}u\mathrm{d}v\mathrm{d}x\mathrm{d}y\\
    & =\int_{\mathbb{R}^{12}}\frac{V(x,u)V(y,v)}{8\pi |x-y|}\mathrm{d}u\mathrm{d}v\mathrm{d}x\mathrm{d}y\in (0,\infty).
\end{align*}
This concludes the proof, since $\sigma(V)\geq 0$ and
\begin{align*}
    \frac{1}{\lambda^3}\gamma(\lambda V)\leq \frac{3C(V)^2}{2}\int_{\mathbb{R}^9}\frac{V(x,y)}{(1+|y|^4+|z|^4)^2}\mathrm{d}x\mathrm{d}y\mathrm{d}z<\infty.
\end{align*}
\end{proof}

Making use of Eq.~(\ref{Eq:Psi_and_zeta}) again, we can furthermore verify decay properties for the matrix entries of $T_4-1$ in momentum space in the subsequent Lemma \ref{Lem:Coefficient_Control_III}. 

\begin{lem}
\label{Lem:Coefficient_Control_III}
Recall the definition of the linear map $T_2$ in Eq.~(\ref{Eq:Definition_T_2}) and $T_4$ in Eq.~(\ref{Eq:Definition_T_4}). Then there exists a constant $C>0$ such that $|(T_2-1)_{jk,00}|\leq C N^{-1}\frac{\mathds{1}(j+k=0)}{|j|^2+|k|^2}\left(1+\frac{|j|^2+|k|^2}{N}\right)^{-1}$,
 \begin{align*}
   |(T_4-1)_{\ell i j k ,0000}|\leq C N^{-\frac{7}{2}}\frac{\mathds{1}(\ell+i+j+k=0)}{|\ell|^2+|i|^2+|j|^2+|k|^2}\left(1+\frac{|\ell|^2+|i|^2+|j|^2+|k|^2}{N}\right)^{-3}.  
 \end{align*}
% Furthermore, we can bound $\zeta_{ijk}$ and $\zeta_k$, defined in Eq.~(\ref{Eq:Def_lambda_coef}), respectively above, by
%    \begin{align*}
%      |\zeta_{ijk}|\leq C N^{-\frac{7}{2}}\delta_{i+j+k=0}\left(1+\frac{|i|^2+|j|^2+|k|^2}{N}\right)^{-3}
%  \end{align*}
% and $|\zeta_k|\leq C N^{-\frac{7}{2}}\left(1+\frac{|k|^2}{N}\right)^{-1}$.
\end{lem}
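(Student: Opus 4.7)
The plan is to treat $T_2$ and $T_4$ separately. For $T_2$ the estimate is almost immediate: by the definition in Eq.~(\ref{Eq:Definition_T_2}), $(T_2-1)_{jk,00}$ vanishes unless $j+k=0$, and on $\{|j|>K\}$ it equals $3N\lambda_{0,j}/|j|^2$. Inserting the bound $|\lambda_{0,j}|\leq C N^{-2}(1+|j|^2/N)^{-1}$ from Lemma \ref{Lem:Coefficient_Control} and using $|j|^2+|k|^2=2|j|^2$ yields the claim (the contribution of $|j|\leq K$ is zero, so nothing is lost).

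For $T_4-1=-R_4\Pi_{\mathrm{sym}}Q^{\otimes 4}4(V_N\otimes 1)\chi$ I would essentially re-run the scattering ansatz from the proof of Lemma \ref{Lemma:Correction_Coefficients}. Define on $\Lambda^4$ the function
\begin{align*}
\widetilde{\Psi}:=\Pi_{\mathrm{sym}}\,\chi_*(x_1-x_4,x_2-x_4,x_3-x_4)\,\eta_\ell^{\sqrt{N}}(x_1-x_4,x_2-x_4,x_3-x_4),
\end{align*}
with $\ell\sim\sqrt{N}$, $\chi_*$ the smooth cutoff from the proof of Lemma \ref{Lemma:Correction_Coefficients}, and $\eta_\ell$ the minimizer from Lemma \ref{Lemma:Existence_of_solutions}. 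Since $\eta_\ell$ solves $(-2\Delta_{\mathcal{M}_*}+\mathbb{V})\eta_\ell=f_\ell$ on $\mathbb{R}^9$, and $f_\ell$ is up to rescaling exactly the coefficient-of-$u_0^{\otimes 4}$ part of $\Pi_{\mathrm{sym}}4(V_N\otimes 1)\chi$, a direct computation (analogous to Eq.~(\ref{Eq:Differential_Equation_First})) gives
\begin{align*}
\bigl(-\Delta_4+4(V_N\otimes 1)\bigr)\widetilde{\Psi}=\Pi_{\mathrm{sym}}4(V_N\otimes 1)\chi u_0^{\otimes 4}+\widetilde{\zeta},
\end{align*}
where $\widetilde{\zeta}$ is a commutator-type error coming from the cutoff $\chi_*$. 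Inverting this identity yields the decomposition
\begin{align*}
-(T_4-1)u_0^{\otimes 4}=\widetilde{\Psi}+R_4\widetilde{\zeta}+(R_4\mathbb{V}_N-1)(1-Q^{\otimes 4})\widetilde{\Psi},
\end{align*}
fully analogous to Eq.~(\ref{Eq:Psi_and_zeta}), and the momentum-conservation constraint $\ell+i+j+k=0$ is automatic from the translation invariance of $V_N$ and of $\widetilde{\Psi}$.

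Now I would read off the Fourier bounds. The regularity estimates Eq.~(\ref{Eq:Differential_Equation_Derivaties_Estimates}) on $\rho_\ell=-2\Delta_{\mathcal{M}_*}\eta_\ell$ imply that $\widehat{\rho}_\ell$ is smooth with polynomial decay to sufficient order, so the same scaling argument that produced Eq.~(\ref{Eq:R_zeta_estimate}) in the proof of Lemma \ref{Lemma:Correction_Coefficients} gives
\begin{align*}
\bigl|\widehat{\widetilde{\Psi}}(K)\bigr|\lesssim N^{-7/2}|K|^{-2}\bigl(1+|K|^2/N\bigr)^{-3}
\end{align*}
for $K=(\ell,i,j,k)$ with $|K|^2=|\ell|^2+|i|^2+|j|^2+|k|^2$, which already exhibits the required scaling and decay. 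For $R_4\widetilde{\zeta}$ and the $(1-Q^{\otimes 4})$-correction I would expand $\nabla^n R_4\widetilde{\zeta}$ via the three-case decomposition of Eq.~(\ref{Eq:Split_of_nabla_R_copy}) (with $R$ replaced by $R_4$), exploiting $\|\sqrt{Q^{\otimes 4}\mathbb{V}_N Q^{\otimes 4}}R_4\nabla\|\lesssim 1$ and Eq.~(\ref{Eq:Lem:Coefficient_Control_Addendum}) adapted to $\mathbb{V}_N$, to show that these pieces are of lower order in $N$ than the main term.

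The main obstacle will be propagating the decay factor $(1+|K|^2/N)^{-3}$ uniformly through the correction terms, rather than just $(1+|K|^2/N)^{-1}$. This forces one to track enough derivatives of $\rho_\ell$ (bounded in $L^1$ thanks to Eq.~(\ref{Eq:Differential_Equation_Derivaties_Estimates}) and the smoothness of $V$), and then to apply the $R_4$-analogue of the case-by-case Fourier analysis in Lemma \ref{Lem:Coefficient_Control} sufficiently many times. Once this higher decay is available for $\widetilde{\zeta}$ and $\nabla^k\widetilde{\zeta}$, the remaining estimates are essentially repetitions of the techniques already developed in Lemma \ref{Lem:Coefficient_Control} and in the $\sigma_{N,\ell}$-analysis of Lemma \ref{Lemma:Correction_Coefficients}.
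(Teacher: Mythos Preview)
Your treatment of $T_2$ is correct and matches the paper's one-line argument.

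For $T_4$, your overall strategy---using the four-particle scattering approximant $\widetilde{\Psi}$ built from $\eta_\ell$ and decomposing $(T_4-1)u_0^{\otimes 4}$ as in Eq.~(\ref{Eq:Psi_and_zeta})---is exactly the route the paper takes. However, there is a genuine gap in your identification step. You assert that ``$f_\ell$ is up to rescaling exactly the coefficient-of-$u_0^{\otimes 4}$ part of $\Pi_{\mathrm{sym}}4(V_N\otimes 1)\chi$'', but this is not true: the operator $\chi$ is built from $\eta_{ijk}=(T-1)_{ijk,000}=(RV_N)_{ijk,000}$, the \emph{torus} three-particle scattering function, whereas $f_\ell$ involves the \emph{full-space} scattering solution $\omega$. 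These differ by the error terms in Eq.~(\ref{Eq:Three_Scattering_Identity}), namely $R\xi_0$ and $(RV_N-1)(1-Q^{\otimes 3})\psi_0$. Consequently your decomposition of $-(T_4-1)u_0^{\otimes 4}$ is missing a fourth piece.

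The paper handles this by writing $(T_4-1)_{uijk,0000}=\langle e^{iK\cdot X},R_4(V_N\otimes 1)\,1\otimes RV_N\rangle$ and then \emph{first} replacing $RV_N$ by $\psi_0$; this produces the additional error term $\langle e^{iK\cdot X},R_4(V_N\otimes 1)\,1\otimes\{RV_N-\psi_0\}\rangle$, the last term in Eq.~(\ref{Eq:T_4_Control}). Its control is not via the $\nabla^n R_4$ expansion you propose, but by a direct Cauchy--Schwarz argument that extracts $|K|^{-2}$ from $R_4$ and bounds the two factors using $\langle 1\otimes(RV_N-\psi_0),(V_N\otimes 1)\,1\otimes(RV_N-\psi_0)\rangle\lesssim N^{-5}$ (from Eqs.~(\ref{Eq:Rxi_V_Rxi})--(\ref{Eq:Diff_V_Diff})) and $\langle(1-R_4\mathbb{V}_N)e^{iK\cdot X},(V_N\otimes 1)(1-R_4\mathbb{V}_N)e^{iK\cdot X}\rangle\lesssim N^{-2}$. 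Once you insert this missing term and its estimate, your argument and the paper's coincide.
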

\begin{proof}
    For the purpose of verifying the bound on
    \begin{align*}
       (T_4-1)_{uijk,0000}=\braket{e^{iK\cdot X},R_4 (V_N\otimes 1) 1\otimes RV_N}
    \end{align*}
 with $K=(uijk)$, let us choose $\ell:=\frac{\sqrt{N}}{3}$ and recall the elements $\zeta$ and $\Psi$ from Eq.~(\ref{Eq:Psi_and_zeta}), and the set $\mathbb{I}$ above Eq.~(\ref{Eq:Sigma_Main_II}), in the proof of Lemma \ref{Lemma:Correction_Coefficients}. With these elements at hand, we can write
\begin{align}
\nonumber
   & \braket{e^{iK\cdot X},R_4 (V_N\otimes 1) 1\otimes RV_N} \! = \!\braket{e^{iK\cdot X},R_4 \zeta}  \! + \!  \braket{e^{iK\cdot X},\Psi}\! + \!  \sum_{K'\in \mathbb{I}}\braket{\mathbb{V}_N R_4 e^{iK\cdot X},e^{iK'\cdot X}}\braket{e^{iK'\cdot X},\Psi}\\
   \label{Eq:T_4_Control}
   & \ \ \ \ \ \ \ \ \ \ \ \ \ \  \ \ \ \ \ \ \ +\left\langle e^{iK\cdot X}, R_4 (V_N\otimes 1)1\otimes \big\{RV_N-\psi \big\}\right\rangle.
\end{align}
In the proof of Lemma \ref{Lemma:Correction_Coefficients} we have established
\begin{align*}
&  |\braket{e^{iK\cdot X},R_4 \zeta}|=|\widehat{R_4 \zeta}(K)|\lesssim \frac{1}{N^{\frac{7}{2}}|K|^2}\! \left(1+\frac{|K|^2}{N}\right)^{-m} ,\\
 & \ \ \ |\braket{e^{iK\cdot X},\Psi}|\lesssim \frac{1}{N^{\frac{7}{2}}|K|^2}\! \left(1+\frac{|K|^2}{N}\right)^{-m}.
\end{align*}
Regarding the sum over $\mathbb{I}$ we have
\begin{align*}
   & \left|\sum_{K'\in \mathbb{I}}\braket{\mathbb{V}_N R_4 e^{iK\cdot X},e^{iK'\cdot X}}\braket{e^{iK'\cdot X},\Psi}\right|=\frac{1}{|K|^2}\left|\sum_{K'\in \mathbb{I}}\braket{(\mathbb{V}_N - \mathbb{V}_N R_4\mathbb{V}_N) e^{iK\cdot X},e^{iK'\cdot X}}\braket{e^{iK'\cdot X},\Psi}\right|\\
    & \ \ \ \lesssim \frac{1}{N^{\frac{7}{2}}|K|^2}\sum_{K'\in \mathbb{I}}\frac{\left|\braket{(\mathbb{V}_N - \mathbb{V}_N R_4\mathbb{V}_N) e^{iK\cdot X},e^{iK'\cdot X}}\right|}{|K'|^2}\left(1+\frac{|K'|^2}{N}\right)^{-3}\lesssim \frac{1}{N^{\frac{7}{2}}|K|^2}. 
\end{align*}
Regarding the final term in Eq.~(\ref{Eq:T_4_Control}), we observe that we have the estimate
\begin{align*}
  &  \left|\left\langle e^{iK\cdot X}, R_4 (V_N\otimes 1)1\otimes \big\{RV_N-\psi \big\}\right\rangle\right|= \frac{1}{|K|^2}\left|\left\langle  (1- R_4\mathbb{V}_N)e^{iK\cdot X}, (V_N\otimes 1)1\otimes \big\{RV_N-\psi \big\}\right\rangle\right|\\
    & \leq \frac{\sqrt{\left\langle 1 \! \otimes  \! \big\{RV_N \! - \! \psi \big\},  (V_N\otimes 1)1 \! \otimes  \! \big\{RV_N \! - \! \psi \big\}\right\rangle }}{|K|^2} \sqrt{\left\langle (1 \! -  \! R_4\mathbb{V}_N )e^{iK\cdot X},V_N\otimes 1 (1 \! -  \!  R_4\mathbb{V}_N) e^{iK\cdot X}\right\rangle }.
\end{align*}
By Eq.~(\ref{Eq:Rxi_V_Rxi}) and Eq.~(\ref{Eq:Diff_V_Diff}) we know that $\left\langle 1 \! \otimes  \! \big\{RV_N \! - \! \psi \big\},  (V_N\otimes 1)1 \! \otimes  \! \big\{RV_N \! - \! \psi \big\}\right\rangle \lesssim N^{-5}$,
\begin{align*}
  & \ \ \  \left\langle (1 \! -  \!  R_4\mathbb{V}_N)e^{iK\cdot X},V_N\otimes 1 (1 \! -  \!R_4 \mathbb{V}_N ) e^{iK\cdot X}\right\rangle\leq 2\left\langle  e^{iK\cdot X},V_N\otimes 1   e^{iK\cdot X}\right\rangle\\
   & + 2\left\langle   R_4\mathbb{V}_N e^{iK\cdot X},V_N\otimes 1 R_4  \mathbb{V}_N  e^{iK\cdot X}\right\rangle\lesssim \frac{1}{N^2}+\left\langle   \mathbb{V}_N e^{iK\cdot X},R_4  \mathbb{V}_N  e^{iK\cdot X}\right\rangle\\
   & \ \ \  \ \ \ \lesssim \frac{1}{N^2}+\left\langle   e^{iK\cdot X},  \mathbb{V}_N  e^{iK\cdot X}\right\rangle \lesssim \frac{1}{N^2}.
\end{align*}
Finally we note that the bound on $T_2$ is an immediate consequence of the regularity of $V$ and the bounds on $RV_N$ established in Lemma \ref{Lem:Coefficient_Control}.
\end{proof}

\appendix 
\section{Auxiliary Results}
\label{Sec:Auxiliary Results}

In the following we establish comparability results between transformed and non-transformed quantities. The first result in this direction, Lemma \ref{Lem:Particle_Number_Creation}, establishes that the unitarily transformed powers of the particle number operator $\mathcal{N}$, w.r.t. the transformations $U_s$ and $W_s$, are again of the same order as the bare powers in $\mathcal{N}$.

\begin{lem}
\label{Lem:Particle_Number_Creation}
   Let $U_s$ be the unitary map defined below Eq.~(\ref{Eq:Generator}) and $W_s$ the one defined below Eq.~(\ref{Eq:Four_particle_generator}). Then there exists for all $m\in \mathbb{N}$ constants $C_m>0$, such that
   \begin{align}
   \label{Eq:Groenwall_I}
       U_{-s} \, \mathcal{N}^m U_s\leq e^{C_m |s|} (\mathcal{N}+1)^m,\\
         \label{Eq:Groenwall_II}
       W_{-s} \, \mathcal{N}^m W_s\leq e^{C_m |s|} (\mathcal{N}+1)^m.
   \end{align}
\end{lem}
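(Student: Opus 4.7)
The plan is to prove both estimates by a Grönwall argument applied to the operator-valued function $F_m(s):=U_{-s}(\mathcal{N}+1)^m U_s$. Differentiating yields
\begin{align*}
\frac{d}{ds}F_m(s)=U_{-s}\bigl[\mathcal{G}^\dagger-\mathcal{G},(\mathcal{N}+1)^m\bigr]U_s,
\end{align*}
so the task reduces to an operator inequality of the form
\begin{align*}
\pm\bigl[\mathcal{G}^\dagger-\mathcal{G},(\mathcal{N}+1)^m\bigr]\leq C_m(\mathcal{N}+1)^m.
\end{align*}
Since $\eta_{ijk}=0$ whenever one of the indices is zero, the generator $\mathcal{G}$ raises the excitation number by exactly three, i.e.\ $\mathcal{N}\mathcal{G}=\mathcal{G}(\mathcal{N}+3)$. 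A short calculation therefore gives $[(\mathcal{N}+1)^m,\mathcal{G}]=\mathcal{G}\,p_m(\mathcal{N})$ with $p_m$ a polynomial of degree $m-1$ satisfying $|p_m(\mathcal{N})|\leq C_m(\mathcal{N}+1)^{m-1}$, and analogously for $\mathcal{G}^\dagger$.

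Next I would control $\mathcal{G}^\dagger q_m(\mathcal{N})+\mathrm{H.c.}$, with $q_m$ of degree $m-1$, by a weighted Cauchy--Schwarz
\begin{align*}
\pm\bigl(\mathcal{G}^\dagger q_m(\mathcal{N})+q_m(\mathcal{N})\mathcal{G}\bigr)\leq \mathcal{G}^\dagger B\,\mathcal{G}+q_m(\mathcal{N})B^{-1}q_m(\mathcal{N}),
\end{align*}
with the weight $B$ chosen as a power of $(\mathcal{N}+1)$ that balances both contributions against $(\mathcal{N}+1)^m$. For the generator-sandwich I would expand $\mathcal{G}^\dagger B\mathcal{G}$ in its creation/annihilation representation and apply Cauchy--Schwarz on the momentum indices, using $\sum_{ijk}|\eta_{ijk}|^2\lesssim N^{-3}$ from Lemma~\ref{Lem:Coefficient_Control} together with the bound $(a_0^\dagger)^3 a_0^3\leq N^3$, which compensates the $N^{-3}$ and yields $\mathcal{G}^\dagger\mathcal{G}\lesssim(\mathcal{N}+1)^3$ (and similarly with an additional weight $B$ inserted). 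Combining these ingredients produces the desired differential inequality $\pm \frac{d}{ds}F_m(s)\leq C_m F_m(s)$, and Grönwall's lemma yields Eq.~(\ref{Eq:Groenwall_I}).

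For Eq.~(\ref{Eq:Groenwall_II}) the argument is entirely parallel: the generator $\mathcal{G}_2+\mathcal{G}_4$ now raises the excitation number by $2$ or $4$, so $[(\mathcal{N}+1)^m,\mathcal{G}_2+\mathcal{G}_4]$ again factors as (generator)$\times$(polynomial of degree $m-1$ in $\mathcal{N}$). The required sum bounds $\sum_{jk}|\alpha_{jk}|^2\lesssim N^{-3/2}$ and $\sum_{uijk}|\beta_{uijk}|^2\lesssim N^{-9/2}$ are provided by Lemma~\ref{Lem:Coefficient_Control_III} and, after multiplication with the appropriate power of $(a_0^\dagger)^\ell a_0^\ell$, lead to the same kind of operator bound $(\mathcal{G}_2+\mathcal{G}_4)^\dagger(\mathcal{G}_2+\mathcal{G}_4)\lesssim(\mathcal{N}+1)^{c}$ needed to close the Grönwall estimate.

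The main obstacle is the precise bookkeeping of powers: the two terms produced by Cauchy--Schwarz must each come out at order $(\mathcal{N}+1)^m$ and no higher, which forces the weight $B$ to be tuned to the degree of $q_m$ and the net particle-shift of each generator. Once this balance is chosen correctly, the remaining manipulations are routine symbol calculus on creation/annihilation operators, and an induction on $m$ handles the book-keeping cleanly.
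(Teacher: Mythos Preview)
Your overall strategy---differentiate $F_m(s)=U_{-s}(\mathcal{N}+1)^mU_s$, use that $\mathcal{G}$ shifts $\mathcal{N}$ by a fixed integer to factor the commutator, then close by Gr\"onwall---is exactly the paper's. The gap is in the Cauchy--Schwarz step. With your splitting
\[
\pm\bigl(\mathcal{G}^\dagger q_m+q_m\mathcal{G}\bigr)\leq \mathcal{G}^\dagger B\,\mathcal{G}+q_m B^{-1}q_m,\qquad B=(\mathcal{N}+1)^b,
\]
the second term is $\lesssim(\mathcal{N}+1)^{2(m-1)-b}$, forcing $b\geq m-2$, while $\mathcal{G}^\dagger B\mathcal{G}=(\mathcal{N}+4)^b\,\mathcal{G}^\dagger\mathcal{G}$ together with your bound $\mathcal{G}^\dagger\mathcal{G}\lesssim(\mathcal{N}+1)^3$ gives $\lesssim(\mathcal{N}+1)^{b+3}$, forcing $b\leq m-3$. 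These are incompatible, so no choice of weight balances the two terms at order $(\mathcal{N}+1)^m$; you always lose one power. The cubic bound on $\mathcal{G}^\dagger\mathcal{G}$ is in fact the natural one (the fully normal-ordered contribution $(a_0^\dagger)^3 a^\dagger(\eta)a(\eta)a_0^3\lesssim N^3\|\eta\|^2\mathcal{N}^3\sim\mathcal{N}^3$ is not improvable by the $\ell^2$-estimate on $\eta$ alone), so induction on $m$ does not rescue the argument either.

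The paper avoids this loss by a different Cauchy--Schwarz: it first peels off a \emph{single} $a_i^\dagger$ from $\mathcal{G}$ to obtain the linear bound $\pm(\mathcal{G}+\mathcal{G}^\dagger)\lesssim\mathcal{N}+1$, and then writes the commutator (using the shift) in the form $\sqrt{A}\,(\mathcal{G}+\mathcal{G}^\dagger)\,\sqrt{\tilde A}+\mathrm{H.c.}$ with $A,\tilde A$ nonnegative polynomials of degree $m-1$ in $\mathcal{N}$. Sandwiching the linear bound then gives exactly $(\mathcal{N}+1)^{(m-1)+1}=(\mathcal{N}+1)^m$ with no loss. The same mechanism works for $\mathcal{G}_2+\mathcal{G}_4$. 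So your plan is one step away from closing: replace the estimate on $\mathcal{G}^\dagger\mathcal{G}$ by the estimate on $\mathcal{G}+\mathcal{G}^\dagger$ and use the square-root sandwich instead of separating the whole generator from $q_m$.
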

\begin{proof}
Let us recall the definition of the generator $\mathcal{G}^\dagger - \mathcal{G}$ with 
\begin{align*}
& \mathcal{G}=\frac{1}{6}\sum_{i j k}\eta_{ijk}\, a_i^\dagger a_j^\dagger a_k^\dagger a_0^3  ,\\ 
& \eta_{ijk}=(T-1)_{i j k, 0 0 0}
\end{align*}
of the unitary group $U_t$ from Eq.~(\ref{Eq:Generator}). As a consequence of the bounds on $T$ from Lemma \ref{Lem:Coefficient_Control}, we have 
\begin{align*}
   \pm \left(\mathcal{G}+\mathcal{G}^\dagger\right)\leq \frac{1}{6}\sum_{i}\left(a_i^\dagger a_i +\left(\sum_{jk}\eta_{ijk}\, a_j^\dagger a_k^\dagger a_0^3\right)^\dagger \left(\sum_{jk}\eta_{ijk}\, a_j^\dagger a_k^\dagger a_0^3\right)\right)\lesssim \mathcal{N}+\frac{\mathcal{N}^2}{N}\lesssim \mathcal{N}+1.
\end{align*}
Together with $0\leq (x+n+3)^k-(x+n)^k\leq C_{n,k} (x+3)^{k-1}$ for a suitable $C_{n,k}>0$, we obtain
\begin{align*}
  &  \ \ \ \ \ \ \ \ \  \ \  [\mathcal{G},(\mathcal{N}+3)^m]+\mathrm{H.c.}=- \left((\mathcal{N}+3)^m- \mathcal{N}^m  \right)\mathcal{G}+\mathrm{H.c.}\\
    & =- \sqrt{(\mathcal{N}+3)^m-\mathcal{N}^m}\left(\mathcal{G}+\mathcal{G}^\dagger \right)\sqrt{(\mathcal{N}+6)^m-(\mathcal{N}+3)^m}+\mathrm{H.c.}\lesssim (\mathcal{N}+3)^m.
\end{align*}
Applying Duhamel's formula then yields
    \begin{align*}
         U_{-t} \, (\mathcal{N}+3)^m U_t-(\mathcal{N}+3)^m=\int_0^t U_{-s}[\mathcal{G},(\mathcal{N}+3)^m]U_s\, \mathrm{d}s+\mathrm{H.c.}\lesssim \int_0^1 U_{-s}(\mathcal{N}+3)^m U_s\, \mathrm{d}s.
    \end{align*}
   Consequently Grönwall's inequality gives us
   \begin{align*}
      U_{-t} \, (\mathcal{N}+3)^m U_t\leq e^{C |t|}(\mathcal{N}+3)^m 
   \end{align*}
 for a suitable constant  $C>0$, which concludes the proof of Eq.~(\ref{Eq:Groenwall_I}). The proof of Eq.~(\ref{Eq:Groenwall_II}) follows analogously from $ \pm \left(\mathcal{G}_2+\mathcal{G}_2^\dagger\right)\lesssim \mathcal{N}+1$ and 
 \begin{align*}
     \pm \left(\mathcal{G}_4+\mathcal{G}_4^\dagger\right)\lesssim N^{-3}\left(N (\mathcal{N}+1)^3+\mathcal{N}^{\frac{5}{2}}\right)\lesssim \mathcal{N}+1,
 \end{align*}
 where we have used Lemma \ref{Lem:Coefficient_Control_III} in order to control the coefficients of $T_2$ and $T_4$.
\end{proof}

In the subsequent Lemma \ref{lem:Comparison} we are going to compare the kinetic energy $ \sum_{k}|k|^{2\tau}a_k^\dagger a_k$ in the operators $a_k$ with a fractional Laplace $(-\Delta)^{\tau}$, with the corresponding expression in the variables $c_k$.

\begin{lem}
\label{lem:Comparison}
    Let $0 \leq \tau\leq 1$ and $0\leq \sigma <\frac{1}{2}$. Then $\sum_{k}|k|^{2\sigma}(c_k-a_k)(c_k-a_k)^\dagger\lesssim \frac{1}{N}\mathcal{N}^2$, and furthermore we have for integers $s\geq 0$
\begin{align}
\label{Eq:Kinetic_Comparison}
        \sum_{k}|k|^{2\tau}a_k^\dagger\, \mathcal{N}^s a_k \lesssim \sum_{k}|k|^{2\tau} c_k^\dagger \mathcal{N}^s c_k+\frac{1}{N}\mathcal{N}^{s+2} +N^{\tau}\left(\mathcal{N}+1\right)^s.
\end{align}
\end{lem}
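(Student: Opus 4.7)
The plan is to attack the two inequalities in turn, both via the explicit formula
\[
c_k-a_k=\tfrac12\!\!\sum_{ij,\,\ell mn}\!(T-1)_{ijk,\ell mn}\,a_i^\dagger a_j^\dagger a_\ell a_m a_n,
\]
combined with the decay estimates on the kernel $(T-1)_{ijk,\ell mn}$ provided by Lemma \ref{Lem:Coefficient_Control}. Throughout I will write the argument for the main contribution coming from the all-zeros index $(\ell mn)=0$ in $\mathcal{L}_K$ (so that $c_k-a_k$ reduces to $\tfrac12 X_k a_0^3$ with $X_k:=\sum_{ij}\eta_{ijk}a_i^\dagger a_j^\dagger$); the finitely many other configurations in $\mathcal{L}_K$ are handled in exactly the same way with negligibly worse constants.

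First I will establish the bound $\sum_k|k|^{2\sigma}(c_k-a_k)(c_k-a_k)^\dagger\lesssim N^{-1}\mathcal{N}^2$. Using that $X_k$ commutes with $a_0,a_0^\dagger$, the product collapses to
\[
(c_k-a_k)(c_k-a_k)^\dagger=\tfrac14 X_kX_k^\dagger\,(\mathcal N_0+1)(\mathcal N_0+2)(\mathcal N_0+3)\leq \tfrac14(N+3)^3\,X_kX_k^\dagger.
\]
Summing in $k$ and recognising $\sum_k|k|^{2\sigma}X_kX_k^\dagger$ as the two-body operator $\sum_{ij,i'j'}M_{ij,i'j'}\,a_i^\dagger a_j^\dagger a_{j'}a_{i'}$ with kernel $M_{ij,i'j'}=\sum_k|k|^{2\sigma}\eta_{ijk}\overline{\eta_{i'j'k}}$, I bound it as $\|M^{\mathrm{sym}}\|_{\mathrm{op}}\mathcal N(\mathcal N-1)$. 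The Schur test, together with the momentum conservation $i+j=i'+j'=-k$ built into $(T-1)$ and the pointwise decay from Lemma \ref{Lem:Coefficient_Control}, reduces the operator norm of $M$ to controlling
\[
\sup_{s\neq 0}\frac{|s|^{2\sigma}}{|s|^2}\cdot\frac{1}{N^4}\sum_{j'}\frac{(1+(|s-j'|^2+|j'|^2)/N)^{-2}}{|s-j'|^2+|j'|^2+|s|^2}.
\]
The $j'$-sum is $\lesssim\sqrt N$ thanks to the $(1+\cdot/N)^{-2}$ suppression above momentum $\sqrt N$; for $\sigma<1/2$ the remaining supremum is finite because $s\in(2\pi\mathbb Z)^3\setminus\{0\}$. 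Multiplying through by the $(N+3)^3$ prefactor then yields the claimed $N^{-1}\mathcal N^2$ bound after keeping track of the surplus decay inside the Schur computation.

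For the second inequality I split $a_k=c_k-(c_k-a_k)$ and apply the elementary Cauchy--Schwarz operator inequality
\[
a_k^\dagger\mathcal N^s a_k\leq 2\,c_k^\dagger\mathcal N^s c_k+2\,(c_k-a_k)^\dagger\mathcal N^s(c_k-a_k).
\]
Since $X_k$ raises $\mathcal N$ by two and commutes with $a_0$, the identity $\mathcal N^s X_k=X_k(\mathcal N+2)^s$ lets me commute $\mathcal N^s$ past $c_k-a_k$, reducing the task to estimating $\sum_k|k|^{2\tau}(c_k-a_k)^\dagger(c_k-a_k)(\mathcal N+2)^s$. The operator $(c_k-a_k)^\dagger(c_k-a_k)$ differs from $(c_k-a_k)(c_k-a_k)^\dagger$ by a CCR commutator giving only lower-order terms, so the same Schur machinery applies. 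To reach the tight right-hand side $N^{-1}\mathcal N^{s+2}+N^\tau(\mathcal N+1)^s$ rather than the cruder $N^\tau(\mathcal N+1)^{s+2}$, I will split $\sum_k$ at the threshold $|k|\simeq\sqrt N$: for $|k|\lesssim\sqrt N$ the weight $|k|^{2\tau}$ is absorbed into the Schur argument above and gives the $N^{-1}\mathcal N^{s+2}$ contribution, while for $|k|\gtrsim\sqrt N$ the extra $(1+|k|^2/N)^{-2}$ decay in $\eta_{ijk}$ converts the kinetic weight into the additive $N^\tau(\mathcal N+1)^s$ term.

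The main technical difficulty is the first inequality: getting the full $1/N$ (rather than $1/\sqrt N$) requires carefully tracking the two independent decay factors $(1+(|i|^2+|j|^2+|k|^2)/N)^{-2}$ attached to $\eta_{ijk}$ and $\eta_{i'j'k}$ through the Schur estimate, together with the fact that the momentum-conservation constraint $i+j=i'+j'$ effectively reduces the double sum to a single one. Once this is done, the second inequality follows by the splitting scheme outlined above, and the entire argument extends from $K=0$ to general $K$ because $\mathcal L_K$ contains only finitely many configurations each of which obeys the same pointwise bounds from Lemma \ref{Lem:Coefficient_Control}.
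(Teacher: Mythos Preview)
Your overall structure matches the paper's: both reduce the first inequality to an operator-norm bound on the two-body kernel $M_{ij,i'j'}=\sum_k|k|^{2\sigma}\eta_{ijk}\overline{\eta_{i'j'k}}$, and both start the second inequality from $a_k^\dagger\mathcal N^s a_k\leq 2c_k^\dagger\mathcal N^s c_k+2(c_k-a_k)^\dagger\mathcal N^s(c_k-a_k)$. But there are two genuine gaps.

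\textbf{First inequality.} Your Schur estimate on $\|M\|$ is too weak by a factor $\sqrt N$. With momentum conservation $i+j=i'+j'=:p=-k$, the row sum you display contains $\sum_{j}|\eta_{(p-j)j(-p)}|\sim N^{-2}\sqrt N$, yielding $\|M\|\lesssim N^{-7/2}$ and hence only $N^{-1/2}\mathcal N^2$. The ``surplus decay'' you invoke cannot repair this: the supremum is attained at $|p|\sim 1$, where both $(1+\cdot/N)^{-2}$ factors are $\approx 1$. The missing observation is that $M$ is \emph{rank one} on each total-momentum block, $M|_p=\tfrac14|p|^{2\sigma}\,v_p\otimes\overline{v_p}$ with $(v_p)_j=\eta_{(p-j)j(-p)}$, so that
\[
\|M\|=\sup_p\tfrac14|p|^{2\sigma}\|v_p\|_{\ell^2}^2\lesssim N^{-4}\sup_{p\neq 0}|p|^{2\sigma-1}\lesssim N^{-4}
\]
for $\sigma<1/2$. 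This $\ell^2$-sum (rather than your $\ell^1$-sum) is what underlies the paper's assertion $\|\mathcal K_{\gamma,2}^{-1/2}G_\tau^{(I,I')}\mathcal K_{\gamma,2}^{-1/2}\|\lesssim N^{-4}$ in the case $\gamma=0$.

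\textbf{Second inequality.} Your claim that $(c_k-a_k)^\dagger(c_k-a_k)$ differs from $(c_k-a_k)(c_k-a_k)^\dagger$ by ``lower-order'' CCR terms is false in the relevant sense. The fully-contracted piece of that commutator is $2(a_0^\dagger)^3a_0^3\sum_{ij}|\eta_{ijk}|^2$, and summing against $|k|^{2\tau}$ gives $\sim N^3\cdot N^{\tau-3}=N^\tau$; this is \emph{exactly} the $N^\tau(\mathcal N+1)^s$ term in the statement, not a negligible remainder. Your split at $|k|\sim\sqrt N$ does not sidestep this, and for $\tau\geq 1/2$ you cannot invoke the first inequality at all since it requires $\sigma<1/2$. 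The paper instead normal-orders $(c_k-a_k)^\dagger\mathcal N^s(c_k-a_k)$ into a scalar part $f^{I,I'}X_0^{I,I'}\lesssim N^\tau\mathcal N^s$, a one-body part, and a two-body part bounded by $N^{-1}\sum_k|k|^{2\gamma}a_k^\dagger\mathcal N^{s+1}a_k$ for any $\gamma>\tau-\tfrac12$. The proof then closes by \emph{iterating} this estimate with $\gamma=\max\{\tau-\tfrac13,0\}$ (at most twice, since $\tau\leq 1$) until $\gamma$ reaches $0$, using $\mathcal N\leq N$ to absorb the extra $\mathcal N$-powers acquired along the way. This iteration is the mechanism that produces the stated right-hand side, and it is absent from your outline.
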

\begin{proof}
   Let us define $\left(G^{(I,I')}_\tau\right)_{ij,i' j'}:=\frac{1}{4}\sum_{k}|k|^{2\tau} \overline{(T-1)_{i'j'k,I' }}(T-1)_{ijk,I}$ for
   \begin{align*}
   I,I'\in \mathcal{I}:=\{(0,0,0)\}\cup \bigcup_{0<|\ell|\leq K}\{(\ell,0,0),(0,\ell,0),(0,0,\ell)\}    
   \end{align*}
as well as for $0\leq \gamma\leq 1$ the operator valued vector and matrix
\begin{align*}
        (\Phi_\tau)_{jk}: & =\left(|j|^{2\tau}+|k|^{2\tau}\right)^{\frac{1}{2}}a_j a_k,\\
        \left(\Upsilon^{(I,I')}_{ \gamma,\tau}\right)_{jk,j'k'}: & =\left(\mathcal{K}_{ \gamma,2}^{-\frac{1}{2}} G^{(I,I')}_\tau \mathcal{K}_{ \gamma,2}^{-\frac{1}{2}}\right)_{j'k',jk}a_{I'_1} a_{I'_2} a_{I'_3} \mathcal{N}^s a_{I_1}^\dagger a_{I_2}^\dagger a_{I_3}^\dagger ,   
\end{align*}
   with $\mathcal{K}_{ \gamma,2}:=(-\Delta_x)^\gamma + (-\Delta_y)^\gamma$. With these definitions at hand we obtain
\begin{align*}
    \sum_{k}|k|^{2\tau}(c_k-a_k)\mathcal{N}^s (c_k-a_k)^\dagger=\frac{1}{2}\sum_{I,I'}\Phi_\gamma^\dagger \left(\Upsilon^{(I,I')}_{\gamma,\tau}+\mathrm{H.c.}\right) \Phi_\gamma.
\end{align*}
    For $\gamma>\tau-\frac{1}{2}$ we have by the estimates from Lemma \ref{Lem:Coefficient_Control} that 
    \begin{align*}
    \|\mathcal{K}_{\gamma,2}^{-\frac{1}{2}} G^{(I,I')}_\tau \mathcal{K}_{\gamma,2}^{-\frac{1}{2}}\|\lesssim N^{-4}.    
    \end{align*}
Together with 
    \begin{align*}
        \big\|\left(\mathcal{N}+1\right)^{-\frac{s}{2}}a_{I'_1} a_{I'_2} a_{I'_3} \mathcal{N}^s a_{I_1}^\dagger a_{I_2}^\dagger a_{I_3}^\dagger\left(\mathcal{N}+1\right)^{-\frac{s}{2}}\big\|\lesssim N^3 
    \end{align*}
on the $N$ particle sector, we obtain $\left(\Upsilon^{(I,I')}_{\gamma,\tau}+\mathrm{H.c.}\right) \leq \frac{C}{N}\left(\mathcal{N}+1\right)^s$ for $\gamma>\tau-\frac{1}{2}$ and a suitable constant $C$. Using Cauchy-Schwarz we therefore have
    \begin{align*}
        \sum_{k}|k|^{2\tau}(c_k-a_k)\mathcal{N}^s (c_k-a_k)^\dagger\lesssim \frac{1}{N}\Phi_\gamma^\dagger \left(\mathcal{N}+1\right)^s\Phi_\gamma=2C\sum_k |k|^{2\gamma} a_k^\dagger \frac{\mathcal{N}^{s+1}}{N} a_k.
    \end{align*}
    Applying this result for $\tau':=\sigma$, $\gamma':=0$ and $s':=0$, yields the first claim of the Lemma 
    \begin{align*}
    \sum_{k}|k|^{2\sigma}(c_k-a_k)(c_k-a_k)^\dagger\leq \frac{1}{N}\mathcal{N}^2.    
    \end{align*}
Concerning Eq.~(\ref{Eq:Kinetic_Comparison}), we have
    \begin{align*}
        \sum_{k}|k|^{2\tau}a_k^\dagger\, \mathcal{N}^s a_k\leq 2\sum_{k}|k|^{2\tau}c_k^\dagger\, \mathcal{N}^s c_k+2\sum_{k}|k|^{2\tau}(c_k-a_k)^\dagger\, \mathcal{N}^s (c_k-a_k),
    \end{align*}
    and furthermore we can express
    \begin{align}
    \label{Eq:Increment_Analysis}
        \sum_{k}|k|^{2\tau}(c_k \! - \! a_k)^\dagger \mathcal{N}^s (c_k \! - \! a_k) \! = \! \sum_{I,I'} \! \left( \! f^{I,I'} X^{I,I'}_0 \! + \! \sum_{k\neq 0} \! g^{I,I'}_k a_k^\dagger X^{I,I'}_1 a_k \! + \! \frac{1}{2}\Phi_\gamma^\dagger \left(\widetilde{\Upsilon}^{(I,I')}_{\gamma,\tau} \! + \! \mathrm{H.c.} \! \right) \!  \Phi_\gamma \! \right)
    \end{align}
    with 
    \begin{align*}
        X^{I,I'}_0: & =a_{I_1}^\dagger a_{I_2}^\dagger a_{I_3}^\dagger \left(\mathcal{N}^s+2s\mathcal{N}^{s-1}+s(s-1)\mathcal{N}^{s-2}\right) a_{I'_1} a_{I'_2} a_{I'_3},\\
        X^{I,I'}_1: & =a_{I_1}^\dagger a_{I_2}^\dagger a_{I_3}^\dagger \left(2\mathcal{N}^s+4s\mathcal{N}^{s-1}+4s(s-1)\mathcal{N}^{s-2}+2s(s-1)(s-2)\mathcal{N}^{s-3}\right) a_{I'_1} a_{I'_2} a_{I'_3},\\
         X_2: & =a_{I_1}^\dagger a_{I_2}^\dagger a_{I_3}^\dagger \Big(\mathcal{N}^s+4s\mathcal{N}^{s-1}+6s(s-1)\mathcal{N}^{s-2}+4s(s-1)(s-2)\mathcal{N}^{s-3}\\
       & \ \  \ \  \ \  \ \  \ \ +s(s-1)(s-2)(s-3)\mathcal{N}^{s-4}\Big) a_{I'_1} a_{I'_2} a_{I'_3},\\
       f^{I,I'}: & =\sum_{ij}\left[\left(G^{I,I'}_\tau\right)_{ij,ij}+\left(G^{I,I'}_\tau\right)_{ij,ji}\right],\\
       g_j^{I,I'}: & =\sum_{i}\left[\left(G^{I,I'}_\tau\right)_{ij,ij}+\left(G^{I,I'}_\tau\right)_{ij,ji}\right],\\
        \widetilde{\Upsilon}^{(I,I')}_{\gamma,\tau}: & =\left(\mathcal{K}_{\gamma,2}^{-\frac{1}{2}} G^{(I,I')}_\tau \mathcal{K}_{\gamma,2}^{-\frac{1}{2}}\right)_{j'k',jk} X_2.
    \end{align*}
    Following the proof of the first part of the Lemma, we obtain for $\gamma>\tau-\frac{1}{2}$
    \begin{align*}
        \frac{1}{2}\sum_{I,I'}\Phi_\gamma^\dagger \left(\widetilde{\Upsilon}^{(I,I')}_{\gamma,\tau}+\mathrm{H.c.}\right) \Phi_\gamma\lesssim \sum_k |k|^{2\gamma} a_k^\dagger \frac{\mathcal{N}^{s+1}}{N} a_k.
    \end{align*}
    Using Lemma \ref{Lem:Coefficient_Control} again, yields $|f|\lesssim N^{\tau-3}$ and $|g_j|\lesssim N^{\max\{\tau-\frac{1}{2},0\}}-4$, and consequently
    \begin{align*}
        f^{I,I'} X^{I,I'}_0+\sum_{k\neq 0}g^{I,I'}_k a_k^\dagger X^{I,I'}_1 a_k\lesssim N^{\tau}\mathcal{N}^s+N^{\max\{\tau-\frac{1}{2},0\}}\frac{\mathcal{N}^{s+1}}{N}\lesssim N^{\tau}\mathcal{N}^s.
    \end{align*}
Summarizing what we have so far we obtain for $\gamma>\tau-\frac{1}{2}$
\begin{align*}
        \sum_{k}|k|^{2\tau}a_k^\dagger\, \mathcal{N}^s a_k \lesssim \sum_{k}|k|^{2\tau } c_k^\dagger \mathcal{N}^s c_k+\frac{1}{N}\sum_{k}|k|^{2\gamma}a_k^\dagger\, \mathcal{N}^{s+1} a_k+N^{\tau}\left(\mathcal{N}+1\right)^s.
\end{align*}
Choosing $\gamma:=\max\{\tau-\frac{1}{3},0\}$ and iterating this equation at most two times with $\tau':=\max\{\tau-\frac{1}{3},0\}$ and $\gamma':=\max\{\gamma-\frac{1}{3},0\}$, and using $\mathcal{N}\leq N$, yields the desired statement.
\end{proof}

As a consequence of Lemma \ref{lem:Comparison}, we can  estimate monomials in the operators $a_k$ and $a_k^\dagger$ by the kinetic energy in the variables $c_k$ and powers of the particle number operator $\mathcal{N}$, see the subsequent Corollary \ref{Cor:Operator_Estimates}.

\begin{cor}
\label{Cor:Operator_Estimates}
Let $\mathcal{K}_{\tau,t}:=\sum_{i=1}^t (-\Delta_{x_i})^\tau $. Given $0\leq  \tau\leq 1$, and integers $s,t\geq 1$ and $\alpha,\beta\geq 0$, there exist $\delta>0$ and $C>0$, such that for $\epsilon>0$
    \begin{align*}
        &\pm \left(\sum_{i_1\dots i_s,j_1\dots j_t} \!  \!  \!  \!  \! G_{i_1\dots i_s,j_1\dots j_t}a_{j_t}^\dagger\dots a_{j_1}^\dagger X\ a_{i_1}\dots a_{i_s}+\mathrm{H.c.}\right)\leq C\left\|\mathcal{K}_{\tau,t}^{-\frac{1}{2}}G\mathcal{K}_{\tau,s}^{-\frac{1}{2}}\right\| \big\| \mathcal{N}^{-\frac{\beta}{2}}X\mathcal{N}^{-\frac{\alpha}{2}}\big\| \\
        & \ \ \  \ \ \times \left\{\sum_{k}|k|^2 c_k^\dagger \left(\epsilon \mathcal{N}^{s+\alpha-1}+\epsilon^{-1}\mathcal{N}^{t+\beta-1}\right) c_k + \left(\mathcal{N}+N^\tau \right) \! \left(\epsilon \mathcal{N}^{s+\alpha-1}+\epsilon^{-1}\mathcal{N}^{t+\beta-1}\right)\right\},
    \end{align*}
    where $G:\left(\mathrm{ran}Q\right)^{\otimes s}\longrightarrow \left(\mathrm{ran}Q\right)^{\otimes t}$ and $X:\mathcal{F}\!\left(L^2(\Lambda)\right)\longrightarrow \mathcal{F}\!\left(L^2(\Lambda)\right)$. In case $s=0$
    \begin{align*}
       &\pm \left(\sum_{j_1\dots j_t} \!   \! G_{j_1\dots j_t}a_{j_t}^\dagger\dots a_{j_1}^\dagger X +\mathrm{H.c.}\right) \leq C\left\|\mathcal{K}_{\tau,t}^{-\frac{1}{2}}G\right\| \big\| \mathcal{N}^{-\frac{\beta}{2}}X\mathcal{N}^{-\frac{\alpha}{2}}\big\| \\
        & \ \ \ \times \left\{\epsilon \mathcal{N}^\alpha+\epsilon^{-1} \sum_{k}|k|^2 c_k^\dagger  \mathcal{N}^{t+\beta-1} c_k +\epsilon^{-1}  \left(\mathcal{N}+N^\tau \right) \mathcal{N}^{t+\beta-1}\right\}.
    \end{align*}
\end{cor}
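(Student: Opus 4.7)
My plan is to prove the estimate in two steps: a weighted bilinear Cauchy-Schwarz that splits the expression into a ``$t$-creator side'' and an ``$s$-annihilator side'', followed by an invocation of Lemma \ref{lem:Comparison} to convert the resulting kinetic sums from the $a_k$ to the $c_k$ variables. First I would write the two operator norms as scalar prefactors by normalising. Set $Y:=\mathcal{N}^{-\beta/2}X\mathcal{N}^{-\alpha/2}$, so that $X=\mathcal{N}^{\beta/2}Y\mathcal{N}^{\alpha/2}$ with $\|Y\|=\|\mathcal{N}^{-\beta/2}X\mathcal{N}^{-\alpha/2}\|$, and write $G=\mathcal{K}_{\tau,t}^{1/2}\widetilde{G}\mathcal{K}_{\tau,s}^{1/2}$ with $\|\widetilde G\|=\|\mathcal{K}_{\tau,t}^{-1/2}G\mathcal{K}_{\tau,s}^{-1/2}\|$. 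The operator of interest then takes the form
\begin{align*}
A\;=\;\sum_{i,j}\widetilde G_{i,j}\,U_j^{*}\,Y\,V_i,
\end{align*}
where $U_j:=\mathcal{N}^{\beta/2}w_t(j)\,\Phi^{(t)}_j$, $V_i:=\mathcal{N}^{\alpha/2}w_s(i)\,\Phi^{(s)}_i$, $w_r(k):=(|k_1|^{2\tau}+\cdots+|k_r|^{2\tau})^{1/2}$, and $\Phi^{(r)}_k:=a_{k_1}\cdots a_{k_r}$.

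Next I would apply the bilinear Cauchy-Schwarz inequality $\bigl|\sum_{i,j}\widetilde G_{i,j}\langle u_j,v_i\rangle\bigr|\leq\|\widetilde G\|\bigl(\sum_j\|u_j\|^2\bigr)^{1/2}\bigl(\sum_i\|v_i\|^2\bigr)^{1/2}$ with $u_j=U_j\psi$ and $v_i=YV_i\psi$, and combine it with $\|YV_i\psi\|\leq\|Y\|\|V_i\psi\|$ together with the elementary $2\sqrt{ab}\leq\epsilon^{-1}a+\epsilon b$. This yields
\begin{align*}
\pm(A+A^*)\;\leq\; C\,\|\widetilde G\|\,\|Y\|\Bigl(\epsilon^{-1}\sum_j U_j^*U_j\;+\;\epsilon\,\sum_i V_i^*V_i\Bigr),
\end{align*}
with $\sum_j U_j^*U_j=\sum_j w_t(j)^2(\Phi^{(t)}_j)^\dagger \mathcal{N}^\beta \Phi^{(t)}_j$ and the analogous expression for the $V$ side. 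In the special case $s=0$ the $s$-side carries no $a$-operators and one simply uses $V_0^*V_0=\mathcal{N}^\alpha$, producing the claimed $\epsilon\,\mathcal{N}^\alpha$ term directly.

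I would then reduce each diagonal sum to a single-particle kinetic expression. By permutation symmetry of $w_t(k)^2$ in its indices and the normal-ordering identity $\sum_{k_1,\ldots,k_{t-1}}a^\dagger_{k_{t-1}}\cdots a^\dagger_{k_1}a_{k_1}\cdots a_{k_{t-1}}=\mathcal{N}(\mathcal{N}-1)\cdots(\mathcal{N}-t+2)\leq\mathcal{N}^{t-1}$, together with commuting $\mathcal{N}^\beta$ past the string of creation operators at the cost of replacing $\mathcal{N}^\beta$ by $(\mathcal{N}+O(1))^\beta\lesssim(\mathcal{N}+1)^\beta$, one arrives at $\sum_j U_j^*U_j\lesssim\sum_k|k|^{2\tau}a_k^\dagger\mathcal{N}^{t+\beta-1}a_k$, and similarly for the $V$ side with $s$ and $\alpha$. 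Lemma \ref{lem:Comparison} applied with the pair $(\tau,t+\beta-1)$ then gives
\begin{align*}
\sum_k|k|^{2\tau}a_k^\dagger\mathcal{N}^{t+\beta-1}a_k\;\lesssim\;\sum_k|k|^2c_k^\dagger\mathcal{N}^{t+\beta-1}c_k+\tfrac{1}{N}\mathcal{N}^{t+\beta+1}+N^\tau(\mathcal{N}+1)^{t+\beta-1},
\end{align*}
and the middle error is absorbed into $\mathcal{N}\cdot\mathcal{N}^{t+\beta-1}$ via $\mathcal{N}\leq N$, which is covered by the $(\mathcal{N}+N^\tau)\mathcal{N}^{t+\beta-1}$ contribution in the statement.

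The main nuisance will be the bookkeeping when commuting the powers $\mathcal{N}^{\beta/2}$ and $\mathcal{N}^{\alpha/2}$ past strings of $t$ creation and $s$ annihilation operators, and verifying that the combinatorial constants from normal-ordering, as well as the shifts $(\mathcal{N}+O(1))^\beta\lesssim(\mathcal{N}+1)^\beta$, are uniform on the sector under consideration; neither step requires a new idea beyond chaining the bilinear Cauchy-Schwarz with Lemma \ref{lem:Comparison}. The $s=0$ case follows a parallel but shorter argument in which the $\epsilon\,\mathcal{N}^\alpha$ term appears directly from the identity $V_0^*V_0=\mathcal{N}^\alpha$ without passing through Lemma \ref{lem:Comparison}.
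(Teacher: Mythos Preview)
Your proposal is correct and follows essentially the same route as the paper: factor out the two operator norms, apply a weighted Cauchy--Schwarz to split into the $t$-side and $s$-side quadratic forms, rewrite each as $\sum_k |k|^{2\tau} a_k^\dagger \mathcal{N}^{\cdot} a_k$ via normal ordering, and then invoke Lemma~\ref{lem:Comparison}. The paper packages this via the operator-valued vector $(\Phi_{\tau,s})_{k_1\dots k_s}:=(|k_1|^{2\tau}+\dots+|k_s|^{2\tau})^{1/2}a_{k_1}\dots a_{k_s}$ and the matrix $\Upsilon$, but the substance is identical to your $U_j,V_i$ formulation.
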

\begin{proof}
    Let us define for $s,t\geq 1$ the operator valued vector and operator valued matrix
    \begin{align}
    \label{Eq:Def_Phi}
      \left(\Phi_{\tau,s}\right)_{k_1\dots k_s}: & =\left(|k_1|^{2\tau}+\dots +|k_s|^{2\tau}\right)^{\frac{1}{2}}a_{k_1}\dots a_{k_s},\\
      \nonumber
      \Upsilon_{i_1\dots i_s,j_1\dots j_t}: & =\left(\mathcal{K}_{\tau,t}^{-\frac{1}{2}}G\, \mathcal{K}_{\tau,s}^{-\frac{1}{2}}\right)_{i_1\dots i_s,j_1\dots j_t} \mathcal{N}^{-\frac{\beta}{2}}X\mathcal{N}^{-\frac{\alpha}{2}},
    \end{align}
which allow us to represent
    \begin{align*}
        \sum_{i_1\dots i_s,j_1\dots j_t} \!  \!  \!  \!  \! G_{i_1\dots i_s,j_1\dots j_t}a_{j_t}^\dagger\dots a_{j_1}^\dagger X\ a_{i_1}\dots a_{i_s}=\Phi_{\tau,t}^\dagger \mathcal{N}^{\frac{\beta}{2}}\Upsilon\mathcal{N}^{\frac{\alpha}{2}} \Phi_{\tau,s}.
    \end{align*}
    Using the fact that $\|\Upsilon\|\leq \left\|\mathcal{K}_{\tau,t}^{-\frac{1}{2}}G\, \mathcal{K}_{\tau,s}^{-\frac{1}{2}}\right\|\big\| \mathcal{N}^{-\frac{\beta}{2}}X\mathcal{N}^{-\frac{\alpha}{2}}\big\|$, we obtain by Cauchy-Schwarz
    \begin{align*}
      & \ \ \   \pm  \left(\Phi_t^\dagger \Upsilon \Phi_s+\mathrm{H.c.}\right)\leq \left\|\mathcal{K}_{\tau,t}^{-\frac{1}{2}}G\, \mathcal{K}_{\tau,s}^{-\frac{1}{2}}\right\|\big\| \mathcal{N}^{-\frac{\beta}{2}}X\mathcal{N}^{-\frac{\alpha}{2}}\big\|\left(\epsilon \Phi_{\tau,t}^\dagger \mathcal{N}^\beta \Phi_{\tau,t}+\epsilon^{-1}\Phi_{\tau,s}^\dagger \mathcal{N}^\alpha \Phi_{\tau,s}\right)\\
        &=\left\|\mathcal{K}_{\tau,t}^{-\frac{1}{2}}G\, \mathcal{K}_{\tau,s}^{-\frac{1}{2}}\right\|\big\| \mathcal{N}^{-\frac{\beta}{2}}X\mathcal{N}^{-\frac{\alpha}{2}}\big\|\left(\epsilon s\sum_{k}|k|^{2\tau}a_k^\dagger\, \mathcal{N}^{s+\alpha -1} a_k+\epsilon^{-1}\, t\sum_{k}|k|^{2\tau}a_k^\dagger \, \mathcal{N}^{t+\beta -1} a_k\right).
    \end{align*}
Since $\mathcal{N}\leq N$ we furthermore have by Lemma \ref{lem:Comparison} 
    \begin{align*}
        \sum_{k}|k|^{2\tau}a_k^\dagger\, \mathcal{N}^{s+\alpha-1} a_k \lesssim \sum_{k}|k|^2 c_k^\dagger \mathcal{N}^{s+\alpha -1} c_k +(\mathcal{N}+1)^{s+\alpha-1}\left(N^{\tau}+\mathcal{N}\right).
    \end{align*}
\end{proof}

\section{IMS Localisation}
\label{Sec:IMS Localisation}
Adapting the  localization procedure presented in \cite[Theorem A.1]{LS} in the form state in \cite[Proposition 6.1]{LNSS} for the following Lemma \ref{Lem:IMS} allows us to lift Bose-Einstein condensation in expectation, $\braket{\Psi,\mathcal{N}\Psi}\leq C N^\alpha$ with $\alpha<1$, to Bose-Einstein condensation in the spectral sense $\mathds{1}(\mathcal{N}\leq C N^\alpha)\Phi=\Phi$, without changing the energy significantly, see Eq.~(\ref{Eq:Energy_After_IMS}). Clearly the second notion of Bose-Einstein condensation is a much stronger condition.

\begin{lem}
\label{Lem:IMS}
    Let $\Psi$ satisfy $\braket{\Psi,H_N \Psi}= E_N+\delta$ with $\delta\leq N$. Then there exists a constant $C>0$, such that there exists for all $1\leq M\leq N$ states $\Phi$ satisfying $\mathds{1}(\mathcal{N}\leq M)\Phi=\Phi$ and
    \begin{align}
    \label{Eq:Energy_After_IMS}
    \braket{\Phi, H_N \Phi}\leq E_N+ C \left(1-\frac{2\braket{\Psi,\mathcal{N}\Psi}}{M}\right)^{-1}\! \left(\frac{\sqrt{N}}{M}+\frac{N}{M^2}+\delta\right).
    \end{align}
    Furthermore, there exists a state $\widetilde{\Phi}$ such that $\mathds{1}\! \left(\mathcal{N}>\frac{M}{2}\right)\widetilde{\Phi}=\widetilde{\Phi}$ and
    \begin{align}
        \label{Eq:Energy_After_IMS_II}
    \braket{\Psi, \mathcal{N}\Psi}  \leq \braket{\Phi,\mathcal{N} \Phi }+\frac{C N}{\braket{\widetilde{\Phi},H_N \widetilde{\Phi}}-E_N}\! \left(\frac{\sqrt{N}}{M}+\frac{N}{M^2}+\delta \right).
    \end{align}
\end{lem}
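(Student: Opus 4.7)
The plan is to implement a standard IMS particle-number localisation, in the spirit of \cite{LS,LNSS}. Pick smooth $f,g\in C^\infty([0,\infty);[0,1])$ with $f^2+g^2\equiv 1$, $f\equiv 1$ on $[0,\tfrac{1}{2}]$ and $\mathrm{supp}\, f\subseteq [0,1]$, and rescale $f_M(x):=f(x/M)$, $g_M(x):=g(x/M)$. Define
\[
\Phi \; := \; \frac{f_M(\mathcal{N})\Psi}{\|f_M(\mathcal{N})\Psi\|},\qquad \widetilde{\Phi} \; := \; \frac{g_M(\mathcal{N})\Psi}{\|g_M(\mathcal{N})\Psi\|},
\]
so that the support properties $\mathds{1}(\mathcal{N}\leq M)\Phi=\Phi$ and $\mathds{1}(\mathcal{N}>\tfrac{M}{2})\widetilde{\Phi}=\widetilde{\Phi}$ follow immediately (degenerate cases where a denominator vanishes are trivial, since then $\Psi$ itself already lies in the desired spectral subspace). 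The relation $f_M^2+g_M^2=1$ yields the symmetric IMS identity
\[
H_N \; = \; f_M H_N f_M + g_M H_N g_M + \tfrac{1}{2}[f_M,[f_M,H_N]] + \tfrac{1}{2}[g_M,[g_M,H_N]].
\]
Taking expectation in $\Psi$ and using $\braket{\Phi,H_N\Phi},\braket{\widetilde{\Phi},H_N\widetilde{\Phi}}\geq E_N$, one obtains the two dual inequalities
\[
\|f_M\Psi\|^2\bigl(\braket{\Phi,H_N\Phi}-E_N\bigr) \; \leq \; \delta - \mathcal{E}_{\mathrm{loc}},\qquad \|g_M\Psi\|^2\bigl(\braket{\widetilde{\Phi},H_N\widetilde{\Phi}}-E_N\bigr) \; \leq \; \delta - \mathcal{E}_{\mathrm{loc}},
\]
with $\mathcal{E}_{\mathrm{loc}} := \tfrac{1}{2}\braket{\Psi,\bigl([f_M,[f_M,H_N]]+[g_M,[g_M,H_N]]\bigr)\Psi}$ the localisation error.

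The core technical task is to show $|\mathcal{E}_{\mathrm{loc}}|\lesssim \sqrt{N}/M+N/M^{2}$. Since $[T,\mathcal{N}]=0$, only $V_N$ contributes, and a direct second-quantised computation gives
\[
[f_M(\mathcal{N}),[f_M(\mathcal{N}),V_N]] \; = \; \sum_{ijk,\ell mn}(V_N)_{ijk,\ell mn}\,a_i^\dagger a_j^\dagger a_k^\dagger a_\ell a_m a_n\,\bigl(f_M(\mathcal{N}+\Delta)-f_M(\mathcal{N})\bigr)^{2},
\]
with $\Delta := \#\{i,j,k\neq 0\}-\#\{\ell,m,n\neq 0\}\in\{-3,\dots,3\}$. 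The Lipschitz bound $|f_M(x+\Delta)-f_M(x)|\leq 3\|f'\|_\infty/M$ combined with $V_N\geq 0$ yields, by a Cauchy-Schwarz estimate on the sum over $\Delta$, an operator inequality of the form $\pm\mathcal{E}_{\mathrm{loc}}\lesssim M^{-2}V_N + M^{-2}(\text{off-diagonal remainder})$. The coarse bound $\braket{\Psi,V_N\Psi}\leq E_N+\delta\lesssim N$ produces the bulk term $N/M^{2}$, while the sharper $\sqrt{N}/M$ contribution arises from peeling off the $\mathcal{N}$-diagonal component $V_N^{(0)}$ (which commutes with $f_M(\mathcal{N})$ and hence contributes zero to the double commutator) and noting that the expectation of the remaining off-diagonal part of $V_N$ is only of order $\sqrt{N}$.

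To conclude Eq.~(\ref{Eq:Energy_After_IMS}) I divide the first dual inequality by $\|f_M\Psi\|^2\geq 1-\|g_M\Psi\|^2\geq 1-2\braket{\Psi,\mathcal{N}\Psi}/M$ (Markov applied to $g_M(\mathcal{N})^2\leq 2\mathcal{N}/M$). For Eq.~(\ref{Eq:Energy_After_IMS_II}) I decompose
\[
\braket{\Psi,\mathcal{N}\Psi} \; = \; \|f_M\Psi\|^2 \braket{\Phi,\mathcal{N}\Phi}+\|g_M\Psi\|^2\braket{\widetilde{\Phi},\mathcal{N}\widetilde{\Phi}}\leq \braket{\Phi,\mathcal{N}\Phi}+N\|g_M\Psi\|^{2},
\]
and substitute $\|g_M\Psi\|^2\leq (\delta+|\mathcal{E}_{\mathrm{loc}}|)/(\braket{\widetilde{\Phi},H_N\widetilde{\Phi}}-E_N)$ from the second dual inequality. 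The main obstacle is proving the fine operator estimate on $\mathcal{E}_{\mathrm{loc}}$: the purely positivity-based argument yields only $N/M^{2}$, and recovering the $\sqrt{N}/M$ improvement requires the observation that the diagonal part $V_N^{(0)}$ drops out of the double commutator together with a correlation-energy-type bound on the remainder.
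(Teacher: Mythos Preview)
Your overall setup—the smooth partition $f^2+g^2=1$, the definition of $\Phi,\widetilde\Phi$, the IMS identity, the two dual inequalities, and the Markov estimate $\|f_M\Psi\|^2\geq 1-2\braket{\Psi,\mathcal{N}\Psi}/M$—matches the paper exactly and is correct. The deductions of \eqref{Eq:Energy_After_IMS} and \eqref{Eq:Energy_After_IMS_II} from a bound on $\mathcal{E}_{\mathrm{loc}}$ are also fine.

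The gap is in your control of the localisation error. Your claim that ``the expectation of the remaining off-diagonal part of $V_N$ is only of order $\sqrt{N}$'' is neither proved nor, as far as I can see, true under the sole hypothesis $\braket{\Psi,H_N\Psi}\leq E_N+N$; no BEC is assumed on $\Psi$, so pieces like $\mathbb V_{0,3}$ can a priori be of order $N$. Moreover, even granting that bound, dividing by $M^{2}$ would produce $\sqrt N/M^{2}$, not the claimed $\sqrt N/M$, so your ``coarse $N/M^2$ plus refined $\sqrt N/M$'' splitting is internally inconsistent.

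The paper proceeds differently. Writing the double commutator with the weight $X_{I,J}$ sandwiched \emph{between} creation and annihilation operators, one reduces by Cauchy--Schwarz (using $V_N\geq 0$) to the diagonal pieces $\sum_{ijk,\ell mn}(\pi^I V_N\pi^I)_{ijk,\ell mn}\,a_i^\dagger a_j^\dagger a_k^\dagger X\,a_\ell a_m a_n$ with $X\lesssim\mathds 1(\mathcal N\leq M)$. These are then estimated case by case on $\#I$: the cases $\#I\in\{0,1\}$ give $O(N)$ trivially; the case $\#I=2$ is the crux and uses the kernel bound $(V_N)_{0jk,0mn}\lesssim N^{-3/2}|k|^{2}$ (a Schur/Hardy--Littlewood estimate) together with the kinetic-energy input $\braket{\Psi,\sum_k |k|^2 a_k^\dagger a_k\,\Psi}\leq\braket{\Psi,H_N\Psi}\lesssim N$ and the restriction $\mathcal N\lesssim M$ coming from the support of $X$, yielding $O(M\sqrt N)$; the case $\#I=3$ is closed by bounding it against $H_N$ and the previous cases. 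Altogether $|\braket{\Psi,\mathcal E\Psi}|\lesssim M^{-2}(N+M\sqrt N)=N/M^{2}+\sqrt N/M$. The two ingredients your sketch is missing are the kinetic-energy input and the use of the cutoff support to force $\mathcal N\lesssim M$ inside the commutator; without them the $\sqrt N/M$ term cannot be reached.
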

\begin{proof}
    In the following let $f,g:\mathbb{R}\rightarrow [0,1]$ be smooth functions satisfying $f^2+g^2=1$ and $f(x)=1$ for $x\leq \frac{1}{2}$, as well as $f(x)=0$ for $x\geq 1$, and let us define $ m:= \|f\! \left(\frac{\mathcal{N}}{M}\right)\Psi\|^2$ and
    \begin{align*}
      \Phi: & =\frac{1}{\sqrt{m}} f\! \left(\frac{\mathcal{N}}{M}\right)\Psi, \\  
        \widetilde \Phi: & =\frac{1}{\sqrt{1-m}} g\! \left(\frac{\mathcal{N}}{M}\right)\Psi.
    \end{align*}
    Note that $\|\Phi\|=\|\widetilde \Phi\|=1$, $0\leq m \leq 1$ and clearly we have $\mathds{1}(\mathcal{N}\leq M)\Phi=\Phi$ and $\mathds{1}\! \left(\mathcal{N}>\frac{M}{2}\right)\widetilde{\Phi}=\widetilde{\Phi}$. Making use of the algebraic identity
    \begin{align*}
        H_N=f\! \left(\frac{\mathcal{N}}{M}\right) H_N f\! \left(\frac{\mathcal{N}}{M}\right)+g\! \left(\frac{\mathcal{N}}{M}\right) H_N g\! \left(\frac{\mathcal{N}}{M}\right)+\mathcal{E}
    \end{align*}
    with the residual term 
    \begin{align*}
        \mathcal{E}:=\frac{1}{2}\left[f\! \left(\frac{\mathcal{N}}{M}\right),\left[H_N, f\! \left(\frac{\mathcal{N}}{M}\right)\right]\right]+\frac{1}{2}\left[g\! \left(\frac{\mathcal{N}}{M}\right),\left[H_N, g\! \left(\frac{\mathcal{N}}{M}\right)\right]\right],
    \end{align*}
    we obtain
    \begin{align*}
       m \braket{\Phi,H_N \Phi}+(1-m)\braket{\widetilde \Phi,H_N \widetilde \Phi}= E_N+\delta-\braket{\Psi,\mathcal{E}\Psi}.
    \end{align*}
 %   \begin{align*}
 %       \braket{\Phi, H_N \Phi}\leq E_N+  \left(1-\frac{2\braket{\Psi,\mathcal{N}\Psi}}{M}\right)^{-1}\! \left(\braket{\Psi, %\mathcal{E}\Psi}+\delta\right).
 %   \end{align*}
    In order to estimate $\braket{\Psi, \mathcal{E}\Psi}$, let $\pi^0$ denote the projection onto the constant function in $L^2(\Lambda)$ and $\pi^1:=1-\pi^0$. Then we can rewrite $\mathcal{E}$ as
    \begin{align*}
        \mathcal{E}=\frac{1}{4M^2}\sum_{I,J\in \{0,1\}^3}\sum_{ijk,\ell m n}(\pi^{I_1}\pi^{I_2}\pi^{I_3} V_N \pi^{J_1}\pi^{J_2}\pi^{J_3})_{ijk, \ell m n }a_k^\dagger a_j^\dagger a_i^\dagger X_{I,J} a_\ell a_m a_n,
    \end{align*}
    with 
    \begin{align*}
        X_{I,J}:=M^2 \left[f\! \left(\frac{\mathcal{N}+\, \#_I}{M}\right)-f\! \left(\frac{\mathcal{N}+\, \#_J}{M}\right)\right]^2+M^2 \left[g\! \left(\frac{\mathcal{N}+\, \#_I}{M}\right)-g\! \left(\frac{\mathcal{N}+\, \#_J}{M}\right)\right]^2
    \end{align*}
    and $\#_I$ counting the number of indices in $I$ that are equal to $1$. Using $0\leq X_{I,J}\leq X$, where 
    \begin{align*}
        X:=\left(\|\nabla f\|^2+\|\nabla g\|^2\right)\mathds{1}(\mathcal{N}\leq M),
    \end{align*}
 we obtain by the Cauchy-Schwarz inequality
    \begin{align}
    \label{Eq:IMS_Error}
        \pm \mathcal{E}\lesssim \frac{1}{4M^2}\sum_{I\in \{0,1\}^3}\sum_{ijk,\ell m n}(\pi^{I_1}\pi^{I_2}\pi^{I_3} V_N \pi^{I_1}\pi^{I_2}\pi^{I_3})_{ijk, \ell m n }a_k^\dagger a_j^\dagger a_i^\dagger X a_\ell a_m a_n.
    \end{align}
    In the following we want to show that for any $I\in \{0,1\}^3$, the $\Psi$-expectation value of the corresponding term apperaring in the sum on the right side of Eq.~(\ref{Eq:IMS_Error}) is of the order $\sqrt{N}M+N$. For $I=(0,0,0)$ we have
    \begin{align*}
        (V_N)_{000,000}(a_0^\dagger)^3 X a_0^3 \lesssim N^{-2}\|X\|(a_0^\dagger)^3 a_0^3\leq \left(\|\nabla f\|^2+\|\nabla g\|^2\right)N .
    \end{align*}
    Similarly,
    \begin{align*}
        \sum_{k\neq 0}(V_N)_{001,001}(a_0^\dagger)^2 a_k^\dagger X a_k a_0^2\lesssim M\leq N
    \end{align*}
in the case $I=(0,0,1)$. Regarding the case $I=(0,1,1)$, let us first observe that we have the upper bound
\begin{align}
\label{Eq:V_2,2}
    \sum_{jk,mn\neq 0}(V_N)_{0jk,0mn}a_0^\dagger a_k^\dagger a_j^\dagger X a_m a_n a_0\leq C_N \sum_{k}|k|^2 a_k^\dagger\left(\sum_{j\neq 0}a_0^\dagger a_j^\dagger X a_j a_0\right)a_k
\end{align}
with the constant $C_N$ being defined as
\begin{align*}
    C_N:=\sup_{m, n \neq0}\left\{ \sum_{j,k\neq 0}\frac{\left|(V_N)_{0jk,0mn}\right|}{|k|^2}\right\}=\frac{1}{N^2}\left\{\sup_{p,q\neq 0}\sum_{t:p+t\neq 0}\frac{\big|V\! \big(N^{-\frac{1}{2}}t\big)\big|}{|p+t|^2}\right\}.
\end{align*}
Due to our regularity assumptions on $V$ we have $\big|V\! \big(N^{-\frac{1}{2}}t\big)\big|\lesssim \frac{1}{1+N^{-1}|t|^2}$ and therefore
\begin{align}
\nonumber
    C_N & \lesssim N^{-2}\sum_{t:p+t\neq 0}\frac{1}{|p+t|^2(1+N^{-1}|t|^2)}\lesssim N^{-2}\int_{\mathbb{R}^3}\frac{\mathrm{d}x}{|p+x|^2(1+N^{-1}|x|^2)}\\
    \label{Eq:Hardy-Littlewood}
    & \leq N^{-2}\int_{\mathbb{R}^3}\frac{\mathrm{d}x}{|x|^2(1+N^{-1}|x|^2)} = N^{-2}4\pi \int_0^\infty \frac{\mathrm{d}r}{1+N^{-1}r^2}=2\pi^2N^{-\frac{3}{2}},
\end{align}
where we have used the Hardy-Littlewood inequality in the first estimate of Eq.~(\ref{Eq:Hardy-Littlewood}). Since
\begin{align*}
    \sum_{j\neq 0}a_0^\dagger a_j^\dagger X a_j a_0\lesssim MN
\end{align*}
 we obtain by Eq.~(\ref{Eq:V_2,2}) 
\begin{align*}
    & \left\langle \Psi, \sum_{jk,mn\neq 0}(V_N)_{0jk,0mn}a_0^\dagger a_k^\dagger a_j^\dagger X a_m a_n a_0 \, \Psi\right\rangle\lesssim N^{-\frac{1}{2}}M\left\langle \Psi, \sum_{k}|k|^2 a_k^\dagger a_k \, \Psi\right\rangle\\
    & \ \ \ \ \ \ \ \leq N^{-\frac{1}{2}}M\left\langle \Psi, H_N \, \Psi\right\rangle\leq N^{-\frac{1}{2}}M\left(E_N+N\right)\lesssim N^{\frac{1}{2}}M,
\end{align*}
where we have used the assumption $\delta\leq N$ and the upper bound on $E_N$ derived in Theorem \ref{Th:First_Order_Upper_Bound}. The only distinguished case left is $I=(1,1,1)$. We start its analysis by defining
\begin{align*}
   \mathbb{V}_{\alpha,\beta}: =\frac{1}{4M^2}\underset{\# I=\alpha, \# J=\beta}{\sum_{I,J\in \{0,1\}^3:}}\sum_{ijk,\ell m n}(\pi^{I_1}\pi^{I_2}\pi^{I_3} V_N \pi^{J_1}\pi^{J_2}\pi^{J_3})_{ijk, \ell m n }a_k^\dagger a_j^\dagger a_i^\dagger X_{I,J} a_\ell a_m a_n,
\end{align*}
which allows us to estimate, using the Cauchy-Schwarz inequality,
\begin{align*}
    \mathbb{V}_{3,3}\leq H_N \! -  \! \left(\mathbb{V}_{2,3} \! + \! \mathbb{V}_{3,2} \! + \! \mathbb{V}_{1,3} \! + \! \mathbb{V}_{3,1} \! + \! \mathbb{V}_{0,3} \! + \! \mathbb{V}_{3,0}\right)\leq H_N \! + \! \frac{1}{2}\mathbb{V}_{3,3} \! + \! 6\left(\mathbb{V}_{0,0} \! + \! \mathbb{V}_{1,1} \! + \! \mathbb{V}_{2,2}\right).
\end{align*}
From the previous cases we know that 
\begin{align*}
 &  \braket{\Psi,(\mathbb{V}_{0,0} \! + \! \mathbb{V}_{1,1} \! + \! \mathbb{V}_{2,2})\Psi}\lesssim N^{\frac{1}{2}}M+N,\\
 &  \ \ \ \ \ \ \braket{\Psi,H_N \Psi}\lesssim N,
\end{align*}
 and therefore $\braket{\Psi,\mathbb{V}_{3,3}\Psi}\lesssim N^{\frac{1}{2}}M+N$. Summarizing what we have so far yields the inequality
\begin{align*}
    m\braket{\Phi,H_N \Phi}+(1-m)\braket{\widetilde \Phi,H_N \widetilde \Phi}\leq E_N+\delta+C \! \left(\frac{\sqrt{N}}{M}+\frac{N}{M^2} \right).
\end{align*}
Using $\braket{\widetilde \Phi,H_NProbability and Mathematical Physics \widetilde \Phi}\geq E_N$ and the simple observation that $m\geq 1-\frac{2\braket{\Psi,\mathcal{N}\Psi}}{M}$ immediately yields Eq.~(\ref{Eq:Energy_After_IMS}), and using $\braket{ \Phi,H_N\Phi}\geq E_N$ we obtain for a suitable constant $C>0$
\begin{align*}
    1-m\leq \frac{C }{\braket{\widetilde{\Phi},H_N \widetilde{\Phi}}-E_N}\! \left(\frac{\sqrt{N}}{M}+\frac{N}{M^2}+\delta \right).
\end{align*}
In order to derive Eq.~(\ref{Eq:Energy_After_IMS_II}), we note that $\mathcal{N}=f\! \left(\frac{\mathcal{N}}{M}\right)\mathcal{N} f\! \left(\frac{\mathcal{N}}{M}\right)+g\! \left(\frac{\mathcal{N}}{M}\right) \mathcal{N} g\! \left(\frac{\mathcal{N}}{M}\right)$ and therefore
\begin{align*}
    \braket{\Psi,\mathcal{E}\Psi}=m\braket{\Phi,\mathcal{N}\Phi}+(1-m)\braket{\widetilde \Phi,\mathcal{N}\widetilde \Phi}\leq \braket{\Phi,\mathcal{N}\Phi}+(1-m)N.
\end{align*}
\end{proof}

\begin{center}
\textsc{Acknowledgments}
\end{center}
Funding from the ERC Advanced Grant ERC-AdG CLaQS, grant agreement n. 834782, is gratefully acknowledged. Furthermore, we would like to thank Marco Caporaletti and Benjamin Schlein for insightful discussions.

\end{document}